\documentclass[letterpaper,11pt]{article}
\usepackage[margin=1in]{geometry} 

\usepackage{enumitem}
\usepackage[cmex10]{amsmath}
\usepackage{amssymb}
\usepackage{authblk}
\usepackage{amsmath}%
\usepackage{amsfonts}%
\usepackage{amssymb}%
\usepackage{amsthm}
\usepackage{breqn}
\usepackage{graphicx}

\usepackage[justification=centering]{caption}
\usepackage{float}

\usepackage{colonequals}
\usepackage{psfrag}
\usepackage{subfigure}
\usepackage{mathrsfs}
\usepackage[usenames,dvipsnames]{xcolor}
\usepackage{hyperref}

\hypersetup{%
  colorlinks=true,
  urlcolor=blue,%
  linkcolor=blue,%
  citecolor=blue,%
  linkbordercolor=blue,
  pdfborderstyle={/S/U/W 1}
}

\usepackage{booktabs}       

\usepackage{mdframed}
\usepackage{bm}
\usepackage{bbm}
\usepackage{cite}

\usepackage{multirow}

\newtheorem{thm}{Theorem}
\newtheorem{lem}{Lemma}

\newtheorem{prop}{Proposition}

\newtheorem{cor}{Corollary}

\theoremstyle{definition}

\newtheorem{defn}{Definition}

\newtheorem{rem}{Remark}
\newtheorem{example}{Example}

\newcommand{\etal}{\textit{et al.}~}

\newcommand{\calX}{\mathcal{X}}
\newcommand{\calA}{\mathcal{A}}

\newcommand{\calZ}{\mathcal{Z}}

\newcommand{\calU}{\mathcal{U}}

\newcommand{\calL}{\mathcal{L}}

\renewcommand{\tilde}{\widetilde}

\newcommand{\Reals}{\mathbb{R}}

\newcommand{\defined}{\triangleq}

\newcommand{\ExpVal}[2]{\mathbb{E}\left[ #2 \right]}

\newcommand{\sto}{\mbox{\normalfont s.t.}}

\newcommand{\EE}[1]{\ExpVal{}{#1}}

\newcommand{\indep}{\rotatebox[origin=c]{90}{$\models$}}

\newcommand{\textfn}[1]{{\small\textit{#1}}}

\newcommand{\rev}[1]{#1}

\providecommand{\keywords}[1]
{
  \small	
  \textbf{\textit{Index Terms---}} #1
}
\makeatletter
\newcommand{\printfnsymbol}[1]{%
  \textsuperscript{\@fnsymbol{#1}}%
}



\title{On the Robustness of Information-Theoretic Privacy Measures and Mechanisms}
\author{Mario~Diaz\thanks{Equal contribution.} , Hao~Wang\printfnsymbol{1}, Flavio~P.~Calmon, Lalitha~Sankar\thanks{This work was supported in part by the National Science Foundation under Grant CCF-1845852, CCF-1350914, CIF-1815361, CIF-1901243 and in part by a seed grant towards a Center for Data Privacy from Arizona State University.\\
Some parts of this paper were presented at the 2018 IEEE International Symposium on Information Theory (ISIT)~\cite{wang2018utility}.\\
M.~Diaz is with Centro de Investigaci\'{o}n en Matem\'{a}ticas, A.C. (email: diaztorres@cimat.mx). A substantial part of this work was developed while being a postdoctoral scholar at Arizona State University and Harvard University.\\
H.~Wang and F.~P.~Calmon are with Harvard University (e-mail: hao\_wang@g.harvard.edu; flavio@seas.harvard.edu). \\
L.~Sankar is with Arizona State University (e-mail: lsankar@asu.edu).
}
}

\date{}

\begin{document}
\maketitle
\vspace{-.2in}

\begin{abstract}
Consider a data publishing setting for a dataset composed by both private and non-private features. The publisher uses an empirical distribution, estimated from $n$ i.i.d.~samples, to design a privacy mechanism which is applied to new fresh samples afterward. In this paper, we study the discrepancy between the privacy-utility guarantees for the empirical distribution, used to design the privacy mechanism, and those for the true distribution, experienced by the privacy mechanism in practice. We first show that, for any privacy mechanism, these discrepancies vanish at speed $O(1/\sqrt{n})$ with high probability. These bounds follow from our main technical results regarding the Lipschitz continuity of the considered information leakage measures. Then we prove that the optimal privacy mechanisms for the empirical distribution approach the corresponding mechanisms for the true distribution as the sample size $n$ increases, thereby establishing the statistical consistency of the optimal privacy mechanisms. Finally, we introduce and study \textit{uniform privacy mechanisms} which, by construction, provide privacy to all the distributions within a neighborhood of the estimated distribution and, thereby, guarantee privacy for the true distribution with high probability.
\end{abstract}
\keywords{Robustness, information leakage measures, privacy-utility trade-off, large deviations.}

\newpage
\section{Introduction}

Publishing of individual-level data is increasingly pervasive \cite{baker2016there,wicherts2012publish,donoho201750}. In many settings, directly disclosing unmodified data may lead to a privacy risk through unwanted inferences, particularly if the data contains (or is correlated with) private information.
For example, suppose that a hospital wishes to share patient-level data to further advance research on a certain disease. As shown by Sweeney \cite{sweeney2015only}, simply removing private features from the dataset (e.g., name) does not necessarily guarantee privacy as these features usually correlate with the remaining features in the dataset (e.g., ZIP code and gender). This modern data publishing issue, also present in politics \cite{granville2018facebook}, business \cite{narayanan2008robust}, and welfare \cite{handler1966privacy} to name a few, requires 
data disclosure methodologies with provable privacy guarantees, whilst still enabling a desired level of utility to be derived from the disclosed data.

A common approach to ensure privacy consists of perturbing the data using a \textit{privacy mechanism}: a (possibly) randomized mapping designed to control private information leakage, the nature of which depends on how privacy is quantified. Different notions of privacy leakage have been proposed to capture the ability of different adversaries to learn (estimate) private information, e.g., Shannon's mutual information \cite{rebollo2010t,sankar2013utility}, $k$-anonymity \cite{sweeney2002k}, differential privacy \cite{dwork2011differential}, maximal leakage \cite{issa2016maximal}, total variation \cite{rassouli2019optimal}, among others. In addition, privacy mechanisms should also guarantee the statistical utility of the disclosed data, usually quantified by some measure of similarity between the original and the perturbed data, e.g., distortion \cite{yamamoto1983source}, $f$-divergence \cite{kairouz2014extremal}, minimum mean-squared error \cite{asoodeh2017estimation}. In general, privacy and utility objectives compete with each other, making the design of privacy mechanisms a non-trivial task. When data privacy and statistical utility are measured using information-theoretic quantities (e.g., mutual information), most methods for the analysis and design of privacy mechanisms rely on the implicit assumption that the data distribution is, for the most part, known \cite{du2012privacy,sankar2013utility,calmon2015fundamental,asoodeh2016information,basciftci2016privacy,wang2017estimation,osia2019privacy,rassouli2019data,rassouli2019optimal,nageswaran2019data}. In practice, the designer has access only to a sample from the true distribution.

In this work we consider the following setup. We assume that data has both private and non-private features and, based on a sample of such pairs, the designer creates a randomized mapping called a privacy mechanism. As new samples arrive, the designed mechanism is applied to the non-private features in order to produce a sanitized version of them which is later disclosed. In this context, we assume that the adversary (data analyst) knows the true distribution of the data, the privacy mechanism being used, and the disclosed data set. The adversary's objective is then to illegitimately infer the private features associated to the disclosed data. We illustrate this procedure in Figure~\ref{fig:rob_framework}. Observe that this adversary is the strongest in terms of statistical knowledge and hence serves as a worst case benchmark. Apart from its statistical knowledge, we assume that the adversary has no side information regarding the disclosed data.

Since privacy mechanisms are designed using a sample, their privacy-utility guarantees might not generalize to the true distribution, thereby creating a privacy threat as the \textit{de facto} guarantees might be considerably different from those in the design phase. In this work, we study the effect of the discrepancy between the empirical and true distributions on the analysis and design of privacy mechanisms.

First, we consider the setting where the privacy mechanism is designed using an estimate of the data distribution to evaluate the privacy-utility guarantees. We derive bounds for the discrepancy between the privacy-utility guarantees for the empirical distribution (type) and the true (unknown) data distribution. In this context, these bounds can be used to asses the \textit{de facto} guarantees of privacy mechanisms when deployed in practice.

Next, we investigate the \emph{statistical consistency} of the optimal privacy mechanisms\footnote{Loosely speaking, a privacy mechanism is said to be optimal if it delivers the best possible utility constrained to a prescribed amount of privacy leakage. Precise definitions are given in Section~\ref{Subsection:ProblemSetup}.}. More specifically, assume that the privacy mechanism designer constructs an optimal privacy mechanism for the empirical distribution. As the number of samples increases, the optimal privacy mechanism designed for the empirical distribution naturally changes. We show that if such a sequence of privacy mechanisms converges, its limit is an optimal privacy mechanism for the true data distribution.

Finally, we introduce the notion of {\it uniform} privacy mechanism. When privacy is a priority, the privacy mechanism designer may be required to guarantee a specific level of privacy for the true distribution despite having access only to an estimate of it. Motivated by this setting, we consider privacy mechanisms that, by design, assure privacy for {\it every} distribution within a specific neighborhood of the empirical distribution. In this case, large deviations results imply that privacy is guaranteed for the true distribution with a certain probability (depending on the neighborhood). Since privacy is guaranteed {\it uniformly} on a neighborhood of the empirical distribution, we name these privacy mechanisms as \emph{uniform privacy mechanisms}.

The paper is organized as follows. In the remainder of this section we present our main contributions, review related work, and introduce notation used in this paper. In Section~\ref{sec:pre_setup} we review preliminaries, including information leakage measures and large deviations results, and recall the framework of privacy-utility trade-offs. Our main results for the discrepancy of privacy-utility guarantees, convergence properties, and uniform privacy mechanisms are presented in Sections~\ref{Section:PointwiseRobustness}, \ref{Section:convergence}, and \ref{Section:UniformRobustness}, respectively. Finally, we illustrate some of the results derived in this paper through two numerical experiments in Section~\ref{sec::Num_Exp} and provide concluding remarks in Section~\ref{Section:Conclusion}.

\begin{figure*}[t!]
\centering
\includegraphics[width=0.92\textwidth]{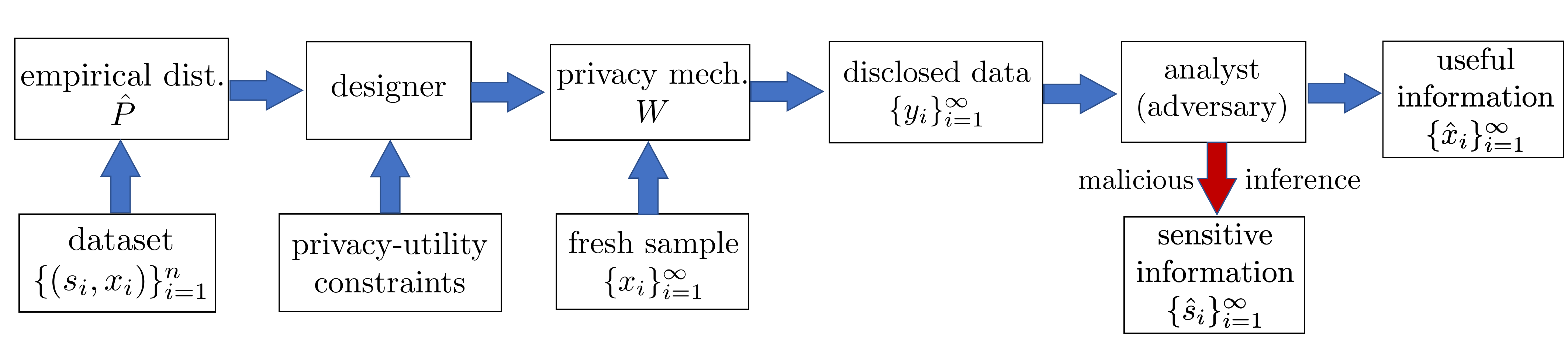}
\caption{\small{Typical setting for the design and deployment of privacy mechanisms.}}
\label{fig:rob_framework}
\end{figure*}

\subsection{Contributions}

In Section~\ref{Section:PointwiseRobustness}, we provide probabilistic upper bounds for the difference between the privacy-utility guarantees for the empirical and the true distributions under five different information metrics: probability of correctly guessing, $f$-information with $f$ locally Lipschitz, Arimoto's mutual information ($\alpha$-leakage) of order $\alpha>1$, Sibson's mutual information of order $\alpha>1$, and maximal $\alpha$-leakage of order $\alpha>1$. These bounds, which scale as $O(1/\sqrt{n})$ with $n$ being the sample size, hold uniformly across all privacy mechanisms and, hence, can be used even for privacy mechanisms that do not have explicit descriptions, e.g., the privacy mechanisms implemented using  neural networks in \cite{huang2017context}. Explicit constants depending on the information metrics under consideration are provided. The proofs of these bounds rely on known large deviations results \cite{weissman2003inequalities} and Lipschitz continuity properties of information leakage measures established in this paper. These continuity properties can be combined with results for other estimation frameworks different from the large deviations one, e.g., the $\ell_1$-minimax setting in \cite{kamath2015learning} and references therein. In those cases, the role of the empirical estimator is naturally replaced by other estimators, e.g., the add-constant estimator in \cite{kamath2015learning}.

We study the convergence properties of optimal privacy mechanisms in Section~\ref{Section:convergence}. Specifically, we consider the case when a privacy mechanism designer constructs a sequence optimal privacy mechanisms for a sequence of joint distributions converging to the true distribution. In this context, we show that while the sequence of optimal privacy mechanisms do not necessarily form a Cauchy sequence, the distance between the privacy mechanisms in the sequence and the set of optimal privacy mechanisms with respect to the true distribution does go to zero. This convergence is analyzed for information measures which satisfy three technical conditions outlined in Section~\ref{Section:convergence}. These conditions are satisfied by probability of correctly guessing, $f$-information with $f$ locally Lipschitz, and Arimoto's mutual information ($\alpha$-leakage) of order $\alpha>1$. As a by-product, we prove that, under these conditions, the privacy-utility function as a function of the joint distribution is continuous except possibly at the boundary of its domain.

In Section~\ref{Section:UniformRobustness}, we introduce the notion of uniform privacy mechanism and prove the existence of {\it optimal uniform privacy mechanisms}, i.e., uniform privacy mechanisms that attain the best utility in a max-min sense. Optimal uniform privacy mechanisms are considerably harder to design than their non-uniform counterparts as privacy has to be guaranteed for any distribution within a neighborhood of the empirical distribution. Nonetheless, we show that they can be approximated by certain mechanisms designed to deliver privacy \emph{only} for the empirical distribution. This approximation result circumvents the highly non-trivial task of designing optimal uniform privacy mechanisms.

\subsection{Related Work}

Differential privacy is a popular measure of privacy which aims at answering queries while simultaneously ensuring privacy of individual records in the database \cite{dwork2011differential,dwork2014algorithmic}. It does not take into account the distribution of the entire dataset which results in its robustness with respect to the dataset distribution, a property that is not satisfied \textit{a priori} by information-theoretic privacy mechanisms. In this work we analyze the degree of robustness to variations of the dataset distribution of certain information-theoretic measures of privacy. 

The study of privacy from an information-theoretic point of view heavily depends on the chosen information metric. One commonly-used information leakage metric is Shannon's mutual information \cite{shannon1949communication} and its generalizations \cite{verdu2015alpha}. In particular, both $\alpha$-leakage and  $f$-information --- a special case of Csisz{\'a}r's $f$-divergence \cite{ciszar1967information} ---  have been used to formulate the privacy-utility trade-offs problem, see e.g., Liao~\etal \cite{liao2018privacy}. These generalizations seek to study privacy using information metrics that have appeared in the information theory literature and carry some operational (statistical) meaning. There are also information measures which have been introduced specifically in the context of privacy and information leakage. For example, Issa~\etal \cite{issa2016maximal,issa2016operational} introduced a metric called maximal leakage to measure privacy leakage, later extended by Liao~\etal \cite{liao2018tunable} in a tunable measure called $\alpha$-leakage. Duchi~\etal \cite{duchi2014privacy} presented locally differential privacy. We refer the reader to Wagner and Eckhoff~\cite{wagner2018technical} for a survey of privacy metrics and remark that information metrics also shed light on many secrecy problems, e.g., cryptosystems with short keys \cite{li2018maximal}, side-channel attack \cite{issa2018operational}, and entropic security \cite{russell2002fool,dodis2005entropic}. 

Many methods to design information-theoretic privacy mechanisms (see Figure~\ref{fig:rob_framework}) have been proposed in the past, see, e.g., \cite{rebollo2010t,du2012privacy,sankar2013utility,asoodeh2017estimation,liao2017hypothesis,issa2018operational}. However, all these works assume that the designer has full knowledge of the data distribution. In practice, the designer may only have access to $n$ i.i.d.~samples, creating a mismatch between the distribution used to design the privacy mechanism and the true one. We significantly generalize the work of Wang and Calmon \cite{wang2017estimation}, which studies the robustness of privacy mechanisms when privacy and utility are measured using $\chi^2$-information, to encompass a wide range of information leakage measures. To be more specific, we show that for every information leakage measure considered (see Theorem~\ref{thm:meta_upper_bounds}), the discrepancy between the privacy-utility guarantees for the empirical and true distributions scales as $O(1/\sqrt{n})$. It is important to remark that, in the context of the information bottleneck, Shamir~\etal \cite{shamir2010learning} established similar upper bounds for mutual information that scale as $O(\log(n)/\sqrt{n})$. Hence, mutual information appears to fall outside the scope of the techniques presented in this paper. A related problem is that of estimating information-theoretic measures from a limited number of samples, as studied by, for example, Jiao~\etal \cite{jiao2015minimax}, Wu and Yang \cite{wu2016minimax} and Issa~\etal \cite{issa2018operational} in the context of privacy. Our work differs from those as we do not focus on the construction of estimators suited for a specific information leakage measure, instead we analyze the performance of the plug-in estimator under different leakage measures. Given the ease of implementation of this estimator, we believe our contributions might be of interest to practitioner.

The fundamental trade-offs between privacy guarantees and statistical utility in data disclosure have been studied in several papers from an information-theoretic view. For example, Yamamoto \cite{yamamoto1983source} developed the trade-off between rate, distortion, and equivocation for a specific source coding model. Rebollo-Monedero~\etal \cite{rebollo2010t}, Calmon and Fawaz \cite{du2012privacy}, and Sankar~\etal \cite{sankar2013utility} characterized privacy-utility trade-offs using tools from rate-distortion theory. Varodayan and Khisti \cite{varodayan2011smart} and Tan~\etal \cite{tan2013increasing} considered the privacy-energy efficiency trade-off in smart meter systems. Makhdoumi~\etal \cite{makhdoumi2014information} showed the connection between the privacy funnel and the information bottleneck proposed by Tishby~\etal \cite{tishby2000information}. The privacy-utility function, as presented in Section~\ref{Subsection:ProblemSetup}, has been analyzed to find the fundamental limits of privacy-utility trade-offs under different metrics, see, e.g., Calmon~\etal \cite{calmon2013bounds} and Asoodeh~\etal \cite{asoodeh2016information}. It has been shown that this function exhibits some common properties across different metrics. For example, it is continuous, concave, and strictly increasing with respect to the privacy parameter, see also Wang and Calmon \cite{wang2017estimation} and Asoodeh~\etal \cite{asoodeh2017estimation}.
%

The study of robustness has appeared in many different areas, such as optimization \cite{ben2009robust}, statistics \cite{huber2011robust}, artificial intelligence \cite{valiant2000robust}, and machine learning \cite{charikar2017learning}. The notion of robustness considered in this work is closely related to generalization in machine learning \cite{shalev2014understanding}, which aim at analyzing different performances of the learner's output between training and testing phases. More specifically, a learner may output a classifier $h$ which minimizes a given empirical risk over a sample $S=(z_1,\cdots,z_m)$, namely,
\begin{align}
    L_S(h)\defined \frac{1}{m}\sum_{i=1}^m \ell(h,z_i),
\end{align}
where $\ell$ is a given loss function. Nonetheless, in practice, the same classifier $h$ might exhibit a very different true risk, defined as
\begin{align}
\label{eq:risk_func_defn}
    L(h) \defined \EE{\ell(h,Z)},
\end{align}
where $Z$ is a random variable distributed according to the dataset distribution. The goal of generalization is to analyze the difference between the empirical risk and the true risk and derive bounds for this discrepancy. This problem has been studied in computer science \cite{kearns1994introduction} and, recently, has been analyzed using tools from information theory by Xu and Raginsky \cite{xu2017information} and Russo and Zou \cite{russo2015much}. Since most information-theoretic metrics do not seem to be expressible in terms of a loss function as in \eqref{eq:risk_func_defn}, classical generalization results cannot be directly applied for analyzing the robustness of information-theoretic privacy mechanisms. Recently, the robustness of privacy mechanisms has been investigated using tools from information theory in the works of Wang and Calmon \cite{wang2017estimation} and Issa~\etal \cite{issa2018operational}. It is important to mention that some techniques developed for privacy preservation, such as differential privacy, can be useful to analyze generalization guarantees, as shown by Dwork~\etal \cite{dwork2015preserving,dwork2015generalization,dwork2015reusable}.

\subsection{Notation}
\label{Subsection:Notation}

Given random variables $U$ and $U'$ supported over a finite alphabet $\mathcal{U}$ and another random variable $V$ supported over a finite alphabet $\mathcal{V}$, we let $P_{U'} \cdot P_{V|U}$ be the joint distribution over $\mathcal{U}\times\mathcal{V}$ determined by
\begin{equation}
    \label{eq:ConventionProductDistributionTransitionMatrix}
    (P_{U'} \cdot P_{V|U})(u,v) = P_{U'}(u) P_{V|U}(v|u).
\end{equation}
Observe that with this notation, $P_U \cdot P_{V|U} = P_{U,V}$. Given another random variable $\tilde{V}$ supported over a finite alphabet $\tilde{\mathcal{V}}$, we let $P_{\tilde{V}|U} P_{V|\tilde{V}}$ be the transition probability matrix determined by
\begin{equation}
    \label{eq:ConventionProductTransitionMatrices}
    (P_{\tilde{V}|U} P_{V|\tilde{V}})(v|u) = \sum_{\tilde{v}\in\tilde{\mathcal{V}}} P_{\tilde{V}|U}(\tilde{v}|u) P_{V|\tilde{V}}(v|\tilde{v}).
\end{equation}

We let $S$ and $X$ be two random variables with discrete supports $\mathcal{S}$ and $\mathcal{X}$, respectively. We let $P$ denote the joint distribution $P_{S,X}$ and $\hat{P}$ be a generic estimate of $P$. The empirical distribution obtained from $n$ i.i.d.~samples drawn from $P$ is denoted by $\hat{P}_n$. We let $(P_n)_{n=1}^{\infty}$ be any sequence of joint distributions converging to the joint distribution $P$. Any generic distribution over $\mathcal{S}\times \mathcal{X}$ is denoted by $Q$. Also, any sequence of distributions over $\mathcal{S}\times \mathcal{X}$ is denoted by $(Q_n)_{n=1}^{\infty}$.

Let $(\mathcal{A},d_{\mathcal{A}})$ and $(\mathcal{B},d_{\mathcal{B}})$ be two metric spaces. We say that $f:\mathcal{A}\to\mathcal{B}$ is Lipschitz over $\mathcal{A}'\subseteq\mathcal{A}$ if there exists $L\geq0$ such that, for all $a_1,a_2\in\mathcal{A}'$,
\begin{equation}
    d_{\mathcal{B}}(f(a_1),f(a_2)) \leq L d_{\mathcal{A}}(a_1,a_2).
\end{equation}
For convenience, we summarize some common notation in Table~\ref{table:Notation}.
\begin{table*}
\small
\centering
\renewcommand{\arraystretch}{1.5}
\begin{tabular}{lll}
\toprule
Category & Notation & Meaning \\  
\toprule

\multirow{2}{*}{Norms} 
& $\|a\|_{\alpha}$ & $\alpha$-norm with $\alpha\in[1,\infty)$: $\left(\sum_i |a_i|^\alpha\right)^{1/\alpha}$\\
\cline{2-3}
& $\|a\|_{\infty}$ & $\infty$-norm: $\max_{i} |a_i|$\\
\midrule

\multirow{6}{*}{Joint Distributions} 
&$P$ & True joint distribution: $P_{S,X}$\\
\cline{2-3}
&$\hat{P}$ & Estimated distribution: $\hat{P}_{S,X}$\\
\cline{2-3}
&$\hat{P}_n$ & Empirical distribution obtained from $n$ i.i.d. samples\\
\cline{2-3}
&$(P_n)_{n=1}^{\infty}$ &Any sequence of joint distributions converging to $P$\\
\cline{2-3}
&$Q$ &Any joint distribution over $\mathcal{S}\times\mathcal{X}$\\
\cline{2-3}
&$(Q_n)_{n=1}^{\infty}$ &Any sequence of joint distributions over $\mathcal{S}\times\mathcal{X}$\\
\midrule

\multirow{1}{*}{Channels} 
&$W$ & Any privacy mechanism (channel) from $\mathcal{X}$ to $\mathcal{Y}$\\
\midrule

\multirow{4}{*}{Sets} 
& $\mathcal{W}$ & Set of all privacy mechanisms\\
\cline{2-3}

& $\mathcal{W}_N$ & Set of all privacy mechanisms from $\mathcal{X}$ to $\{1\,\ldots,N\}$\\
\cline{2-3}
& $\mathcal{P}$ & Set of all joint distributions over $\mathcal{S}\times \mathcal{X}$\\
\cline{2-3}
& $\mathcal{Q}$ & Any closed subset of $\mathcal{P}$\\
\bottomrule 
\end{tabular}
\\[10pt]
\caption{List of notation used in this paper.}
\label{table:Notation}
\end{table*}
\section{Preliminaries and Problem Setup}
\label{sec:pre_setup}

In this section, we review preliminary material regarding information leakage measures and large deviations bounds for distribution estimation. In addition, we formally introduce the problems addressed in this paper.

\subsection{Information Leakage Measures}

Here we review the definition and some basic properties of information leakage measures which are commonly used in the literature. An information leakage measure $\mathcal{L}(U\to V)$ quantifies how much information $V$ leaks about $U$. Specifically, $U$ and $V$ can represent raw and disclosed datasets, respectively, and, in this case, $\mathcal{L}(U \to V)$ measures the utility of the disclosed dataset. On the other hand, when $U$ and $V$ represent private information and information available to an adversary, then $\mathcal{L}(U\to V)$ quantifies the adversary's ability of inferring $U$ from the observation of $V$. In the privacy literature it is valuable to relate information leakage measures with a specific loss/gain function (implicitly) adopted by an adversary. In what follows, we review a range of information leakage measures along with their operational meanings in terms of loss/gain functions.

\subsubsection{$\alpha$-Leakage}

In an effort to unify several information leakage measures within a single framework, Liao \etal recently introduced an information leakage measure called $\alpha$-leakage \cite{liao2018tunable}.

\begin{defn}[Definition~4, \cite{liao2018tunable}]
Let $U$ and $V$ be two random variables supported over finite sets $\mathcal{U}$ and $\mathcal{V}$, respectively. For $\alpha\in(1,\infty)$, the $\alpha$-leakage from $U$ to $V$ is defined as
\begin{equation}
\label{eq:DefAlphaLeakage}
\begin{aligned}
    \mathcal{L}_\alpha(U \to V) 
    \defined \frac{\alpha}{\alpha-1} \log \frac{\max\limits_{\hat{U}:U \to V \to \hat{U}} \mathbb{E}\left[\Pr(\hat{U} = U | U,V)^{\frac{\alpha-1}{\alpha}} \right]}{\max\limits_{\hat{U} : U \indep \hat{U}} \mathbb{E}\left[\Pr(\hat{U} = U | U)^{\frac{\alpha-1}{\alpha}} \right]}.
\end{aligned}
\end{equation}
The value of $\mathcal{L}_\alpha(U \to V)$ is extended by continuity to $\alpha=1$ and $\alpha=\infty$.
\end{defn}

In terms of the adversarial setting described at the beginning of this section, $\alpha$-leakage with $\alpha\in(1,\infty]$ can be interpreted as the multiplicative increase, upon observing $V$, of the expected gain of an adversary. In order to make this observation precise, note that the expected value in the denominator of the logarithmic term in \eqref{eq:DefAlphaLeakage} equals
\begin{equation}
\label{eq:DenAlphaLeakage}
    \mathbb{E}\left[\Pr(\hat{U} = U | U)^{\frac{\alpha-1}{\alpha}} \right] = \sum_{u\in\mathcal{U}} P_U(u) P_{\hat{U}}(u)^{\frac{\alpha-1}{\alpha}}.
\end{equation}
In particular, the RHS of \eqref{eq:DenAlphaLeakage} equals expected gain of $P_{\hat{U}}$ w.r.t.~the gain function 
\begin{align}
\label{eq:GainAlphaLeakage}
    \mathsf{gain}(u,P_{\hat{U}})=P_{\hat{U}}(u)^{\frac{\alpha-1}{\alpha}}.
\end{align}
Thus, the denominator of the logarithmic term in \eqref{eq:DefAlphaLeakage} is the largest expected gain of an adversary with no additional information apart from the distribution of $U$. Also, the expected value in the numerator of the logarithmic term in \eqref{eq:DefAlphaLeakage} equals
\begin{equation}
\label{eq:NumAlphaLeakage}
\begin{aligned}
    \mathbb{E}\left[\Pr(\hat{U} = U | U,V)^{\frac{\alpha-1}{\alpha}} \right] 
    = \sum_{u\in\mathcal{U}} \sum_{v\in\mathcal{V}} P_{U,V}(u,v) P_{\hat{U}|V}(u|v)^{\frac{\alpha-1}{\alpha}}.
\end{aligned}
\end{equation}
Thus, the numerator of the logarithmic term in \eqref{eq:DefAlphaLeakage} is the largest expected gain of an adversary that has access to $V$ and the joint distribution of $U$ and $V$. From these observations we conclude that $\mathcal{L}_\alpha(U \to V)$ measures the multiplicative increase of the best expected gain upon observing $V$. In a similar way, $\mathcal{L}_1(U \to V)$ can be interpreted as additive increase, upon observing $V$, of the expectation of the gain function
\begin{equation}
    \mathsf{gain}(u,P_{\hat{U}})=\log P_{\hat{U}}(u).
\end{equation}

In addition to its operational definition, $\alpha$-leakage can be related to Arimoto's mutual information \cite{arimoto1977information,fehr2014conditional} whose definition is provided below.

\begin{defn}
\label{Def:ArimotoMI}
Let $U$ and $V$ be two random variables supported over finite sets $\mathcal{U}$ and $\mathcal{V}$, respectively. Their Arimoto's mutual information of order $\alpha\in(1,\infty)$ is defined as
\begin{equation}
    I^{\rm A}_{\alpha}(P_{U,V}) \defined \frac{\alpha}{\alpha-1} \log \frac{\sum_v \|P_{U,V}(\cdot,v)\|_{\alpha}}{\|P_U(\cdot)\|_{\alpha}}.
\end{equation}
Also, by continuous extension,
\begin{align}
    I^{\rm A}_1(P_{U,V}) \defined I(P_{U,V}) \quad \text{and}\quad
    I^{\rm A}_\infty(P_{U,V}) \defined \log \dfrac{\sum_v \max_u P_{U,V}(u,v)}{\max_u P_U(u)},
\end{align}
where $I(P_{U,V})$ denotes Shannon's mutual, i.e., $I(P_{U,V}) \defined \sum_{u\in\mathcal{U}} \sum_{v\in\mathcal{V}} P_{U,V}(u,v) \log \dfrac{P_{U,V}(u,v)}{P_U(u)P_V(v)}$.
\end{defn}

In \cite{liao2018tunable}, Liao \etal proved that $\alpha$-leakage is indeed equal to Arimoto's mutual information of order $\alpha$.

\begin{prop}[Theorem~1, {\cite{liao2018tunable}}]
\label{Prop:alphaLeakageArimoto}
Let $U$ and $V$ be two random variables supported over finite sets $\mathcal{U}$ and $\mathcal{V}$, respectively. For $\alpha\in[1,\infty]$, $\alpha$-leakage satisfies
\begin{equation}
    \mathcal{L}_\alpha(U \to V) = I^{\rm A}_\alpha(P_{U,V}).
\end{equation}
\end{prop}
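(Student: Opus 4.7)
The plan is to establish the equality by separately computing the maximal expected gains appearing in the numerator and denominator of the definition in \eqref{eq:DefAlphaLeakage}, and showing they coincide with $\sum_v \|P_{U,V}(\cdot,v)\|_\alpha$ and $\|P_U(\cdot)\|_\alpha$ respectively. The main tool is H\"older's inequality with conjugate exponents $\alpha$ and $\alpha/(\alpha-1)$, applied to certain non-negative sequences.

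First I would treat the denominator. Because $U$ and $\hat{U}$ are independent, the inner expectation reduces, as already observed in \eqref{eq:DenAlphaLeakage}, to $\sum_u P_U(u) P_{\hat{U}}(u)^{(\alpha-1)/\alpha}$. Setting $a_u \defined P_U(u)$ and $b_u \defined P_{\hat{U}}(u)^{(\alpha-1)/\alpha}$ and applying H\"older with exponents $\alpha$ and $\alpha/(\alpha-1)$ yields
\begin{equation}
    \sum_u P_U(u) P_{\hat{U}}(u)^{\frac{\alpha-1}{\alpha}} \leq \left(\sum_u P_U(u)^{\alpha}\right)^{\!1/\alpha} \!\! \left(\sum_u P_{\hat{U}}(u)\right)^{\!\frac{\alpha-1}{\alpha}} = \|P_U(\cdot)\|_\alpha.
\end{equation}
Equality is attained (the equality case in H\"older) by choosing $P_{\hat{U}}(u) \propto P_U(u)^\alpha$, which is a valid distribution. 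Hence the maximum in the denominator is exactly $\|P_U(\cdot)\|_\alpha$.

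Next I would handle the numerator. The Markov condition $U \to V \to \hat{U}$ forces $\Pr(\hat{U}=u\mid U=u',V=v) = P_{\hat{U}\mid V}(u\mid v)$, so, as in \eqref{eq:NumAlphaLeakage},
\begin{equation}
    \mathbb{E}\!\left[\Pr(\hat{U}=U\mid U,V)^{\frac{\alpha-1}{\alpha}}\right] = \sum_v \sum_u P_{U,V}(u,v) P_{\hat{U}\mid V}(u\mid v)^{\frac{\alpha-1}{\alpha}}.
\end{equation}
Since $P_{\hat{U}\mid V}(\cdot\mid v)$ can be chosen independently for each $v$, the maximization decouples across $v$. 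Applying the same H\"older argument as above to each slice $P_{U,V}(\cdot,v)$ gives an inner maximum of $\|P_{U,V}(\cdot,v)\|_\alpha$, attained by $P_{\hat{U}\mid V}(u\mid v) \propto P_{U,V}(u,v)^\alpha$ whenever $P_{U,V}(\cdot,v)$ is not identically zero (and any distribution when it is). Summing over $v$ produces $\sum_v \|P_{U,V}(\cdot,v)\|_\alpha$. Substituting the two computed maxima into \eqref{eq:DefAlphaLeakage} and comparing with Definition~\ref{Def:ArimotoMI} yields $\mathcal{L}_\alpha(U \to V) = I^{\rm A}_\alpha(P_{U,V})$ for every $\alpha \in (1,\infty)$.

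Finally, the boundary cases $\alpha = 1$ and $\alpha = \infty$ follow by the continuous extensions in both definitions: the $\alpha\to\infty$ limit of $(\sum_u P_U(u)^\alpha)^{1/\alpha}$ is $\max_u P_U(u)$ (and similarly for the slices $P_{U,V}(\cdot,v)$), matching $I_\infty^{\rm A}$, while the $\alpha\to 1$ limit of the ratio inside the logarithm, after a standard l'H\^opital calculation, recovers Shannon's mutual information. I anticipate the only delicate point is being careful with the equality condition in H\"older when $P_U$ or some slice $P_{U,V}(\cdot,v)$ has zeros, which is handled by interpreting the proportionality on the support and choosing the remaining mass arbitrarily; this does not affect the value of the maximum.
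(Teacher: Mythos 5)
Your proof is correct. Note that the paper does not prove this proposition itself—it is quoted from Liao \emph{et al.}~\cite{liao2018tunable}—and your argument (reducing both the numerator and denominator of \eqref{eq:DefAlphaLeakage} to an inner maximization of the form $\max_{P_{\hat U}}\sum_u P_U(u)P_{\hat U}(u)^{(\alpha-1)/\alpha}$, solving it via the equality case of H\"older's inequality with exponents $\alpha$ and $\alpha/(\alpha-1)$ to get the $\alpha$-norms, and handling $\alpha\in\{1,\infty\}$ by continuous extension) is essentially the standard proof given in that reference.
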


The previous proposition shows that $\alpha$-leakage recovers Shannon's mutual information in the extremal case $\alpha = 1$. It is important to point out that Shannon's mutual information has been extensively used in the context of privacy. Of particular relevance for the present papers are the works of Rebollo-Monedero \etal \cite{rebollo2010t}, Sankar \etal \cite{sankar2013utility}, Calmon and Fawaz \cite{du2012privacy}, Asoodeh \etal \cite{asoodeh2016information}, and references therein. On the other extreme, when $\alpha=\infty$, $\alpha$-leakage is closely related to another information leakage measure: probability of correctly guessing. This information leakage measure, whose definition is recalled next, has been used recently in the context of privacy-utility trade-offs by Asoodeh \etal \cite{asoodeh2017estimation} and in the broader privacy literature by Smith \cite{smith2009foundations}, Braun \etal \cite{braun2009quantitative}, Barthe and Kopf \cite{barthe2011information}, and references therein.

\begin{defn}
\label{Def:Pc}
Let $U$ and $V$ be two random variables supported over finite sets $\mathcal{U}$ and $\mathcal{V}$, respectively. The probability of correctly guessing $U$ and the probability of correctly guessing $U$ given $V$ are given by
\begin{align}
    P_c(U) &\defined \max_{u\in\mathcal{U}} P_U(u),\\
    P_c(U|V) &\defined \sum_{v\in\mathcal{V}} \max_{u\in\mathcal{U}} P_{U,V}(u,v).
\end{align}
\end{defn}

As its name suggests, $P_c(U)$ is equal to the (largest) probability of correctly guessing $U$ without any side information but the distribution of $U$. Similarly, $P_c(U|V)$ is equal to the (largest) probability of correctly guessing $U$ given $V$ and the joint distribution $P_{U,V}$. This interpretation can be made precise using the adversarial setting and the gain function defined in \eqref{eq:GainAlphaLeakage} with $\alpha=\infty$. Indeed, a simple manipulation shows that
\begin{align}
    P_c(U) = \max_{\hat{U}:U\indep\hat{U}} \mathbb{E}\left(\mathbbm{1}_{U = \hat{U}} \right) \quad \text{and}\quad
    P_c(U|V) = \max_{\hat{U}: U \to V \to \hat{U}} \mathbb{E}\left(\mathbbm{1}_{U = \hat{U}} \right),
\end{align}
which corresponds to the adversarial setting under the 0-1 loss. Observe that under this loss, the optimal adversary strategy is the same as the maximum a posteriori (MAP) decoder in the communication literature. Finally, observe that
\begin{equation}
\label{eq:RelationLinftyPc}
    \mathcal{L}_{\infty}(U \to V) = \log \frac{P_c(U|V)}{P_c(U)},
\end{equation}
evidencing the intrinsic relation between $\infty$-leakage and probability of correctly guessing.

\begin{rem}
Despite the close relation between $\infty$-leakage and probability of correctly guessing, we always treat them separately. On the one hand, probability of correctly guessing is of interest in its own right. On the other hand, the results for probability of correctly guessing are easier to derive than their $\infty$-leakage counterparts. As a result, working explicitly with probability of correctly guessing allows us to present the key techniques used in this paper while keeping the technical difficulties at a minimum.
\end{rem}

In \cite{issa2016operational,issa2018operational}, Issa \etal introduced the notion of maximal leakage in order to quantify the adversary's capability to infer any (randomized) function of $U$ from $V$. Motivated by this notion, Liao \etal introduced {\it maximal $\alpha$-leakage}, a tunable information leakage measure that is equal to maximal leakage when $\alpha=\infty$. 
\begin{defn}[Definition~5, \cite{liao2018tunable}]
Let $U$ and $V$ be two random variables supported over finite sets $\mathcal{U}$ and $\mathcal{V}$, respectively. For $\alpha\in[1,\infty]$, the maximal $\alpha$-leakage from $U$ to $V$ is defined as
\begin{equation}
\label{eq:DefMaximalAlphaLeakage}
    \mathcal{L}_\alpha^{\max}(U \to V) \defined \sup_{T:T \to U \to V} \mathcal{L}_\alpha(T \to V),
\end{equation}
where $T$ represents any (randomized) function of $U$ that takes values on a finite but arbitrary alphabet.
\end{defn}

Arimoto's mutual information can be regarded as a generalization of Shannon's mutual information, see, e.g., \cite{verdu2015alpha}. Another such generalization is Sibson's mutual information \cite{sibson1969information} which, as shown below, is also related to maximal $\alpha$-leakage.

\begin{defn}[Definition~4, \cite{verdu2015alpha}]
\label{Def:SibsonMI}
Let $U$ and $V$ be two random variables supported over finite sets $\mathcal{U}$ and $\mathcal{V}$, respectively. Their Sibson's mutual information of order $\alpha\in(1,\infty)$ is defined as 
\begin{equation}
    I^{\rm S}_\alpha(P_{U,V}) \defined \frac{\alpha}{\alpha-1} \log \sum_{v\in\mathcal{V}} \|P_U(\cdot)^{1/\alpha} P_{V|U}(v|\cdot)\|_\alpha.
\end{equation}
Also, $I^{\rm S}_1(P_{U,V}) = I(P_{U,V})$ and $I^{\rm S}_\infty(P_{U,V}) = \log \left( \sum_{v\in\mathcal{V}} \max_{u\in\mathcal{U}} P_{V|U}(v|u)\right)$.
\end{defn}

The following proposition shows that maximal $\alpha$-leakage is the Arimoto's capacity under an input support constraint which, in turn, is equal to Sibson's capacity under the same constraint.

\begin{prop}[Theorem~2, {\cite{liao2018tunable}}]
\label{Prop:MaximalAlphaLeakageExpression}
Let $U$ and $V$ be two random variables supported over finite sets $\mathcal{U}$ and $\mathcal{V}$, respectively. For $\alpha\in(1,\infty]$, maximal $\alpha$-leakage satisfies
\begin{align}
    \mathcal{L}_\alpha^{\max}(U \to V) = \sup_{P_{\tilde{U}}} I^{\rm A}_\alpha(P_{\tilde{U}}\cdot P_{V|U})
    = \sup_{P_{\tilde{U}}} I^{\rm S}_\alpha(P_{\tilde{U}}\cdot P_{V|U}),
\end{align}
where the support of $P_{\tilde{U}}$ is a subset of the support of $P_U$\footnote{In \cite[Theorem~2]{liao2018tunable}, $P_{\tilde{U}}$ ranges over the distributions with the same support as $P_U$. However, their proof readily implies  Proposition~\ref{Prop:MaximalAlphaLeakageExpression} as stated in this work.}. Also, $\mathcal{L}_1^{\max}(U \to V) = I(P_{U,V})$.
\end{prop}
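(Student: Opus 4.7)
The plan is to reduce maximal $\alpha$-leakage to a supremum of Arimoto mutual information via Proposition~\ref{Prop:alphaLeakageArimoto}, and then establish both the Arimoto and Sibson characterizations in turn. Substituting $\mathcal{L}_\alpha(T\to V) = I^{\rm A}_\alpha(P_{T,V})$ into definition \eqref{eq:DefMaximalAlphaLeakage} gives
\begin{equation}
\mathcal{L}_\alpha^{\max}(U\to V) = \sup_{T:\, T\to U\to V} I^{\rm A}_\alpha(P_{T,V}),
\end{equation}
so the problem reduces to showing that this supremum equals $\sup_{P_{\tilde U}} I^{\rm A}_\alpha(P_{\tilde U}\cdot P_{V|U})$ (over $P_{\tilde U}$ with $\supp(P_{\tilde U}) \subseteq \supp(P_U)$), and then passing from Arimoto to Sibson.

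For the upper bound, I would fix any $T\to U\to V$, expand $P_{V|T}(v|t) = \sum_u P_{U|T}(u|t)P_{V|U}(v|u)$, and apply Jensen's inequality for the convex map $x\mapsto x^\alpha$ to obtain
\begin{equation}
\|P_{T,V}(\cdot,v)\|_\alpha^\alpha = \sum_t P_T(t)^\alpha\Bigl(\sum_u P_{U|T}(u|t)P_{V|U}(v|u)\Bigr)^\alpha \leq \sum_u \tilde Q(u)\, P_{V|U}(v|u)^\alpha,
\end{equation}
where $\tilde Q(u) \defined \sum_t P_T(t)^\alpha P_{U|T}(u|t)$. Noting $\|P_T\|_\alpha^\alpha = \sum_u \tilde Q(u)$ and that $\tilde Q(u)>0$ forces $P_U(u)>0$, setting $P_{\tilde U}(u) \propto \tilde Q(u)^{1/\alpha}$ and simplifying the ratio $\sum_v \|P_{T,V}(\cdot,v)\|_\alpha / \|P_T\|_\alpha$ then yields the bound $I^{\rm A}_\alpha(P_{T,V}) \leq I^{\rm A}_\alpha(P_{\tilde U}\cdot P_{V|U})$ with $P_{\tilde U}$ supported on $\supp(P_U)$.

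For the matching lower bound, the naive choice $T=\tilde U$ with $P_{U|T} = \delta_{\tilde U}$ fails, as it would force the identity $P_U = P_{\tilde U}$. Instead, I would use a mass-splitting construction: given a target $P_{\tilde U}$ supported on $\supp(P_U)$, split each $u$ into $m_u$ copies $t$ with $P_T(t) = P_U(u)/m_u$ and $P_{U|T=t} = \delta_u$. This is a valid Markov chain, and a direct computation shows the resulting Arimoto MI depends only on the weights $q_u = m_u^{1-\alpha}P_U(u)^\alpha$; choosing $m_u \propto (P_U(u)/P_{\tilde U}(u))^{\alpha/(\alpha-1)}$ makes $q_u \propto P_{\tilde U}(u)^\alpha$, which after normalization recovers $I^{\rm A}_\alpha(P_{\tilde U}\cdot P_{V|U})$ in the limit. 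The Sibson identity then follows from the classical Arimoto--Sibson capacity equivalence via the escort-distribution change of variable $P(u)\mapsto P(u)^\alpha/\sum_{u'}P(u')^\alpha$ (see~\cite{verdu2015alpha}), applied to $P_{V|U}$ restricted to inputs on $\supp(P_U)$. The boundary case $\alpha=\infty$ follows by continuity, and $\alpha=1$ reduces directly to Shannon's data-processing inequality, which gives $\sup_T I(T;V) = I(U;V)$ with equality at $T=U$.

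The main obstacle is the lower bound: the splitting construction yields only integer multiplicities $m_u$, so arbitrary real ratios $m_u\propto (P_U(u)/P_{\tilde U}(u))^{\alpha/(\alpha-1)}$ must be approached via rational approximation, and one must invoke a continuity argument for $I^{\rm A}_\alpha$ in its input distribution to transfer the limit inside the supremum.
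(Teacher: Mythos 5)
This proposition is imported verbatim from \cite{liao2018tunable} (Theorem~2 there); the paper offers no proof of its own beyond the citation and the footnote about relaxing the support condition, so there is nothing in-paper to compare against. Your reconstruction is essentially the standard argument: the upper bound via Jensen applied to $x\mapsto x^\alpha$ after expanding $P_{V|T}$ through $U$, with the induced $P_{\tilde{U}}\propto\tilde{Q}^{1/\alpha}$ automatically supported on $\supp(P_U)$ (which is exactly what justifies the footnote's relaxation), and the lower bound via the mass-splitting (``shattering'') construction, as in \cite{issa2016operational,liao2018tunable}. The computations check out: $\|P_T\|_\alpha^\alpha=\sum_u\tilde{Q}(u)$, the normalizing constants cancel in the Arimoto ratio, and the escort-distribution bijection $P\mapsto P^\alpha/\|P\|_\alpha^\alpha$ does carry $\sup I^{\rm A}_\alpha$ onto $\sup I^{\rm S}_\alpha$ with matching supports. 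The one gap you flag yourself --- integer multiplicities $m_u$ versus the real-valued targets --- is genuine but benign: the objective depends on $(q_u)$ only through its normalization, is continuous on that simplex (the denominator $\sum_u q_u$ is bounded away from zero there), and the achievable normalized $q$'s are dense, so the supremum is attained in the limit. Two minor points deserve one line each in a full write-up: for $\alpha=\infty$ it is cleaner to argue directly (Proposition~\ref{Prop:MaxL_SMI_inf}, noting $I^{\rm S}_\infty$ is input-independent) than to exchange a limit in $\alpha$ with the supremum over $T$; and for $\alpha=1$ one should note $T=U$ is itself a feasible choice, so the DPI bound is attained.
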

In particular, maximal $\alpha$-leakage of order infinity (i.e., maximal leakage) is the Sibson's mutual information of order infinity.
\begin{prop}[Corollary~1, {\cite{issa2016operational}}]
\label{Prop:MaxL_SMI_inf}
Let $U$ and $V$ be two random variables supported over finite sets $\mathcal{U}$ and $\mathcal{V}$, respectively. Then,
\begin{equation}
    \mathcal{L}_{\infty}^{\max}(U \to V) = I_{\infty}^S(P_{U,V}).
\end{equation}
\end{prop}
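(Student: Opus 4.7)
The plan is to derive Proposition~\ref{Prop:MaxL_SMI_inf} as a direct specialization of Proposition~\ref{Prop:MaximalAlphaLeakageExpression} at $\alpha = \infty$, so that the entire argument reduces to evaluating and optimizing Sibson's mutual information of order infinity over input distributions $P_{\tilde U}$ supported on (a subset of) $\mathrm{supp}(P_U)$. Specifically, I would start from
\begin{equation*}
\mathcal{L}_{\infty}^{\max}(U \to V) = \sup_{P_{\tilde U}} I^{\rm S}_\infty(P_{\tilde U}\cdot P_{V|U}),
\end{equation*}
which is the $\alpha=\infty$ case of Proposition~\ref{Prop:MaximalAlphaLeakageExpression}, and then show that the supremum on the right-hand side is attained by $P_{\tilde U}=P_U$, yielding $I^{\rm S}_\infty(P_{U,V})$.

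The key step is the observation that, by Definition~\ref{Def:SibsonMI},
\begin{equation*}
I^{\rm S}_\infty(P_{\tilde U}\cdot P_{V|U}) = \log \sum_{v\in\mathcal{V}} \max_{u\in\,\mathrm{supp}(P_{\tilde U})} P_{V|U}(v|u),
\end{equation*}
so the value of $I^{\rm S}_\infty$ depends on $P_{\tilde U}$ only through its support, and it is monotone non-decreasing in that support. Since Proposition~\ref{Prop:MaximalAlphaLeakageExpression} restricts $\mathrm{supp}(P_{\tilde U})\subseteq \mathrm{supp}(P_U)$, the supremum is attained whenever the two supports coincide, in particular when $P_{\tilde U}=P_U$. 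Combining these two facts gives $\mathcal{L}_{\infty}^{\max}(U \to V) = I^{\rm S}_\infty(P_{U,V})$.

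The only subtle point, and what I would treat as the main (albeit minor) obstacle, is handling the support convention carefully: one must verify that both sides of the identity agree on how $\max_{u\in\mathcal{U}} P_{V|U}(v|u)$ is interpreted when $P_U$ is not fully supported, and rule out any contribution from symbols $u$ with $P_U(u)=0$ where $P_{V|U}(\cdot|u)$ is not operationally meaningful. I would address this by explicitly writing the maximum over $\mathrm{supp}(P_U)$ throughout, noting that the footnote following Proposition~\ref{Prop:MaximalAlphaLeakageExpression} already permits $P_{\tilde U}$ to have strictly smaller support than $P_U$, so the optimization is well-posed and consistent with Definition~\ref{Def:SibsonMI}.
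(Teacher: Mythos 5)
The paper does not prove this proposition at all: it is imported verbatim as Corollary~1 of Issa \etal \cite{issa2016operational}, whose original argument is operational (one shows that the supremum over randomized functions $T$ of $U$ in the definition of maximal leakage is attained by a suitably ``shattering'' $T$, and evaluates it directly to get $\log\sum_v \max_u P_{V|U}(v|u)$). Your derivation instead treats the statement as the $\alpha=\infty$ specialization of Proposition~\ref{Prop:MaximalAlphaLeakageExpression}, and this is sound: since $I^{\rm S}_\infty(P_{\tilde U}\cdot P_{V|U}) = \log\sum_{v}\max_{u\in\mathrm{supp}(P_{\tilde U})} P_{V|U}(v|u)$ depends on $P_{\tilde U}$ only through its support and is monotone under support enlargement, the supremum over $\mathrm{supp}(P_{\tilde U})\subseteq\mathrm{supp}(P_U)$ is attained at $P_{\tilde U}=P_U$, giving exactly $I^{\rm S}_\infty(P_{U,V})$; your care with the support convention (and the footnote to Proposition~\ref{Prop:MaximalAlphaLeakageExpression} permitting strictly smaller supports) is exactly the right point to flag, and since the paper's convention is that $U$ is \emph{supported over} $\mathcal{U}$, the maximum over $\mathcal{U}$ in Definition~\ref{Def:SibsonMI} coincides with the maximum over $\mathrm{supp}(P_U)$. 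The trade-off between the two routes is that yours is shorter and stays entirely inside the machinery already quoted in the paper, but it is only as strong as Proposition~\ref{Prop:MaximalAlphaLeakageExpression} (Liao \etal's Theorem~2), which historically was proved \emph{after}, and partly by generalizing, the Issa \etal result; as a self-contained proof of maximal leakage $=$ Sibson of order infinity it is therefore logically fine here but not independent of the later, more general theorem.
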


\subsubsection{$f$-Information}

We finish our review on information leakage measures by recalling the definition and some basic properties of $f$-information. This information leakage measure is defined in terms of Csisz{\'a}r's $f$-divergence whose definition is recalled next for the reader's convenience.

\begin{defn}[Definition~1.1, {\cite{ciszar1967information}}]
Let $f:(0,\infty)\to\Reals$ be a convex function with $f(1)=0$. Assume that $Q_1$ and $Q_2$ are two probability distributions over a finite set $\mathcal{Z}$ and that $Q_1 \ll Q_2$. The $f$-divergence between $Q_1$ and $Q_2$ is given by
\begin{equation}
D_f(Q_1 \| Q_2)\defined \sum_{z\in\mathcal{Z}} Q_2(z) f\left(\frac{Q_1(z)}{Q_2(z)}\right).
\end{equation}
\end{defn}

Given a function $f$ as in the previous definition, its corresponding $f$-information is defined as follows.

\begin{defn}
Let $f:(0,\infty)\to\Reals$ be a convex function with $f(1)=0$. Furthermore, let $U$ and $V$ be two random variables supported over finite sets $\mathcal{U}$ and $\mathcal{V}$, respectively. Their $f$-information is defined by
\begin{equation}
\begin{aligned}
I_f(P_{U,V}) &\defined D_f(P_{U,V} \| P_U P_V)\\
&=\sum_{u\in\mathcal{U}} \sum_{v\in\mathcal{V}} P_{U}(u)P_{V}(v) f\left(\frac{P_{U,V}(u,v)}{P_U(u)P_V(v)}\right).
\end{aligned}
\end{equation}
\end{defn}

While an $f$-information may not have a straightforward interpretation in operational terms for a specific function $f$, many functions $f$ do, e.g., mutual information and $\chi^2$-information discussed below. Hence, a general treatment allows us to simultaneously handle those $f$-information for which there is a concrete operational meaning and those whose operational interpretation is yet to be discovered. More recent developments about the properties of $f$-divergence can be found in Raginsky \cite{raginsky2016strong}, Calmon \etal \cite{calmon2017strong}, and the references therein.

In the context of privacy, $\chi^2$-information is a fine example of an $f$-information with a tangible operational interpretation. Following the standard convention, the $\chi^2$-information between two random variables $U$ and $V$ is defined as $\chi^2(P_{U,V}) \defined I_f(P_{U,V})$ with $f(t)=(t-1)^2$. Relying on the so-called \textit{principal inertia components} (PICs) \cite{greenacre1984theory}, Calmon \etal \cite{du2017principal} showed that if $\chi^2(U;V)<\epsilon$ for some $0<\epsilon<1$, then the \textit{minimum mean-squared error} (MMSE) of reconstructing \textit{any} zero-mean unit-variance function of $U$ given $V$ is lower bounded by $1-\epsilon$, i.e., no function of $U$ can be reconstructed with small MMSE given an observation of $V$. Thus, $\chi^2$-information measures an adversary's ability to reconstruct functions of $U$ from $V$ under an MMSE loss. Recently, $\chi^2$-information has been used in the context of privacy-utility trade-offs by Wang and Calmon in \cite{wang2017estimation}.

\subsection{Large Deviations Bounds for Distribution Estimation}
\label{Subsection:EmpiricalDistribution}

Now we review some preliminary results regarding the distance between the empirical and true distributions. Throughout this work, we measure the distance between two probability distributions over $\mathcal{Z}$, say $Q_1$ and $Q_2$, by their $\ell_1$-distance which is given by
\begin{equation}
    \|Q_1 - Q_2\|_1 \defined \sum_{z\in\calZ} |Q_1(z) - Q_2(z)|.
\end{equation}
A result by Weissman \etal \cite[Theorem~2.1]{weissman2003inequalities} establishes that, for all $\epsilon > 0$,
\begin{equation}
    \Pr\left(\|\hat{P}_n-P\|_1 \geq \epsilon\right) 
    \leq (2^{|\mathcal{Z}|}-2)\exp(-n\phi(\pi_P)\epsilon^2/4),
\end{equation}
where $P$ is a probability distribution over the finite set $\mathcal{Z}$, $\hat{P}_n$ is the empirical distribution obtained from $n$ i.i.d.~samples from $P$, $\displaystyle \pi_P \defined \max_{\mathcal{A}\subseteq\mathcal{Z}} \min\{ P(\mathcal{A}),1-P(\mathcal{A})\}$, and
\begin{equation}
    \phi(p)\defined 
    \begin{cases}
    \frac{1}{1-2p}\log\frac{1-p}{p} & p\in[0,1/2), \\ 
    2 & p=1/2.
    \end{cases}
\end{equation}
Note that $\phi(p)\geq 2$ for all $p\in[0,1/2]$. Hence,
\begin{equation}
\label{equa:L1bound_weissman}
\begin{aligned}
    \Pr\left(\|\hat{P}_n-P\|_1 \geq \epsilon\right)
    &\leq \exp(|\mathcal{Z}|)\exp(-n\epsilon^2/2).
\end{aligned}
\end{equation}
By taking $\mathcal{Z}=\mathcal{S}\times\mathcal{X}$ and $\epsilon = \sqrt{\frac{2}{n}\left(|\mathcal{S}|\cdot|\mathcal{X}|-\log\beta\right)}$, inequality~\eqref{equa:L1bound_weissman} implies that, with probability at least $1-\beta$,
\begin{equation}
\label{equa:L1bound_Devroye}
    \|\hat{P}_n - P\|_1 \leq \sqrt{\frac{2}{n}\left(|\mathcal{S}|\cdot|\mathcal{X}|-\log\beta\right)}.
\end{equation}
Even though in this paper we focus on large deviations results, it is worth pointing out that the order $O(\sqrt{|\mathcal{S}|\cdot|\mathcal{X}|/n})$ is present in other fundamental settings, e.g., the minimax expected loss framework in \cite[Corollary~9]{kamath2015learning}.

\subsection{Problem Setup}
\label{Subsection:ProblemSetup}

Suppose that $S$ is a variable to be hidden (e.g., political preference) and $X$ is an observed variable  (e.g., movie ratings) that is correlated with $S$. In order to receive some utility (e.g., personalized recommendations), we would like to disclose as much information about $X$ without compromising $S$. An approach with rigorous privacy guarantees is to release a new random variable $Y$ produced by applying a randomized mapping to $X$, hence forming a Markov chain $S \to X \to Y$. This mapping, called the privacy mechanism, is designed to satisfy a specific privacy constraint.

In the sequel, we assume that $S$ and $X$ are discrete random variables with support sets $\mathcal{S}$ and $\mathcal{X}$, respectively. Let $\mathcal{W}$ be the set of all privacy mechanisms which take $X$ as input and display a discrete random variable $Y$ as output. More specifically, let
\begin{equation}
\label{eq:DefCalW}
    \mathcal{W} \defined \bigcup_{N\geq1} \bigg\{W\in[0,1]^{|\mathcal{X}| \times N} : \forall x\in\mathcal{X}, \sum_{y=1}^N W(x,y) =1\bigg\}.
\end{equation}
Through this work, we denote the joint distribution of $S$ and $X$ by $P$ and the privacy mechanism producing $Y$ from $X$ by $W$, i.e., $W = P_{Y|X}$. The privacy leakage and the utility generated by a mapping $W\in\mathcal{W}$ for the true distribution $P$ are denoted by $\mathcal{L}(P,W)$ and $\mathcal{U}(P,W)$, respectively. Throughout this paper, $\mathcal{L}$ and $\mathcal{U}$ are either probability of correctly guessing, $f$-information with $f$ locally Lipschitz, Arimoto's mutual information ($\alpha$-leakage) of order $\alpha\in(1,\infty]$, Sibson's mutual information of order $\alpha\in(1,\infty]$, or maximal $\alpha$-leakage of order $\alpha\in(1,\infty]$.

In this framework, a natural problem is to characterize the fundamental trade-off between privacy and utility as captured by the following definition. We denote by $\mathcal{P}$ the set of all probability distributions over $\mathcal{S}\times\mathcal{X}$.

\begin{defn}
\label{defn:PUT}
For a given $P\in\mathcal{P}$ and $\displaystyle \epsilon\geq\inf_{W\in\mathcal{W}} \mathcal{L}(P,W)$, the {\it privacy-utility function} is defined as
\begin{equation}
\label{eq:DefH}
    \mathsf{H}(P;\epsilon) \defined \sup_{W\in\mathcal{D}(P;\epsilon)} \mathcal{U}(P,W),
\end{equation}
where $\mathcal{D}(P;\epsilon) \defined \{ W\in\mathcal{W} : \mathcal{L}(P,W) \leq \epsilon\}$. Furthermore, the collection of all optimal privacy mechanisms for $P$ at $\epsilon$ is defined as
\begin{align}
\label{equa:opt_privacy_mehc}
    \mathcal{W}^*(P;\epsilon)\defined \{W\in \mathcal{W} : \mathcal{L}(P,W)\leq \epsilon,\  \mathcal{U}(P,W)=\mathsf{H}(P;\epsilon)\}.
\end{align}
\end{defn}

Observe that, by definition, $\mathcal{D}(P;\epsilon)$ is the set of all privacy mechanisms providing an $\epsilon$-privacy guarantee for $P$. Hence, the privacy-utility function in Definition~\ref{defn:PUT} quantifies the best utility achieved by any privacy mechanism providing an $\epsilon$-privacy guarantee for $P$.

This specific type of privacy-utility trade-off (PUT), and the optimal privacy mechanisms associated to it, has been investigated for several measures of privacy and utility, see, for example, \cite{asoodeh2016information,asoodeh2017estimation,wang2017estimation,rassouli2019optimal}. These investigations often rely on the implicit assumption that the data distribution is, for the most part, known. However, in practice, the data distribution may only be accessed through a limited number of samples. In this work, we revisit this assumption and study its implications in the design and performance of privacy mechanisms. Next we present the problems addressed and the main contributions of this paper.

\subsubsection{Discrepancy of Privacy-Utility Guarantees}
\label{Subsection:PointwiseRobust}

In practice, the designer may not have access to the true distribution $P$, but only to samples drawn from this distribution. In this case, the privacy-utility guarantees for a distribution estimated from the samples, say $\hat{P}$, and the true distribution $P$ might be different. For any given privacy mechanism $W$, these discrepancies are effectively quantified by
\begin{equation}
\label{eq:Discrepancies}
|\mathcal{L}(\hat{P},W) - \mathcal{L}(P,W)| \quad \text{and} \quad |\mathcal{U}(\hat{P},W) - \mathcal{U}(P,W)|.
\end{equation}
In Section~\ref{Section:PointwiseRobustness} we provide probabilistic upper bounds for the discrepancies in \eqref{eq:Discrepancies} when the estimated distribution $\hat{P}$ is the empirical distribution $\hat{P}_n$ of $n$ i.i.d.~samples drawn from $P$. These upper bounds depend on the sample size $n$, the alphabet sizes $|\mathcal{S}|$ and $|\mathcal{X}|$, and, in some cases, the probability of the least likely symbol of the marginals of $\hat{P}_n$. Their derivations rely on the large deviations results recalled in Section~\ref{Subsection:EmpiricalDistribution} and continuity properties of information leakage measures established in this work. We summarize these results in the following meta theorem (see Theorem~\ref{thm:meta_upper_bounds}). For ease of notation, we let
\begin{equation}
    \mathcal{L}_c(Q,W) \defined P_c(S|Y) \quad \text{and} \quad \mathcal{U}_c(Q,W) \defined P_c(X|Y),
\end{equation}
where $S \to X \to Y$ is such that $P_{S,X} = Q$ and $P_{Y|X} = W$. With this notation, we also let $\mathcal{L}_f(Q,W) \defined I_f\left(P_{S,Y}\right)$, $\mathcal{L}_{\alpha}^{\rm A}(Q,W) \defined I^{\rm A}_{\alpha}\left(P_{S,Y}\right)$, $\mathcal{L}_{\alpha}^{\rm S}(Q,W) \defined I^{\rm S}_{\alpha}\left(P_{S,Y}\right)$, and, by abuse of notation, $\calL^{\rm max}_{\alpha}(Q,W) \defined \calL^{\rm max}_{\alpha}(S \to Y)$. The analogues for utility are defined in a similar way.

\begin{thm}
\label{thm:meta_upper_bounds}
Let $\hat{P}_n$ be the empirical distribution obtained from $n$ i.i.d.~samples drawn from $P$. Then, with probability at least $1-\beta$, for any $W\in\mathcal{W}$, we have
\begin{equation}
\label{eq:Meta1}
\begin{aligned}
    &|\calL(\hat{P}_n,W) - \calL(P,W)|\\
    &\leq \sqrt{\frac{2}{n}\left(|\mathcal{S}|\cdot|\mathcal{X}|-\log\beta\right)} \cdot 
    \begin{cases}
        1 & \text{when } \mathcal{L}=\calL_c,\\
        C_{f,\overline{m}_S} & \text{when } \mathcal{L}=\calL_f \text{ with } f \text{ locally Lipschitz},\\
        \frac{2\alpha}{\alpha-1} |\mathcal{S}|^{1-1/\alpha}  & \text{when } \mathcal{L}=\mathcal{L}_{\alpha}^{\rm A}\text{ with }\alpha\in(1,\infty],\\
        \frac{2\alpha+1}{(\alpha-1)\overline{m}_S^{1-1/\alpha}} & \text{when } \mathcal{L}=\mathcal{L}_{\alpha}^{\rm S}\text{ with }\alpha\in(1,\infty),\\
        \frac{2}{\min_{s} \sum_{x} \hat{P}_n(s,x)} &\text{when } \mathcal{L}=\mathcal{L}^{\rm S}_{\infty} \text{ or } \mathcal{L}=\calL^{\rm max}_{\infty},\\
        \frac{4\alpha |\mathcal{S}|^{1-1/\alpha}}{(\alpha-1)\min_{s} \sum_{x} \hat{P}_n(s,x)} &\text{when } \mathcal{L}=\calL^{\rm max}_{\alpha}\text{ with }\alpha\in(1,\infty),
    \end{cases}
\end{aligned}
\end{equation}
\begin{equation}
\label{eq:Meta2}
\begin{aligned}
    &|\calU(\hat{P}_n,W) - \calU(P,W)|\\
    &\leq \sqrt{\frac{2}{n}\left(|\mathcal{S}|\cdot|\mathcal{X}|-\log\beta\right)} \cdot 
    \begin{cases}
        1 & \text{when } \mathcal{U}=\mathcal{U}_c,\\
        C_{f,\overline{m}_X} & \text{when } \mathcal{U}=\mathcal{U}_f \text{ with } f \text{ locally Lipschitz},\\
        \frac{2\alpha}{\alpha-1} |\mathcal{X}|^{1-1/\alpha} & \text{when } \mathcal{U}=\mathcal{U}_{\alpha}^{\rm A}\text{ with }\alpha\in(1,\infty],\\
        \frac{1}{(\alpha-1)\overline{m}_X^{1-1/\alpha}} & \text{when } \mathcal{U}=\mathcal{U}_{\alpha}^{\rm S}\text{ with }\alpha\in(1,\infty),\\
        0 & \text{when } \mathcal{U}=\mathcal{U}^{\rm S}_{\infty} \text{ or } \mathcal{U}=\mathcal{U}^{\rm max}_{\alpha} \text{ with } \alpha\in(1,\infty],
    \end{cases}
\end{aligned}
\end{equation}
where $C_{f,u} = 2 K_{f,u^{-1}} + (2u^{-1} + 1) L_{f,u^{-1}}$ with $K_{g,u}$, $L_{g,u}$, $\overline{m}_S$ and $\overline{m}_X$ as defined in \eqref{eq:DefSupNorm}, \eqref{eq:DefLipschitzConstant}, \eqref{eq:DefmSbar}, and \eqref{eq:DefmXbar}, respectively.
\end{thm}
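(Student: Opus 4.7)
The plan is to decouple the argument into two ingredients: a probabilistic bound on $\|\hat P_n - P\|_1$ and a deterministic Lipschitz estimate for each leakage or utility functional, viewed as a function of the joint distribution $Q\in\calP$ with the mechanism $W$ held fixed. The first ingredient is already in place: by inequality~\eqref{equa:L1bound_Devroye}, with probability at least $1-\beta$, $\|\hat P_n - P\|_1 \le \sqrt{\tfrac{2}{n}(|\calS||\calX|-\log\beta)}$. Working on this high-probability event, it then suffices to show, uniformly in $W\in\calW$, that each measure obeys $|\calL(Q_1,W)-\calL(Q_2,W)| \le C\|Q_1-Q_2\|_1$ with $C$ matching the constant that multiplies the square root in \eqref{eq:Meta1}--\eqref{eq:Meta2}, and likewise for $\calU$.

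A useful preliminary reduction is that the maps $Q\mapsto Q\cdot W$ and $Q\mapsto Q_X\cdot W$ are both $1$-Lipschitz in $\ell_1$, so it is enough to establish a Lipschitz estimate for each information functional on joint distributions over $\calS\times\calY$ or $\calX\times\calY$. For $\calL_c$, the constant is $1$ by the triangle inequality, since $P_c(S|Y)=\sum_y\max_s P_{S,Y}(s,y)$ and $\max$ is $\ell_1$-nonexpansive. For $\calL_f$ I would restrict $f$ to the interval where the ratio $P_{S,Y}/(P_S P_Y)$ lives under $\hat P_n$ (determined by $\overline{m}_S$), expand the differences $P_S P_Y f(\tfrac{a}{P_S P_Y}) - \tilde P_S \tilde P_Y f(\tfrac{a'}{\tilde P_S \tilde P_Y})$ by the product rule so that both the local sup-norm $K_{f,\overline{m}_S^{-1}}$ and the local Lipschitz constant $L_{f,\overline{m}_S^{-1}}$ appear, and sum over $\calS\times\calY$ to recover the combination $C_{f,\overline{m}_S}$. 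For Arimoto and Sibson mutual information, the two core facts are that $\|\cdot\|_\alpha$ is Lipschitz with respect to $\|\cdot\|_1$ with constant at most $|\calS|^{1-1/\alpha}$ on the alphabet $\calS$, and that $\log$ is locally Lipschitz away from zero with effective constant $1/\mu$, where $\mu$ is a lower bound on the relevant marginal mass of $\hat P_n$; the $\tfrac{\alpha}{\alpha-1}$ prefactor and the quotient structure inside the logarithm then produce the displayed constants. For $\calL^{\max}_\alpha$, Propositions~\ref{Prop:MaximalAlphaLeakageExpression} and~\ref{Prop:MaxL_SMI_inf} identify it with a supremum over input distributions of Arimoto or Sibson mutual information on a restricted support, and the inequality $|\sup_a f - \sup_a g|\le \sup_a|f - g|$ lifts the Arimoto/Sibson Lipschitz estimate to the maximal-leakage case; the quantity $\min_s\sum_x\hat P_n(s,x)$ enters because the supremum is restricted to the support of the marginal of $\hat P_n$. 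The zeros in the last line of \eqref{eq:Meta2} follow from observing that $\calU^{\rm S}_\infty(Q,W)=\log\sum_y\max_x W(y|x)$ depends only on $W$, and that $\calU^{\max}_\alpha(Q,W)$ depends on $Q$ only through $\supp(Q_X)$, which coincides with $\supp(P_X)$ on the relevant event.

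The main obstacle is the unbounded derivative of $\log$ near zero, which precludes a universal Lipschitz constant in the Sibson and maximal-leakage cases and is the reason the corresponding bounds scale with $1/\overline{m}_S$ or $1/\min_s\sum_x\hat P_n(s,x)$. A related subtlety is whether to lower-bound marginals by those of $P$ or of $\hat P_n$: I would express all constants in terms of $\hat P_n$, so that they are computable from the sample, and absorb the remaining discrepancy between empirical and true marginals via the $\ell_1$ bound above together with the triangle inequality. A pleasant byproduct of this strategy is that all the Lipschitz constants depend only on $|\calS|$, $|\calX|$, and marginal masses under $\hat P_n$, not on the output alphabet $\calY$ of $W$; this is what makes the bounds hold uniformly over the full collection $\calW$ defined in~\eqref{eq:DefCalW}, including mechanisms with arbitrarily large output alphabets.
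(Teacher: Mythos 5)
Your proposal follows essentially the same route as the paper: the large-deviations bound \eqref{equa:L1bound_Devroye} combined with per-measure Lipschitz (or locally Lipschitz) estimates in the joint distribution, uniformly over $W$, which is exactly how the paper assembles Theorem~\ref{thm:meta_upper_bounds} from Lemmas~\ref{lem:rob_PCG1}, \ref{lem:rob_f_inf1}, \ref{lem:LipschitzArimotoMI}, \ref{lem:LipschitzSibsonMI}, and \ref{Lemma:LipschitzMaximalAlphaLeakage} together with Theorems~\ref{thm:rob_PCG}--\ref{Thm:bound_Sinf_ML}. The only slight misattribution is in the Arimoto case: by Minkowski the $\alpha$-norm is $1$-Lipschitz in $\ell_1$, and the factor $|\mathcal{S}|^{1-1/\alpha}$ actually arises from the generalized-mean lower bound $\sum_y\|P_{S,Y}(\cdot,y)\|_\alpha\geq|\mathcal{S}|^{1/\alpha-1}$ fed into $|\log(a/b)|\leq|a-b|/\min\{a,b\}$, but this does not affect the correctness of your plan.
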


We illustrate the discrepancies of privacy-utility guarantees in \eqref{eq:Meta1} and \eqref{eq:Meta2} along with the corresponding upper bounds through a synthetic and a real-world datasets in Section~\ref{sec::Num_Exp}.

\subsubsection{Convergence of Optimal Privacy Mechanisms}

Assume there is a sequence of joint distributions $(P_n)_{n=1}^{\infty}$ converging to the true distribution $P$. The results provided in Section~\ref{Section:PointwiseRobustness}
establish that the privacy-utility guarantees for $P_n$ converge to those of $P$ as $n$ goes to infinity. Nonetheless, they cannot be used to study the convergence properties of the optimal privacy mechanisms of $P_n$ as defined in \eqref{equa:opt_privacy_mehc}. In Section~\ref{Section:convergence} we address the convergence of optimal privacy mechanisms by further exploiting some of the Lipschitz continuity properties established for the information leakage measures under consideration. To be more specific, assume that the privacy mechanism designer constructs an optimal privacy mechanism $W^*_n$ for each $P_n$, i.e., $W_n^*\in\mathcal{W}^*(P_n;\epsilon)$. We establish the convergence of the sequence of optimal privacy mechanisms $(W^*_n)_{n=1}^{\infty}$ to the set of optimal privacy mechanisms for $P$, i.e., $\mathcal{W}^*(P;\epsilon)$. Due to technical considerations, this result covers only probability of correctly guessing, $f$-information with $f$ locally Lipschitz, and Arimoto's mutual information of order $\alpha$ with $\alpha\in(1,\infty]$.

\subsubsection{Uniform Privacy Mechanisms}

In applications where privacy is a priority, a specific privacy guarantee for the true distribution $P$ may be required, even though the designer has only access to a distribution $\hat{P}$ estimated from samples. We propose the following procedure to overcome this difficulty: (a) use large deviations results to find a probabilistic upper bound, say $r$, for the distance between $\hat{P}$ and $P$; (b) design privacy mechanisms that deliver the required privacy guarantee for {\it all} distributions at distance less or equal than $r$ from $\hat{P}$. Based on this procedure, we introduce the definition of uniform privacy mechanisms in Section~\ref{Section:UniformRobustness}. We prove that \emph{optimal} uniform privacy mechanisms exist and while their design might be challenging, they can be efficiently approximated by optimal privacy mechanisms for $\hat{P}$ as defined in \eqref{equa:opt_privacy_mehc}. We finish Section~\ref{Section:UniformRobustness} establishing convergence properties of optimal uniform privacy mechanisms similar to those in Section~\ref{Section:convergence}.
\section{Discrepancy of Privacy-Utility Guarantees}
\label{Section:PointwiseRobustness}

We provide probabilistic upper bounds for the difference between the privacy-utility guarantees of the empirical and the true distributions. These upper bounds do not depend on the specific privacy mechanism. In order to simplify the exposition, we assume that privacy leakage and utility are measured using the same information metric. Nonetheless, the case when different information metrics are used can be handled in a straightforward manner. In this section, we study five different information metrics: probability of correctly guessing, $f$-information with $f$ locally Lipschitz, Arimoto's mutual information ($\alpha$-leakage) of order $\alpha>1$, Sibson's mutual information of order $\alpha>1$, and maximal $\alpha$-leakage of order $\alpha>1$.

\subsection{Probability of Correctly Guessing}

The main result of this subsection relies on the following lemma which establishes the Lipschitz continuity of the mappings $\calL_c(\cdot,W)$ and $\calU_c(\cdot,W)$. Recall that for $Q\in\mathcal{P}$ and $W\in\mathcal{W}$,
\begin{equation}
\label{eq:DefLcUc}
    \mathcal{L}_c(Q,W) \defined P_c(S|Y) \quad \text{and} \quad \mathcal{U}_c(Q,W) \defined P_c(X|Y),
\end{equation}
where $S \to X \to Y$ is such that $P_{S,X} = Q$ and $P_{Y|X} = W$.

\begin{lem}
\label{lem:rob_PCG1}
For any $Q_1,Q_2\in\mathcal{P}$ and $W\in\mathcal{W}$, we have
\begin{align}
    \label{eq:PrivacyBoundPc} |\calL_c(Q_1,W)-\calL_c(Q_2,W)| &\leq  \|Q_1-Q_2\|_1,\\
    \label{eq:UtilityBoundPc}|\calU_c(Q_1,W)-\calU_c(Q_2,W)| &\leq  \|Q_1-Q_2\|_1.
\end{align}
\end{lem}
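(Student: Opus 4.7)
The plan is to prove both inequalities by unfolding the definitions, expressing $P_{S,Y}$ and $P_{X,Y}$ as linear combinations of $Q_1$ and $Q_2$ via the channel $W$, and then applying the elementary inequality $|\max_s f(s) - \max_s g(s)| \leq \max_s |f(s) - g(s)|$ followed by a standard ``swap the max for a sum'' bound.

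First I would write out, for a generic $Q\in\mathcal{P}$ and $W\in\mathcal{W}$, the joint distribution of $(S,Y)$ as $P_{S,Y}(s,y) = \sum_x Q(s,x)W(x,y)$, so that
\[
\mathcal{L}_c(Q,W) = \sum_y \max_s \sum_x Q(s,x) W(x,y).
\]
For the difference $\mathcal{L}_c(Q_1,W) - \mathcal{L}_c(Q_2,W)$ I would then bound the absolute value of the difference of maxima by the max of the absolute differences, giving
\[
|\mathcal{L}_c(Q_1,W) - \mathcal{L}_c(Q_2,W)| \leq \sum_y \max_s \Bigl| \sum_x [Q_1(s,x)-Q_2(s,x)] W(x,y) \Bigr|.
\]
Then the triangle inequality (pulling the absolute value inside the sum over $x$, using $W(x,y)\geq 0$), replacing $\max_s$ by $\sum_s$, and swapping the $y$ and $(s,x)$ summations yields
\[
\sum_{s,x} |Q_1(s,x) - Q_2(s,x)| \sum_y W(x,y) = \|Q_1-Q_2\|_1,
\]
since each row of $W$ sums to $1$. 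This establishes \eqref{eq:PrivacyBoundPc}.

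For the utility bound \eqref{eq:UtilityBoundPc} I would follow the same template but note that $P_{X,Y}(x,y) = Q_X(x) W(x,y)$ depends only on the $\mathcal{X}$-marginal $Q_X(x)=\sum_s Q(s,x)$. The same chain of inequalities produces an upper bound of $\|Q_{1,X} - Q_{2,X}\|_1$, which by the triangle inequality is at most $\|Q_1 - Q_2\|_1$.

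There is no real obstacle here; the proof is essentially a bookkeeping exercise using row-stochasticity of $W$ and the contraction of $\max$ under componentwise perturbation. The only point that merits care is ensuring the direction of the bound $\max_s(\cdot) \leq \sum_s(\cdot)$ is applied to the nonnegative quantity obtained after bringing the absolute value inside — if the absolute value remained outside, the replacement would not be valid. Both inequalities will be proved in parallel, with the utility case being slightly tighter due to the reduction to the marginal.
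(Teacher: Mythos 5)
Your proposal is correct and follows essentially the same route as the paper: the contraction $|\max_s a_s - \max_s b_s|\le \max_s|a_s-b_s|\le\sum_s|a_s-b_s|$, the triangle inequality to pull the absolute value inside the sum over $x$, and row-stochasticity of $W$ to collapse the sum over $y$. The only cosmetic difference is that the paper factors the step $\|P_{S_1,Y_1}-P_{S_2,Y_2}\|_1\le\|Q_1-Q_2\|_1$ into a separate auxiliary lemma, whereas you inline it (and note the slightly tighter marginal bound $\|Q_{1,X}-Q_{2,X}\|_1$ in the utility case).
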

\begin{proof}
See Appendix~\ref{Appendix:ProofPc}.
\end{proof}

In particular, when $Q_1 = \hat{P}$ is an estimate of $Q_2 = P$, the previous lemma implies that any upper bound on $\|\hat{P}-P\|_1$ translates into an upper bound for difference of the corresponding privacy-utility guarantees. The following theorem specializes this observation to the empirical distribution by means of the large deviations inequality in \eqref{equa:L1bound_Devroye}.

\begin{thm}
\label{thm:rob_PCG}
Let $\hat{P}_n$ be the empirical distribution obtained from $n$ i.i.d.~samples drawn from $P$. Then, with probability at least $1-\beta$, for any $W\in\mathcal{W}$, we have
\begin{align}
    \label{eq:MainPcL} |\calL_c(\hat{P}_n,W) - \calL_c(P,W)| &\leq \sqrt{\frac{2}{n}\left(|\mathcal{S}|\cdot|\mathcal{X}|-\log\beta\right)},\\
    |\calU_c(\hat{P}_n,W) - \calU_c(P,W)|
    &\leq  \sqrt{\frac{2}{n}\left(|\mathcal{S}|\cdot|\mathcal{X}|-\log\beta\right)}.
\end{align}
\end{thm}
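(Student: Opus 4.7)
The plan is to prove Theorem~\ref{thm:rob_PCG} as a direct combination of the deterministic Lipschitz estimate in Lemma~\ref{lem:rob_PCG1} with the large deviations tail bound from Section~\ref{Subsection:EmpiricalDistribution}. The key observation is that Lemma~\ref{lem:rob_PCG1} already bounds the discrepancies on the left-hand sides by a quantity that is independent of the mechanism $W$, namely $\|\hat{P}_n-P\|_1$, so the entire argument reduces to controlling this single random quantity with high probability.

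Concretely, first I would apply Lemma~\ref{lem:rob_PCG1} with $Q_1=\hat{P}_n$ and $Q_2=P$, yielding
\begin{equation*}
|\calL_c(\hat{P}_n,W)-\calL_c(P,W)| \leq \|\hat{P}_n-P\|_1 \quad \text{and} \quad |\calU_c(\hat{P}_n,W)-\calU_c(P,W)| \leq \|\hat{P}_n-P\|_1
\end{equation*}
simultaneously for every $W\in\mathcal{W}$. Since the upper bound no longer depends on $W$, any probabilistic control of $\|\hat{P}_n-P\|_1$ will transfer uniformly over the entire class of privacy mechanisms.

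Next I would invoke the concentration inequality~\eqref{equa:L1bound_Devroye} of Weissman~\etal, specialized to the alphabet $\mathcal{Z}=\mathcal{S}\times\mathcal{X}$ (so that $|\mathcal{Z}|=|\mathcal{S}|\cdot|\mathcal{X}|$). This gives that with probability at least $1-\beta$,
\begin{equation*}
\|\hat{P}_n-P\|_1 \leq \sqrt{\frac{2}{n}\bigl(|\mathcal{S}|\cdot|\mathcal{X}|-\log\beta\bigr)}.
\end{equation*}
On the event on which this bound holds, both stated inequalities of the theorem hold simultaneously for every $W\in\mathcal{W}$, which is precisely the claim.

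There is no real obstacle here; the only subtlety worth flagging explicitly is that the uniformity of the bound over $W\in\mathcal{W}$ is not obtained through a union bound or covering argument but rather comes for free from the fact that Lemma~\ref{lem:rob_PCG1} provides a $W$-independent upper bound. This is why the bound does not pay any price in $N$ (the output alphabet size of $W$) even though $\mathcal{W}$ is an infinite union of sets $\mathcal{W}_N$.
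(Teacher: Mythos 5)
Your proposal is correct and is exactly the argument the paper uses: Theorem~\ref{thm:rob_PCG} is obtained by combining Lemma~\ref{lem:rob_PCG1} (with $Q_1=\hat{P}_n$, $Q_2=P$) with the large deviations bound~\eqref{equa:L1bound_Devroye}, and the uniformity over $W$ indeed comes for free from the $W$-independence of the Lipschitz bound. Your remark on why no union bound over $\mathcal{W}$ is needed is a correct and worthwhile clarification.
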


In Section~\ref{Section:SimulationPc} we illustrate the fitness of the bounds in Theorem~\ref{thm:rob_PCG} when applied to ProPublica's COMPAS dataset \cite{angwin2016machine}.

\begin{rem}
Lemma~\ref{lem:rob_PCG1} and Theorem~\ref{thm:rob_PCG} illustrate the general technique we use to derive (probabilistic) upper bounds for the difference between the privacy-utility guarantees of the empirical and the true distributions. Specifically, we relate the difference between the privacy-utility guarantees of two joint distributions with their $\ell_1$ distance, and then we use large deviations results to provide upper bound for the $\ell_1$ distance between the empirical and the true distributions.
\end{rem}

\subsection{$f$-Information with $f$ Locally Lipschitz}

We start establishing the following notation. For a given function $g:[0,\infty)\to \Reals$ and $u>0$, we let
\begin{equation}
\label{eq:DefSupNorm}
    K_{g,u} \defined \sup \{|g(t)| : t\in[0,u]\}.
\end{equation}
The constant $K_{g,u}$ is the so-called supremum norm of $g$ on $[0,u]$. In addition, if $g$ is Liptschitz on $[0,u]$, we let $L_{g,u}$ be its Lipschitz constant on $[0,u]$, i.e.,
\begin{equation}
\label{eq:DefLipschitzConstant}
\begin{aligned}
    L_{g,u}\defined \min \{L\geq0 : |g(t_1)-g(t_2)| \leq L|t_1-t_2|, \forall t_1,t_2\in[0,u]\}.
\end{aligned}
\end{equation}
A function $g:[0,\infty)\to\mathbb{R}$ is called locally Lipschitz if $g$ is Lipschitz on $[0,u]$ for every $u>0$. Note that a locally Lipschitz function is not necessarily Lipschitz on $[0,\infty)$. For example, the function $g(t) = t^2$ is locally Lipschitz with $L_{g,u} = 2u$ for all $u>0$, but it is not Lipschitz on $[0,\infty)$. For any two distributions $Q_1,Q_2\in\mathcal{P}$, we denote
\begin{align}
\label{eq:DefmS} m_S &\defined \min\left\{\sum_{x\in\mathcal{X}} Q_i(s,x): s\in\mathcal{S},i\in\{1,2\}\right\},\\
\label{eq:DefmX} m_X &\defined \min\left\{\sum_{s\in\mathcal{S}} Q_i(s,x) : x\in\mathcal{X},i\in\{1,2\}\right\}.
\end{align}
Observe that $m_S$ (resp.~$m_X$) equals the probability of the least likely symbol among the marginal distributions over $\mathcal{S}$ (resp.~$\mathcal{X}$) of $Q_1$ and $Q_2$. In other words, if $(S_i,X_i)$ has joint distribution $Q_i$ for each $i\in\{1,2\}$, then
\begin{align}
    m_S &= \min\{P_{S_i}(s) : s\in\mathcal{S},i\in\{1,2\}\},\\
    m_X &= \min\{P_{X_i}(x) : x\in\mathcal{X},i\in\{1,2\}\}.
\end{align}

The following lemma serves as an analogue of Lemma~\ref{lem:rob_PCG1} for an $f$-information with $f$ locally Lipschitz. Recall that for $Q\in\mathcal{P}$ and $W\in\mathcal{W}$,
\begin{equation}
    \mathcal{L}_f(Q,W) \defined I_f(P_{S,Y}) \quad \text{and} \quad \mathcal{U}_f(Q,W) \defined I_f(P_{X,Y}),
\end{equation}
where $S \to X \to Y$ is such that $P_{S,X} = Q$ and $P_{Y|X} = W$.

\begin{lem}
\label{lem:rob_f_inf1}
If $f:[0,\infty)\to\Reals$ is locally Lipschitz, then, for any $Q_1,Q_2\in\mathcal{P}$ and $W\in\mathcal{W}$, we have
\begin{align}
    \label{eq:PrivacyfInformation} |\calL_f(Q_1,W)-\calL_f(Q_2,W)| &\leq C_{f,m_S} \|Q_1-Q_2\|_1,\\
    |\calU_f(Q_1,W)-\calU_f(Q_2,W)| &\leq C_{f,m_X} \|Q_1-Q_2\|_1,
\end{align}
where, for $u\in\{m_S,m_X\}$, $C_{f,u} \defined 2 K_{f,u^{-1}} + (2u^{-1} + 1) L_{f,u^{-1}}$.
\end{lem}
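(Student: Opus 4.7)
My plan is to prove a direct Lipschitz estimate for the functional $Q \mapsto I_f(P_{S,Y})$, where $P_{S,Y}(s,y) = \sum_x Q(s,x) W(x,y)$ is the push-forward of $Q$ through $W$. The idea is to differentiate the sum defining $I_f$ term-by-term and integrate along the straight-line segment from $Q_2$ to $Q_1$, exploiting that every ratio appearing in $I_f$ stays inside a compact interval on which $f$ is Lipschitz by hypothesis.

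Introduce $g(\alpha,\beta,\gamma) \defined \beta\gamma f(\alpha/(\beta\gamma))$, extended continuously by $0$ when $\beta\gamma=0$, so that $I_f(P_{S,Y}) = \sum_{s,y} g(P_{S,Y}(s,y),P_S(s),P_Y(y))$. The key structural observation is that $P_{S,Y}(s,y) \leq \min\{P_S(s),P_Y(y)\}$, hence $r \defined \alpha/(\beta\gamma)$ obeys $r \leq 1/\beta$. Writing $a^{(i)} = P^{(i)}_{S,Y}$, $b^{(i)} = P^{(i)}_S$, $c^{(i)} = P^{(i)}_Y$ for $i\in\{1,2\}$ and forming the convex combination
\begin{equation*}
(a_t,b_t,c_t) \defined (1-t)(a^{(2)},b^{(2)},c^{(2)}) + t(a^{(1)},b^{(1)},c^{(1)}), \qquad t\in[0,1],
\end{equation*}
convexity preserves $a_t \leq \min(b_t,c_t)$ and $b_t \geq m_S$, so the interpolated ratio stays in $[0,m_S^{-1}]$, a set on which $|f|\leq K_{f,m_S^{-1}}$ and $f$ is Lipschitz with constant $L_{f,m_S^{-1}}$.

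Next I would compute, on $\{\beta\gamma>0\}$, the partial derivatives
\begin{equation*}
\partial_\alpha g = f'(r), \qquad \partial_\beta g = \gamma[f(r)-rf'(r)], \qquad \partial_\gamma g = \beta[f(r)-rf'(r)],
\end{equation*}
which yield $|\partial_\alpha g|\leq L_{f,m_S^{-1}}$, $|\partial_\beta g|\leq \gamma(K_{f,m_S^{-1}} + m_S^{-1}L_{f,m_S^{-1}})$, and $|\partial_\gamma g|\leq \beta(K_{f,m_S^{-1}} + m_S^{-1}L_{f,m_S^{-1}})$. Local Lipschitz regularity of $f$ implies absolute continuity on $[0,m_S^{-1}]$, hence so is $t\mapsto g(a_t,b_t,c_t)$, and the fundamental theorem of calculus gives
\begin{equation*}
g(a^{(1)},b^{(1)},c^{(1)})-g(a^{(2)},b^{(2)},c^{(2)}) = \int_0^1 \bigl[(a^{(1)}-a^{(2)})\partial_\alpha g + (b^{(1)}-b^{(2)})\partial_\beta g + (c^{(1)}-c^{(2)})\partial_\gamma g\bigr] dt.
\end{equation*}
Summing over $s,y$, taking absolute values, and swapping sum with integral produces three contributions. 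The $\alpha$-term is bounded by $L_{f,m_S^{-1}}\|P^{(1)}_{S,Y}-P^{(2)}_{S,Y}\|_1 \leq L_{f,m_S^{-1}}\|Q_1-Q_2\|_1$ via the $\ell_1$ contraction of the stochastic map $(s,x)\mapsto(s,y)$; in the $\beta$-term, $|b^{(1)}-b^{(2)}|$ depends only on $s$ while $\sum_y c_t(y)=1$, collapsing the sum to $\sum_s |b^{(1)}(s)-b^{(2)}(s)| \leq \|Q_1-Q_2\|_1$; symmetrically, $\sum_s b_t(s)=1$ collapses the $\gamma$-term to $\|P^{(1)}_Y-P^{(2)}_Y\|_1 \leq \|Q_1-Q_2\|_1$. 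Collecting yields
\begin{equation*}
|\mathcal{L}_f(Q_1,W)-\mathcal{L}_f(Q_2,W)| \leq \bigl[L_{f,m_S^{-1}} + 2(K_{f,m_S^{-1}} + m_S^{-1}L_{f,m_S^{-1}})\bigr]\|Q_1-Q_2\|_1 = C_{f,m_S}\|Q_1-Q_2\|_1,
\end{equation*}
and the utility inequality follows by the identical argument with $(X,m_X)$ in place of $(S,m_S)$, using $P_{X,Y}(x,y)\leq P_X(x)$.

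The main obstacle will be justifying the chain-rule step with only local Lipschitz regularity: the classical $C^1$ FTC is unavailable, so I must appeal to absolute continuity (existence of $f'$ a.e.) to validate the integral representation. A secondary nuisance is boundary behaviour when $c^{(i)}(y)=0$: the inequality $\alpha\leq\gamma$ forces $\alpha$ to vanish as well, $|g(\alpha,\beta,\gamma)|\leq\beta\gamma K_{f,m_S^{-1}}\to 0$ as $\gamma\to 0^+$, so either the summand contributes $0$ to both sides (if $c^{(1)}(y)=c^{(2)}(y)=0$) or one treats the exceptional $y$ via a limiting argument from the interior of the segment. The degenerate case $m_S=0$ makes $C_{f,m_S}=\infty$ and the inequality vacuous.
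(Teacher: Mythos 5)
Your proposal is correct and arrives at exactly the paper's constant $C_{f,m_S}=2K_{f,m_S^{-1}}+(2m_S^{-1}+1)L_{f,m_S^{-1}}$, but by a genuinely different route. The paper's proof is purely algebraic: it adds and subtracts the cross term $P_{S_2}(s)P_{Y_2}(y)f\bigl(P_{S_1,Y_1}(s,y)/(P_{S_1}(s)P_{Y_1}(y))\bigr)$, splitting $\Delta_L$ into a piece controlled by the sup-norm $K_{f,m_S^{-1}}$ (yielding the $2K$ term via $\|P_{S_1}-P_{S_2}\|_1+\|P_{Y_1}-P_{Y_2}\|_1$) and a piece controlled by the Lipschitz constant (yielding $(2m_S^{-1}+1)L$ after a second add-and-subtract inside the absolute value), with no calculus involved. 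You instead interpolate linearly between the triples $(P_{S_i,Y_i},P_{S_i},P_{Y_i})$ and integrate the gradient of $g(\alpha,\beta,\gamma)=\beta\gamma f(\alpha/(\beta\gamma))$ along the segment; your key observation that convexity preserves both $a_t\leq\min(b_t,c_t)$ and $b_t\geq m_S$, so the interpolated ratio never leaves $[0,m_S^{-1}]$, is exactly what makes the gradient bounds uniform, and your bookkeeping of the three contributions ($L$ from $\partial_\alpha$, $K+m_S^{-1}L$ from each of $\partial_\beta$ and $\partial_\gamma$) reproduces the constant term by term, which makes its provenance more transparent than in the paper. The price is the technical point you yourself flag: validating the chain rule and FTC when $f$ is merely Lipschitz on $[0,m_S^{-1}]$. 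This is not a fatal gap, but it does need to be discharged --- either by mollifying $f$ on $[0,m_S^{-1}]$ (keeping the sup-norm and Lipschitz constants and passing to the limit in $I_f$), or by invoking the a.e.\ chain rule for a Lipschitz function composed with a smooth path (Serrin--Varberg); note that a naive ``perturb the interval to $[0,m_S^{-1}+\eta]$'' fix would not work, since $L_{f,u}$ need not be right-continuous in $u$, so staying exactly inside $[0,m_S^{-1}]$ as your diagonal interpolation does is essential. Your treatment of the boundary cases $P_{Y_i}(y)=0$ and $m_S=0$ is consistent with the paper's implicit conventions.
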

\begin{proof}
See Appendix~\ref{Appendix:Prooff}.
\end{proof}

\begin{rem}
\label{Remark:LocallyLipschitzf}
Despite the similarity between Lemmas~\ref{lem:rob_PCG1} and \ref{lem:rob_f_inf1}, the latter does not imply that the mapping $\mathcal{L}_f(\cdot,W)$ is Lipschitz continuous. Indeed, the factor $C_{f,m_S}$ depends on $Q_1$ and $Q_2$ through $m_S$. Nonetheless, Lemma~\ref{lem:rob_f_inf1} does show that the local Lipschitzianity of $f$ is bequeathed to $\mathcal{L}_f(\cdot,W)$. Specifically, for every $\delta>0$, the mapping $\mathcal{L}_f(\cdot,W)$ is Lipschitz continuous over
\begin{equation}
    \left\{Q\in\mathcal{P} : \delta \leq \min_{s\in\mathcal{S}} \sum_{x\in\mathcal{X}} Q(s,x) \right\},
\end{equation}
and its Lipschitz constant is less than or equal to $C_{f,\delta}$. Indeed, this assertion follows from the fact that $u\mapsto C_{f,u}$ is a non-increasing function, as exhibited in Appendix~\ref{Appendix:ProoffGeneralization}. Hence, any lower bound for $m_S$ translates into an upper bound for $C_{f,m_S}$ in \eqref{eq:PrivacyfInformation}. As shown below, the \emph{local Lipschitzianity} of $\mathcal{L}_f(\cdot,W)$ is enough to derive a bound similar to \eqref{eq:MainPcL}. 
A similar discussion can be held for $\mathcal{U}_{f}(\cdot,W)$.
\end{rem}

We illustrate the value of $C_{f,u}$, defined in Lemma~\ref{lem:rob_f_inf1}, through the following examples:
\begin{itemize}
    \item {\it Total variation distance.} If $f(t) = |t - 1|/2$, then $f$ is a (globally) Lipschitz function with $L_{f,u} = 1/2$ and $K_{f,u} = \max\{1,u-1\}/2$ for all $u>0$. Consequently,
    \begin{equation}
        C_{f,u}=u^{-1}\max\{1+3u/2,2-u/2\};
    \end{equation}
    
    \item {\it $\chi^2$-divergence.} If $f(t) = (t-1)^2$, then $f$ is a locally Lipschitz function with $L_{f,u} = 2 \max\{1,u-1\}$ and $K_{f,u} =\max\{1,(u-1)^2\}$ for all $u>0$. Consequently,
    \begin{equation}
        C_{f,u}=2u^{-2}\max\{2u+2u^2,3-3u\};
    \end{equation}
    
    \item {\it Hellinger divergence.} If $\displaystyle f_{\alpha}(t) = \frac{t^\alpha-1}{\alpha-1}$ with $\alpha>1$, then $f_{\alpha}$ is locally Lipschitz with $\displaystyle L_{f_{\alpha},u} = \frac{\alpha u^{\alpha-1}}{\alpha-1}$ and $K_{f,u} = \frac{1}{\alpha-1}\max\{1,u^{\alpha}-1\}$ for all $u>0$. Consequently,
    \begin{equation}
        C_{f,u}=\frac{u^{-\alpha}}{\alpha-1}\max\{2\alpha  + \alpha u+2u^{\alpha},(2\alpha+2)+\alpha u-2u^{\alpha}\}.
    \end{equation}
\end{itemize}
Note, however, that mutual information cannot be handled by Lemma~\ref{lem:rob_f_inf1} since the function $f(t)=t\log(t)$ is not locally Lipschitz. In fact, mutual information seems to have a different asymptotic behavior w.r.t.~the sample size $n$ when compared to the theorems in this paper, cf. \cite[Theorem~3]{shamir2010learning}.

Relying on Lemma~\ref{lem:rob_f_inf1} and the large deviations inequality in \eqref{equa:L1bound_Devroye}, the following theorem provides a data dependent bound for the discrepancy between the guarantees provided for the empirical and the true distributions. For $x\in\mathbb{R}$, we define $(x)_+ \defined \max\{0,x\}$.

\begin{thm}
\label{thm:main_robust}
Let $\hat{P}_n$ be the empirical distribution obtained from $n$ i.i.d.~samples drawn from $P$. If $f:[0,\infty)\to\Reals$ is locally Lipschitz, then, with probability at least $1-\beta$, for any $W\in\mathcal{W}$, we have
\begin{align}
    \label{eq:PrivacyBoundf} |\calL_f(\hat{P}_n,W)-\calL_f(P,W)| 
    &\leq C_{f,\overline{m}_S} \sqrt{\frac{2}{n}\left(|\mathcal{S}|\cdot|\mathcal{X}|-\log\beta\right)},\\
    \label{eq:UtilityBoundf} |\calU_f(\hat{P}_n,W)-\calU_f(P,W)| 
    &\leq  C_{f,\overline{m}_X} \sqrt{\frac{2}{n}\left(|\mathcal{S}|\cdot|\mathcal{X}|-\log\beta\right)},
\end{align}
where, for $u\in\{\overline{m}_S,\overline{m}_X\}$, $C_{f,u} \defined 2 K_{f,u^{-1}} + (2u^{-1} + 1) L_{f,u^{-1}}$,
\begin{align}
\overline{m}_S &\defined \Bigg(\min\left\{\sum_{x\in\mathcal{X}}\hat{P}_{n}(s,x) : s\in\mathcal{S}\right\} 
 -\sqrt{\frac{2}{n}\left(|\mathcal{S}|\cdot|\mathcal{X}|-\log\beta\right)}\Bigg)_+,\label{eq:DefmSbar}\\
\overline{m}_X &\defined \Bigg(\min\left\{\sum_{s\in\mathcal{S}}\hat{P}_{n}(s,x): x\in\mathcal{X}\right\}
-\sqrt{\frac{2}{n}\left(|\mathcal{S}|\cdot|\mathcal{X}|-\log\beta\right)}\Bigg)_+.\label{eq:DefmXbar}
\end{align}
\end{thm}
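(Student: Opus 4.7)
The plan is to combine the local Lipschitz continuity established in Lemma~\ref{lem:rob_f_inf1} with the large deviations bound \eqref{equa:L1bound_Devroye}, while carefully replacing the distribution-dependent constant $m_S$ (which depends on the unknown $P$) by the data-dependent quantity $\overline{m}_S$ on a high-probability event. I will prove only the privacy bound \eqref{eq:PrivacyBoundf}; the utility bound \eqref{eq:UtilityBoundf} is analogous, with $\mathcal{S}$ replaced by $\mathcal{X}$ throughout.

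First, let me define the event
\begin{equation*}
    \mathcal{E} \defined \left\{ \|\hat{P}_n - P\|_1 \leq \varepsilon_n \right\}, \qquad \varepsilon_n \defined \sqrt{\tfrac{2}{n}\left(|\mathcal{S}|\cdot|\mathcal{X}|-\log\beta\right)}.
\end{equation*}
By inequality \eqref{equa:L1bound_Devroye}, $\Pr(\mathcal{E}) \geq 1 - \beta$. I work on $\mathcal{E}$ for the rest of the proof, and bound things uniformly in $W\in\mathcal{W}$.

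Next, apply Lemma~\ref{lem:rob_f_inf1} to $Q_1 = \hat{P}_n$ and $Q_2 = P$. Denoting by $m_S$ the quantity defined in \eqref{eq:DefmS} for this pair, the lemma yields
\begin{equation*}
    |\calL_f(\hat{P}_n,W)-\calL_f(P,W)| \leq C_{f,m_S}\, \|\hat{P}_n - P\|_1 \leq C_{f,m_S}\, \varepsilon_n.
\end{equation*}
The remaining task is to show that on $\mathcal{E}$, $C_{f,m_S} \leq C_{f,\overline{m}_S}$. Since $u\mapsto C_{f,u}$ is non-increasing (as noted in Remark~\ref{Remark:LocallyLipschitzf} and verified in Appendix~\ref{Appendix:ProoffGeneralization}), it suffices to prove $m_S \geq \overline{m}_S$.

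To show $m_S \geq \overline{m}_S$: on $\mathcal{E}$, the triangle inequality applied to marginals gives, for every $s\in\mathcal{S}$, $|P_S(s) - \hat{P}_{n,S}(s)| \leq \|\hat{P}_n - P\|_1 \leq \varepsilon_n$, hence $P_S(s) \geq \hat{P}_{n,S}(s) - \varepsilon_n$. Minimizing over $s$,
\begin{equation*}
    \min_{s}P_S(s) \;\geq\; \min_{s}\hat{P}_{n,S}(s) - \varepsilon_n.
\end{equation*}
Since probabilities are non-negative, $\min_s P_S(s) \geq \overline{m}_S$. Similarly, $\min_s \hat{P}_{n,S}(s) \geq \overline{m}_S$ trivially (both because $\overline{m}_S\leq \min_s\hat{P}_{n,S}(s)$ by its defining $(\cdot)_+$). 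Combining, $m_S = \min\{\min_s P_S(s),\min_s \hat{P}_{n,S}(s)\} \geq \overline{m}_S$, as required. Putting everything together yields \eqref{eq:PrivacyBoundf}.

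The main obstacle is conceptual rather than computational: the constant $m_S$ from Lemma~\ref{lem:rob_f_inf1} is not observable from the samples, so the data-dependent bound cannot be stated directly. The resolution is the monotonicity of $u \mapsto C_{f,u}$ together with the observation that controlling $\|\hat{P}_n - P\|_1$ also controls the $\ell_\infty$ deviation of the marginals, which lets me pass from $m_S$ to the computable $\overline{m}_S$ at the cost of only a $(\cdot)_+$ truncation. Once that substitution is justified, the bound is just the product of the Lipschitz constant and the $\ell_1$ large deviations bound.
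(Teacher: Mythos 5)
Your proof is correct and follows essentially the same route as the paper's: apply Lemma~\ref{lem:rob_f_inf1} with $Q_1=\hat{P}_n$, $Q_2=P$, lower-bound $m_S$ by $\overline{m}_S$ on the event $\|\hat{P}_n-P\|_1\le\varepsilon_n$ via the triangle inequality on marginals, and invoke the monotonicity of $u\mapsto C_{f,u}$. The only cosmetic difference is that you cite the monotonicity of $C_{f,u}$ rather than verifying it, but that verification is a one-line consequence of $K_{f,u}$ and $L_{f,u}$ being non-decreasing in $u$, exactly as the paper does.
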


\begin{proof}
See Appendix~\ref{Appendix:ProoffGeneralization}.
\end{proof}

\begin{rem}
Observe that up to an additive factor of $(2(|\mathcal{S}|\cdot|\mathcal{X}|-\log\beta)/n)^{1/2}$, which is negligible in the large $n$ regime, $\overline{m}_S$ equals the probability of the least likely symbol of the marginal distribution over $\mathcal{S}$ of $\hat{P}_n$. As a consequence, the bound in \eqref{eq:PrivacyBoundf} is data-dependent, while the bound in \eqref{eq:MainPcL} is data-independent. This discrepancy comes mainly from the fact that $\mathcal{L}_c(\cdot,W)$ is Lipschitz continuous, as established in Lemma~\ref{lem:rob_PCG1}, while $\mathcal{L}_f(\cdot,W)$ is {\it locally Lipschitz} in the sense of Remark~\ref{Remark:LocallyLipschitzf}. A similar discussion holds for utility.
\end{rem}

In most practical scenarios, the alphabet of $X$ is significantly larger than the alphabet of $S$, e.g., $S$ might be political preference while $X$ is movie ratings, $S$ might be private household information while $X$ is smart meter data, $S$ might be gender while $X$ is a profile picture, etc. In particular, when the sample size is limited, the bounds in Theorem~\ref{thm:main_robust} might become ineffective due to the potentially small value of $\overline{m}_X$. In order to alleviate this issue, we propose the following pre-processing technique which combines the symbols of $\mathcal{X}$ with less observations in the dataset.

Given $\gamma\geq0$ and a symbol $x_0$ not belonging to $\calX$, we let $\Pi_\gamma$ be the mapping with input alphabet $\calX$, output alphabet
\begin{equation}
    \calX_\gamma \defined \{x_0\} \cup \left\{x\in\calX : \sum_{s\in\mathcal{S}} \hat{P}(s,x) \geq \gamma\right\},
\end{equation}
and determined by
\begin{equation}
\label{eq:PreProcessing}
\Pi_\gamma(x) = 
\begin{cases}
x   & \text{if } \sum_{s} \hat{P}(s,x) \geq \gamma, \\
x_0 & \text{otherwise}. 
\end{cases}
\end{equation}
The proposed pre-processing technique consists in applying this mapping to the samples $\{(s_i,x_i)\}_{i=1}^n$ to obtain the modified samples $\{(s_i,\Pi_\gamma(x_i))\}_{i=1}^n$. Clearly, this pre-processing technique improves the bounds in Theorem~\ref{thm:main_robust} by increasing $\overline{m}_X$. However, this improvement might come at a price in utility, as shown in the following proposition.

\begin{prop}
\label{prop:PreProcessingUtilityReduction}
Let $\gamma\geq0$ be given and let $\hat{P}_n$ be the empirical distribution of $n$ samples $\{(s_i,x_i)\}_{i=1}^n$. If $\hat{P}_{\gamma}$ is the empirical distribution of the modified samples $\{(s_i,\Pi_{\gamma}(x_i))\}_{i=1}^n$, then, for any $\epsilon\in \Reals$ such that $\mathcal{W}^*(\hat{P}_{\gamma};\epsilon)\neq\emptyset$,
\begin{equation}
    \mathsf{H}_{f}(\hat{P}_{\gamma};\epsilon) \leq \mathsf{H}_{f}(\hat{P}_n;\epsilon),
\end{equation}
where $\mathsf{H}_{f}$ denotes the privacy-utility function when both privacy leakage and utility are measured using $f$-information.
\end{prop}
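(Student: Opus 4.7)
The plan is to exploit the fact that $\Pi_\gamma$ is a deterministic function and $\hat{P}_\gamma$ is literally the pushforward of $\hat{P}_n$ under $(s,x)\mapsto(s,\Pi_\gamma(x))$. The mechanism design for the pushforward can always be ``pulled back'' to the original alphabet, where privacy is exactly preserved and utility is at least as large by the data processing inequality for $f$-information.

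Concretely, since $\mathcal{W}^*(\hat{P}_\gamma;\epsilon)\neq\emptyset$ by hypothesis, pick any $W_\gamma\in \mathcal{W}^*(\hat{P}_\gamma;\epsilon)$, so that $W_\gamma$ has input alphabet $\mathcal{X}_\gamma$ and satisfies $\mathcal{L}_f(\hat{P}_\gamma,W_\gamma)\leq\epsilon$ and $\mathcal{U}_f(\hat{P}_\gamma,W_\gamma)=\mathsf{H}_f(\hat{P}_\gamma;\epsilon)$. I define the lifted mechanism $W\in\mathcal{W}$ with input alphabet $\mathcal{X}$ by
\[
W(y\mid x) \defined W_\gamma(y\mid \Pi_\gamma(x)),\qquad x\in\mathcal{X}.
\]
The first step is to check that $W$ yields the same joint $(S,Y)$ distribution under $\hat{P}_n$ as $W_\gamma$ does under $\hat{P}_\gamma$. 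Writing out the marginalization and grouping by $x'=\Pi_\gamma(x)$ gives, for every $(s,y)$,
\[
\sum_{x\in\mathcal{X}} \hat{P}_n(s,x)\,W(y\mid x)
= \sum_{x'\in\mathcal{X}_\gamma} \Bigg(\sum_{x:\Pi_\gamma(x)=x'}\hat{P}_n(s,x)\Bigg)\,W_\gamma(y\mid x')
= \sum_{x'\in\mathcal{X}_\gamma} \hat{P}_\gamma(s,x')\,W_\gamma(y\mid x'),
\]
where the last equality uses the very definition of $\hat{P}_\gamma$ as the empirical distribution of $(s_i,\Pi_\gamma(x_i))$. Consequently $\mathcal{L}_f(\hat{P}_n,W)=\mathcal{L}_f(\hat{P}_\gamma,W_\gamma)\leq\epsilon$, so $W$ is feasible for the optimization defining $\mathsf{H}_f(\hat{P}_n;\epsilon)$.

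The second step compares utilities. Under $\hat{P}_n$ together with $W$, let $X'\defined\Pi_\gamma(X)$. Since $X'$ is a deterministic function of $X$ and $Y$ depends on $X$ only through $X'$ by construction of $W$, the triple satisfies the Markov chain $X\to X'\to Y$. The data processing inequality for $f$-divergences, applied to $P_{X,Y}$ and $P_X P_Y$ processed through the channel $(x,y)\mapsto(\Pi_\gamma(x),y)$, therefore yields
\[
\mathcal{U}_f(\hat{P}_n,W) = I_f(P_{X,Y}) \;\geq\; I_f(P_{X',Y}) = \mathcal{U}_f(\hat{P}_\gamma,W_\gamma) = \mathsf{H}_f(\hat{P}_\gamma;\epsilon),
\]
where the second equality uses the joint-distribution identity established in the first step (applied with $X'$ in place of $S$). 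Combining the two steps, $\mathsf{H}_f(\hat{P}_n;\epsilon)\geq \mathcal{U}_f(\hat{P}_n,W)\geq \mathsf{H}_f(\hat{P}_\gamma;\epsilon)$, which is the claimed inequality.

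There is no real obstacle: the only subtlety is making the pushforward bookkeeping precise (so that $W$ genuinely lies in $\mathcal{W}$ and the $(S,Y)$-distributions match exactly), after which privacy transfers verbatim and utility is handled by a one-line invocation of the $f$-divergence DPI. If desired, the argument extends immediately to the case where no maximizer exists by replacing $W_\gamma$ with an $\eta$-almost-optimal mechanism and letting $\eta\downarrow 0$.
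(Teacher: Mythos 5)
Your proof is correct and follows essentially the same route as the paper's: lift the optimal mechanism $W_\gamma$ for $\hat{P}_\gamma$ to $\mathcal{X}$ by precomposing with $\Pi_\gamma$, observe that the induced $(S,Y)$ distribution (hence the privacy leakage) is unchanged, and compare utilities via the data processing inequality for $f$-information. The only cosmetic difference is that the paper's Lemma~\ref{lem:cut_lesslikely_symbols} records the utility comparison as an equality (using the DPI in both directions along $X \to X_0 \to Y_0$), whereas you invoke only the one inequality actually needed.
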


\begin{proof}
See Appendix~\ref{Appendix:ProofPreprocessing}.
\end{proof}

The proof of Proposition~\ref{prop:PreProcessingUtilityReduction} relies on the following lemma.

\begin{lem}
\label{lem:cut_lesslikely_symbols}
Let $\gamma\geq0$ be given. If $X \to X_0 \to Y_0$ is a Markov chain with $X_0 = \Pi_\gamma(X)$, then, for every $f$-information,
\begin{equation}
\label{equa:utility_lesslikely}
I_f(P_{X,Y_0})=I_f(P_{X_0,Y_0}).
\end{equation}
\end{lem}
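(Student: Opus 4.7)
The plan is to exploit the fact that $X_0=\Pi_\gamma(X)$ is a \emph{deterministic} function of $X$, so that under the Markov chain $X\to X_0\to Y_0$ the conditional distribution of $Y_0$ given $X$ depends on $x$ only through its image $\Pi_\gamma(x)$. From this, I will show that the Radon-Nikodym density in the definition of $f$-information is constant on each preimage of $\Pi_\gamma$, and then sum over those preimages to collapse the $f$-information on $(X,Y_0)$ to the one on $(X_0,Y_0)$.

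Concretely, first I would record the two ingredients: (i) $P_{Y_0\mid X}(y\mid x) = P_{Y_0\mid X_0}(y\mid \Pi_\gamma(x))$, which is exactly the Markov property $X\to X_0\to Y_0$ coupled with $X_0=\Pi_\gamma(X)$; and (ii) $P_{X_0}(x_0) = \sum_{x:\Pi_\gamma(x)=x_0} P_X(x)$, which holds because $X_0$ is the pushforward of $X$ under $\Pi_\gamma$. Combining (i) with $P_{X,Y_0}(x,y)=P_X(x)P_{Y_0\mid X}(y\mid x)$ gives
\begin{equation*}
\frac{P_{X,Y_0}(x,y)}{P_X(x)P_{Y_0}(y)}=\frac{P_{Y_0\mid X_0}(y\mid \Pi_\gamma(x))}{P_{Y_0}(y)} = \frac{P_{X_0,Y_0}(\Pi_\gamma(x),y)}{P_{X_0}(\Pi_\gamma(x))P_{Y_0}(y)},
\end{equation*}
so the argument of $f$ in the definition of $I_f(P_{X,Y_0})$ depends on $x$ only through $x_0=\Pi_\gamma(x)$.

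Next, I would substitute this into the definition of $I_f(P_{X,Y_0})$ and partition the sum over $x\in\mathcal{X}$ according to the value of $x_0=\Pi_\gamma(x)\in\mathcal{X}_\gamma$. Pulling the $x$-independent factor $f(\cdot)$ out of the inner sum and using (ii) yields
\begin{equation*}
I_f(P_{X,Y_0}) = \sum_{x_0,y}\Bigl(\sum_{x:\Pi_\gamma(x)=x_0} P_X(x)\Bigr) P_{Y_0}(y)\, f\!\left(\frac{P_{X_0,Y_0}(x_0,y)}{P_{X_0}(x_0)P_{Y_0}(y)}\right) = I_f(P_{X_0,Y_0}),
\end{equation*}
which is exactly~\eqref{equa:utility_lesslikely}.

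There is no real obstacle here beyond bookkeeping; the statement is ultimately the (well-known) fact that $f$-information is invariant under relabeling by a sufficient deterministic map on one side of a Markov chain. The only subtlety is to make sure the step that rewrites $P_{Y_0\mid X}$ in terms of $P_{Y_0\mid X_0}$ uses both that $X_0$ is deterministic in $X$ (so no randomization is lost in passing from $X$ to $X_0$) and the assumed Markov structure; once that is in place, the rest is a direct re-summation.
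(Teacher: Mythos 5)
Your proof is correct, and it takes a genuinely different route from the paper's. The paper disposes of this lemma in one line by a \emph{two-sided data processing} argument: since $X_0=\Pi_\gamma(X)$ is a deterministic function of $X$, the chain $X_0\to X\to Y_0$ holds automatically, so the DPI for $f$-information gives $I_f(P_{X,Y_0})\geq I_f(P_{X_0,Y_0})$; the assumed chain $X\to X_0\to Y_0$ gives the reverse inequality, and equality follows. Your argument instead computes directly: you show that the density $P_{X,Y_0}(x,y)/(P_X(x)P_{Y_0}(y))$ is constant on each fiber $\Pi_\gamma^{-1}(x_0)$ (using $P_{Y_0\mid X}(y\mid x)=P_{Y_0\mid X_0}(y\mid\Pi_\gamma(x))$, which indeed follows from the Markov property together with the determinism of $X_0$), and then re-sum over fibers using $P_{X_0}(x_0)=\sum_{x:\Pi_\gamma(x)=x_0}P_X(x)$. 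Both are sound. The paper's route is shorter and modular, delegating all the work to a cited DPI; yours is self-contained, avoids invoking the DPI as a black box, and makes the mechanism of equality explicit (the likelihood ratio is measurable with respect to $X_0$, i.e., $X_0$ is sufficient for $Y_0$ within $X$). The only bookkeeping point worth stating explicitly in your write-up is that every $x\in\mathcal{X}$ has $P_X(x)>0$ (so the ratios are well defined), which holds here since $\mathcal{X}$ is the support of $X$.
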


Observe that Lemma~\ref{lem:cut_lesslikely_symbols} is an immediate consequence of the data processing inequality (DPI) for $f$-information \cite[Remark 2.3]{polyanskiy2014lecture} and the fact that $X_0$ is a deterministic function of $X$. Indeed, if $\Pi:\mathcal{X}\to\mathcal{X}_0$ is any deterministic function and $X_0 \defined \Pi(X)$, then any random variable $Y_0$ satisfies $X_0 \to X \to Y_0$. In particular, the DPI implies that $I_f(P_{X,Y_0}) \geq I_f(P_{X_0,Y_0})$. If, in addition, $Y_0$ is a randomized function of $X_0$, i.e., $X \to X_0 \to Y_0$, then the DPI leads to $I_f(P_{X,Y_0}) \leq I_f(P_{X_0,Y_0})$.

\subsection{Arimoto's Mutual Information, Sibson's Mutual Information, and Maximal $\alpha$-Leakage}

We now establish continuity properties of Arimoto's mutual information, Sibson's mutual information, and maximal $\alpha$-leakage similar to those proved in Lemmas~\ref{lem:rob_PCG1} and \ref{lem:rob_f_inf1} for probability of correctly guessing and $f$-information, respectively. Upon these properties, we state two theorems regarding the discrepancy of privacy-utility guarantees for the information measures at hand.

Recall that for $Q\in\mathcal{P}$ and $W\in\mathcal{W}$,
\begin{equation}
    \mathcal{L}^{\rm A}_{\alpha}(Q,W) \defined I^{\rm A}_{\alpha}\left(P_{S,Y}\right) \quad \text{and} \quad \mathcal{U}^{\rm A}_{\alpha}(Q,W) \defined I^{\rm A}_{\alpha}\left(P_{X,Y}\right),
\end{equation}
where $S \to X \to Y$ is such that $P_{S,X} = Q$ and $P_{Y|X} = W$. The next lemma shows the Lipschitz continuity of $\calL^{\rm A}_{\alpha}(\cdot,W)$ and $\calU^{\rm A}_{\alpha}(\cdot,W)$.

\begin{lem}
\label{lem:LipschitzArimotoMI}
Let $\alpha\in(1,\infty]$. For any $Q_1,Q_2\in\mathcal{P}$ and $W\in\mathcal{W}$,
\begin{align}
    \label{eq:PrivacyBoundAMI} |\calL^{\rm A}_{\alpha}(Q_1,W)-\calL^{\rm A}_{\alpha}(Q_2,W)| &\leq  \frac{2\alpha}{\alpha-1} |\mathcal{S}|^{1-1/\alpha} \|Q_1-Q_2\|_1,\\
    \label{eq:UtilityBoundAMI} |\calU^{\rm A}_{\alpha}(Q_1,W)-\calU^{\rm A}_{\alpha}(Q_2,W)| &\leq  \frac{2\alpha}{\alpha-1} |\mathcal{X}|^{1-1/\alpha} \|Q_1-Q_2\|_1.
 \end{align}
\end{lem}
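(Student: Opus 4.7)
The plan is to decompose Arimoto's $\alpha$-mutual information as
$$I^{\rm A}_\alpha(P_{S,Y}) = \frac{\alpha}{\alpha-1}\bigl[\log N(Q,W) - \log D(Q)\bigr],$$
where $N(Q,W) \defined \sum_{y} \|P_{S,Y}(\cdot,y)\|_\alpha$, $D(Q) \defined \|P_S(\cdot)\|_\alpha$, and $P_{S,Y}$ is the joint distribution induced by the Markov chain $S\to X \to Y$ with $P_{S,X}=Q$ and $P_{Y|X}=W$. The bound \eqref{eq:PrivacyBoundAMI} will follow from two ingredients: (a) Lipschitz continuity of the maps $Q\mapsto N(Q,W)$ and $Q\mapsto D(Q)$ in $\|\cdot\|_1$, and (b) a uniform lower bound on $N$ and $D$ that makes the outer logarithm Lipschitz on the relevant interval.

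For (a), I would apply the reverse triangle inequality for $\|\cdot\|_\alpha$ followed by the monotonicity $\|v\|_\alpha\leq \|v\|_1$ to obtain
$$|N(Q_1,W)-N(Q_2,W)| \leq \sum_y \|P^{(1)}_{S,Y}(\cdot,y)-P^{(2)}_{S,Y}(\cdot,y)\|_1 \leq \|Q_1-Q_2\|_1,$$
where the final step collapses the sum-over-$y$ into the joint $\ell_1$-distance and uses that passage through $W$ is non-expansive in $\ell_1$; the same reasoning yields $|D(Q_1)-D(Q_2)|\leq \|Q_1-Q_2\|_1$. For (b), Jensen's inequality applied to $x\mapsto x^\alpha$ gives $\|P_S\|_\alpha \geq |\calS|^{1/\alpha-1}$ for every distribution on $\calS$, and the ordinary triangle inequality yields $N(Q,W) \geq \bigl\|\sum_y P_{S,Y}(\cdot,y)\bigr\|_\alpha = \|P_S\|_\alpha \geq |\calS|^{1/\alpha-1}$. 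Hence $N$ and $D$ both lie in the interval $[|\calS|^{1/\alpha-1},1]$, on which $\log$ is $|\calS|^{1-1/\alpha}$-Lipschitz, so
$$|\log N_1 - \log N_2| + |\log D_1 - \log D_2| \leq 2\,|\calS|^{1-1/\alpha}\,\|Q_1-Q_2\|_1,$$
and multiplying by $\alpha/(\alpha-1)$ delivers \eqref{eq:PrivacyBoundAMI}.

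The utility bound \eqref{eq:UtilityBoundAMI} follows by repeating the argument verbatim with $X$ in place of $S$, which replaces $|\calS|$ by $|\calX|$. The case $\alpha=\infty$ can be handled either by redoing the argument with $\|\cdot\|_\infty$ in place of $\|\cdot\|_\alpha$ (and the convention $|\calS|^{1-1/\infty}=|\calS|$), or by passing to the limit, since both sides of \eqref{eq:PrivacyBoundAMI} are continuous in $\alpha\in(1,\infty]$. I do not anticipate a deep technical obstacle; the main care required is in step (b), namely recognizing the correct lower bound on $N$ via the triangle inequality that collapses the sum over $y$ into the marginal $P_S$, so that the final constant $\frac{2\alpha}{\alpha-1}|\calS|^{1-1/\alpha}$ matches the claimed one without introducing spurious factors of $|\mathcal{Y}|$ or $|\calX|$.
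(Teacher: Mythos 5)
Your proposal is correct and follows essentially the same route as the paper's proof: the same decomposition into the $\sum_y\|P_{S,Y}(\cdot,y)\|_\alpha$ and $\|P_S\|_\alpha$ terms, the same Minkowski/$\ell_1$-contraction argument for the numerators, and the same lower bound $|\mathcal{S}|^{1/\alpha-1}$ making the logarithm Lipschitz (the paper phrases this as $|\log(a/b)|\leq|a-b|/\min\{a,b\}$, which is the same estimate). No gap to report.
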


\begin{proof}
See Appendix~\ref{Appendix:ProofAMI}.
\end{proof}

\begin{rem}
Recall that probability of correctly guessing and Arimoto's mutual information of order $\infty$ are related through the formula
\begin{equation}
\label{eq:RelationAMIinftyPc}
    I^{\rm A}_\infty(U;V) = \log \frac{P_c(U|V)}{P_c(U)}.
\end{equation}
Observe that, despite this relation, the bound for probability of correctly guessing in \eqref{eq:PrivacyBoundPc} and the corresponding bound for Arimoto's mutual information of order $\infty$ in \eqref{eq:PrivacyBoundAMI} differ by a factor of $2|\mathcal{S}|$. As shown in the proof of Lemma~\ref{lem:LipschitzArimotoMI}, the $|\mathcal{S}|$ factor comes from the $\log$ in \eqref{eq:RelationAMIinftyPc} via the minimum in the following inequality
\begin{equation}
\phantom{a,b>0.} \quad \quad \quad \left| \log \frac{a}{b} \right| \leq \frac{|a-b|}{\min\{a,b\}}, \quad \quad \quad a,b>0.
\end{equation}
A similar argument explains the extra factor of $2|\mathcal{X}|$ appearing in \eqref{eq:UtilityBoundAMI} w.r.t.~its counterpart in \eqref{eq:UtilityBoundPc}.
\end{rem}

Now we consider Sibson's mutual information. Recall that for $Q\in\mathcal{P}$ and $W\in\mathcal{W}$,
\begin{equation}
    \mathcal{L}^{\rm S}_{\alpha}(Q,W) \defined I^{\rm S}_{\alpha}\left(P_{S,Y}\right) \quad \text{and} \quad \mathcal{U}^{\rm S}_{\alpha}(Q,W) \defined I^{\rm S}_{\alpha}\left(P_{X,Y}\right),
\end{equation}
where $S \to X \to Y$ is such that $P_{S,X} = Q$ and $P_{Y|X} = W$. The following lemma establishes that the mappings $\mathcal{L}^{\rm S}_{\alpha}(\cdot,W)$ and $\mathcal{U}^{\rm S}_{\alpha}(\cdot,W)$ have similar continuity properties to those of $\mathcal{L}_{f}(\cdot,W)$ and $\mathcal{U}_{f}(\cdot,W)$ with $f$ locally Lipschitz.

\begin{lem}
\label{lem:LipschitzSibsonMI}
For $Q_1,Q_2\in\mathcal{P}$, we define $m_S$ and $m_X$ as in \eqref{eq:DefmS} and \eqref{eq:DefmX}, respectively. If $\alpha\in(1,\infty)$, then, for any $W\in\mathcal{W}$,
\begin{align}
    \label{eq:PrivacyBoundSMI} |\calL^{\rm S}_{\alpha}(Q_1,W)-\calL^{\rm S}_{\alpha}(Q_2,W)| &\leq  \frac{2\alpha+1}{\alpha-1} \frac{\|Q_1 - Q_2\|_1}{m_S^{1-1/\alpha}},\\
    \label{eq:UtilityBoundSMI} |\calU^{\rm S}_{\alpha}(Q_1,W)-\calU^{\rm S}_{\alpha}(Q_2,W)| &\leq  \frac{1}{\alpha-1} \frac{\|Q_1-Q_2\|_1}{m_X^{1-1/\alpha}}.
\end{align}
If $\alpha=\infty$, then, for any $W\in\mathcal{W}$,
\begin{align}
   \label{eq:PrivacyLeakageBoundMaximalLeakage} |\calL^{\rm S}_{\infty}(Q_1,W)-\calL^{\rm S}_{\infty}(Q_2,W)| &\leq  \frac{2\|Q_1-Q_2\|_1}{\min_s \sum_x Q_1(s,x)},\\  \label{eq:UtilityBoundMaximalLeakage} |\calU^{\rm S}_{\infty}(Q_1,W)-\calU^{\rm S}_{\infty}(Q_2,W)| &= 0.
\end{align}
\end{lem}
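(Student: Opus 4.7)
My plan is to handle the four statements separately, using the mean value theorem (MVT) along the segment $[Q_1,Q_2]$ for the smooth cases (finite $\alpha$) and a direct estimate for the non-differentiable case $\alpha=\infty$. Two facts will be used throughout: (i) $\sum_y W(y|x)=\sum_y P_{Y|S}(y|s)=1$ for any fixed $x$ or $s$; (ii) by the power-mean inequality $(\mathbb{E}[W^\alpha])^{1/\alpha}\geq\mathbb{E}[W]$, one has $\sum_y\bigl(\sum_u P_U(u) P_{V|U}(y|u)^\alpha\bigr)^{1/\alpha}\geq\sum_y P_V(y)=1$. Fact~(ii) bounds from below the denominator arising when differentiating $\log(\cdot)$.

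The two utility statements are short. For $\alpha=\infty$, Definition~\ref{Def:SibsonMI} gives $I^{\rm S}_\infty(P_{X,Y})=\log\sum_y\max_x W(y|x)$, which is independent of $Q$; this is \eqref{eq:UtilityBoundMaximalLeakage}. For $\alpha\in(1,\infty)$, write $\mathcal{U}^{\rm S}_\alpha(Q,W)=\tfrac{\alpha}{\alpha-1}\log F(P_X)$ with $F(P_X)=\sum_y G(y)^{1/\alpha}$ and $G(y)=\sum_x P_X(x)W(y|x)^\alpha$, so the dependence on $Q$ is only through $P_X$. Along the segment, $P_X(x)\geq m_X$ gives $G(y)\geq m_X W(y|x_0)^\alpha$, hence $G(y)^{1/\alpha-1}W(y|x_0)^\alpha\leq m_X^{1/\alpha-1}W(y|x_0)$. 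Summing over $y$, dividing by $F\geq 1$, and using $\|P_{X,1}-P_{X,2}\|_1\leq\|Q_1-Q_2\|_1$ with the chain rule yields \eqref{eq:UtilityBoundSMI}.

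The finite-$\alpha$ privacy case is the main step, because $Q$ enters both linearly (through $P_{S,Y}$) and through a negative power (through $P_S$). Write $\mathcal{L}^{\rm S}_\alpha(Q,W)=\tfrac{\alpha}{\alpha-1}\log F''(Q)$ with $F''(Q)=\sum_y B(y)^{1/\alpha}$ and $B(y)=\sum_s P_{S,Y}(s,y)^\alpha/P_S(s)^{\alpha-1}$. The quotient rule gives
\[
\partial_{Q(s_0,x_0)}B(y)=P_{Y|S}(y|s_0)^{\alpha-1}\bigl[\alpha W(y|x_0)-(\alpha-1)P_{Y|S}(y|s_0)\bigr].
\]
The crucial observation is $B(y)\geq m_S\,P_{Y|S}(y|s_0)^\alpha$, obtained by retaining only the $s=s_0$ summand and using $P_S(s_0)\geq m_S$. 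This cancels the $P_{Y|S}(y|s_0)^{\alpha-1}$ factor inside $B(y)^{1/\alpha-1}\cdot P_{Y|S}(y|s_0)^{\alpha-1}$, so that after the triangle inequality the two resulting $y$-sums collapse to $\sum_y W(y|x_0)=1$ and $\sum_y P_{Y|S}(y|s_0)=1$ via fact~(i). Combining with $F''\geq 1$ and the chain rule produces a uniform bound $|\partial_{Q(s_0,x_0)}\mathcal{L}^{\rm S}_\alpha|\leq c_\alpha\,m_S^{1/\alpha-1}$ with $c_\alpha\leq(2\alpha+1)/(\alpha-1)$, whence \eqref{eq:PrivacyBoundSMI} follows from the MVT (noting that $m_S$ defined in \eqref{eq:DefmS} lower-bounds $P_S(s)$ along the segment).

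For $\alpha=\infty$ the max prevents differentiation, so I argue directly. With $M_i(y)=\max_s P_{Y|S,i}(y|s)$, the inequality $|\max_s a_s-\max_s b_s|\leq\sum_s|a_s-b_s|$ yields $\sum_y|M_1(y)-M_2(y)|\leq\sum_{s,y}|P_{Y|S,1}(y|s)-P_{Y|S,2}(y|s)|$. Applying the algebraic identity $a/c-b/d=(a-b)/c-b(c-d)/(cd)$ with $a=P_{S,Y,1}(s,y)$, $c=P_{S,1}(s)$, $b=P_{S,Y,2}(s,y)$, $d=P_{S,2}(s)$ and summing bounds the right-hand side by $(\|P_{S,Y,1}-P_{S,Y,2}\|_1+\|P_{S,1}-P_{S,2}\|_1)/\min_s P_{S,1}(s)\leq 2\|Q_1-Q_2\|_1/\min_s P_{S,1}(s)$, where the last inequality uses that $P_{S,Y}$ and $P_S$ are $\ell_1$-contractions of $Q$. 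Combining with $\sum_y M_i(y)\geq 1$ and $|\log a-\log b|\leq|a-b|/\min(a,b)$ proves \eqref{eq:PrivacyLeakageBoundMaximalLeakage}. The main obstacle is the finite-$\alpha$ privacy case: without the surrogate $B(y)\geq m_S P_{Y|S}(y|s_0)^\alpha$, naive bounds produce uncontrolled $y$-sums of the form $\sum_y\max_x W(y|x)^\alpha$ that do not close up to the claimed constant.
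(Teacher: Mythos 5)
Your proof is correct, and for the finite-$\alpha$ cases it takes a genuinely different route from the paper's. The paper's argument is purely algebraic: it bounds $|\log(a/b)|$ by $|a-b|/\min\{a,b\}$, lower-bounds the denominator by $1$ via Minkowski, and then telescopes the numerator --- after Minkowski and $\|\cdot\|_\alpha\le\|\cdot\|_1$ it adds and subtracts the cross term $P_{S_1}(\cdot)^{1/\alpha}P_{X_2|S_2}(x|\cdot)$, controlling one piece by the mixed-channel estimate $\|P_{S_1}\cdot P_{X_1|S_1}-P_{S_1}\cdot P_{X_2|S_2}\|_1\le 2\|Q_1-Q_2\|_1$ and the other by the Lipschitz constant $(\alpha m_S^{1-1/\alpha})^{-1}$ of $t\mapsto t^{1/\alpha}$ on $[m_S,1]$, yielding $\tfrac{\alpha}{\alpha-1}(2+\tfrac1\alpha)=\tfrac{2\alpha+1}{\alpha-1}$. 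You instead differentiate along the segment and bound the gradient uniformly; your surrogate $B(y)\ge m_S\,P_{Y|S}(y|s_0)^\alpha$ plays exactly the role that Minkowski plus the $t^{1/\alpha}$-Lipschitz estimate play in the paper, and your computation is correct (I verified the quotient-rule expression for $\partial_{Q(s_0,x_0)}B(y)$ and the collapse of the $y$-sums to $\alpha+(\alpha-1)$). A pleasant side effect is that the MVT never compares mismatched conditionals, so you avoid the factor $2$ from the mixed-channel lemma and actually obtain the sharper constant $\tfrac{2\alpha-1}{\alpha-1}\le\tfrac{2\alpha+1}{\alpha-1}$, while the utility constant $\tfrac{1}{\alpha-1}$ comes out identically. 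For $\alpha=\infty$ your direct estimates coincide with the paper's up to packaging: the paper routes the same quotient-rule identity through an auxiliary lemma on $\|P_{S_1}\cdot P_{X_1|S_1}-P_{S_1}\cdot P_{X_2|S_2}\|_1$, and obtains the utility statement by letting $\alpha\to\infty$ in the finite-$\alpha$ bound rather than by observing that $I^{\rm S}_\infty(P_{X,Y})$ is $Q$-free. The only caveats are the standard ones for MVT arguments: when $m_S=0$ or $m_X=0$ the claims are vacuous, and summands with $B(y)=0$ on the open segment are identically zero and should be discarded before differentiating; neither affects the result.
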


\begin{proof}
See Appendix~\ref{Appendix:ProofSMI}.
\end{proof}

\begin{rem}
A straightforward manipulation shows that
\begin{align}
    \label{eq:ExpArimotoMI} \exp\left(I^{\rm A}_\infty(P_{U,V})\right) &= \sum_{v\in\mathcal{V}} \max_{u\in\mathcal{U}} \frac{P_U(u)}{\max_{u'} P_U(u')} P_{V|U}(v|u),\\
    \label{eq:ExpSibsonMI} \exp\left(I^{\rm S}_\infty(P_{U,V})\right) &= \sum_{v\in\mathcal{V}} \max_{u\in\mathcal{U}} P_{V|U}(v|u).
\end{align}
From \eqref{eq:ExpArimotoMI} and \eqref{eq:ExpSibsonMI}, we observe that Sibson's mutual information of order infinity is independent of the input distribution $P_U$, while Arimoto's mutual information of the same order is not. By comparing \eqref{eq:PrivacyBoundAMI} and \eqref{eq:PrivacyLeakageBoundMaximalLeakage}, we also observe that the privacy leakage bound for Sibson's mutual information of order infinity \emph{does} depend on the probability of the least likely symbol of the input\footnote{In recent work \cite{issa2018operational}, Issa \etal established a bound similar to \eqref{eq:ThmRobustSMIInfty} which also depends on the probability of the least likely symbol of the input. Thus, this dependency seems unavoidable at the moment.}, while its Arimoto's counterpart does not. These seemingly contradicting facts have a rather intuitive cause: it is hard to estimate $P_{V|U}(\cdot|u)$ reliably if $P_U(u)$ is small. Observe that while $P_{V|U}(v|u)$ appears in both \eqref{eq:ExpArimotoMI} and \eqref{eq:ExpSibsonMI}, the factor $P_U(u) / \max_{u'} P_U(u')$ in \eqref{eq:ExpArimotoMI} makes Arimoto's mutual information of order infinity less dependent on symbols $u$ with small probabilities $P_U(u)$. The difficulty in estimating Sibson's mutual information in comparison to its Arimoto counterpart is also natural from a privacy perspective. As proved by Issa \etal in \cite{issa2016operational}, Sibson's mutual information guarantees privacy for \emph{any} (possibly randomized) function of $S$, while Arimoto's mutual information guarantees privacy only for $S$ itself.
\end{rem}

We end up this section showing that maximal $\alpha$-leakage ($\alpha>1$) behaves in a similar way to Sibson's mutual information of order $\infty$. Recall the notation in Section~\ref{Subsection:Notation} and that, for $Q\in\mathcal{P}$ and $W\in\mathcal{W}$,
\begin{align}
    \mathcal{L}^{\rm max}_\alpha(Q,W) \defined \sup_{P_{\tilde{S}}} I^{\rm A}_\alpha\left(P_{\tilde{S}}\cdot (P_{X|S} W)\right)\quad \text{and}\quad
    \mathcal{U}^{\rm max}_\alpha(Q,W) \defined \sup_{P_{\tilde{X}}} I^{\rm A}_\alpha\left(P_{\tilde{X}}\cdot W\right),
\end{align}
where $S \to X \to Y$ is such that $P_{S,X} = Q$ and $P_{Y|X} = W$.

\begin{lem}
\label{Lemma:LipschitzMaximalAlphaLeakage}
Let $\alpha\in(1,\infty)$. For any $Q_1,Q_2\in\mathcal{P}$ and $W\in\mathcal{W}$,
\begin{align}
    &|\calL^{\rm max}_{\alpha}(Q_1,W)-\calL^{\rm max}_{\alpha}(Q_2,W)|
    \leq \frac{4\alpha}{\alpha-1} \frac{|\mathcal{S}|^{1-1/\alpha}}{\min_s \sum_x Q_1(s,x)} \|Q_1-Q_2\|_1, \label{eq:MaxAlphaLeakageLBound}\\
    &|\calU^{\rm max}_{\alpha}(Q_1,W)-\calU^{\rm max}_{\alpha}(Q_2,W)| = 0.\label{eq:MaxAlphaLeakageUBound}
\end{align}
\end{lem}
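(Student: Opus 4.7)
The utility bound \eqref{eq:MaxAlphaLeakageUBound} is immediate: by Proposition~\ref{Prop:MaximalAlphaLeakageExpression},
\begin{equation*}
\mathcal{U}^{\rm max}_{\alpha}(Q,W)=\sup_{P_{\tilde X}}I^{\rm A}_{\alpha}(P_{\tilde X}\cdot W)
\end{equation*}
depends on $Q$ only through the support of its $\mathcal{X}$-marginal, and hence coincides for $Q_1$ and $Q_2$ whenever those supports agree (in particular when both are full).

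For the privacy bound \eqref{eq:MaxAlphaLeakageLBound}, write $P^{(i)}_S(s)=\sum_x Q_i(s,x)$ and let $V_i=P^{(i)}_{X\mid S}W$ denote the composite channel from $\mathcal{S}$ to $\mathcal{Y}$, so that Proposition~\ref{Prop:MaximalAlphaLeakageExpression} gives $\mathcal{L}^{\rm max}_{\alpha}(Q_i,W)=\sup I^{\rm A}_{\alpha}(P_{\tilde S}\cdot V_i)$ with $\supp(P_{\tilde S})\subseteq\supp(P^{(i)}_S)$. If $\min_s P^{(1)}_S(s)=0$, the right-hand side of \eqref{eq:MaxAlphaLeakageLBound} is infinite and there is nothing to prove; so I may assume $P^{(1)}_S$ has full support, and the possibly smaller supremum domain attached to $Q_2$ is handled by restricting to the common support, which is legitimate because the RHS of \eqref{eq:MaxAlphaLeakageLBound} is anchored at $Q_1$ alone. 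The strategy is to derive the uniform (in $P_{\tilde S}$) pointwise estimate
\begin{equation*}
|I^{\rm A}_{\alpha}(P_{\tilde S}\cdot V_1)-I^{\rm A}_{\alpha}(P_{\tilde S}\cdot V_2)|\leq \frac{4\alpha\,|\mathcal{S}|^{1-1/\alpha}}{(\alpha-1)\min_s\sum_x Q_1(s,x)}\,\|Q_1-Q_2\|_1
\end{equation*}
and then take suprema on both sides.

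To obtain this estimate I would proceed in three steps. (i) Apply Lemma~\ref{lem:LipschitzArimotoMI}, viewed as a Lipschitz bound on $I^{\rm A}_{\alpha}$ as a function of its joint argument on $\mathcal{S}\times\mathcal{Y}$, to dominate the LHS by $\tfrac{2\alpha}{\alpha-1}|\mathcal{S}|^{1-1/\alpha}\|P_{\tilde S}\cdot V_1-P_{\tilde S}\cdot V_2\|_1$. (ii) Summing out $y$ and using that $W$ is row-stochastic reduces $\|P_{\tilde S}\cdot V_1-P_{\tilde S}\cdot V_2\|_1$ to $\sum_s P_{\tilde S}(s)\|P^{(1)}_{X\mid S}(\cdot\mid s)-P^{(2)}_{X\mid S}(\cdot\mid s)\|_1$. (iii) For each $s$, the elementary identity
\begin{equation*}
\frac{Q_1(s,x)}{P^{(1)}_S(s)}-\frac{Q_2(s,x)}{P^{(2)}_S(s)}=\frac{Q_1(s,x)-Q_2(s,x)}{P^{(1)}_S(s)}+Q_2(s,x)\!\left(\frac{1}{P^{(1)}_S(s)}-\frac{1}{P^{(2)}_S(s)}\right),
\end{equation*}
combined with $|P^{(1)}_S(s)-P^{(2)}_S(s)|\leq\sum_x|Q_1(s,x)-Q_2(s,x)|$, yields
\begin{equation*}
\|P^{(1)}_{X\mid S}(\cdot\mid s)-P^{(2)}_{X\mid S}(\cdot\mid s)\|_1\leq \frac{2\sum_x|Q_1(s,x)-Q_2(s,x)|}{P^{(1)}_S(s)}.
\end{equation*}
Bounding $P^{(1)}_S(s)\geq\min_{s'}\sum_x Q_1(s',x)$ and averaging against $P_{\tilde S}$, together with $\sum_s P_{\tilde S}(s)\sum_x|Q_1(s,x)-Q_2(s,x)|\leq\|Q_1-Q_2\|_1$, delivers the uniform estimate with the stated constant $4\alpha/(\alpha-1)$.

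The main obstacle is step (iii): translating an $\ell_1$ perturbation of the joint distribution into an $\ell_1$ perturbation of the conditional $P_{X\mid S}(\cdot\mid s)$ intrinsically forces the factor $1/\min_s\sum_x Q_1(s,x)$, mirroring the phenomenon already present in Lemmas~\ref{lem:rob_f_inf1} and \ref{lem:LipschitzSibsonMI}. A secondary technical point is the possible asymmetry of the two supremum domains when $\supp(P^{(2)}_S)\subsetneq\mathcal{S}$; this is circumvented by restricting to the common support, at no cost for the desired inequality.
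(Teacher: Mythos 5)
Your proposal is correct and follows essentially the same route as the paper: express $\calL^{\rm max}_{\alpha}$ as a supremum of Arimoto's mutual information over $P_{\tilde S}$, pass the difference inside the supremum, invoke Lemma~\ref{lem:LipschitzArimotoMI}, and then bound $\sup_{P_{\tilde S}}\|P_{\tilde S}\cdot P_{X_1|S_1}-P_{\tilde S}\cdot P_{X_2|S_2}\|_1$ by $2\|Q_1-Q_2\|_1/\min_s\sum_x Q_1(s,x)$. The only cosmetic difference is that you control the conditional perturbation via a direct per-$s$ identity, whereas the paper reweights by $P_{S_1}$ and invokes Lemma~\ref{lem:TotalVariationMismatchedChannel}, which rests on the same triangle-inequality computation.
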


\begin{proof}
See Appendix~\ref{Appendix:ProofMAL}.
\end{proof}

We summarize the probabilistic upper bounds derived in this subsection in the following two theorems. They follow immediately from Lemma~\ref{lem:LipschitzArimotoMI}, \ref{lem:LipschitzSibsonMI}, \ref{Lemma:LipschitzMaximalAlphaLeakage} and the large deviations inequality in \eqref{equa:L1bound_Devroye}.

\begin{thm}
\label{thm:bounds_AMI_SMI}
Let $\hat{P}_n$ be the empirical distribution obtained from $n$ i.i.d.~samples drawn from $P$. Then, with probability at least $1-\beta$, for any $W\in\mathcal{W}$, we have
\begin{align}
    &|\calL(\hat{P}_n,W) - \calL(P,W)|
    \leq \sqrt{\frac{2}{n}\left(|\mathcal{S}|\cdot|\mathcal{X}|-\log\beta\right)} \cdot
    \begin{cases}
        \frac{2\alpha}{\alpha-1} |\mathcal{S}|^{1-1/\alpha}  & \text{when } \mathcal{L}=\mathcal{L}_{\alpha}^{\rm A}\text{ with }\alpha\in(1,\infty],\\
        \frac{2\alpha+1}{(\alpha-1)\overline{m}_S^{1-1/\alpha}} & \text{when } \mathcal{L}=\mathcal{L}_{\alpha}^{\rm S}\text{ with }\alpha\in(1,\infty),
    \end{cases}
\end{align}
\begin{align}
    &|\calU(\hat{P}_n,W) - \calU(P,W)|
    \leq \sqrt{\frac{2}{n}\left(|\mathcal{S}|\cdot|\mathcal{X}|-\log\beta\right)} \cdot
    \begin{cases}
        \frac{2\alpha}{\alpha-1} |\mathcal{X}|^{1-1/\alpha} & \text{when } \mathcal{U}=\mathcal{U}_{\alpha}^{\rm A}\text{ with }\alpha\in(1,\infty],\\
        \frac{1}{(\alpha-1)\overline{m}_X^{1-1/\alpha}} & \text{when } \mathcal{U}=\mathcal{U}_{\alpha}^{\rm S}\text{ with }\alpha\in(1,\infty),
    \end{cases}
\end{align}
where $\overline{m}_S$ and $\overline{m}_X$ are defined in \eqref{eq:DefmSbar} and \eqref{eq:DefmXbar}, respectively.
\end{thm}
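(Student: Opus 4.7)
The plan is to prove Theorem~\ref{thm:bounds_AMI_SMI} as a direct consequence of the Lipschitz continuity results already stated, namely Lemmas~\ref{lem:LipschitzArimotoMI} and \ref{lem:LipschitzSibsonMI}, combined with the large deviations inequality \eqref{equa:L1bound_Devroye} applied to the joint alphabet $\mathcal{Z}=\mathcal{S}\times\mathcal{X}$. Set $\epsilon_n(\beta)\defined\sqrt{\frac{2}{n}\left(|\mathcal{S}|\cdot|\mathcal{X}|-\log\beta\right)}$. By \eqref{equa:L1bound_Devroye}, the event $\mathcal{E}\defined\{\|\hat{P}_n-P\|_1\leq \epsilon_n(\beta)\}$ occurs with probability at least $1-\beta$. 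The entire argument is then carried out conditional on $\mathcal{E}$, and uniformly in $W\in\mathcal{W}$, since the Lipschitz bounds do not depend on the specific mechanism.

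For the Arimoto case, I would set $Q_1=\hat{P}_n$ and $Q_2=P$ in Lemma~\ref{lem:LipschitzArimotoMI}. Because the Lipschitz constants $\frac{2\alpha}{\alpha-1}|\mathcal{S}|^{1-1/\alpha}$ and $\frac{2\alpha}{\alpha-1}|\mathcal{X}|^{1-1/\alpha}$ depend only on alphabet sizes and not on the distributions themselves, multiplying the Lipschitz bound by $\|\hat{P}_n-P\|_1\leq\epsilon_n(\beta)$ on $\mathcal{E}$ yields the Arimoto lines of the theorem directly.

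The Sibson case requires slightly more care because the Lipschitz constant in Lemma~\ref{lem:LipschitzSibsonMI} involves $m_S$ and $m_X$, which are defined in terms of both $Q_1$ and $Q_2$ and therefore depend on the unknown $P$. The key step here is to show that, on $\mathcal{E}$, one has $m_S\geq\overline{m}_S$ and $m_X\geq\overline{m}_X$, so that the unknown quantities can be replaced by their purely data-dependent surrogates \eqref{eq:DefmSbar}--\eqref{eq:DefmXbar}. For any $s\in\mathcal{S}$, the triangle inequality for the marginal gives
\begin{equation*}
\sum_{x}P(s,x)\;\geq\;\sum_{x}\hat{P}_n(s,x)-\|\hat{P}_n-P\|_1\;\geq\;\min_{s'}\sum_{x}\hat{P}_n(s',x)-\epsilon_n(\beta),
\end{equation*}
and the same lower bound holds trivially with $P$ replaced by $\hat{P}_n$. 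Taking the minimum over $s$ and the positive part yields $m_S\geq\overline{m}_S$; the argument for $m_X\geq\overline{m}_X$ is identical. Substituting these lower bounds into \eqref{eq:PrivacyBoundSMI}--\eqref{eq:UtilityBoundSMI} and again using $\|\hat{P}_n-P\|_1\leq\epsilon_n(\beta)$ gives the Sibson lines. When $\overline{m}_S=0$ or $\overline{m}_X=0$, the stated bound is vacuous and the inequality holds trivially.

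I do not anticipate a substantive obstacle, since the real technical work (the Lipschitz estimates) has already been done in the preceding lemmas. The only subtlety worth flagging explicitly is the transfer from $m_S,m_X$ to $\overline{m}_S,\overline{m}_X$, which is the reason the Sibson bound is data-dependent while the Arimoto bound is not; this mirrors the distinction already highlighted in the remark following Theorem~\ref{thm:main_robust} between the globally Lipschitz mapping $\mathcal{L}_c(\cdot,W)$ and the only locally Lipschitz mapping $\mathcal{L}_f(\cdot,W)$.
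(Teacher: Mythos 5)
Your proposal is correct and follows exactly the route the paper takes: Theorem~\ref{thm:bounds_AMI_SMI} is obtained in the paper by combining Lemmas~\ref{lem:LipschitzArimotoMI} and \ref{lem:LipschitzSibsonMI} with the large deviations inequality \eqref{equa:L1bound_Devroye}, and the transfer from $m_S,m_X$ to $\overline{m}_S,\overline{m}_X$ on the high-probability event is precisely the argument spelled out in the proof of Theorem~\ref{thm:main_robust} in Appendix~\ref{Appendix:ProoffGeneralization}. Your explicit handling of the vacuous case $\overline{m}_S=0$ is a minor refinement the paper leaves implicit.
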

\begin{thm}
\label{Thm:bound_Sinf_ML}
Let $\hat{P}_n$ be the empirical distribution obtained from $n$ i.i.d.~samples drawn from $P$. Then, with probability at least $1-\beta$, for any $W\in\mathcal{W}$, we have
\begin{equation}
\label{eq:ThmRobustSMIInfty}
\begin{aligned}
    &|\mathcal{L}(\hat{P}_n,W)-\mathcal{L}(P,W)| 
    \leq \frac{2\sqrt{\frac{2}{n}\left(|\mathcal{S}|\cdot|\mathcal{X}|-\log\beta\right)}}{\min_{s} \sum_{x} \hat{P}_n(s,x)} 
    \cdot
    \begin{cases}
        1 &\text{when } \mathcal{L}=\mathcal{L}^{\rm S}_{\infty} \text{ or } \mathcal{L}=\calL^{\rm max}_{\infty},\\
        \frac{2\alpha |\mathcal{S}|^{1-1/\alpha}}{\alpha-1} &\text{when } \mathcal{L}=\calL^{\rm max}_{\alpha}\text{ with }\alpha\in(1,\infty),
    \end{cases}
\end{aligned}
\end{equation}
and $|\mathcal{U}(\hat{P}_n,W)-\mathcal{U}(P,W)|=0$ when $\mathcal{U}=\mathcal{U}^{\rm S}_{\infty}$ or $\mathcal{U}=\mathcal{U}^{\rm max}_{\alpha}$ with $\alpha\in(1,\infty]$.
\end{thm}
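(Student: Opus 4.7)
The plan is to mirror the template employed in the proofs of Theorems~\ref{thm:rob_PCG}, \ref{thm:main_robust}, and \ref{thm:bounds_AMI_SMI}: start from the Lipschitz lemmas already established in this section, then substitute the large deviations bound \eqref{equa:L1bound_Devroye}. Concretely, for $\mathcal{L}=\mathcal{L}^{\rm S}_\infty$ I would invoke inequality \eqref{eq:PrivacyLeakageBoundMaximalLeakage} of Lemma~\ref{lem:LipschitzSibsonMI}, and for $\mathcal{L}=\calL^{\rm max}_\alpha$ with $\alpha\in(1,\infty)$ I would invoke \eqref{eq:MaxAlphaLeakageLBound} of Lemma~\ref{Lemma:LipschitzMaximalAlphaLeakage}. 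In both cases I would assign $Q_1 \defined \hat{P}_n$ and $Q_2 \defined P$; since the left-hand side of each inequality is symmetric in its arguments, this labeling is admissible and yields the data-dependent denominator $\min_{s}\sum_{x}\hat{P}_n(s,x)$ appearing in \eqref{eq:ThmRobustSMIInfty}. Applying \eqref{equa:L1bound_Devroye} with $\mathcal{Z}=\mathcal{S}\times\mathcal{X}$ then gives $\|\hat{P}_n-P\|_1 \leq \sqrt{(2/n)(|\mathcal{S}|\cdot|\mathcal{X}|-\log\beta)}$ with probability at least $1-\beta$, and substituting this bound produces both cases of \eqref{eq:ThmRobustSMIInfty}. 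Because the Lipschitz estimates are uniform in $W\in\mathcal{W}$, the resulting inequality holds simultaneously for every privacy mechanism on the same high-probability event.

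The edge case $\mathcal{L}=\calL^{\rm max}_\infty$ is not covered by Lemma~\ref{Lemma:LipschitzMaximalAlphaLeakage}, whose scope is $\alpha\in(1,\infty)$. Here I would appeal to Proposition~\ref{Prop:MaxL_SMI_inf}, which identifies $\calL^{\rm max}_\infty(S\to Y)=I^{\rm S}_\infty(P_{S,Y})$, thereby reducing this case to the Sibson bound already handled. The utility claim is cleaner still: $I^{\rm S}_\infty(P_{X,Y})=\log \sum_{y}\max_{x} W(x,y)$ depends on the joint distribution only through the channel $W$, and by Proposition~\ref{Prop:MaxL_SMI_inf} the same is true of $\calL^{\rm max}_\infty(X\to Y)$; moreover \eqref{eq:MaxAlphaLeakageUBound} of Lemma~\ref{Lemma:LipschitzMaximalAlphaLeakage} supplies the vanishing bound for $\calU^{\rm max}_\alpha$ with $\alpha\in(1,\infty)$. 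Consequently $|\mathcal{U}(\hat{P}_n,W)-\mathcal{U}(P,W)|=0$ deterministically and pointwise in $W$, and the large deviations step is not needed for the utility half of the statement.

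The heavy lifting has already been carried out in Lemmas~\ref{lem:LipschitzSibsonMI} and~\ref{Lemma:LipschitzMaximalAlphaLeakage}, so I do not anticipate any substantive obstacle in assembling the theorem. The only point worth double-checking is the symmetric-denominator argument used to justify writing $\min_{s}\sum_{x}\hat{P}_n(s,x)$ in place of $\min_{s}\sum_{x}P(s,x)$; this is legitimate precisely because the left-hand side of each Lipschitz bound is symmetric in $Q_1$ and $Q_2$, so one is free to label the empirical distribution as $Q_1$ and pay only the price of expressing the denominator in terms of observable quantities.
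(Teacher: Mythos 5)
Your proposal is correct and follows exactly the route the paper takes: Theorem~\ref{Thm:bound_Sinf_ML} is obtained by plugging the $\ell_1$ large deviations bound \eqref{equa:L1bound_Devroye} into \eqref{eq:PrivacyLeakageBoundMaximalLeakage} and \eqref{eq:MaxAlphaLeakageLBound} with $Q_1=\hat{P}_n$, $Q_2=P$, reducing $\calL^{\rm max}_{\infty}$ to $\calL^{\rm S}_{\infty}$ via Proposition~\ref{Prop:MaxL_SMI_inf}, and noting that $\mathcal{U}^{\rm S}_{\infty}$ and $\mathcal{U}^{\rm max}_{\alpha}$ depend on $(Q,W)$ only through $W$. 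The only cosmetic remark is that the ``symmetry'' justification for the denominator is unnecessary: the lemmas hold for arbitrary $Q_1,Q_2$ with the denominator tied to the first argument, so taking $Q_1=\hat{P}_n$ directly yields the data-dependent constant.
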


\section{Convergence of Optimal Privacy Mechanisms}
\label{Section:convergence}

Now we study the consistency of the optimal privacy mechanisms using, among other results, the (local) Lipschitz continuity of the mappings $\mathcal{L}(\cdot,W)$ and $\mathcal{U}(\cdot,W)$ established in the previous section.

Assume that the (privacy) mechanism designer is given an $\epsilon\in\mathbb{R}$ and a sequence of joint distributions $(P_n)_{n=1}^{\infty}$ converging to the true distribution $P$. Furthermore, the designer constructs an optimal $\epsilon$-private mechanism $W^*_n$ for each $P_n$. In this section, we analyze some convergence properties of the sequence of optimal privacy mechanisms $(W^*_n)_{n=1}^\infty$.

In the sequel, we assume that prior knowledge about the true distribution $P$ might be available. Accordingly, we let $\mathcal{Q}\subseteq\mathcal{P}$ be the set of all joint distributions compatible with such prior knowledge. Note that when no prior knowledge is available, we simply let $\mathcal{Q} = \mathcal{P}$. The main results of this and the following section are established under the following conditions. 

There exist a closed set $\mathcal{Q}\subseteq\mathcal{P}$ and $N\in\mathbb{N}$ such that
\begin{enumerate}[label=(\textbf{C.}\arabic*),ref=\arabic*]
    \item \label{cond:cont}
    \emph{Continuity.} For every $Q\in\mathcal{Q}$, the mappings $\mathcal{L}(Q,\cdot)$ and $\mathcal{U}(Q,\cdot)$ are continuous over
    \begin{equation}
        \mathcal{W}_N \defined \mathcal{W} \cap \mathbb{R}^{|\mathcal{X}|\times N},
    \end{equation}
    where $\mathcal{W}$, as defined in \eqref{eq:DefCalW}, denotes the set of all privacy mechanisms. Furthermore, the mapping $\mathsf{H}(Q;\cdot)$ is continuous over $[\epsilon_{\min}(Q),\infty)$, where
    \begin{align}
    \label{equa:epsilon_min_Def}
        \epsilon_{\min}(Q) &\defined \inf \left\{\mathcal{L}(Q,W) :\  W\in\mathcal{W}\right\}.
    \end{align}
    
    \item \label{cond:lipcont}
    \emph{Lipschitz Continuity.} There exist positive constants $C_L$ and $C_U$ such that, for all $Q_1,Q_2\in\mathcal{Q}$ and $W\in\mathcal{W}_N$,
    \begin{equation}
    \label{eq:PrivacyLemmaContinuity}
        |\mathcal{L}(Q_1,W)-\mathcal{L}(Q_2,W)| \leq C_L \|Q_1-Q_2\|_1,
    \end{equation}
    \begin{equation}
    \label{eq:UtilityLemmaContinuity}
        |\mathcal{U}(Q_1,W)-\mathcal{U}(Q_2,W)| \leq C_U \|Q_1-Q_2\|_1.
    \end{equation}
    
    \item \label{cond:comp}
    \emph{Support.} For each $Q\in\mathcal{Q}$ and $\epsilon\geq\epsilon_{\min}(Q)$, the intersection $\mathcal{W}^*(Q;\epsilon)\cap\mathcal{W}_N$ is not empty.
\end{enumerate}

These three conditions might seem restrictive at a first glance. Nonetheless, as shown in Section~\ref{Section:PcIfIAISTechnicalConditions} below, they are satisfied by probability of correctly guessing, $f$-information with $f$ locally Lipschitz, and Arimoto's mutual information of order $\alpha$ with $\alpha\in(1,\infty]$.

Note that condition (\textbf{C.}\ref{cond:cont}) is a continuity requirement. Specifically, it requires the privacy leakage and utility functions to be continuous w.r.t.~the privacy mechanism $W$. Also, it requires the privacy-utility function to be continuous w.r.t.~the privacy parameter $\epsilon$. Similarly, condition (\textbf{C.}\ref{cond:lipcont}) requires the privacy leakage and utility functions to be Lipschitz continuous w.r.t.~the joint distribution $Q$. Observe that the constants $C_L$ and $C_U$ in \eqref{eq:PrivacyLemmaContinuity} and \eqref{eq:UtilityLemmaContinuity}, respectively, do not depend on the joint distributions $Q_1$ and $Q_2$ neither on the privacy mechanism $W$. Nonetheless, these constants may depend on $\mathcal{Q}$ and $N$.

\begin{rem}
\label{rem:sup_max_Wn}
Observe that the set $\mathcal{W}_N$ is in correspondence with the set of all privacy mechanisms from $\mathcal{X}$ to $\{1,\ldots,N\}$. In particular, condition (\textbf{C.}\ref{cond:comp}) requires that, for each $Q\in\mathcal{Q}$ and $\epsilon\geq\epsilon_{\min}(Q)$, there exists an optimal $\epsilon$-private mechanism for $Q$ supported over $\{1,\ldots,N\}$. As a consequence,
\begin{equation}
\label{eq:HC1}
    \mathsf{H}(Q;\epsilon) =\max_{W\in\mathcal{D}_N(Q;\epsilon)} \mathcal{U}(Q,W),
\end{equation}
where $\mathcal{D}_N(Q;\epsilon) \defined \{ W\in\mathcal{W}_N : \mathcal{L}(Q,W) \leq \epsilon\}$. Furthermore, the intersection of $\mathcal{W}_N$ and all optimal privacy mechanisms is defined as 
\begin{equation}
\label{eq:opt_pri_mech_W_N_star_defn}
\begin{aligned}
    \mathcal{W}_N^*(P;\epsilon)
    \defined \{W\in \mathcal{W}_N : \mathcal{L}(P,W)\leq \epsilon,\  \mathcal{U}(P,W)=\mathsf{H}(P;\epsilon)\}.
\end{aligned}
\end{equation}

By comparing \eqref{eq:DefH} and \eqref{eq:HC1}, we can observe that condition (\textbf{C.}\ref{cond:comp}) allows us to replace the unbounded set $\mathcal{W}$ with the compact space $\mathcal{W}_N$ in the optimization defining the privacy-utility function $\mathsf{H}$. This technical difference is crucial in the proofs of the subsequent results. In a similar spirit, observe that condition (\textbf{C.}\ref{cond:comp}) also implies that
\begin{equation}
\label{eq:epsilonminReduced}
    \epsilon_{\min}(Q) = \min \left\{\mathcal{L}(Q,W) :\  W\in\mathcal{W}_N\right\}.
\end{equation}
\end{rem}

A two-variable function might be continuous in each argument without being jointly continuous, see, e.g., \cite[Ex.~4.7]{rudin1976principles}. The following lemma is an immediate, yet important, consequence of conditions (\textbf{C.}\ref{cond:cont}--\ref{cond:lipcont}), as it establishes the joint continuity of the privacy leakage and utility functions.

\begin{lem}
\label{lem:cont_UandL}
Assume that conditions (\textbf{C.}\ref{cond:cont}--\ref{cond:lipcont}) hold true for a given closed set $\mathcal{Q}\subseteq\mathcal{P}$ and a given $N\in\mathbb{N}$. The functions $\mathcal{L}(\cdot,\cdot)$ and $\mathcal{U}(\cdot,\cdot)$ are continuous over $\mathcal{Q}\times\mathcal{W}_N$.
\end{lem}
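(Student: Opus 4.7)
The plan is to prove joint continuity via a standard triangle-inequality splitting, where condition (\textbf{C.}\ref{cond:lipcont}) handles the variation in the distribution argument and condition (\textbf{C.}\ref{cond:cont}) handles the variation in the mechanism argument. Fix $(Q,W)\in\mathcal{Q}\times\mathcal{W}_N$ and consider an arbitrary sequence $(Q_n,W_n)\in\mathcal{Q}\times\mathcal{W}_N$ converging to $(Q,W)$ in the product topology, i.e., $\|Q_n-Q\|_1\to 0$ and $W_n\to W$ (entrywise, which is equivalent to convergence in any norm on the finite-dimensional space $\mathbb{R}^{|\mathcal{X}|\times N}$). We want to show $\mathcal{L}(Q_n,W_n)\to\mathcal{L}(Q,W)$, and analogously for $\mathcal{U}$.

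First I would write
\begin{equation*}
|\mathcal{L}(Q_n,W_n)-\mathcal{L}(Q,W)| \leq |\mathcal{L}(Q_n,W_n)-\mathcal{L}(Q,W_n)| + |\mathcal{L}(Q,W_n)-\mathcal{L}(Q,W)|.
\end{equation*}
The first term is bounded by $C_L\|Q_n-Q\|_1$ using the Lipschitz bound \eqref{eq:PrivacyLemmaContinuity} of condition (\textbf{C.}\ref{cond:lipcont}), which applies uniformly in the mechanism argument (in particular, for $W_n\in\mathcal{W}_N$). Hence this term vanishes as $n\to\infty$. The second term vanishes by condition (\textbf{C.}\ref{cond:cont}), which asserts that $\mathcal{L}(Q,\cdot)$ is continuous on $\mathcal{W}_N$. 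Combining the two yields $\mathcal{L}(Q_n,W_n)\to \mathcal{L}(Q,W)$, establishing joint continuity of $\mathcal{L}$ at $(Q,W)$. Since $(Q,W)$ was arbitrary, $\mathcal{L}$ is continuous on $\mathcal{Q}\times\mathcal{W}_N$. The argument for $\mathcal{U}$ is identical, using the bound \eqref{eq:UtilityLemmaContinuity} with constant $C_U$ in place of \eqref{eq:PrivacyLemmaContinuity} and the continuity of $\mathcal{U}(Q,\cdot)$.

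There is no significant obstacle here: the proof is essentially a two-line application of the triangle inequality, and the only subtlety worth noting is that the uniformity in $W$ provided by condition (\textbf{C.}\ref{cond:lipcont}) (i.e., the Lipschitz constant $C_L$ is independent of $W\in\mathcal{W}_N$) is exactly what allows us to bound the first term by $C_L\|Q_n-Q\|_1$ even though the mechanism $W_n$ varies with $n$. Without this uniformity, only separate continuity in each argument would be guaranteed. Closedness of $\mathcal{Q}$ is not used directly here; it will be invoked later, in the convergence-of-optimal-mechanisms results, to ensure the limit $Q$ lies in the admissible set.
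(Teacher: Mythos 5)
Your proof is correct and follows essentially the same route as the paper: the paper abstracts the argument into a small auxiliary lemma about functions on a product of metric spaces that are uniformly Lipschitz in the first argument and continuous in the second, but the core step is the identical triangle-inequality splitting $|\mathcal{L}(Q_n,W_n)-\mathcal{L}(Q,W_n)| + |\mathcal{L}(Q,W_n)-\mathcal{L}(Q,W)|$, with (\textbf{C.}2) controlling the first term uniformly in $W_n$ and (\textbf{C.}1) controlling the second. Your remark that the uniformity of the Lipschitz constant in $W$ is the crucial point is exactly the right observation.
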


\begin{proof}
See Appendix~\ref{Appendix:Proofcont_UandL}.
\end{proof}

Building upon Lemma~\ref{lem:cont_UandL}, we establish the following proposition which plays a key role in the proofs of the main results of this section.

\begin{prop}
\label{Prop:ContinuityH}
Assume that conditions (\textbf{C.}\ref{cond:cont}--\ref{cond:comp}) hold true for a given closed set $\mathcal{Q}\subseteq\mathcal{P}$ and a given $N\in\mathbb{N}$. For any given $\epsilon\in\mathbb{R}$, the mapping $\mathsf{H}(\cdot;\epsilon)$ is continuous over the set $\{Q\in\mathcal{Q}: \epsilon_{\min}(Q)< \epsilon\}$.
\end{prop}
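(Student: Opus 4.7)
The plan is to prove continuity by showing upper and lower semi-continuity of $\mathsf{H}(\cdot;\epsilon)$ separately along arbitrary convergent sequences. Fix $Q^*$ in the set $\{Q\in\mathcal{Q}: \epsilon_{\min}(Q)<\epsilon\}$ and let $Q_n\to Q^*$ with $Q_n\in\mathcal{Q}$. I first observe that $\mathsf{H}(Q_n;\epsilon)$ is well-defined for all sufficiently large $n$: by \eqref{eq:epsilonminReduced} there exists $W_0\in\mathcal{W}_N$ with $\mathcal{L}(Q^*,W_0)=\epsilon_{\min}(Q^*)<\epsilon$, and then (C.\ref{cond:lipcont}) gives $\mathcal{L}(Q_n,W_0)\leq\epsilon_{\min}(Q^*)+C_L\|Q_n-Q^*\|_1<\epsilon$ for large $n$, so $\epsilon_{\min}(Q_n)<\epsilon$.

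For upper semi-continuity, I use (C.\ref{cond:comp}) to pick, for each large $n$, an optimizer $W_n^*\in\mathcal{W}_N^*(Q_n;\epsilon)\cap\mathcal{W}_N$. Since $\mathcal{W}_N$ is compact, any subsequence has a further subsequence $W_{n_k}^*\to W^{**}\in\mathcal{W}_N$. By the joint continuity of $\mathcal{L}$ and $\mathcal{U}$ on $\mathcal{Q}\times\mathcal{W}_N$ (Lemma~\ref{lem:cont_UandL}), $\mathcal{L}(Q^*,W^{**})=\lim_k\mathcal{L}(Q_{n_k},W_{n_k}^*)\leq\epsilon$, so $W^{**}\in\mathcal{D}_N(Q^*;\epsilon)$, and $\mathsf{H}(Q_{n_k};\epsilon)=\mathcal{U}(Q_{n_k},W_{n_k}^*)\to\mathcal{U}(Q^*,W^{**})\leq\mathsf{H}(Q^*;\epsilon)$. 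Since every subsequence has a sub-subsequence with $\limsup\leq\mathsf{H}(Q^*;\epsilon)$, this yields $\limsup_n\mathsf{H}(Q_n;\epsilon)\leq\mathsf{H}(Q^*;\epsilon)$.

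For lower semi-continuity, the obstacle is that an optimizer $W^*\in\mathcal{W}_N^*(Q^*;\epsilon)$ could satisfy $\mathcal{L}(Q^*,W^*)=\epsilon$, in which case small perturbations $Q_n\to Q^*$ might render $W^*$ infeasible under $Q_n$. This is where the hypothesis $\epsilon_{\min}(Q^*)<\epsilon$ is crucial: it lets me move slightly into the interior. For any $\delta\in(0,\epsilon-\epsilon_{\min}(Q^*))$, by (C.\ref{cond:comp}) pick $W_\delta\in\mathcal{W}_N^*(Q^*;\epsilon-\delta)\cap\mathcal{W}_N$, so $\mathcal{L}(Q^*,W_\delta)\leq\epsilon-\delta$ and $\mathcal{U}(Q^*,W_\delta)=\mathsf{H}(Q^*;\epsilon-\delta)$. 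By (C.\ref{cond:lipcont}), $\mathcal{L}(Q_n,W_\delta)\leq\epsilon-\delta+C_L\|Q_n-Q^*\|_1\leq\epsilon$ once $n$ is large, so $W_\delta\in\mathcal{D}_N(Q_n;\epsilon)$ and hence $\mathsf{H}(Q_n;\epsilon)\geq\mathcal{U}(Q_n,W_\delta)$. Taking $n\to\infty$ and using Lipschitz continuity of $\mathcal{U}(\cdot,W_\delta)$, I obtain $\liminf_n\mathsf{H}(Q_n;\epsilon)\geq\mathsf{H}(Q^*;\epsilon-\delta)$.

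Finally, I let $\delta\downarrow 0$ and invoke the continuity of $\mathsf{H}(Q^*;\cdot)$ on $[\epsilon_{\min}(Q^*),\infty)$ from (C.\ref{cond:cont}), which gives $\mathsf{H}(Q^*;\epsilon-\delta)\to\mathsf{H}(Q^*;\epsilon)$ and thus $\liminf_n\mathsf{H}(Q_n;\epsilon)\geq\mathsf{H}(Q^*;\epsilon)$. Combining with the upper semi-continuity step completes the proof. The main technical subtlety, as indicated above, is handling the possible boundary behavior at the feasibility constraint; the $\delta$-relaxation trick together with the strict inequality $\epsilon_{\min}(Q^*)<\epsilon$ and the one-variable continuity supplied by (C.\ref{cond:cont}) is precisely what bypasses it.
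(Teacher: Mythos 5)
Your proposal is correct and follows essentially the same route as the paper's proof: compactness of $\mathcal{W}_N$ plus joint continuity of $\mathcal{L},\mathcal{U}$ for the upper bound, and the $\delta$-relaxation to $\mathsf{H}(Q^*;\epsilon-\delta)$ combined with Lipschitz continuity in $Q$ and continuity of $\mathsf{H}(Q^*;\cdot)$ from (\textbf{C.}1) for the lower bound. The only cosmetic difference is that the paper phrases the argument via the subsequence criterion for convergence rather than separate $\limsup$/$\liminf$ statements.
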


\begin{proof}
See Appendix~\ref{Appendix:ProofContinuousH_general}.
\end{proof}

\begin{rem}
The mapping $\mathsf{H}(\cdot;\epsilon)$ might {\it not} be continuous over $\{Q\in\mathcal{Q}: \epsilon_{\min}(Q)\leq \epsilon\}$. In order to prove this claim, consider the following example.
\end{rem}

\begin{example}
\label{Example:LackContinuity}
For each $\zeta\in[0,1/2]$, let $Q_\zeta$ be the joint distribution of $(S,X_\zeta)$ where $S \sim \text{Ber}(1/2)$ and $P_{X_\zeta|S} = \text{BSC}(\zeta)$. By definition, $\mathsf{H}_c$ denotes the privacy-utility function when both privacy leakage and utility are measured using probability of correctly guessing. By Theorem~2 in \cite{asoodeh2017estimation}, we have that, for every $\zeta\in[0,1/2)$,
\begin{equation}
    \epsilon_{\min}(Q_\zeta)=1/2 \quad \text{and} \quad \mathsf{H}_c(Q_\zeta;1/2) = 1/2.
\end{equation}
Since $S$ and $X_{1/2}$ are independent, we also have that
\begin{equation}
    \epsilon_{\min}(Q_{1/2}) = 1/2 \quad \text{and} \quad \mathsf{H}_c(Q_{1/2};1/2) = 1.
\end{equation}
Therefore, we have that $\displaystyle \lim_{\zeta\uparrow1/2} \mathsf{H}_c(Q_\zeta;1/2) \neq \mathsf{H}_c(Q_{1/2};1/2)$ although $\displaystyle \lim_{\zeta\uparrow1/2} Q_\zeta = Q_{1/2}$.
\end{example}

Despite that $\mathsf{H}(\cdot;\epsilon)$ might be discontinuous at the boundary $\{Q\in\mathcal{Q}: \epsilon_{\min}(Q) = \epsilon\}$, as presented in Example~\ref{Example:LackContinuity}, in some situations such pathological behavior is not exhibited. In such cases, the following corollary provides conditions for which the pointwise convergence established in Proposition~\ref{Prop:ContinuityH} upgrades into uniform convergence. This technical result will be useful in explaining the numerical experiments in Section~\ref{sec::Num_Exp}.

\begin{cor}
\label{cor::unif_conv_PUF}
Assume that conditions (\textbf{C.}\ref{cond:cont}--\ref{cond:comp}) hold true for a given closed set $\mathcal{Q}\subseteq\mathcal{P}$ and a given $N\in\mathbb{N}$. In addition, assume that there exists $\epsilon_0\in\mathbb{R}$ such that $\epsilon_{\min}(Q) = \epsilon_0$ for all $Q\in\mathcal{Q}$. If $P_n \in \mathcal{Q}$ for each $n\in\mathbb{N}$, $\lim_n P_n = P$, and $\lim_{n}\mathsf{H}(P_n;\epsilon_0)=\mathsf{H}(P;\epsilon_0)$, then $\mathsf{H}(P_n;\cdot)$ converges uniformly to $\mathsf{H}(P;\cdot)$ over $[\epsilon_0,\infty)$, i.e.,
\begin{equation}
    \lim_{n\to\infty} \sup_{\epsilon\in[\epsilon_0,\infty)} |\mathsf{H}(P_n;\epsilon)-\mathsf{H}(P,\epsilon)| = 0.
\end{equation}
\end{cor}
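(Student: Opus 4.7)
The plan is to combine the pointwise convergence result of Proposition~\ref{Prop:ContinuityH} with the monotonicity of $\mathsf{H}(Q;\cdot)$ in $\epsilon$ (which follows immediately from the fact that the feasible set $\mathcal{D}_N(Q;\epsilon)$ grows with $\epsilon$), together with the eventual-constancy of $\mathsf{H}(Q;\cdot)$ for large $\epsilon$. I will split $[\epsilon_0,\infty)$ into a compact piece $[\epsilon_0,M]$ and a tail $[M,\infty)$, handled by separate arguments.

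First, I will set up the eventual-constancy regime. Define $U^{\star}(Q) \defined \max_{W\in\mathcal{W}_N}\mathcal{U}(Q,W)$ and $L^{\star}(Q) \defined \max_{W\in\mathcal{W}_N}\mathcal{L}(Q,W)$; both maxima are attained because $\mathcal{W}_N$ is compact and, by condition (\textbf{C.}\ref{cond:cont}), $\mathcal{U}(Q,\cdot)$ and $\mathcal{L}(Q,\cdot)$ are continuous on it. For any $\epsilon\geq L^{\star}(Q)$ every mechanism in $\mathcal{W}_N$ is feasible, so $\mathsf{H}(Q;\epsilon)=U^{\star}(Q)$. Using the Lipschitz bounds in (\textbf{C.}\ref{cond:lipcont}) and the standard inequality $|\max_W f_1(W)-\max_W f_2(W)|\le\max_W|f_1(W)-f_2(W)|$, I obtain $|U^{\star}(P_n)-U^{\star}(P)|\le C_U\|P_n-P\|_1$ and $|L^{\star}(P_n)-L^{\star}(P)|\le C_L\|P_n-P\|_1$; in particular, $(L^{\star}(P_n))_n$ is bounded. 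Set $M \defined 1+L^{\star}(P)+\sup_n L^{\star}(P_n)<\infty$. Then for every $\epsilon\geq M$ and every $n$, $\mathsf{H}(P_n;\epsilon)=U^{\star}(P_n)$ and $\mathsf{H}(P;\epsilon)=U^{\star}(P)$, so the discrepancy on $[M,\infty)$ reduces to $|U^{\star}(P_n)-U^{\star}(P)|\le C_U\|P_n-P\|_1\to 0$, which is uniform in $\epsilon$.

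Next, on the compact interval $[\epsilon_0,M]$, I will run a Pólya-type sandwich argument. Condition (\textbf{C.}\ref{cond:cont}) gives that $\mathsf{H}(P;\cdot)$ is continuous on $[\epsilon_0,\infty)$, hence uniformly continuous on $[\epsilon_0,M]$. Given $\delta>0$, pick a partition $\epsilon_0=t_0<t_1<\cdots<t_k=M$ with $|\mathsf{H}(P;t_{i+1})-\mathsf{H}(P;t_i)|<\delta/2$ for every $i$. Since $\mathcal{Q}$ is closed and $P_n\in\mathcal{Q}$ converges to $P$, we have $P\in\mathcal{Q}$. For each $t_i>\epsilon_0$, the hypothesis $\epsilon_{\min}(Q)=\epsilon_0<t_i$ for all $Q\in\mathcal{Q}$ together with Proposition~\ref{Prop:ContinuityH} yields $\mathsf{H}(P_n;t_i)\to\mathsf{H}(P;t_i)$; at $t_0=\epsilon_0$ this convergence is the explicit hypothesis of the corollary. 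Hence there exists $n_0$ such that $|\mathsf{H}(P_n;t_i)-\mathsf{H}(P;t_i)|<\delta/2$ for all $n\ge n_0$ and every $i\in\{0,\ldots,k\}$. Finally, for any $\epsilon\in[t_i,t_{i+1}]$, monotonicity of $\mathsf{H}(P_n;\cdot)$ and $\mathsf{H}(P;\cdot)$ gives the two-sided sandwich
\begin{equation*}
    \mathsf{H}(P_n;\epsilon)-\mathsf{H}(P;\epsilon)\leq\bigl[\mathsf{H}(P_n;t_{i+1})-\mathsf{H}(P;t_{i+1})\bigr]+\bigl[\mathsf{H}(P;t_{i+1})-\mathsf{H}(P;t_i)\bigr]\leq\delta,
\end{equation*}
and symmetrically for the reverse difference; so $\sup_{\epsilon\in[\epsilon_0,M]}|\mathsf{H}(P_n;\epsilon)-\mathsf{H}(P;\epsilon)|\leq\delta$ for all $n\ge n_0$.

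Combining the two regimes delivers the uniform convergence on the whole $[\epsilon_0,\infty)$. The main delicacy I anticipate is the boundary behavior at $\epsilon=\epsilon_0$: Proposition~\ref{Prop:ContinuityH} only controls $\{Q:\epsilon_{\min}(Q)<\epsilon\}$, and Example~\ref{Example:LackContinuity} shows that one really can lose continuity on the complementary boundary. The extra hypothesis $\lim_n\mathsf{H}(P_n;\epsilon_0)=\mathsf{H}(P;\epsilon_0)$ is precisely what anchors $t_0=\epsilon_0$ into the Pólya bootstrap, and without it the argument above would collapse on the left endpoint.
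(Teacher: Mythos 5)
Your proof is correct and follows essentially the same route as the paper's: split $[\epsilon_0,\infty)$ into a tail where $\mathsf{H}$ is eventually constant (the paper uses $\epsilon_{\max}(Q)=\max_{W\in\mathcal{W}_N}\mathcal{L}(Q,W)$, identical to your $L^{\star}$) and a compact interval where pointwise convergence of the non-decreasing functions $\mathsf{H}(P_n;\cdot)$ to the continuous limit $\mathsf{H}(P;\cdot)$ upgrades to uniform convergence. The only cosmetic differences are that you prove the P\'olya-type step by an explicit partition rather than citing a lemma, and you control $L^{\star}(P_n)$ via the Lipschitz bound of (\textbf{C.}2) rather than via continuity of the maximum over the compact set $\mathcal{W}_N$.
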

\begin{proof}
See Appendix~\ref{Appendix:Proof_cor::unif_conv_PUF}.
\end{proof}

The next theorem shows that if $W^*_n$ is an optimal $\epsilon$-private mechanism for $P_n$ and $(P_n)_{n=1}^\infty$ converges to $P$, then the sequence of optimal $\epsilon$-private mechanisms $(W^*_n)_{n=1}^\infty$ converges to the set of optimal $\epsilon$-private mechanisms for $P$. For $W\in\mathcal{W}_N$ and a subset $\mathcal{W}'\subseteq\mathcal{W}_N$, the distance between $W$ and $\mathcal{W}'$ is defined as
\begin{align}
    \mathsf{dist}(W,\mathcal{W}')\defined
    \inf\{\|W-W'\|_1:W'\in\mathcal{W}'\},
\end{align}
where $\displaystyle \|W-W'\|_1 = \sum_{x\in\mathcal{X}} \sum_{y\in\mathcal{Y}} |W(x,y) - W'(x,y)|$.

\begin{thm}
\label{thm:optimaltooptimal_mainthm}
Assume that conditions (\textbf{C.}\ref{cond:cont}--\ref{cond:comp}) hold true for a given closed set $\mathcal{Q}\subseteq\mathcal{P}$ and a given $N\in\mathbb{N}$. Let $\epsilon\in\mathbb{R}$ be given. If $P_n \in \mathcal{Q}$ for each $n\in\mathbb{N}$, $\lim_n P_n = P$, and $\epsilon_{\min}(P)<\epsilon$, then, for any sequence $(W^*_n)_{n=1}^\infty\subset\mathcal{W}_N$ such that $W^*_n \in \mathcal{W}_N^*(P_n;\epsilon)$,
\begin{align}
\label{eq:optimaltooptimal_mainthm}
    \lim_{n\to\infty}\mathsf{dist}(W^*_n,\mathcal{W}_N^*(P;\epsilon)) = 0.
\end{align}
Furthermore, if $P_n=\hat{P}_n$ is the empirical distribution obtained from $n$ i.i.d.~samples drawn from $P$, then
\begin{align}
    \Pr\left(\lim_{n\to\infty}\mathsf{dist}(W^*_n,\mathcal{W}_N^*(P;\epsilon)) = 0\right)=1.
\end{align}
\end{thm}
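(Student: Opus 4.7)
My plan is to argue by contradiction, extracting a convergent subsequence of the optimal mechanisms, and then using joint continuity of $\mathcal{L}$ and $\mathcal{U}$ together with continuity of $\mathsf{H}(\cdot;\epsilon)$ to show that any such limit is optimal for $P$. The strict inequality $\epsilon_{\min}(P)<\epsilon$ is what lets us invoke Proposition~\ref{Prop:ContinuityH} at the point $P$.

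First I would set up the contradiction: suppose the conclusion fails, so there exist $\delta>0$ and a subsequence $(n_k)$ with $\mathsf{dist}(W^*_{n_k},\mathcal{W}^*_N(P;\epsilon))\geq\delta$ for all $k$. Since each $W^*_n\in\mathcal{W}_N$ and $\mathcal{W}_N$ is a compact subset of $\mathbb{R}^{|\mathcal{X}|\times N}$ (it is closed and bounded, being an intersection of hyperplanes and the unit cube), I can pass to a further subsequence—still denoted $(W^*_{n_k})$—converging to some $W^\infty\in\mathcal{W}_N$. Since $P_{n_k}\to P$ in $\mathcal{Q}$ (which is closed, so $P\in\mathcal{Q}$), the pair $(P_{n_k},W^*_{n_k})\to(P,W^\infty)$ in $\mathcal{Q}\times\mathcal{W}_N$.

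Next I would verify that $W^\infty\in\mathcal{W}^*_N(P;\epsilon)$. The privacy constraint $\mathcal{L}(P_{n_k},W^*_{n_k})\leq\epsilon$ passes to the limit via joint continuity of $\mathcal{L}$ established in Lemma~\ref{lem:cont_UandL}, yielding $\mathcal{L}(P,W^\infty)\leq\epsilon$. For optimality, by definition $\mathcal{U}(P_{n_k},W^*_{n_k})=\mathsf{H}(P_{n_k};\epsilon)$. The left-hand side converges to $\mathcal{U}(P,W^\infty)$ by joint continuity of $\mathcal{U}$ (again Lemma~\ref{lem:cont_UandL}). For the right-hand side, since $\epsilon_{\min}(P)<\epsilon$ and $P_{n_k}\to P$, Proposition~\ref{Prop:ContinuityH} applies at $P$, so $\mathsf{H}(P_{n_k};\epsilon)\to\mathsf{H}(P;\epsilon)$. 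Combining, $\mathcal{U}(P,W^\infty)=\mathsf{H}(P;\epsilon)$, hence $W^\infty\in\mathcal{W}^*_N(P;\epsilon)$. This contradicts $\mathsf{dist}(W^*_{n_k},\mathcal{W}^*_N(P;\epsilon))\geq\delta$ for all $k$, establishing \eqref{eq:optimaltooptimal_mainthm}.

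For the second assertion, I would argue that the empirical distribution satisfies $\hat{P}_n\to P$ almost surely. This follows from the large-deviations bound \eqref{equa:L1bound_weissman} via a standard Borel--Cantelli argument: choosing $\epsilon_n=n^{-1/4}$ (say), the probabilities $\Pr(\|\hat{P}_n-P\|_1\geq\epsilon_n)$ are summable, so $\|\hat{P}_n-P\|_1\to 0$ a.s. On the almost-sure event where $\hat{P}_n\to P$, the deterministic part already proven applies pathwise (since closedness of $\mathcal{Q}$ ensures $\hat{P}_n\in\mathcal{Q}$ eventually, or we can simply enlarge $\mathcal{Q}$ if needed), giving the claimed a.s.~convergence.

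The main obstacle is handling the fact that $W^*_n$ is not a priori a Cauchy sequence—indeed, the set $\mathcal{W}^*_N(P;\epsilon)$ may contain many optimal mechanisms, and different subsequences of $(W^*_n)$ could converge to different elements. This is why the claim is phrased in terms of $\mathsf{dist}(\cdot,\mathcal{W}^*_N(P;\epsilon))$ rather than convergence of $W^*_n$ itself, and why the subsequence-based contradiction argument is the natural approach. The strict inequality $\epsilon_{\min}(P)<\epsilon$ is essential: if one only had $\epsilon_{\min}(P)=\epsilon$, Proposition~\ref{Prop:ContinuityH} need not apply, and discontinuities of $\mathsf{H}(\cdot;\epsilon)$ at the boundary (as in Example~\ref{Example:LackContinuity}) would allow $\mathcal{U}(P,W^\infty)$ to strictly undershoot $\mathsf{H}(P;\epsilon)$, breaking the argument.
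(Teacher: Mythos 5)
Your proposal is correct and follows essentially the same route as the paper's proof: a contradiction argument extracting a convergent subsequence via compactness of $\mathcal{W}_N$, then using the joint continuity of $\mathcal{L}$ and $\mathcal{U}$ from Lemma~\ref{lem:cont_UandL} and the continuity of $\mathsf{H}(\cdot;\epsilon)$ from Proposition~\ref{Prop:ContinuityH} (which is where $\epsilon_{\min}(P)<\epsilon$ enters) to conclude the limit point is optimal for $P$. The only cosmetic difference is in the almost-sure part, where you invoke Borel--Cantelli with the Weissman bound while the paper simply cites the strong law of large numbers; both yield $\hat{P}_n\to P$ a.s.
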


\begin{proof}
See Appendix~\ref{Appendix:Proofoptimaltooptimal_mainthm}.
\end{proof}

\begin{rem}
Under the assumptions in Theorem~\ref{thm:optimaltooptimal_mainthm}, it may be possible that $\mathcal{W}_N^*(P_n;\epsilon)$ is empty for some values of $n$. Nonetheless, under the same assumptions, $\mathcal{W}_N^*(P_n;\epsilon)$ is non-empty for $n$ sufficiently large. This fact can be easily derived from the continuity of the mapping $\epsilon_{\min}(\cdot)$ which is the content of Lemma~\ref{lem:epsilonmin} in Appendix~\ref{Appendix:Proofoptimaltooptimal_mainthm}.
\end{rem}

The following corollary follows directly from Theorem~\ref{thm:optimaltooptimal_mainthm} and the compactness of $\mathcal{W}_N^*(P;\epsilon)$ established in Lemma~\ref{lem:W_star_compact} in Appendix~\ref{Appendix:Proofoptimaltooptimal_mainthm}. It shows that the limit of optimal $\epsilon$-private mechanisms is also an optimal $\epsilon$-private mechanism.

\begin{cor}
\label{Corollary:OptimaltoOptimal}
In addition to the assumptions in Theorem~\ref{thm:optimaltooptimal_mainthm}, assume that $\lim_n W^*_n = W_0$ for some $W_0\in\mathcal{W}_N$. Then, $W_0\in \mathcal{W}_N^*(P;\epsilon)$.
\end{cor}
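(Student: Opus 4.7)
The plan is to combine the convergence result from Theorem~\ref{thm:optimaltooptimal_mainthm} with the compactness of $\mathcal{W}_N^*(P;\epsilon)$ promised by Lemma~\ref{lem:W_star_compact}. Since $\mathcal{W}_N^*(P;\epsilon)$ is compact, it is in particular closed, and the distance function $\mathsf{dist}(\cdot,\mathcal{W}_N^*(P;\epsilon))$ is $1$-Lipschitz with respect to the $\ell_1$-norm. A point therefore belongs to $\mathcal{W}_N^*(P;\epsilon)$ if and only if its distance to this set vanishes.

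First, I would invoke Theorem~\ref{thm:optimaltooptimal_mainthm} to conclude that $\mathsf{dist}(W_n^*,\mathcal{W}_N^*(P;\epsilon)) \to 0$. Next, using the standard $1$-Lipschitzness of the distance-to-a-set function, I would write
\begin{equation}
    \mathsf{dist}(W_0,\mathcal{W}_N^*(P;\epsilon)) \leq \|W_0 - W_n^*\|_1 + \mathsf{dist}(W_n^*,\mathcal{W}_N^*(P;\epsilon)).
\end{equation}
The first term on the right goes to zero by the assumption $\lim_n W_n^* = W_0$, and the second term by Theorem~\ref{thm:optimaltooptimal_mainthm}. Hence $\mathsf{dist}(W_0,\mathcal{W}_N^*(P;\epsilon)) = 0$.

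Finally, since $\mathcal{W}_N^*(P;\epsilon)$ is compact (and hence closed) by Lemma~\ref{lem:W_star_compact}, a point with zero distance to it must actually lie in it, yielding $W_0 \in \mathcal{W}_N^*(P;\epsilon)$, as desired. I do not anticipate any serious obstacle here: the only inputs are the already-established convergence of distances and the closedness of the target set, so the argument is essentially a two-line topological deduction.
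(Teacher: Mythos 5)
Your proposal is correct and follows the same route as the paper, which likewise obtains the corollary directly from Theorem~\ref{thm:optimaltooptimal_mainthm} together with the compactness (hence closedness) of $\mathcal{W}_N^*(P;\epsilon)$ given by Lemma~\ref{lem:W_star_compact}. The triangle-inequality step for the distance-to-set function is exactly the standard filling-in of the paper's ``follows directly'' claim.
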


It is important to note that Theorem~\ref{thm:optimaltooptimal_mainthm} does not imply that the sequence of optimal privacy mechanisms $(W^*_n)_{n=1}^\infty$ converges to a privacy mechanism. Indeed, it has been noticed in simulation that a small perturbation to the joint distribution might lead to a big change to the optimal privacy mechanism returned by some optimization algorithms. The following theorem, whose proof relies on standard results from {\it set-valued analysis} \cite{aubin2009set}, shows that given a sequence $(P_n)_{n=1}^\infty$ with $\lim_n P_n = P$, it is possible to choose $W^*_n\in \mathcal{W}_N^*(P_n;\epsilon)$ such that $(W^*_n)_{n=1}^\infty$ is convergent. However, we prove this property only for a residual set which, by definition, is a countable intersection of dense open subsets of $\mathcal{Q} \subset \mathbb{R}^{|\mathcal{S}|\times|\mathcal{X}|}$.

\begin{thm}
\label{thm:setvalued_optimalprivacymechanism}
Assume that conditions (\textbf{C.}\ref{cond:cont}--\ref{cond:comp}) hold true for a given closed set $\mathcal{Q}\subseteq\mathcal{P}$ and a given $N\in\mathbb{N}$. For any given $\epsilon\in\mathbb{R}$ and any $\delta>0$, there exists a residual set $\mathcal{Q}' \subseteq \{Q\in\mathcal{Q}:\epsilon_{\min}(Q) + \delta \leq \epsilon\}$ which satisfies that:
\begin{itemize}[leftmargin=10pt, rightmargin=10pt]
    \item[] For any joint distribution $Q_0\in \mathcal{Q}'$, any sequence of joint distributions $(Q_n)_{n=1}^\infty \subset \{Q\in\mathcal{Q}:\epsilon_{\min}(Q) + \delta \leq \epsilon\}$ with $\lim_n Q_n = Q_0$, and any optimal privacy mechanism $W^*_0\in \mathcal{W}_N^*(Q_0;\epsilon)$, there exists a sequence of privacy mechanisms $(W^*_n)_{n=1}^\infty$ such that $W^*_n\in \mathcal{W}_N^*(Q_n;\epsilon)$ for each $n\in\mathbb{N}$ and $\lim_n W^*_n = W^*_0$.
\end{itemize}
\end{thm}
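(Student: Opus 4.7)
The plan is to recast the claim as a statement about the lower hemi-continuity of the set-valued map
\[
    \Gamma : \{Q\in\mathcal{Q}:\epsilon_{\min}(Q)+\delta\leq\epsilon\} \rightrightarrows \mathcal{W}_N, \qquad \Gamma(Q) \defined \mathcal{W}_N^*(Q;\epsilon),
\]
and then invoke Fort's theorem from set-valued analysis (see, e.g., \cite{aubin2009set}), which guarantees that an upper hemi-continuous, nonempty compact-valued correspondence from a metric space into a metric space is lower hemi-continuous on a residual subset of its domain. Since lower hemi-continuity at $Q_0$ is exactly the sequential lifting property asserted by the theorem, obtaining such a residual set is all that is required.

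First I would verify that $\Gamma$ has nonempty compact values on its domain. Non-emptiness is exactly condition (\textbf{C.}\ref{cond:comp}), while compactness of $\mathcal{W}_N^*(Q;\epsilon)$ is established in the lemma used to prove Corollary~\ref{Corollary:OptimaltoOptimal} (Lemma~\ref{lem:W_star_compact} referenced in the appendix). Both ingredients rely only on the continuity of $\mathcal{L}(Q,\cdot)$ and $\mathcal{U}(Q,\cdot)$ on the compact set $\mathcal{W}_N$, guaranteed by (\textbf{C.}\ref{cond:cont}). Restricting to the domain $\{Q\in\mathcal{Q}:\epsilon_{\min}(Q)+\delta\leq\epsilon\}$ guarantees that $\epsilon_{\min}(Q)<\epsilon$ strictly, which is the regime in which the results of Section~\ref{Section:convergence} are well-behaved. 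One should also check that this domain is a metric space in its own right, which is immediate since it is a subset of $\mathbb{R}^{|\mathcal{S}|\times|\mathcal{X}|}$.

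Next I would verify upper hemi-continuity of $\Gamma$: if $Q_n\to Q_0$ in the domain, $W_n^*\in\Gamma(Q_n)$, and $W_n^*\to W_0$, then $W_0\in\Gamma(Q_0)$. This is precisely the content of Corollary~\ref{Corollary:OptimaltoOptimal}, which in turn follows from Theorem~\ref{thm:optimaltooptimal_mainthm} together with the joint continuity of $\mathcal{L}$ and $\mathcal{U}$ established in Lemma~\ref{lem:cont_UandL} and the continuity of the privacy-utility function $\mathsf{H}(\cdot;\epsilon)$ established in Proposition~\ref{Prop:ContinuityH}. Since $\mathcal{W}_N$ is compact, this sequential closed-graph property is equivalent to upper hemi-continuity in the usual neighborhood sense.

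Having established that $\Gamma$ is upper hemi-continuous with nonempty compact values on the metric space $\{Q\in\mathcal{Q}:\epsilon_{\min}(Q)+\delta\leq\epsilon\}$, I would apply Fort's theorem to conclude that there exists a dense $G_\delta$ (i.e., residual) subset $\mathcal{Q}'$ of this domain on which $\Gamma$ is also lower hemi-continuous, and therefore continuous. Unpacking lower hemi-continuity at each $Q_0\in\mathcal{Q}'$ in sequential form yields the conclusion: for every $W_0^*\in\mathcal{W}_N^*(Q_0;\epsilon)$ and every sequence $Q_n\to Q_0$ in the domain, there exist $W_n^*\in\mathcal{W}_N^*(Q_n;\epsilon)$ with $W_n^*\to W_0^*$. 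I expect the main obstacle to be less in the invocation of Fort's theorem itself and more in verifying cleanly that the hypotheses apply on the restricted domain—specifically, making sure the strict inequality $\epsilon_{\min}(Q)<\epsilon$ is preserved along convergent sequences (which is handled by the $+\delta$ slack) so that Proposition~\ref{Prop:ContinuityH} and Theorem~\ref{thm:optimaltooptimal_mainthm} can be applied uniformly, and thus that the closed-graph property translates into genuine upper hemi-continuity with the compact-valued qualification needed for Fort's theorem.
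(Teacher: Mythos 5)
Your proposal is correct and follows essentially the same route as the paper: the paper also proves upper semicontinuity of the set-valued map $Q \mapsto \mathcal{W}_N^*(Q;\epsilon)$ on $\{Q\in\mathcal{Q}:\epsilon_{\min}(Q)<\epsilon\}$ via Corollary~\ref{Corollary:OptimaltoOptimal} and compactness of $\mathcal{W}_N$, and then applies the generic-continuity theorem for upper semicontinuous multifunctions (your ``Fort's theorem'' is the paper's \cite[Thm.~1.4.13]{aubin2009set}) on the compact domain $\{Q\in\mathcal{Q}:\epsilon_{\min}(Q)+\delta\leq\epsilon\}$ to obtain lower semicontinuity on a residual set, which is unpacked sequentially into the stated conclusion.
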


\begin{proof}
See Appendix~\ref{Appendix:setvalued_optimalprivacymechanism}.
\end{proof}

\begin{rem}
Baire's theorem states that any residual subset of a complete metric space is dense, see, e.g., \cite[Thm.~5.6]{rudin1987real}. In this sense, residual sets are considered to be {\it big} in topological terms. However, residual sets might be negligible in measure theoretic terms, see, e.g., \cite[Exercise~5.3.31]{folland1984real}.
\end{rem}

\subsection{Continuity and Compactness Conditions}
\label{Section:PcIfIAISTechnicalConditions}

In this section we show that probability of correctly guessing, $f$-information with $f$ locally Lipschitz, Arimoto's mutual information of order $\alpha$ with $\alpha\in(1,\infty]$ satisfy conditions (\textbf{C.}\ref{cond:cont}--\ref{cond:comp}).

\subsubsection{Probability of Correctly Guessing}
\label{Subsection:PcC123}

We choose $\mathcal{Q} = \mathcal{P}$ and $\mathcal{Y}=\{1,\ldots,N\}$ with $N \geq |\mathcal{X}|+1$. Note that $\mathcal{Q}=\mathcal{P}$ corresponds to the case where no prior information about the true distribution $P$ is available. Recall that, by definition,
\begin{align}
\mathcal{L}_c(Q,W) &= \sum_{y\in\mathcal{Y}} \max_{s\in\mathcal{S}} \sum_{x\in\mathcal{X}} Q(s,x) W(x,y),\\
\mathcal{U}_c(Q,W) &= \sum_{y\in\mathcal{Y}} \max_{x\in\mathcal{X}} \sum_{s\in\mathcal{S}} Q(s,x) W(x,y).
\end{align}
Since the maximum and the sum of continuous functions are continuous functions as well, we have that, for every $Q\in\mathcal{P}$, the mappings $\mathcal{L}_c(Q,\cdot)$ and $\mathcal{U}_c(Q,\cdot)$ are continuous over $\mathcal{W}_N$. In \cite{asoodeh2017estimation}, Asoodeh \etal showed that $\mathsf{H}_c(Q;\cdot)$, the privacy-utility function associated to $\mathcal{L}_c$ and $\mathcal{U}_c$, is continuous and piecewise linear over\footnote{Indeed, in this setting $\epsilon_{\min}(Q) = \max_s \sum_x Q(s,x)$.} $[\epsilon_{\min}(Q),\infty)$. Therefore condition (\textbf{C.}\ref{cond:cont}) is satisfied. Furthermore, Lemma~\ref{lem:rob_PCG1} shows that, for any $Q_1,Q_2\in\mathcal{P}$ and $W\in\mathcal{W}$,
\begin{align}
    |\calL_c(Q_1,W)-\calL_c(Q_2,W)| &\leq  \|Q_1-Q_2\|_1,\\
    |\calU_c(Q_1,W)-\calU_c(Q_2,W)| &\leq  \|Q_1-Q_2\|_1,
\end{align}
which implies that condition (\textbf{C.}\ref{cond:lipcont}) is satisfied with $C_L=1$ and $C_U=1$. It was established in \cite{asoodeh2017estimation} that, for every $\epsilon\geq\epsilon_{\min}(Q)$, there is always an optimal $\epsilon$-private mechanism using at most $|\mathcal{X}|+1$ symbols. As a consequence, condition (\textbf{C.}\ref{cond:comp}) holds true as $N\geq|\mathcal{X}|+1$ by assumption.

\subsubsection{$f$-Information with $f$ Locally Lipschitz}

Assume that the function $f:[0,\infty)\to\mathbb{R}$ is locally Lipschitz and convex with $f(1) = 0$. Given $\gamma>0$, we choose $N\geq |\mathcal{X}|+1$ and
\begin{equation}
\label{eq:DefPgamma}
\mathcal{Q} = \left\{Q\in\mathcal{P} : \gamma\leq\min_{s\in\mathcal{S}} \sum_{x\in\mathcal{X}} Q(s,x), \gamma\leq\min_{x\in\mathcal{X}} \sum_{s\in\mathcal{S}} Q(s,x)\right\}.
\end{equation}
In this case, the prior information about the true distribution $P = P_{S,X}$ comes in the form of the assumption that the probability mass functions of $S$ and $X$ are bounded away from 0. Recall that, by definition,
\begin{align}
    \mathcal{L}_f(Q,W) &= \sum_{s\in\mathcal{S}} \sum_{y\in\mathcal{Y}} P_{S}(s)P_{Y}(y) f\left(\frac{P_{S,Y}(s,y)}{P_S(s)P_Y(y)}\right),\\
    \mathcal{U}_f(Q,W) &= \sum_{x\in\mathcal{X}} \sum_{y\in\mathcal{Y}} P_{X}(x)P_{Y}(y) f\left(\frac{P_{X,Y}(x,y)}{P_X(x)P_Y(y)}\right),
\end{align}
where $S \to X \to Y$ is such that $P_{S,X} = Q$ and $P_{Y|X} = W$. Note that if $Q\in \mathcal{Q}$, then, for all $(s,x,y)\in\mathcal{S}\times\mathcal{X}\times\mathcal{Y}$,
\begin{equation}
\max\left\{\frac{P_{S,Y}(s,y)}{P_S(s)P_Y(y)},\frac{P_{X,Y}(x,y)}{P_X(x)P_Y(y)}\right\} \leq \gamma^{-1}.
\end{equation}
Upon this fact, it is straightforward to verify that the mappings $\mathcal{L}_f(Q,\cdot)$ and $\mathcal{U}_f(Q,\cdot)$ are continuous over $\mathcal{W}_N$. In \cite{hsu2018generalizing}, Hsu \etal established the continuity of $\mathsf{H}_f(Q;\cdot)$ over $[0,\infty)$. Hence, condition (\textbf{C.}\ref{cond:cont}) is satisfied. Recall the definitions of $K_{g,u}$ and $L_{g,u}$ in \eqref{eq:DefSupNorm} and \eqref{eq:DefLipschitzConstant}, respectively. Lemma~\ref{lem:rob_f_inf1} shows that, for any $Q_1,Q_2\in\mathcal{Q}$ and $W\in\mathcal{W}$,
\begin{equation}
\begin{aligned}
    |\calL_f(Q_1,W)-\calL_f(Q_2,W)| 
    \leq 
    (2 K_{f,\gamma^{-1}} + (2\gamma^{-1}+1)L_{f,\gamma^{-1}}) \|Q_1-Q_2\|_1,
\end{aligned}
\end{equation}
\begin{equation}
\begin{aligned}
    |\calU_f(Q_1,W)-\calU_f(Q_2,W)| 
    \leq 
    (2 K_{f,\gamma^{-1}} + (2\gamma^{-1}+1)L_{f,\gamma^{-1}})
    \|Q_1-Q_2\|_1,
\end{aligned}
\end{equation}
which implies that condition (\textbf{C.}\ref{cond:lipcont}) is satisfied with
\begin{equation}
    C_L = C_U = 2 K_{f,\gamma^{-1}} + (2\gamma^{-1}+1)L_{f,\gamma^{-1}}.
\end{equation}
For example, if $f(x) = |x-1|$, then $C_L = C_U \leq 4\gamma^{-1}+1$; and if $f(x)=x^2-1$, then $C_L = C_U \leq 8\gamma^{-2}$. As with probability of correctly guessing, there is always an optimal $\epsilon$-private mechanism using at most $|\mathcal{X}|+1$ symbols \cite{hsu2018generalizing}. From this fact, condition (\textbf{C.}\ref{cond:comp}) follows immediately.

\subsubsection{Arimoto's Mutual Information}

Let $\alpha\in(1,\infty]$. We choose $N\geq|\mathcal{X}|+1$ and $\mathcal{Q} = \mathcal{P}$, i.e., no prior information is assumed. As with the previous information measures, the continuity of the mappings $\mathcal{L}^{\rm A}_\alpha(Q,\cdot)$ and $\mathcal{U}^{\rm A}_\alpha(Q,\cdot)$ is evident. We denote the privacy-utility function by $\mathsf{H}^{\rm A}_\alpha$ when both privacy leakage and utility are measured using Arimoto's mutual information of order $\alpha$. Note that the graph of $\mathsf{H}^{\rm A}_\alpha(Q;\cdot)$ corresponds to the upper boundary of the set
\begin{equation}
    \mathcal{A} \defined \left\{(\mathcal{L}^{\rm A}_\alpha(Q,W),\mathcal{U}^{\rm A}_\alpha(Q,W)) : W\in\mathcal{W}\right\}.
\end{equation}
More specifically, $\mathsf{H}^{\rm A}_\alpha(Q;\epsilon) = \sup \{ u : (p,u) \in \calA, p\leq\epsilon\}$. Consider the transformation
\begin{equation}
    (p,u) \mapsto \left(\Big\|\sum_{x\in\mathcal{X}} Q(\cdot,x)\Big\|_\alpha e^{\alpha p /(\alpha-1)},\Big\|\sum_{s\in\mathcal{S}} Q(s,\cdot)\Big\|_\alpha e^{\alpha u /(\alpha-1)}\right).
\end{equation}
It is straightforward to verify that this transformation is one-to-one, continuous, and monotone coordinatewise. Using this transformation, it can be shown that (the upper boundary of) $\mathcal{A}$ is homeomorphic to (the upper boundary of)
\begin{equation}
    \mathcal{B} \defined \left\{(\phi(Q,W),\psi(Q,W)) : W\in\mathcal{W}\right\},
\end{equation}
where $\phi(Q,W) \defined \sum_{y\in\mathcal{Y}} \left\|\sum_{x\in\mathcal{X}} Q(\cdot,x)W(x,y)\right\|_\alpha$ and $\psi(Q,W) \defined \sum_{y\in\mathcal{Y}} \left\|\sum_{s\in\mathcal{S}} Q(s,\cdot)W(\cdot,y)\right\|_\alpha$. By the homogeneity of the $\alpha$-norm, it can be verified that
\begin{equation}
\phi(Q,[\lambda_1W_1,\ldots,\lambda_KW_K] = \sum_{k=1}^K \lambda_k \phi(Q,W_k),
\end{equation}
whenever $K\in\mathbb{N}$, $W_1,\ldots,W_K\in\mathcal{W}$, and $\lambda_1,\ldots,\lambda_K\geq0$ with $\sum_k \lambda_k = 1$. A similar equality holds for $\psi$. Therefore, $\mathcal{B}$ is a convex set and, as a consequence, its upper boundary is the graph of a continuous function. Since the upper boundaries of $\mathcal{A}$ and $\mathcal{B}$ are homeomorphic, we conclude that the mapping $\mathsf{H}^{\rm A}_\alpha(Q;\cdot)$ is continuous. Therefore, condition (\textbf{C.}\ref{cond:cont}) is satisfied. Also, Lemma~\ref{lem:LipschitzArimotoMI} implies that condition (\textbf{C.}\ref{cond:lipcont}) is satisfied with $\displaystyle C_L = \frac{2\alpha}{\alpha-1} |\mathcal{S}|^{1-1/\alpha}$ and $\displaystyle C_U = \frac{2\alpha}{\alpha-1} |\mathcal{X}|^{1-1/\alpha}$. Observe that proving condition (\textbf{C.}\ref{cond:comp}) with $N\geq|\mathcal{X}|+1$ is equivalent to show that any point in the upper boundary of $\mathcal{A}$ can be written as $(\mathcal{L}^{\rm A}_\alpha(Q,W),\mathcal{U}^{\rm A}_\alpha(Q,W))$ for some $W\in\mathcal{W}_N$. Since $\mathcal{A}$ and $\mathcal{B}$ are homeomorphic, the latter property can be established from an analogous property for $\mathcal{B}$ which in turn follows from a minor adaptation of the argument in \cite{witsenhausen1975conditional}.

\begin{rem}
Regarding condition (\textbf{C.}\ref{cond:comp}), both \cite{asoodeh2017estimation} and \cite{hsu2018generalizing} build upon the convex analysis argument employed by  Witsenhausen and Wyner in \cite{witsenhausen1975conditional}. More specifically, they rely on an extension of Carath\'{e}odory's theorem, the so-called Fenchel-Eggleston theorem \cite{eggleston1958cambridge}, in order to find a bound for the size of the output alphabet of an optimal privacy mechanism.
\end{rem}
\section{Uniform Privacy Mechanisms}
\label{Section:UniformRobustness}

In this section we consider the scenario in which delivering privacy is the top priority for the privacy mechanism designer. We introduce privacy mechanisms that, with a certain probability, guarantee privacy for the true distribution despite having access only to an estimate of it. These mechanisms are constructed in the following conceptual way. Recall that large deviations results show that, with a certain probability, the true distribution is within some $\ell_1$ ball of the empirical distribution. Consequently, if a mechanism guarantees privacy \emph{uniformly} for every distribution within such an $\ell_1$ ball, it necessarily guarantees privacy for the true distribution with at least the same probability. In this section, we prove that there is a well-defined notion of optimality for \emph{uniform} privacy mechanisms and that these optimal mechanisms can be approximated by appropriately chosen (non-uniform) optimal privacy mechanisms as previously defined in \eqref{equa:opt_privacy_mehc}.

In order to introduce uniform privacy mechanisms precisely, recall that inequality~\eqref{equa:L1bound_Devroye} shows that, with probability at least $1-\exp(|\mathcal{S}|\cdot|\mathcal{X}|-nr^2/2)$, the true distribution $P$ is within the $\ell_1$ ball of radius $r$ centered at the empirical distribution $\hat{P}_n$,
\begin{equation}
    \mathcal{Q}_r(\hat{P}_n)\defined\{Q\in\mathcal{Q}: \|Q-\hat{P}_n\|_1\leq r\}.
\end{equation}
Based on this observation, we consider uniform privacy mechanisms which guarantee privacy for every joint distribution within $\mathcal{Q}_r(\hat{P})$, where $\hat{P}$ is any estimate of $P$. Despite the fact that $P\in\mathcal{Q}_r(\hat{P})$ with high probability, we do not know the true value of $P$. For this reason, we define optimal uniform privacy mechanisms as those that achieve the best worst-case utility within $\mathcal{Q}_r(\hat{P})$.

\begin{defn}
\label{def:robust-utility-privacy}
Let $\mathcal{Q}\subset\mathcal{P}$ and $N\in\mathbb{N}$. For $\hat{P}\in\mathcal{Q}$, $\epsilon\geq0$, and $r\geq0$, we define the set of uniform privacy mechanisms for $\mathcal{Q}_r(\hat{P})$ at $\epsilon$ as
\begin{equation}
    \mathcal{D}_{\mathcal{Q},N}(\hat{P};\epsilon,r) \defined \hspace{-5pt} \bigcap_{Q \in \mathcal{Q}_r(\hat{P})} \hspace{-5pt} \left\{ W\in \mathcal{W}_N : \mathcal{L}(Q,W) \leq \epsilon\right\}.
\end{equation}
Furthermore, we define the set of optimal uniform privacy mechanisms for $\mathcal{Q}_r(\hat{P})$ at $\epsilon$ as
\begin{equation}
\label{eq:DefParametricRPM}
    \mathcal{W}^{\dagger}_{\mathcal{Q},N}(\hat{P};\epsilon,r) \defined \operatorname*{arg\,max}_{W\in \mathcal{D}_{\mathcal{Q},N}(\hat{P};\epsilon,r)} \calU_r(\hat{P},W),
\end{equation}
where $\displaystyle \calU_r(\hat{P},W) \defined \hspace{-5pt} \min_{Q\in \mathcal{Q}_r(\hat{P})} \hspace{-5pt} \mathcal{U}(Q,W)$.
\end{defn}

Recall that, as defined in Remark~\ref{rem:sup_max_Wn}, $\mathcal{D}_N(Q;\epsilon) = \{W\in\mathcal{W}_N : \mathcal{L}(Q,W) \leq \epsilon\}$ is the set of all privacy mechanisms in $\mathcal{W}_N$ delivering an $\epsilon$-privacy guarantee for $Q$. Thus,
\begin{equation}
\label{eq:UniformPrivacyMechanismsIntersection}
    \mathcal{D}_{\mathcal{Q},N}(\hat{P};\epsilon,r) = \bigcap_{Q\in\mathcal{Q}_r(\hat{P})} \mathcal{D}_N(Q;\epsilon)
\end{equation}
is the set of all privacy mechanisms in $\mathcal{W}_N$ that deliver an $\epsilon$-privacy guarantee uniformly for all the distributions in $\mathcal{Q}_r(\hat{P})$, i.e., all the distributions at a distance less than or equal to $r$ from $\hat{P}$. For a given privacy mechanism $W$, $\mathcal{U}_r(\hat{P},W)$ quantifies the least utility $\mathcal{U}(Q,W)$ attained by $W$ over all the distributions in $\mathcal{Q}_r(\hat{P})$. Thus, by definition, $\mathcal{W}^\dagger_{\mathcal{Q},N}(\hat{P};\epsilon,r)$ is the set of all uniform privacy mechanisms for $\mathcal{Q}_r(\hat{P})$ at $\epsilon$ with the best worst-case utility.

The following lemma shows that Definition~\ref{def:robust-utility-privacy} is well-defined under conditions (\textbf{C.}\ref{cond:cont}--\ref{cond:lipcont}). Specifically, it shows that the infimum defining $\calU_r(\hat{P},W)$ and the supremum defining $\mathcal{W}^{\dagger}_{\mathcal{Q},N}(\hat{P};\epsilon,r)$ are attainable.

\begin{lem}
\label{lem:uni_pri_mech_well_defined}
Assume that conditions (\textbf{C.}\ref{cond:cont}--\ref{cond:lipcont}) hold true for a given closed set $\mathcal{Q}\subseteq\mathcal{P}$ and a given $N\in\mathbb{N}$. Then
\begin{itemize}
    \item[\textnormal{(i)}] the infimum $\inf \{\mathcal{U}(Q,W) : Q\in\mathcal{Q}_r(\hat{P})\}$ is attainable for every $W\in\mathcal{W}_N$;
    \item[\textnormal{(ii)}] the supremum $\sup \{\mathcal{U}_r(\hat{P},W) : W\in\mathcal{D}_{\mathcal{Q},N}(\hat{P};\epsilon,r)\}$ is attainable whenever $\mathcal{D}_{\mathcal{Q},N}(\hat{P};\epsilon,r)$ is not empty.
\end{itemize}
\end{lem}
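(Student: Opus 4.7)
The plan is to reduce both statements to applications of the extreme value theorem by establishing compactness of the relevant sets and continuity of the relevant objective functions, leveraging the joint continuity of $\mathcal{U}$ on $\mathcal{Q}\times\mathcal{W}_N$ that has already been proved in Lemma~\ref{lem:cont_UandL} from conditions (\textbf{C.}\ref{cond:cont}--\ref{cond:lipcont}). First I would observe that $\mathcal{Q}_r(\hat{P})$ is compact: it is the intersection of the closed set $\mathcal{Q}$ (assumed closed in the statement of the conditions) with the closed $\ell_1$-ball of radius $r$ around $\hat{P}$, hence closed, and bounded as a subset of the simplex $\mathcal{P}\subset\mathbb{R}^{|\mathcal{S}|\times|\mathcal{X}|}$. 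Similarly, $\mathcal{W}_N$ is compact, being the set of row-stochastic matrices of a fixed size.

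For part (i), fix $W\in\mathcal{W}_N$. Lemma~\ref{lem:cont_UandL} gives the joint continuity of $\mathcal{U}$ on $\mathcal{Q}\times\mathcal{W}_N$, so in particular the partial map $Q\mapsto\mathcal{U}(Q,W)$ is continuous on the compact set $\mathcal{Q}_r(\hat{P})$. The extreme value theorem immediately yields the attainment of the infimum.

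For part (ii), the strategy is first to show that $\mathcal{D}_{\mathcal{Q},N}(\hat{P};\epsilon,r)$ is compact and then that $W\mapsto\mathcal{U}_r(\hat{P},W)$ is continuous on $\mathcal{W}_N$. Compactness follows from the representation in \eqref{eq:UniformPrivacyMechanismsIntersection}: for each fixed $Q$, condition (\textbf{C.}\ref{cond:cont}) makes $W\mapsto\mathcal{L}(Q,W)$ continuous, so $\{W\in\mathcal{W}_N:\mathcal{L}(Q,W)\leq\epsilon\}$ is closed; $\mathcal{D}_{\mathcal{Q},N}(\hat{P};\epsilon,r)$ is an intersection of such closed sets inside the compact $\mathcal{W}_N$, hence compact. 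For the continuity of $W\mapsto\mathcal{U}_r(\hat{P},W)=\min_{Q\in\mathcal{Q}_r(\hat{P})}\mathcal{U}(Q,W)$, I would argue as follows: fix a convergent sequence $W_n\to W$ in $\mathcal{W}_N$. By joint continuity of $\mathcal{U}$ on the compact set $\mathcal{Q}_r(\hat{P})\times\{W_n\}_{n\geq 1}\cup\{W\}$, the maps $Q\mapsto\mathcal{U}(Q,W_n)$ converge uniformly on $\mathcal{Q}_r(\hat{P})$ to $Q\mapsto\mathcal{U}(Q,W)$, and consequently $\min_{Q}\mathcal{U}(Q,W_n)\to\min_Q\mathcal{U}(Q,W)$. (Alternatively, invoke Berge's maximum theorem with the constant correspondence $W\mapsto\mathcal{Q}_r(\hat{P})$.) Since $\mathcal{U}_r(\hat{P},\cdot)$ is continuous on the compact nonempty set $\mathcal{D}_{\mathcal{Q},N}(\hat{P};\epsilon,r)$, another application of the extreme value theorem yields the attainment of the supremum in \eqref{eq:DefParametricRPM}.

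The only substantive step is the continuity of the value function $W\mapsto\mathcal{U}_r(\hat{P},W)$, since the minimum of a continuous function over a parameter-dependent set is in general only upper/lower semicontinuous; here the parameter-independence of $\mathcal{Q}_r(\hat{P})$ together with joint continuity of $\mathcal{U}$ lets the uniform-convergence argument go through cleanly. Everything else reduces to standard closedness and compactness verifications.
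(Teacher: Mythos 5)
Your proof is correct and follows essentially the same route as the paper's: part (i) is the extreme value theorem on the compact set $\mathcal{Q}_r(\hat{P})$, and part (ii) combines compactness of $\mathcal{D}_{\mathcal{Q},N}(\hat{P};\epsilon,r)$ (as an intersection of closed subsets of the compact $\mathcal{W}_N$) with continuity of the value function $W\mapsto\mathcal{U}_r(\hat{P},W)$. The only cosmetic difference is that the paper obtains this last continuity by citing the standard fact that the infimum of a jointly continuous function over a compact set is continuous in the remaining argument (its Lemma~\ref{lem:inf_cont_overcomp_cont}), whereas you prove the same fact inline via uniform continuity on the compact product.
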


\begin{proof}
See Appendix~\ref{Appendix:Proofuni_pri_mech_well_define}.
\end{proof}

Observe that \textnormal{(ii)} shows that optimal uniform privacy mechanisms do exists as long as $\mathcal{D}_{\mathcal{Q},N}(\hat{P};\epsilon,r)$ is not empty. Nonetheless, their construction might be challenging as the construction of optimal privacy mechanisms in the non-uniform sense, as defined in \eqref{equa:opt_privacy_mehc}, is already non-trivial in most cases. The following theorem shows that optimal privacy mechanism can be modified to deliver a uniform privacy guarantee without incurring in a big cost in terms of utility. Throughout this section, we consistently use $W^*$ to denote optimal privacy mechanisms, i.e., elements in $\mathcal{W}_N^*$ as defined in \eqref{eq:opt_pri_mech_W_N_star_defn}, and $W^\dagger$ to denote their uniform counterparts as defined in \eqref{eq:DefParametricRPM}.

\begin{thm}
\label{thm:desn_unfpri_mech}
Assume that conditions (\textbf{C.}\ref{cond:cont}--\ref{cond:comp}) hold true for a given closed set $\mathcal{Q}\subseteq\mathcal{P}$ and a given $N\in\mathbb{N}$. If $P\in\mathcal{Q}_r(\hat{P})$ and $\epsilon-C_Lr\geq \epsilon_{\min}(\hat{P})$, then
\begin{equation}
\label{eq:OptimalNonUniformSubsetUniform}
    \emptyset \neq \mathcal{D}_N(\hat{P};\epsilon-C_Lr)\subseteq \mathcal{D}_{\mathcal{Q},N}(\hat{P};\epsilon,r).
\end{equation}
Furthermore, for every $W^*\in \mathcal{W}_N^*(\hat{P};\epsilon-C_Lr)$ and every $W^\dagger\in\mathcal{W}_{\mathcal{Q},N}^\dagger(\hat{P};\epsilon,r)$,
\begin{align}
    \mathcal{U}(P,W^*) 
    \geq \mathcal{U}(P,W^\dagger)-\left(\mathsf{H}(\hat{P};\epsilon+C_L r)-\mathsf{H}(\hat{P};\epsilon-C_L r) +2C_Ur\right). \label{eq:LowerBoundPerformanceNonUniformAsUniform}
\end{align}
\end{thm}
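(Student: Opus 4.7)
The plan is to establish the containment \eqref{eq:OptimalNonUniformSubsetUniform} by a direct application of the Lipschitz hypothesis (\textbf{C.}\ref{cond:lipcont}), and then to derive the utility inequality \eqref{eq:LowerBoundPerformanceNonUniformAsUniform} by sandwiching the quantities $\mathcal{U}(P,W^*)$ and $\mathcal{U}(P,W^\dagger)$ between their counterparts at $\hat P$, at which point the privacy–utility function $\mathsf{H}(\hat P;\cdot)$ becomes the natural bridge.

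First I would handle non-emptiness: since $\epsilon-C_L r\geq\epsilon_{\min}(\hat P)$, condition (\textbf{C.}\ref{cond:comp}) (together with \eqref{eq:epsilonminReduced}) guarantees that $\mathcal{W}_N^*(\hat P;\epsilon-C_L r)$ is non-empty, so in particular $\mathcal{D}_N(\hat P;\epsilon-C_L r)\neq\emptyset$. For the inclusion, pick any $W\in\mathcal{D}_N(\hat P;\epsilon-C_L r)$ and any $Q\in\mathcal{Q}_r(\hat P)$. By \eqref{eq:PrivacyLemmaContinuity},
\begin{equation*}
\mathcal{L}(Q,W)\leq \mathcal{L}(\hat P,W)+C_L\|Q-\hat P\|_1\leq (\epsilon-C_L r)+C_L r=\epsilon,
\end{equation*}
so $W\in\mathcal{D}_N(Q;\epsilon)$. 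Intersecting over $Q\in\mathcal{Q}_r(\hat P)$ as in \eqref{eq:UniformPrivacyMechanismsIntersection} yields $W\in\mathcal{D}_{\mathcal{Q},N}(\hat P;\epsilon,r)$, which proves \eqref{eq:OptimalNonUniformSubsetUniform}.

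Next I would prove \eqref{eq:LowerBoundPerformanceNonUniformAsUniform}. Apply the utility Lipschitz bound \eqref{eq:UtilityLemmaContinuity} in both directions, using $\|P-\hat P\|_1\leq r$:
\begin{equation*}
\mathcal{U}(P,W^*)\geq \mathcal{U}(\hat P,W^*)-C_U r,\qquad \mathcal{U}(P,W^\dagger)\leq \mathcal{U}(\hat P,W^\dagger)+C_U r.
\end{equation*}
By definition of $W^*$, $\mathcal{U}(\hat P,W^*)=\mathsf{H}(\hat P;\epsilon-C_L r)$. For $W^\dagger$, the hypothesis $P\in\mathcal{Q}_r(\hat P)$ gives $\mathcal{L}(P,W^\dagger)\leq\epsilon$, and applying \eqref{eq:PrivacyLemmaContinuity} once more yields $\mathcal{L}(\hat P,W^\dagger)\leq \epsilon+C_L r$. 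Since $W^\dagger\in\mathcal{W}_N$, this feasibility at $\hat P$ implies $\mathcal{U}(\hat P,W^\dagger)\leq\mathsf{H}(\hat P;\epsilon+C_L r)$. Chaining the four inequalities gives
\begin{equation*}
\mathcal{U}(P,W^*)-\mathcal{U}(P,W^\dagger)\geq \mathsf{H}(\hat P;\epsilon-C_L r)-\mathsf{H}(\hat P;\epsilon+C_L r)-2C_U r,
\end{equation*}
which is the desired bound after rearranging.

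The argument is essentially a careful bookkeeping of Lipschitz inflations and deflations, so there is no serious technical obstacle. The one subtlety worth flagging is that the ``reverse'' direction for $W^\dagger$, namely bounding $\mathcal{U}(\hat P,W^\dagger)$ by $\mathsf{H}(\hat P;\epsilon+C_L r)$, requires first translating its privacy guarantee from $P$ back to $\hat P$ via (\textbf{C.}\ref{cond:lipcont})—which is precisely why the $\epsilon+C_L r$ argument (rather than $\epsilon$) appears in the final estimate and explains the seemingly asymmetric role of the increment $\mathsf{H}(\hat P;\epsilon+C_L r)-\mathsf{H}(\hat P;\epsilon-C_L r)$ in \eqref{eq:LowerBoundPerformanceNonUniformAsUniform}.
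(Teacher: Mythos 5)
Your proof is correct. The first part (non-emptiness plus the inclusion via (\textbf{C.}\ref{cond:lipcont})) matches the paper's argument, with the added benefit that you spell out the non-emptiness explicitly via (\textbf{C.}\ref{cond:comp}), which the paper leaves implicit. For the second part you take a slightly different, and arguably cleaner, route: the paper first bounds $\mathcal{U}(P,W^\dagger)\leq\mathsf{H}(P;\epsilon)$ and then invokes a separate lemma (Lemma~\ref{lem:append_bound_for_HP}) asserting $\mathsf{H}(P;\epsilon)\leq\mathsf{H}(\hat P;\epsilon+C_Lr)+C_Ur$, whose proof translates an optimal mechanism for $P$ at level $\epsilon$ over to $\hat P$. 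You instead translate $W^\dagger$ itself: feasibility of $W^\dagger$ at $P$ with budget $\epsilon$ plus \eqref{eq:PrivacyLemmaContinuity} gives feasibility at $\hat P$ with budget $\epsilon+C_Lr$, hence $\mathcal{U}(\hat P,W^\dagger)\leq\mathsf{H}(\hat P;\epsilon+C_Lr)$, and one more application of \eqref{eq:UtilityLemmaContinuity} closes the chain. This bypasses the intermediate quantity $\mathsf{H}(P;\epsilon)$ and the auxiliary lemma entirely while producing exactly the same constants, so nothing is lost; the paper's detour through $\mathsf{H}(P;\epsilon)$ is only needed because Lemma~\ref{lem:append_bound_for_HP} is stated as a standalone comparison of privacy--utility functions that is of independent interest.
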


\begin{proof}
See Appendix~\ref{Appendix:Proof_lem:desn_unfpri_mech}.
\end{proof}

Observe that \eqref{eq:OptimalNonUniformSubsetUniform} establishes that privacy mechanisms for $\hat{P}$ at $\epsilon-C_L r$ are, in fact, uniform privacy mechanisms at $\epsilon$. In other words, \eqref{eq:OptimalNonUniformSubsetUniform} shows that by imposing a slightly stronger privacy requirement for $\hat{P}$, we obtain \emph{uniform} privacy mechanisms for $\mathcal{Q}_r(\hat{P})$. Furthermore, under condition (\textbf{C.}\ref{cond:cont}),
\begin{equation}
    \lim_{r\downarrow0} \left(\mathsf{H}(\hat{P};\epsilon+C_L r)-\mathsf{H}(\hat{P};\epsilon-C_L r) +2C_Ur\right) = 0.
\end{equation}
Therefore, \eqref{eq:LowerBoundPerformanceNonUniformAsUniform} shows that, when $r$ is small, any optimal privacy mechanism for $\hat{P}$ at $\epsilon-C_L r$ performs almost as well as any optimal uniform privacy mechanism for $\mathcal{Q}_r(\hat{P})$ at $\epsilon$. Of course, all these conclusions hold as long as $P\in\mathcal{Q}_r(\hat{P})$ but, as pointed out before, this is the case (with high probability) in the large sample size regime.

\begin{rem}
Note that if $\mathsf{H}(\hat{P};\cdot)$ is Lipschitz continuous with Lipschitz constant $L$, then \eqref{eq:LowerBoundPerformanceNonUniformAsUniform} becomes
\begin{equation}
\label{eq:LowerBoundNonRobustAsRobust}
    \mathcal{U}(P,W^*) \geq \mathcal{U}(P,W^\dagger) - 2(C_U+LC_L)r.
\end{equation}
Observe that if $\mathsf{H}(\hat{P};\cdot)$ is a convex function differentiable at $\epsilon_0 \defined \epsilon_{\min}(\hat{P})$, then $L = \mathsf{H}'(\hat{P};\epsilon_0)$. The value of $\mathsf{H}'(\hat{P};\epsilon_0)$ is closely related with the notion of \emph{reverse} strong data processing inequality \cite{calmon2015fundamental}, see also \cite{asoodeh2016information}.
\end{rem}

Now we consider a specific example to illustrate the above results.

\begin{example}
\label{Example:ParametricPc}
Recall that $\mathsf{H}_c$ is the privacy-utility function associated to $\mathcal{L}_c$ and $\mathcal{U}_c$ as given in \eqref{eq:DefLcUc}. For ease of notation, we define
\begin{equation}
    p\#q \defined \left(\begin{matrix} (1-p)(1-q) & (1-p)q \\ pq & p(1-q) \end{matrix}\right).
\end{equation}
Let $\mathcal{Q} = \{p\#q : p\in[1/2,1],q\in[0,1-p]\}$ and $N\geq2$. This selection of $\mathcal{Q}$ captures the case where $S \sim \text{Ber}(p)$ with $p\in[1/2,1]$ and $P_{X|S} = \text{BSC}(q)$ with $q \in [0,1-p]$. By Theorem~2 in \cite{asoodeh2017estimation}, for all $Q = p\#q \in\mathcal{Q}$,
\begin{equation}
    \mathsf{H}_{c}(Q;\epsilon) = 1 - \frac{1-q}{p-q} (p+q-2pq) + \epsilon \frac{p+q-2pq}{p-q},
\end{equation}
whenever $\epsilon\in[p,1-q]$. In particular, $\mathsf{H}_c(Q;\cdot)$ is Lipschitz continuous with Lipschitz constant $\dfrac{p+q-2pq}{p-q}$. Recall that, for probability of correctly guessing, $C_L = C_U = 1$ as established in Section~\ref{Subsection:PcC123}. Hence, under the assumptions of Theorem~\ref{thm:desn_unfpri_mech}, \eqref{eq:LowerBoundNonRobustAsRobust} becomes
\begin{equation}
    \mathcal{U}(P,W^*) \geq \mathcal{U}(P,W^\dagger) - \frac{2\hat{p}(1-\hat{q})}{\hat{p}-\hat{q}} r,
\end{equation}
where $W^*\in \mathcal{W}_N^*(\hat{P};\epsilon-r)$ and $W^\dagger\in\mathcal{W}_{\mathcal{Q},N}^\dagger(\hat{P};\epsilon,r)$ with $\hat{P} \defined \hat{p}\#\hat{q} \in \mathcal{Q}$. Furthermore, by taking $\hat{P} = \hat{P}_n$ and $r=(2(4-\log\beta)/n)^{1/2}$, inequality~\eqref{equa:L1bound_Devroye} implies that, with probability at least $1-\beta$,
\begin{align}
    \mathcal{U}(P,W^*) \geq \mathcal{U}(P,W^\dagger) - \frac{2\hat{p}(1-\hat{q})}{\hat{p}-\hat{q}}\sqrt{\frac{2}{n}\left(4-\log\beta\right)}.
\end{align}
\end{example}

We finish this section by studying some convergence properties of uniform privacy mechanism, similar to those studied in Theorem~\ref{thm:optimaltooptimal_mainthm}. More specifically, the next theorem shows that although a sequence $(W^{\dagger}_n)_{n=1}^{\infty}$ with $W_n^\dagger\in\mathcal{W}_{\mathcal{Q},N}^\dagger(P_n;\epsilon,r_n)$ may not be convergent, the distance between each $W^{\dagger}_n$ and $\mathcal{W}_N^*(P;\epsilon)$ converges to zero as long as $\lim_n P_n = P$ and $\lim_n r_n = 0$. Furthermore, it also shows that, under certain conditions, this convergence can be guaranteed almost surely for the empirical distribution estimator.

\begin{thm}
\label{thm:rob_mech_converge_opt_mech}
Assume that conditions (\textbf{C.}\ref{cond:cont}--\ref{cond:comp}) hold true for a given closed set $\mathcal{Q}\subseteq\mathcal{P}$ and a given $N\in\mathbb{N}$. Let $\epsilon\in\Reals$ and $P\in\mathcal{Q}$ be given. If $P_n\in\mathcal{Q}$ with $\|P_n-P\|_1 \leq r_n$ for each $n\in\mathbb{N}$, $\lim_n r_n = 0$, and $\epsilon > \epsilon_{\min}(P)$, then, for any sequence $(W^{\dagger}_n)_{n=1}^{\infty}\subset \mathcal{W}_N$ such that $W^{\dagger}_n\in \mathcal{W}^{\dagger}_{\mathcal{Q},N}(P_n;\epsilon,r_n)$ for all $n\geq1$, 
\begin{equation}
\label{equa:W_ntoF_lim}
    \lim_{n \to \infty} \mathsf{dist}(W^{\dagger}_n,\mathcal{W}_N^*(P;\epsilon)) = 0.
\end{equation}
Furthermore, if $P_n=\hat{P}_n$ is the empirical estimator obtained from $n$ i.i.d.~samples drawn from $P$ and, for some $p>1$, $r_n \geq \sqrt{\frac{2p\log(n)}{n}}$ for all $n\geq1$, then,
\begin{equation}
    \Pr\left(\lim_{n \to \infty} \mathsf{dist}(W^{\dagger}_n, \mathcal{W}_N^*(P;\epsilon)) = 0\right)=1.
\end{equation}
\end{thm}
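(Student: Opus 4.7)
The plan is to first show that each uniform optimal mechanism $W^\dagger_n$ is feasible for the privacy problem with true distribution $P$ at level $\epsilon$, then establish $\mathcal{U}(P, W^\dagger_n) \to \mathsf{H}(P;\epsilon)$, and finally invoke compactness of $\mathcal{W}_N$ in a subsequence argument. Feasibility is immediate: since $\|P_n - P\|_1 \leq r_n$ gives $P \in \mathcal{Q}_{r_n}(P_n)$, the definition of uniform privacy mechanism yields $\mathcal{L}(P, W^\dagger_n) \leq \epsilon$, which in turn forces $\mathcal{U}(P, W^\dagger_n) \leq \mathsf{H}(P;\epsilon)$.

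For the matching lower bound, I would fix $\eta \in (0, \epsilon - \epsilon_{\min}(P))$ and pick $W^\star \in \mathcal{W}_N^*(P; \epsilon - \eta)$, which is non-empty by condition (\textbf{C.}\ref{cond:comp}). By the Lipschitz condition (\textbf{C.}\ref{cond:lipcont}), every $Q \in \mathcal{Q}_{r_n}(P_n)$ satisfies $\|Q - P\|_1 \leq 2 r_n$, so $\mathcal{L}(Q, W^\star) \leq \mathcal{L}(P, W^\star) + 2 C_L r_n \leq \epsilon - \eta + 2 C_L r_n$, which is at most $\epsilon$ for $n$ large. Hence $W^\star \in \mathcal{D}_{\mathcal{Q},N}(P_n; \epsilon, r_n)$ eventually, and the optimality of $W^\dagger_n$ combined with a second application of (\textbf{C.}\ref{cond:lipcont}) yields
\begin{equation*}
\mathcal{U}(P, W^\dagger_n) \geq \mathcal{U}_{r_n}(P_n, W^\dagger_n) \geq \mathcal{U}_{r_n}(P_n, W^\star) \geq \mathsf{H}(P; \epsilon - \eta) - 2 C_U r_n,
\end{equation*}
where the first inequality uses $P \in \mathcal{Q}_{r_n}(P_n)$. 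Taking $\liminf_n$ and then $\eta \downarrow 0$, using continuity of $\mathsf{H}(P;\cdot)$ at $\epsilon$ from (\textbf{C.}\ref{cond:cont}), gives $\lim_n \mathcal{U}(P, W^\dagger_n) = \mathsf{H}(P;\epsilon)$.

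Next, I would argue by contradiction: if $\mathsf{dist}(W^\dagger_n, \mathcal{W}_N^*(P;\epsilon)) \not\to 0$, one can extract a subsequence with $\mathsf{dist}(W^\dagger_{n_k}, \mathcal{W}_N^*(P;\epsilon)) \geq \delta > 0$. Since $\mathcal{W}_N$ is a compact subset of $\mathbb{R}^{|\mathcal{X}|\times N}$, pass to a further subsequence converging to some $W_0 \in \mathcal{W}_N$. Continuity of $\mathcal{L}(P,\cdot)$ and $\mathcal{U}(P,\cdot)$ from (\textbf{C.}\ref{cond:cont}) together with the preceding steps give $\mathcal{L}(P, W_0) \leq \epsilon$ and $\mathcal{U}(P, W_0) = \mathsf{H}(P;\epsilon)$, so $W_0 \in \mathcal{W}_N^*(P;\epsilon)$, contradicting the lower bound $\delta$.

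For the almost-sure statement with $P_n = \hat{P}_n$ and $r_n \geq \sqrt{2p\log(n)/n}$, inequality~\eqref{equa:L1bound_weissman} gives $\Pr(\|\hat{P}_n - P\|_1 \geq r_n) \leq \exp(|\mathcal{S}|\cdot|\mathcal{X}|)\, n^{-p}$, which is summable for $p > 1$; Borel--Cantelli then implies $\|\hat{P}_n - P\|_1 \leq r_n$ eventually with probability one, and the deterministic conclusion applies on that event. The main obstacle is the utility lower bound: the test mechanism $W^\star$ needs to be feasible uniformly over the entire ball $\mathcal{Q}_{r_n}(P_n)$ rather than just at $P$, which is why tightening the privacy margin from $\epsilon$ to $\epsilon - \eta$ is necessary to create slack absorbable by the Lipschitz constant $C_L$.
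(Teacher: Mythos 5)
Your proposal is correct and follows essentially the same route as the paper's proof: feasibility of $W^\dagger_n$ at $P$ via $P\in\mathcal{Q}_{r_n}(P_n)$, a utility lower bound obtained by inserting an optimal mechanism at a slightly tightened privacy level into the uniform feasible set (you use a fixed slack $\eta$ and let $\eta\downarrow0$, the paper uses the vanishing slack $2C_Lr_n$ directly — an immaterial difference, both resting on continuity of $\mathsf{H}(P;\cdot)$), a compactness/subsequence contradiction, and Borel--Cantelli for the almost-sure statement.
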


\begin{proof}
See Appendix~\ref{Appendix:thm:rob_mech_to_mech_prob}.
\end{proof}

The following corollary follows immediately from Thereom~\ref{thm:rob_mech_converge_opt_mech} by taking $P = P_n = \hat{P}$.

\begin{cor}
\label{cor:W_r_cont}
Assume that conditions (\textbf{C.}\ref{cond:cont}--\ref{cond:comp}) hold true for a given closed set $\mathcal{Q}\subseteq\mathcal{P}$ and a given $N\in\mathbb{N}$. If $\hat{P}\in\mathcal{Q}$ and $\epsilon>\epsilon_{\min}(\hat{P})$, then
\begin{align}
    \limsup_{r\downarrow0} \mathcal{W}^{\dagger}_{\mathcal{Q},N}(\hat{P};\epsilon,r) \subseteq \mathcal{W}_N^*(\hat{P};\epsilon),
\end{align}
where 
\begin{align}
    \limsup_{r\downarrow0} \mathcal{W}^{\dagger}_{\mathcal{Q},N}(\hat{P};\epsilon,r)
    \defined \left\{W\in\mathcal{W}_N:\liminf_{r\downarrow0}\mathsf{dist}\left(W,\mathcal{W}^{\dagger}_{\mathcal{Q},N}(\hat{P};\epsilon,r)\right)=0\right\}.
\end{align}
\end{cor}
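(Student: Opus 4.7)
The plan is to deduce this from Theorem~\ref{thm:rob_mech_converge_opt_mech} by setting both $P$ and the approximating sequence $P_n$ equal to $\hat{P}$, while letting the radii $r_n$ shrink along the subsequence extracted from the definition of $\limsup_{r\downarrow 0}$.

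Fix $W\in \limsup_{r\downarrow 0}\mathcal{W}^{\dagger}_{\mathcal{Q},N}(\hat{P};\epsilon,r)$. By definition, $\liminf_{r\downarrow 0}\mathsf{dist}(W,\mathcal{W}^{\dagger}_{\mathcal{Q},N}(\hat{P};\epsilon,r))=0$, so I can extract a sequence $r_n\downarrow 0$ with $\mathsf{dist}(W,\mathcal{W}^{\dagger}_{\mathcal{Q},N}(\hat{P};\epsilon,r_n))\to 0$. For $r_n$ small enough I first argue that $\mathcal{W}^{\dagger}_{\mathcal{Q},N}(\hat{P};\epsilon,r_n)$ is non-empty: since $\epsilon>\epsilon_{\min}(\hat{P})$, for all sufficiently large $n$ we have $\epsilon-C_L r_n\geq \epsilon_{\min}(\hat{P})$, so Theorem~\ref{thm:desn_unfpri_mech} gives $\mathcal{D}_{\mathcal{Q},N}(\hat{P};\epsilon,r_n)\supseteq\mathcal{D}_N(\hat{P};\epsilon-C_L r_n)\neq\emptyset$, and Lemma~\ref{lem:uni_pri_mech_well_defined}(ii) then guarantees that $\mathcal{W}^{\dagger}_{\mathcal{Q},N}(\hat{P};\epsilon,r_n)$ is non-empty. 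I can therefore pick $W^{\dagger}_n\in\mathcal{W}^{\dagger}_{\mathcal{Q},N}(\hat{P};\epsilon,r_n)$ with $\|W-W^{\dagger}_n\|_1\to 0$ (taking $W^{\dagger}_n$ within $\mathsf{dist}(W,\mathcal{W}^{\dagger}_{\mathcal{Q},N}(\hat{P};\epsilon,r_n))+1/n$ of $W$, which is possible since the set is non-empty).

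Now I apply Theorem~\ref{thm:rob_mech_converge_opt_mech} with $P=\hat{P}$ and $P_n=\hat{P}$ for every $n$. The hypotheses are trivially met: $\hat{P}\in\mathcal{Q}$, $\|\hat{P}-\hat{P}\|_1=0\leq r_n$, $r_n\downarrow 0$, and $\epsilon>\epsilon_{\min}(\hat{P})$. The theorem yields
\begin{equation*}
\lim_{n\to\infty}\mathsf{dist}(W^{\dagger}_n,\mathcal{W}_N^*(\hat{P};\epsilon))=0.
\end{equation*}
Combining with $\|W-W^{\dagger}_n\|_1\to 0$ via the triangle inequality $\mathsf{dist}(W,\mathcal{W}_N^*(\hat{P};\epsilon))\leq \|W-W^{\dagger}_n\|_1+\mathsf{dist}(W^{\dagger}_n,\mathcal{W}_N^*(\hat{P};\epsilon))$ gives $\mathsf{dist}(W,\mathcal{W}_N^*(\hat{P};\epsilon))=0$. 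Since $\mathcal{W}_N^*(\hat{P};\epsilon)$ is closed (this is the compactness fact invoked in the proof of Theorem~\ref{thm:optimaltooptimal_mainthm} via Lemma~\ref{lem:W_star_compact}), this forces $W\in\mathcal{W}_N^*(\hat{P};\epsilon)$, which establishes the claimed inclusion.

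The only mildly non-routine step is ensuring $\mathcal{W}^{\dagger}_{\mathcal{Q},N}(\hat{P};\epsilon,r_n)\neq\emptyset$ along the subsequence, so that the approximating $W^{\dagger}_n$ actually exist; this is why the strict inequality $\epsilon>\epsilon_{\min}(\hat{P})$ (rather than $\geq$) is essential, mirroring the discontinuity phenomenon at the boundary illustrated in Example~\ref{Example:LackContinuity}. Everything else is a mechanical chaining of the already-established theorem, the triangle inequality, and closedness of the set of optimal mechanisms.
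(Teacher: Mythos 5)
Your proof is correct and is exactly the paper's argument: the paper derives the corollary by applying Theorem~\ref{thm:rob_mech_converge_opt_mech} with $P=P_n=\hat{P}$ and radii $r_n\downarrow0$, which is precisely what you do, with the routine details (non-emptiness of $\mathcal{W}^{\dagger}_{\mathcal{Q},N}(\hat{P};\epsilon,r_n)$, the triangle inequality, and closedness of $\mathcal{W}_N^*(\hat{P};\epsilon)$) correctly filled in.
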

Theorem~\ref{thm:desn_unfpri_mech} shows that in terms of performance, optimal privacy mechanisms approximate optimal uniform privacy mechanisms in the small $r$ regime. The previous corollary shows that, in the same regime, optimal uniform privacy mechanisms are {\it geometrically close} to the set of optimal privacy mechanisms. These two results evidence an intrinsic relation between optimal privacy mechanisms and their uniform counterparts.
\section{Numerical Experiments}
\label{sec::Num_Exp}

We illustrate some of the results derived in the previous sections through two numerical experiments. The first experiment, conducted on a synthetic dataset, illustrates the convergence of the empirical privacy-utility function and optimal privacy mechanisms to their corresponding limits as the sample size increases. The second experiment, performed on a real-world dataset, displays the discrepancy between the privacy-utility guarantees during the design and testing of a privacy mechanism.

\subsection{Synthetic Dataset}

Here we illustrate the convergence of the empirical privacy-utility function and optimal privacy mechanisms as the sample size increases. The joint distribution matrix $\mathbf{P}$, determined by $\mathbf{P}_{s,x}\defined P(s,x)$, is chosen to be
\begin{equation}
    \mathbf{P} = \left(\begin{matrix} 0.42 & 0.18 \\ 0.16 & 0.24 \end{matrix}\right),
\end{equation}
and both privacy leakage and utility are measured using $\chi^2$-information, i.e., the $f$-information associated to the function $f(t) = (t-1)^2$. For any given value of $n$, we randomly select $n$ i.i.d.~samples $\{(s_i,x_i)\}_{i=1}^n$ drawn from the joint distribution $P$ and compute the empirical distribution $\hat{P}_n$.

In Figure~\ref{fig:PUFunc_chi_square} (top left), we depict the privacy-utility function $\mathsf{H}_{\chi^2}(P;\cdot)$ and the empirical privacy-utility function $\mathsf{H}_{\chi^2}(\hat{P}_n;\cdot)$ for three different values of $n$. Note that the privacy-utility function of the empirical distribution converges (pointwise) to the corresponding function for the true distribution, i.e., for any given $\epsilon\geq0$ we have that $\mathsf{H}_{\chi^2}(\hat{P}_n;\epsilon)$ approaches $\mathsf{H}_{\chi^2}(P;\epsilon)$ as $n$ grows. This corroborates the conclusion of Proposition~\ref{Prop:ContinuityH} which establishes that, for any given $\epsilon\in\Reals$, $\mathsf{H}(\cdot;\epsilon)$ is continuous over $\{Q\in\mathcal{Q}: \epsilon_{\min}(Q)< \epsilon\}$. In Figure~\ref{fig:PUFunc_chi_square} (top right), we depict the corresponding optimal privacy mechanisms. In order to visualize a privacy mechanism $P_{Y|X}$ in the xy-plane, we let the x-axis and y-axis to be the values of $P_{Y|X}(0|0)$ and $P_{Y|X}(1|1)$, respectively. Observe that optimal privacy mechanisms are not unique. Hence, depending on the algorithm used to obtain such mechanisms, a sequence of empirical optimal privacy mechanisms may not converge to a fixed mechanism. However, as observed from Figure~\ref{fig:PUFunc_chi_square} (top right), the distance between any such sequence and the set of optimal privacy mechanisms for the true distribution will necessarily converge to zero which echoes Theorem~\ref{thm:optimaltooptimal_mainthm}.

Finally we scatter plot the largest (signed) gap, denoted by $\Delta_n$, between $\mathsf{H}_{\chi^2}(P_n;\cdot)$ and $\mathsf{H}_{\chi^2}(P;\cdot)$. In particular, the absolute value of $\Delta_n$ is equal to the uniform norm of the function $\mathsf{H}_{\chi^2}(P_n;\cdot)-\mathsf{H}_{\chi^2}(P;\cdot)$, i.e.,
\begin{align}
|\Delta_n| = \sup_{\epsilon\in[0,\infty)} |\mathsf{H}_{\chi^2}(P_n;\epsilon)-\mathsf{H}_{\chi^2}(P;\epsilon)|.
\end{align}
We depict the value of $\Delta_n$ in Figure~\ref{fig:PUFunc_chi_square} (bottom). As shown, when the number of samples increases, $|\Delta_n|$ tends to decrease which illustrates the uniform convergence established in Corollary~\ref{cor::unif_conv_PUF}. Due to the mismatch between the empirical and true distributions, privacy and utility might be over or under-estimated, leading to the variable sign of $\Delta_n$.
\begin{figure}[t!]
\centering
\includegraphics[width=0.48\textwidth]{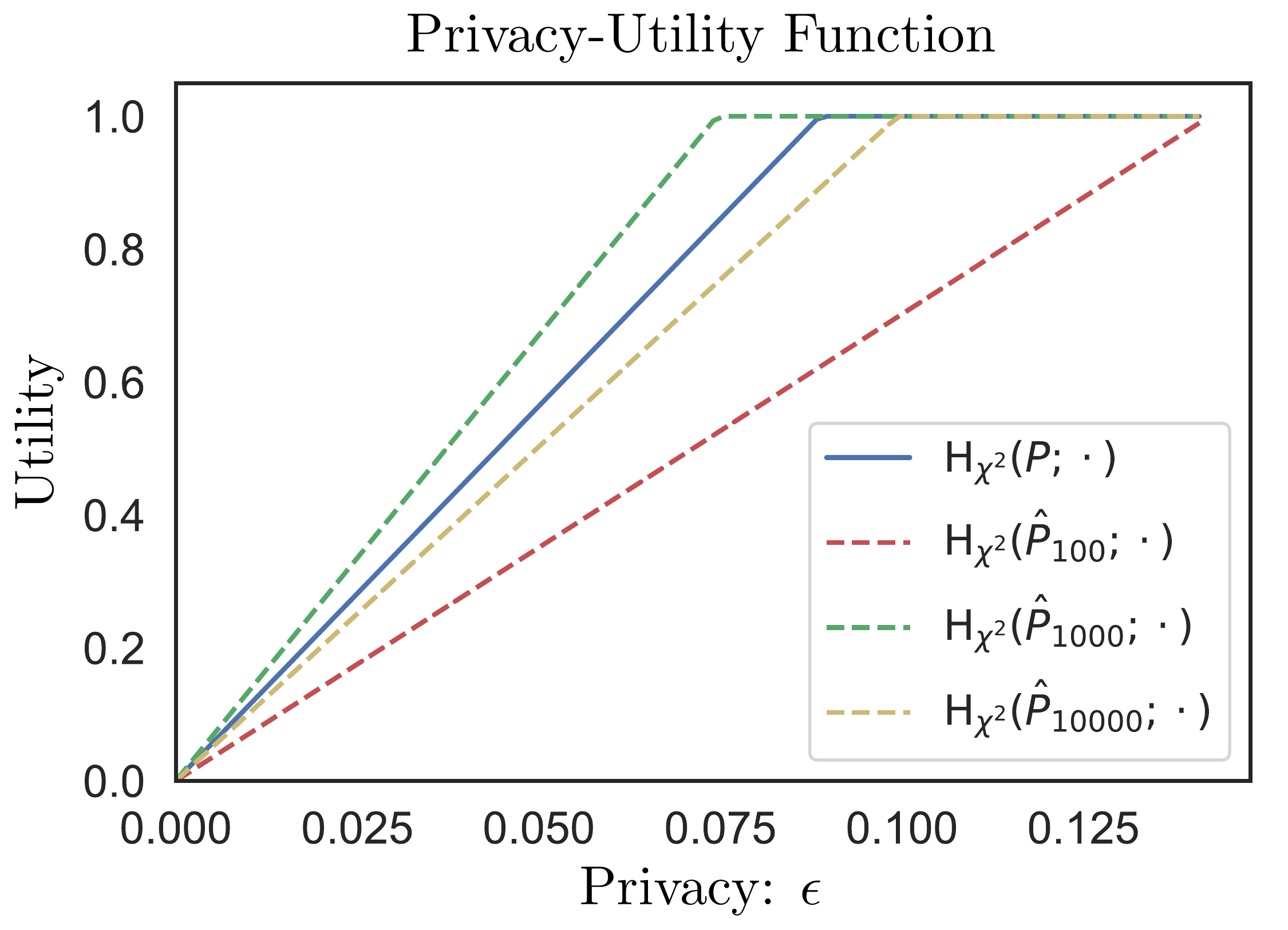}
\includegraphics[width=0.48\textwidth]{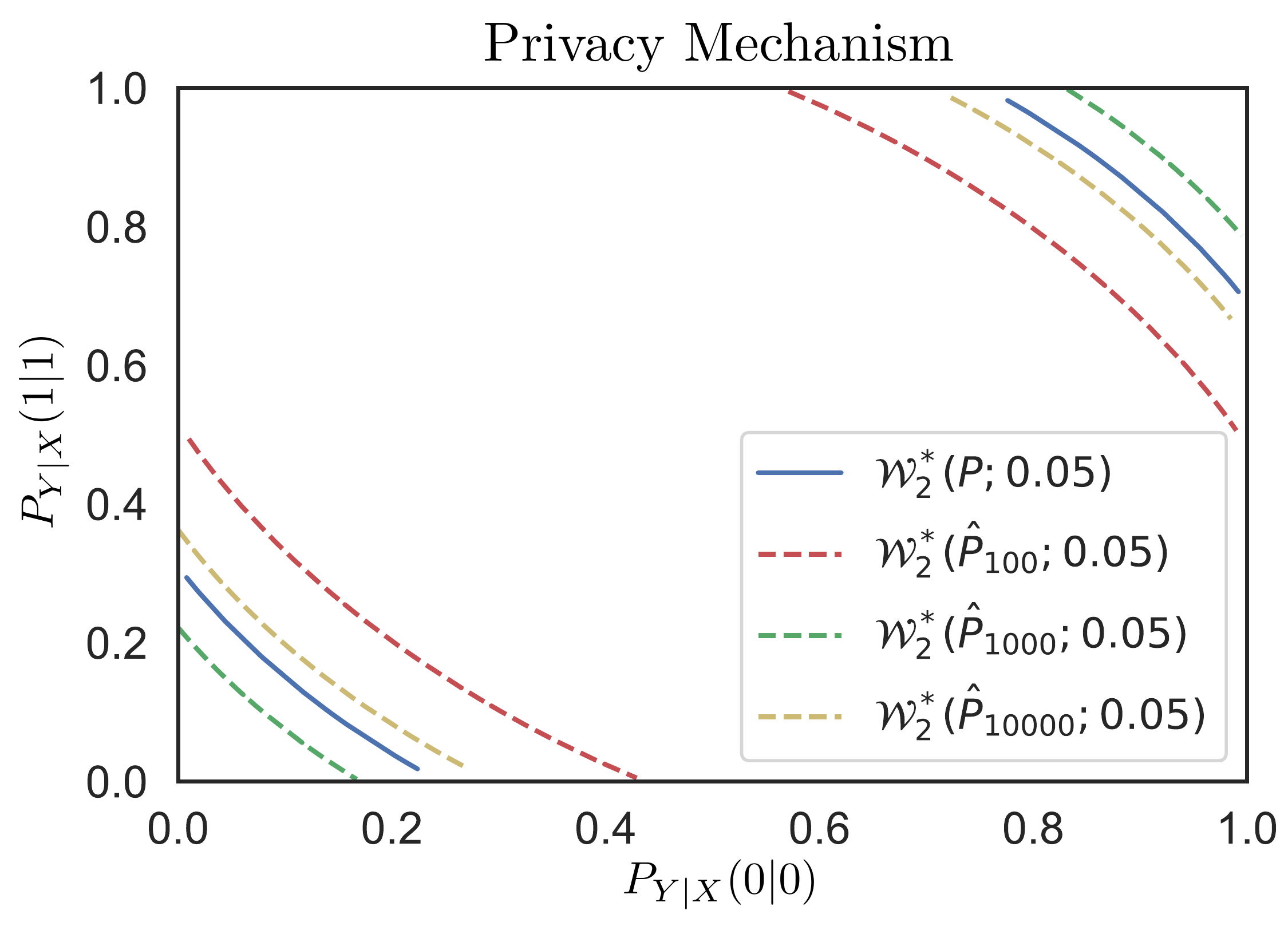}
\includegraphics[width=0.5\textwidth]{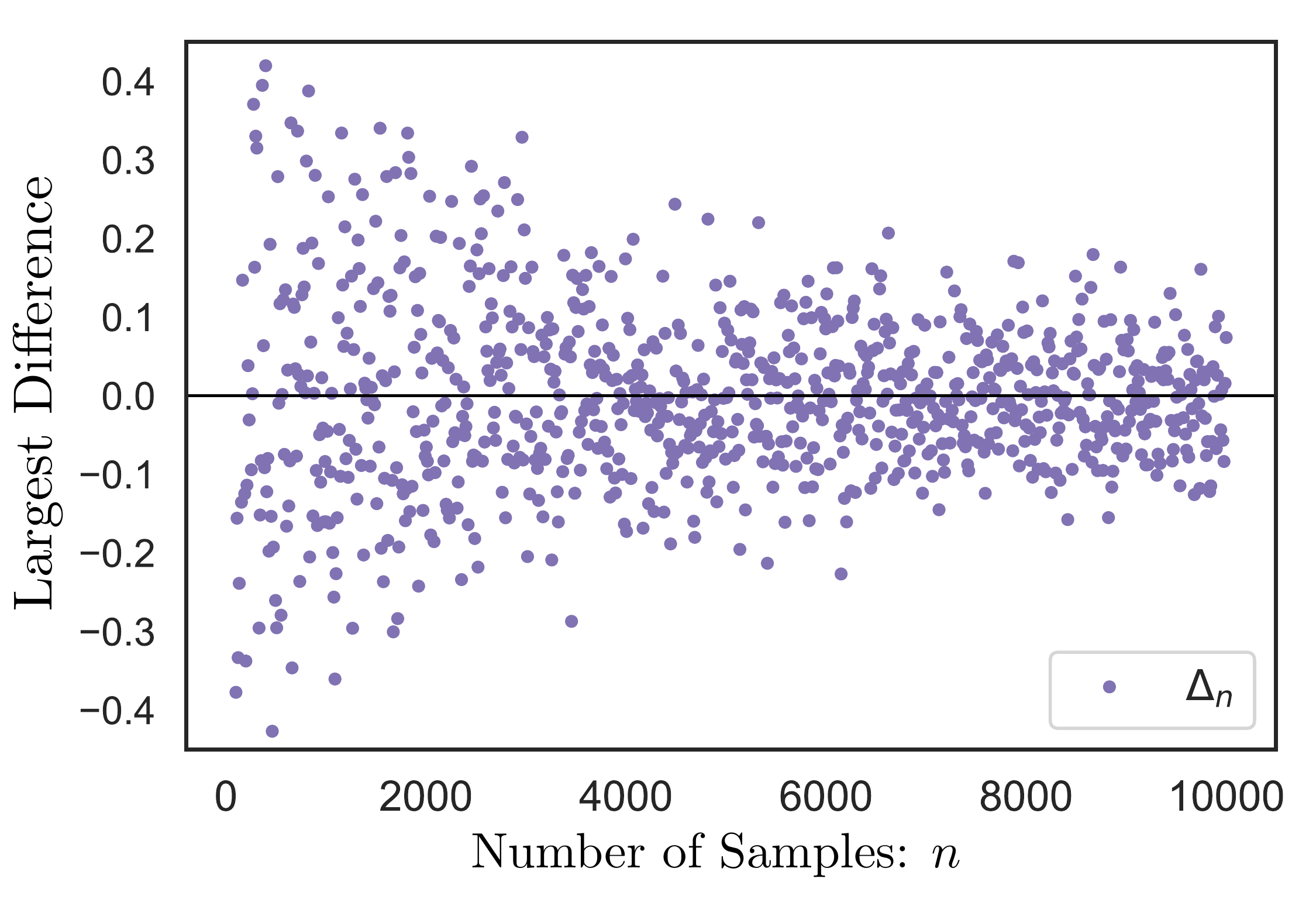}
\caption{\small{Both privacy leakage and utility are measured using the $f$-information with $f(t)=(t-1)^2$. Top left: privacy-utility function $\mathsf{H}_{\chi^2}(P;\cdot)$ and empirical privacy-utility function $\mathsf{H}_{\chi^2}(\hat{P}_n;\cdot)$. Top right: corresponding optimal privacy mechanisms for $\epsilon=0.05$. Bottom: largest (signed) difference between $\mathsf{H}_{\chi^2}(P_n;\cdot)$ and $\mathsf{H}_{\chi^2}(P;\cdot)$.
}}
\label{fig:PUFunc_chi_square}
\end{figure}

\subsection{ProPublica's COMPAS Recidivism Dataset}
\label{Section:SimulationPc}

We now illustrate Theorems~\ref{thm:rob_PCG} and \ref{thm:bounds_AMI_SMI} in Section~\ref{Section:PointwiseRobustness} through ProPublica's COMPAS dataset \cite{angwin2016machine}, which contains the criminal history, jail and prison time, demographics and COMPAS risk scores for defendants in Broward County from 2013 and 2014. We process the original dataset by dropping records with missing information and quantizing some of the interest variables. Our final dataset contains 5278 records. We choose the private variable (S): $\textfn{Race}\in\{\textfn{Caucasian}, \textfn{African-American}\}$; the useful variable (X): $\textfn{PriorCounts}\in\{0, 1 - 3, >3\}$ and $\textfn{AgeCategory}\in\{{<25}, {25 - 45}, {>45}\}$. Hence, the size of support sets are $|\mathcal{S}|=2$ and $|\mathcal{X}|=9$. 

Given $n\in\mathbb{N}$, we choose $n$ records from the dataset as our training set and another $n$ different records as our testing set. We use probability of correctly guessing to measure both privacy and utility. We compute the empirical distribution $\hat{P}_{n,\textfn{train}}$ based on the training set and let $W_n$ be the privacy mechanism that solves the following optimization problem:
\begin{align}
    \max_{W\in\mathcal{W}_R}\  &\calU_c(\hat{P}_{n,\textfn{train}},W)\\
    \sto~& \calL_c(\hat{P}_{n,\textfn{train}},W)\leq0.65,
\end{align}
where $\mathcal{W}_R$ denotes the set of {\it randomized response mechanisms} \cite{warner1965randomized,kairouz2014extremal}, i.e., the set of all privacy mechanisms $W$ with output alphabet $\mathcal{X}$ such that, for some $\rho\geq0$,
\begin{align}
    W(i,j) =
    \begin{cases}
        \dfrac{e^\rho}{e^\rho+|\mathcal{X}|-1} & \text{if } i = j,\\
        \dfrac{1}{e^\rho+|\mathcal{X}|-1} & \text{if } i\neq j.
    \end{cases}
\end{align}
Then, we let $\hat{P}_{n,\textfn{test}}$ be the empirical distribution of the testing set and compute the privacy-utility guarantees attained by $W_n$ for this distribution. Finally, we evaluate the discrepancy between the privacy-utility guarantees provided for $\hat{P}_{n,\textfn{train}}$ and $\hat{P}_{n,\textfn{test}}$:
\begin{align}
    \Delta_{L,n} &\defined |\calL_c(\hat{P}_{n,\textfn{test}},W_n)-\calL_c(\hat{P}_{n,\textfn{train}},W_n)|,\\
    \Delta_{U,n} &\defined |\calU_c(\hat{P}_{n,\textfn{test}},W_n)-\calU_c(\hat{P}_{n,\textfn{train}},W_n)|.
\end{align}
By the triangle inequality and Theorem~\ref{thm:rob_PCG}, with probability at least $1-\beta$, these two discrepancies are upper bounded by
\begin{align}
\label{eq:UpperBoundExperiment}
    \mathsf{UpperBound}_n \defined 2\sqrt{\frac{2}{n}\left(|\mathcal{S}|\cdot|\mathcal{X}|-\log\beta\right)}.
\end{align}
Figure~\ref{fig:Disc_pc} depicts the discrepancies $\Delta_{L,n}$ and $\Delta_{U,n}$, as functions of $n$, along with the upper bound $\mathsf{UpperBound}_n$ for $\beta=10^{-1}$. Observe that when number of samples increases, the discrepancy between the privacy-utility guarantees tends to decrease.

\begin{figure}[t!]
\centering
\includegraphics[width=0.96\textwidth]{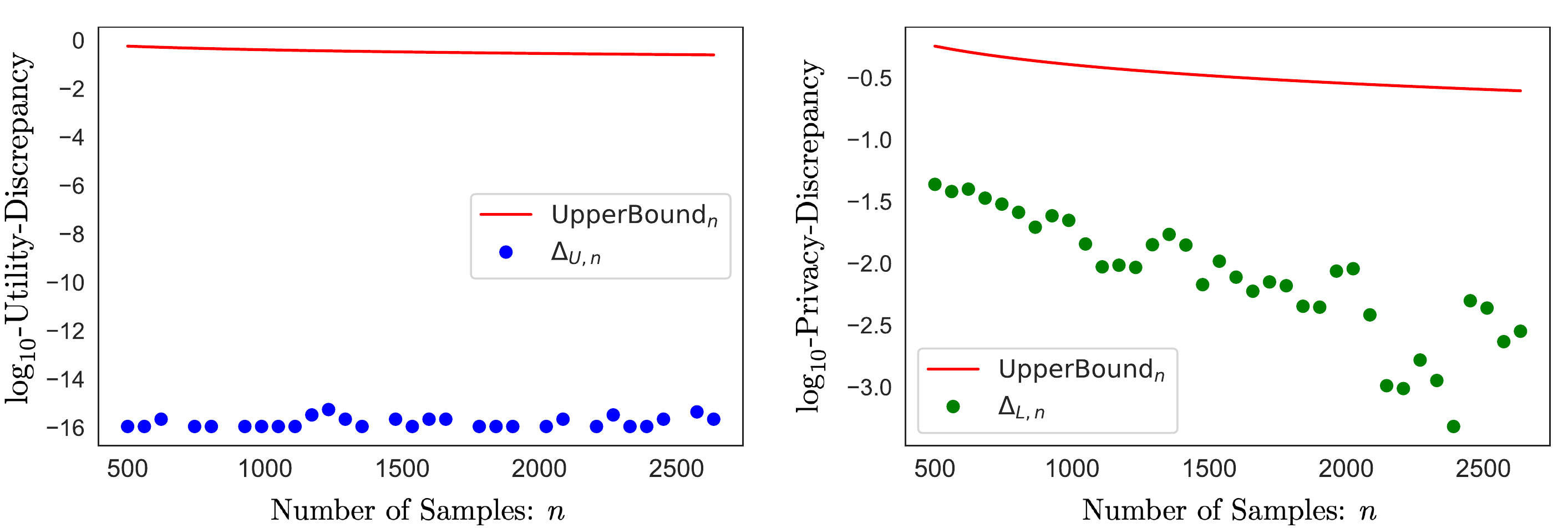}
\caption{\small{\rev{Both privacy leakage and utility are measured using probability of correctly guessing. Left: discrepancy between utility guarantees for the training and testing sets. Right: discrepancy between privacy guarantees for the training and testing sets. In both pictures the theoretical upper bound in \eqref{eq:UpperBoundExperiment} is shown in red.}}}
\label{fig:Disc_pc}
\end{figure}

To further illustrate our results, we perform a similar experiment using Arimoto's mutual information of order $2$. Specifically, we compute the privacy mechanism $W_n$ by solving the following optimization problem:
\begin{align}
    \max_{W\in\mathcal{W}_Z}\  &\calU_{2}^{A}(\hat{P}_{n,\textfn{train}},W)\\
    \sto~& \calL_{2}^{A}(\hat{P}_{n,\textfn{train}},W) \leq 0.05,
\end{align}
where $\mathcal{W}_Z$ denotes the set of privacy mechanisms $W$ with output alphabet $\mathcal{X}$ such that, for some $\bar{x}\in\mathcal{X}$ and $\zeta\geq0$,
\begin{align}
    W(i,j) =
    \begin{cases}
        1 & \text{if } i = j = \bar{x},\\
        1-\zeta & \text{if } i=j \neq \bar{x},\\
        \zeta & \text{if } i\neq \bar{x}, j=\bar{x}.
    \end{cases}
\end{align}
In the binary case, $\mathcal{W}_Z$ is nothing but the collection of Z-channels. It has been proved \cite{asoodeh2017estimation} that these channels are capable to achieve optimal privacy-utility trade-offs in some cases. Let $\Delta_{L,n}$ (resp.~$\Delta_{U,n}$) be the discrepancy between the privacy (resp.~utility) guarantees for $\hat{P}_{n,\textfn{train}}$ and $\hat{P}_{n,\textfn{test}}$. By Theorem~\ref{thm:bounds_AMI_SMI}, with probability at least $1-\beta$, $\Delta_{L,n}$ and $\Delta_{U,n}$ are upper bounded by
\begin{align}
    &\text{Privacy:} \mathsf{UpperBound}_n \defined 8\sqrt{\frac{2}{n}\left(|\mathcal{S}|\cdot|\mathcal{X}|-\log\beta\right) \cdot |\mathcal{S}|}, \label{eq:AMI_pri_UpperBoundExperiment}\\
    &\text{Utility:} \mathsf{UpperBound}_n \defined 8\sqrt{\frac{2}{n}\left(|\mathcal{S}|\cdot|\mathcal{X}|-\log\beta\right) \cdot |\mathcal{X}|},\label{eq:AMI_uti_UpperBoundExperiment}
\end{align}
respectively. Figure~\ref{fig:Disc_AMI} depicts the discrepancies $\Delta_{L,n}$ and $\Delta_{U,n}$, as functions of $n$, along with their corresponding upper bounds for $\beta=10^{-1}$.

\begin{figure}[t!]
\centering
\includegraphics[width=0.96\textwidth]{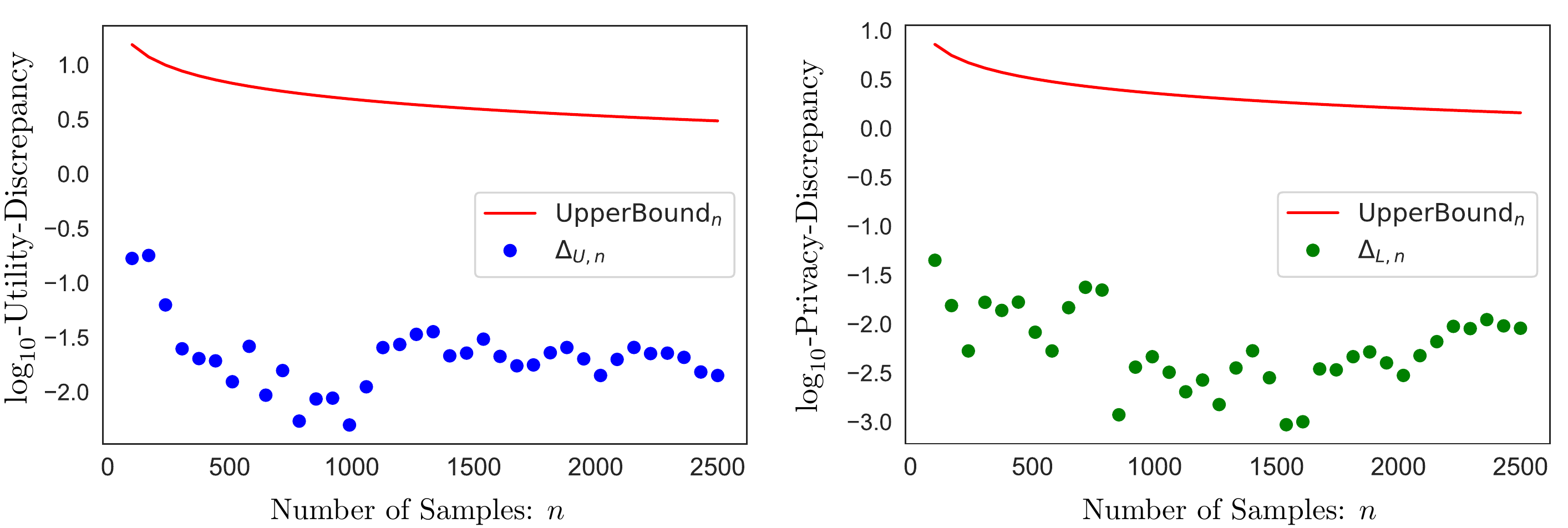}
\caption{\small{Both privacy leakage and utility are measured using Arimoto's mutual information of order $2$. Left: discrepancy between utility guarantees for the training and testing sets. Right: discrepancy between privacy guarantees for the training and testing sets. The theoretical upper bounds, in \eqref{eq:AMI_pri_UpperBoundExperiment} and \eqref{eq:AMI_uti_UpperBoundExperiment} respectively, are shown in red.}}
\label{fig:Disc_AMI}
\end{figure}

\section{Concluding Remarks}
\label{Section:Conclusion}

In this work, we analyzed the effect of a limited sample size on the disclosure of data under privacy constraints. We considered a setting where data is released upon observing non-private features correlated with a set of private features. We evaluated privacy and utility using one out of five of the most commonly used information leakage and utility measures in information-theoretic privacy: probability of correctly guessing, $f$-information, Arimoto's mutual information, Sibson's mutual information, and maximal $\alpha$-leakage.

The effect of a limited sample size was assessed via probabilistic upper bounds for the difference between the privacy-utility guarantees for the empirical and true distributions. An important feature of these bounds is that they are completely independent of the privacy mechanism at hand. On a technical level, the proofs of these bounds depend on large deviations results already available in the literature and continuity properties of information leakage measures established in this work. Furthermore, we have established new continuity properties of privacy-utility functions. Using these properties, we have shown that the limit of a convergent sequence of optimal privacy mechanisms is an optimal privacy mechanism itself.

In order to mitigate the effect of a limited sample size on the privacy guarantees delivered to the true distribution, we introduced the notion of \emph{uniform} privacy mechanisms. By definition, these mechanisms provide a specific privacy guarantee for \emph{every} distribution in a given subset of the probability simplex. In particular, when this subset is a neighborhood of the empirical distribution, large deviations results imply that privacy is guaranteed for the true distribution with high probability. While the construction of optimal uniform privacy mechanisms might be challenging, we proved that these mechanisms can be approximated in a natural way by optimal privacy mechanisms in the non-uniform sense. More specifically, we have proved that an optimal privacy mechanism for the empirical distribution delivers a slightly weaker privacy guarantee for a whole neighborhood of the empirical distribution and performs almost as well as any optimal uniform privacy mechanism. By establishing convergence results regarding optimal uniform privacy mechanisms, we have further exhibited the intrinsic relation between optimal privacy mechanisms and their uniform counterparts.

While this work predominantly focused on large deviations bounds, the continuity properties derived in this paper can be applied together with contemporary results to derive similar upper bounds in other estimation frameworks, e.g., $\ell_1$ minimax estimation.

\section*{Acknowledgements}

The authors would like to thank Dr. Shahab Asoodeh (Harvard University) for his keen comments regarding the proof of Lemma~\ref{lem:cut_lesslikely_symbols}. The authors would also like to thank the anonymous reviewers and the associate editor for their valuable and constructive feedback.

\appendix
\section{Proofs from Section~\ref{Section:PointwiseRobustness}}
\label{Appendix:ProofsIII}

\subsection{Lemma~\ref{lem:rob_PCG1}}
\label{Appendix:ProofPc}

Consider the following elementary lemma whose proof is included for the sake of completeness.

\begin{lem}
\label{lem:VariationSY2SX}
Let $Q_1,Q_2\in\mathcal{P}$ and $W\in\mathcal{W}$ be given. For each $i\in\{1,2\}$, let $S_i \to X_i \to Y_i$ be such that $P_{S_i,X_i} = Q_i$ and $P_{Y_i|X_i} = W$. Then,
\begin{equation}
    \|P_{S_1,Y_1} - P_{S_2,Y_2}\|_1 \leq \|Q_1 - Q_2\|_1.
\end{equation}
\end{lem}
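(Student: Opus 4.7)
The plan is to apply the standard data-processing inequality for total variation distance. The map $(s,x) \mapsto (s,y)$ defined by $(S,X) \to (S,Y)$, where $Y$ is obtained by passing $X$ through $W$, is a (stochastic) transition kernel from $\mathcal{S}\times\mathcal{X}$ to $\mathcal{S}\times\mathcal{Y}$ given by $K((s',y)\mid(s,x)) = \mathbf{1}\{s' = s\}\,W(x,y)$. Since applying a common channel to two distributions cannot increase their $\ell_1$ distance, the result follows immediately. However, rather than invoke this black-box fact, I would give a two-line direct proof to keep the appendix self-contained.

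First I would write out $P_{S_i,Y_i}(s,y) = \sum_{x\in\mathcal{X}} Q_i(s,x)\,W(x,y)$, which is the defining relation \eqref{eq:ConventionProductTransitionMatrices} specialized to this setting. Then for each pair $(s,y)$, I would apply the triangle inequality pointwise:
\begin{equation*}
\bigl|P_{S_1,Y_1}(s,y) - P_{S_2,Y_2}(s,y)\bigr|
= \Bigl|\sum_{x\in\mathcal{X}} \bigl(Q_1(s,x) - Q_2(s,x)\bigr) W(x,y)\Bigr|
\leq \sum_{x\in\mathcal{X}} \bigl|Q_1(s,x) - Q_2(s,x)\bigr|\, W(x,y).
\end{equation*}

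Finally I would sum over all $(s,y)\in\mathcal{S}\times\mathcal{Y}$, interchange the order of summation, and use $\sum_{y\in\mathcal{Y}} W(x,y) = 1$ (which holds for every $x\in\mathcal{X}$ since $W\in\mathcal{W}$) to obtain
\begin{equation*}
\|P_{S_1,Y_1} - P_{S_2,Y_2}\|_1
\leq \sum_{s,x} |Q_1(s,x) - Q_2(s,x)| \sum_{y\in\mathcal{Y}} W(x,y)
= \|Q_1 - Q_2\|_1,
\end{equation*}
which is the desired inequality. There is no real obstacle here; the only care required is bookkeeping the order of summation and ensuring the alphabets and marginalization conventions match those fixed in Section~\ref{Subsection:Notation}.
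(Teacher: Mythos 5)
Your proposal is correct and is essentially identical to the paper's own proof: both expand $P_{S_i,Y_i}(s,y)=\sum_x Q_i(s,x)W(x,y)$, apply the triangle inequality pointwise, and sum out $y$ using $\sum_y W(x,y)=1$. The data-processing remark is a nice framing but adds nothing beyond the direct computation the paper already gives.
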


\begin{proof}
By the definition of $\|\cdot\|_1$, we have that
\begin{equation}
    \|P_{S_1,Y_1} - P_{S_2,Y_2}\|_1 = \sum_{s\in\mathcal{S}} \sum_{y\in\mathcal{Y}} |P_{S_1,Y_1}(s,y) - P_{S_2,Y_2}(s,y)|.
\end{equation}
By assumption, for each $i\in\{1,2\}$, $P_{S_i,Y_i}(s,y) = \sum_{x} Q_i(s,x) W(x,y)$. Hence, by the triangle inequality,
\begin{align}
    \|P_{S_1,Y_1} - P_{S_2,Y_2}\|_1 &\leq \sum_{y\in\mathcal{Y}} \sum_{s\in\mathcal{S}} \sum_{x\in\mathcal{X}} |Q_1(s,x) - Q_2(s,x)| W(x,y)\\
    &= \|Q_1-Q_2\|_1,
\end{align}
where we used the fact that $\sum_y W(x,y) =1$ for all $x\in\mathcal{X}$.
\end{proof}

The proof of Lemma~\ref{lem:rob_PCG1} relies on the following elementary observation: Let $n\in\mathbb{N}$. If $a_i,b_i\in\mathbb{R}$ for all $i=1,\ldots,n$, then
\begin{equation}
\label{eq:DifMaxMaxDif}
    \left|\max_{i=1,\ldots,n} a_i - \max_{i=1,\ldots,n} b_i\right| \leq \max_{i=1,\ldots,n} |a_i - b_i| \leq \sum_{i=1}^n |a_i - b_i|.
\end{equation}

\begin{proof}
For ease of notation, let $\Delta_L \defined |\mathcal{L}_c(Q_1,W)-\mathcal{L}_c(Q_2,W)|$. For each $i\in\{1,2\}$, let $S_i\to X_i \to Y_i$ be such that $P_{S_i,X_i} = Q_i$ and $P_{Y_i|X_i}=W$. With this notation, $\mathcal{L}_c(Q_i,W) = P_c(S_i|Y_i)$ where
\begin{equation}
    P_c(S_i|Y_i) = \sum_{y\in\mathcal{Y}} \max_{s\in\mathcal{S}} P_{S_i,Y_i}(s,y).
\end{equation}
By the triangle inequality, we have that
\begin{equation}
    \Delta_L \leq \sum_{y\in\mathcal{Y}} \left| \max_{s\in \mathcal{S}} P_{S_1,Y_1}(s,y) - \max_{s\in \mathcal{S}} P_{S_2,Y_2}(s,y)\right|.
\end{equation}
An immediate application of \eqref{eq:DifMaxMaxDif} leads to
\begin{align}
    \Delta_L &\leq \sum_{y\in\mathcal{Y}} \sum_{s\in\mathcal{S}} |P_{S_1,Y_1}(s,y)-P_{S_2,Y_2}(s,y)|\\
    &= \|P_{S_1,Y_1} - P_{S_2,Y_2}\|_1\\
    &\leq \|Q_1 - Q_2\|_1,
\end{align}
where the last inequality follows from Lemma~\ref{lem:VariationSY2SX}. Mutatis mutandis, it can be shown that
\begin{align}
    |\mathcal{U}_c(Q_1,W)-\mathcal{U}_c(Q_2,W)|
    &\leq \|Q_1-Q_2\|_1,
\end{align}
as required.
\end{proof}
\subsection{Lemma~\ref{lem:rob_f_inf1}}
\label{Appendix:Prooff}

The proof of the following lemma is similar to the one of Lemma~\ref{lem:VariationSY2SX}. The details are left to the reader.

\begin{lem}
\label{lem:VariationSXY2SX}
Let $Q_1,Q_2\in\mathcal{P}$ and $W\in\mathcal{W}$ be given. For each $i\in\{1,2\}$, let $S_i \to X_i \to Y_i$ be such that $P_{S_i,X_i} = Q_i$ and $P_{Y_i|X_i} = W$. Then,
\begin{equation}
    \|P_{Y_1} - P_{Y_2}\|_1 \leq \max\{\|P_{S_1} - P_{S_2}\|_1,\|P_{X_1} - P_{X_2}\|_1\} \leq \|Q_1 - Q_2\|_1.
\end{equation}
\end{lem}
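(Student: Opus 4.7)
The plan is to establish the two inequalities in the chain separately, both via direct application of the triangle inequality in $\ell_1$, in the same spirit as the proof of Lemma~\ref{lem:VariationSY2SX}.

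For the first inequality, I would note that since $Y_i$ is produced from $X_i$ through the common channel $W$, the marginal $P_{Y_i}$ satisfies $P_{Y_i}(y) = \sum_{x\in\mathcal{X}} P_{X_i}(x) W(x,y)$. Substituting this and applying the triangle inequality over the inner sum, together with $\sum_y W(x,y) = 1$, gives
\begin{equation*}
    \|P_{Y_1} - P_{Y_2}\|_1 = \sum_{y\in\mathcal{Y}} \left|\sum_{x\in\mathcal{X}} (P_{X_1}(x) - P_{X_2}(x)) W(x,y)\right| \leq \|P_{X_1} - P_{X_2}\|_1,
\end{equation*}
and in particular $\|P_{Y_1} - P_{Y_2}\|_1 \leq \max\{\|P_{S_1} - P_{S_2}\|_1,\|P_{X_1} - P_{X_2}\|_1\}$. (This is essentially the $\ell_1$ data-processing inequality.)

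For the second inequality, I would exploit that $P_{X_i}$ and $P_{S_i}$ are marginals of $Q_i$. Using $P_{X_i}(x) = \sum_{s\in\mathcal{S}} Q_i(s,x)$ and the triangle inequality,
\begin{equation*}
    \|P_{X_1} - P_{X_2}\|_1 = \sum_{x\in\mathcal{X}} \left|\sum_{s\in\mathcal{S}} (Q_1(s,x) - Q_2(s,x))\right| \leq \|Q_1 - Q_2\|_1,
\end{equation*}
and an identical argument with the roles of $\mathcal{S}$ and $\mathcal{X}$ swapped yields $\|P_{S_1} - P_{S_2}\|_1 \leq \|Q_1 - Q_2\|_1$. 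Taking the maximum of the two bounds concludes the proof.

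There is no real obstacle: both steps are one-line triangle-inequality computations, and the only thing worth flagging is that the first step uses the joint construction (common channel $W$) while the second step uses only that the marginals of $Q_i$ are obtained by summing out one coordinate. Together they give the stated chain $\|P_{Y_1} - P_{Y_2}\|_1 \leq \max\{\|P_{S_1} - P_{S_2}\|_1,\|P_{X_1} - P_{X_2}\|_1\} \leq \|Q_1 - Q_2\|_1$.
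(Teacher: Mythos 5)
Your proof is correct, and it follows essentially the same route the paper intends: the paper omits the details, noting only that the argument mirrors that of Lemma~\ref{lem:VariationSY2SX}, namely the triangle inequality combined with $\sum_y W(x,y)=1$ for the channel step and the $\ell_1$-contraction of marginalization for the second inequality, which is exactly what you did.
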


Now we are in position to prove Lemma~\ref{lem:rob_f_inf1}.

\begin{proof}
For ease of notation, let $\Delta_L \defined |\mathcal{L}_f(Q_1,W)-\mathcal{L}_f(Q_2,W)|$. For each $i\in\{1,2\}$, let $S_i\to X_i \to Y_i$ be such that $P_{S_i,X_i} = Q_i$ and $P_{Y_i|X_i}=W$. With this notation, $\mathcal{L}_f(Q_i,W)=I_f(P_{S_i,Y_i})$. By the definition of $f$-information, we have that
\begin{equation}
    \Delta_L = \bigg| \sum_{s\in\mathcal{S}} \sum_{y\in\mathcal{Y}} \left(P_{S_1}(s) P_{Y_1}(y) f\left(\frac{P_{S_1,Y_1}(s,y)}{P_{S_1}(s)P_{Y_1}(y)}\right) 
    - P_{S_2}(s) P_{Y_2}(y) f\left(\frac{P_{S_2,Y_2}(s,y)}{P_{S_2}(s)P_{Y_2}(y)}\right) \right)\bigg|.
\end{equation}
An application of the triangle inequality leads to $\Delta_L \leq {\rm I} + {\rm II}$, where
\begin{align}
\label{eq:DefIII1}
    {\rm I} = \sum_{s\in\mathcal{S}} \sum_{y\in\mathcal{Y}} |P_{S_1}(s) P_{Y_1}(y) - P_{S_2}(s) P_{Y_2}(y)|
    \left|f\left(\frac{P_{S_1,Y_1}(s,y)}{P_{S_1}(s)P_{Y_1}(y)}\right)\right|,
\end{align}
\begin{align}
\label{eq:DefIII2}
    {\rm II} = \sum_{s\in\mathcal{S}} \sum_{y\in\mathcal{Y}} P_{S_2}(s) P_{Y_2}(y)  \left|f\left(\frac{P_{S_1,Y_1}(s,y)}{P_{S_1}(s)P_{Y_1}(y)}\right) - f\left(\frac{P_{S_2,Y_2}(s,y)}{P_{S_2}(s)P_{Y_2}(y)}\right)\right|.
\end{align}
First we provide an upper bound for ${\rm I}$. Recall the definition of the supremum norm in \eqref{eq:DefSupNorm}. Observe that $P_{S_i,Y_i}(s,y) \leq P_{Y_i}(y)$ for all $i\in\{1,2\}$, $s\in\mathcal{S}$, and $y\in\mathcal{Y}$. Hence, for all $s\in\mathcal{S}$ and $y\in\mathcal{Y}$,
\begin{equation}
\label{eq:LipschitzfInformationsBoundProbabilityQuotient}
\max\left\{\frac{P_{S_1,Y_1}(s,y)}{P_{S_1}(s)P_{Y_1}(y)},\frac{P_{S_2,Y_2}(s,y)}{P_{S_2}(s)P_{Y_2}(y)}\right\} \leq \max\left\{\frac{1}{P_{S_1}(s)},\frac{1}{P_{S_2}(s)}\right\} \leq m_S^{-1},
\end{equation}
and thus $\displaystyle \left|f\left(\frac{P_{S_1,Y_1}(s,y)}{P_{S_1}(s)P_{Y_1}(y)}\right)\right| \leq K_{f,m_S^{-1}}$, as $|f(t)|\leq K_{f,m_S^{-1}}$ for all $t\in [0,m_S^{-1}]$. As a consequence,
\begin{align}
    {\rm I} &\leq K_{f,m_S^{-1}} \sum_{s\in\mathcal{S}} \sum_{y\in\mathcal{Y}} |P_{S_1}(s) P_{Y_1}(y) - P_{S_2}(s) P_{Y_2}(y)|\\
    &\leq K_{f,m_S^{-1}} \sum_{s\in\mathcal{S}} \sum_{y\in\mathcal{Y}} \Big[ P_{Y_1}(y)|P_{S_1}(s) - P_{S_2}(s)|
    +P_{S_2}(s)|P_{Y_1}(y)-P_{Y_2}(y)|\Big]\\
    \label{eq:LipschitzfInformationsBoundI} &= K_{f,m_S^{-1}} \left(\| P_{S_1} - P_{S_2} \|_1 + \| P_{Y_1} - P_{Y_2}\|_1\right).
\end{align}
By Lemma~\ref{lem:VariationSXY2SX}, we conclude that
\begin{equation}
    {\rm I} \leq 2 K_{f,m_S^{-1}} \|Q_1 - Q_2\|_1.
\end{equation}
Now we focus on ${\rm II}$. Recall that $f$ is locally Lipschitz by assumption, and thus it is Lipschitz on $[0,m_S^{-1}]$. As defined in \eqref{eq:DefLipschitzConstant}, let $L_{f,m_S^{-1}}$ be the Lipschitz constant of $f$ on the latter interval. Hence, \eqref{eq:LipschitzfInformationsBoundProbabilityQuotient} implies that
\begin{equation}
    \left|f\left(\frac{P_{S_1,Y_1}(s,y)}{P_{S_1}(s)P_{Y_1}(y)}\right) - f\left(\frac{P_{S_2,Y_2}(s,y)}{P_{S_2}(s)P_{Y_2}(y)}\right)\right| \leq L_{f,m_S^{-1}} \left|\frac{P_{S_1,Y_1}(s,y)}{P_{S_1}(s)P_{Y_1}(y)} - \frac{P_{S_2,Y_2}(s,y)}{P_{S_2}(s)P_{Y_2}(y)}\right|.
\end{equation}
As a result, we have that
\begin{equation}
\label{equa:BoundD}
    {\rm II} \leq L_{f,m_S^{-1}} \sum_{s\in\mathcal{S}} \sum_{y\in\mathcal{Y}} \frac{|P_{S_2}(s)P_{Y_2}(y)P_{S_1,Y_1}(s,y)
    -P_{S_1}(s)P_{Y_1}(y)P_{S_2,Y_2}(s,y)|}{P_{S_1}(s)P_{Y_1}(y)}.
\end{equation}
By the triangle inequality, the numerator of the quotient in \eqref{equa:BoundD} is upper bounded by
\begin{equation}
\begin{aligned}
    P_{S_1,Y_1}(s,y) |P_{S_2}(s)P_{Y_2}(y)-P_{S_1}(s)P_{Y_1}(y)|
    + P_{S_1}(s)P_{Y_1}(y) |P_{S_1,Y_1}(s,y)-P_{S_2,Y_2}(s,y)|.
\end{aligned}
\end{equation}
In particular,
\begin{align}
    {\rm II} &\leq L_{f,m_S^{-1}} \sum_{s\in\mathcal{S}} \sum_{y\in\mathcal{Y}} \frac{P_{S_1,Y_1}(s,y)}{P_{S_1}(s)P_{Y_1}(y)} |P_{S_2}(s)P_{Y_2}(y)-P_{S_1}(s)P_{Y_1}(y)|\\
    &\quad
    + L_{f,m_S^{-1}} \sum_{s\in\mathcal{S}} \sum_{y\in\mathcal{Y}} |P_{S_1,Y_1}(s,y)-P_{S_2,Y_2}(s,y)|\\
    &\leq m_S^{-1} L_{f,m_S^{-1}} \Big(\|P_{Y_1}-P_{Y_2}\|_1+\|P_{S_1}-P_{S_2}\|_1\Big) + L_{f,m_S^{-1}}\|P_{S_1,Y_1}-P_{S_2,Y_2}\|_1,
\end{align}
where the last inequality follows from \eqref{eq:LipschitzfInformationsBoundProbabilityQuotient} and the argument used in \eqref{eq:LipschitzfInformationsBoundI}. Hence, Lemma~\ref{lem:VariationSY2SX} and Lemma~\ref{lem:VariationSXY2SX} imply that
\begin{equation}
    {\rm II} 
    \leq \left(2m_S^{-1} + 1\right) L_{f,m_S^{-1}} \|Q_1 - Q_2\|_1.
\end{equation}
Since $\Delta_L \leq {\rm I} + {\rm II}$, we conclude that
\begin{equation}
    \Delta_L \leq \left(2 K_{f,m_S^{-1}} + \left(2m_S^{-1} + 1\right) L_{f,m_S^{-1}}\right) \|Q_1 - Q_2\|_1.
\end{equation}
Mutatis mutandis, it can be shown that
\begin{align}
    |\mathcal{U}_f(Q_1,W)-\mathcal{U}_f(Q_2,W)|
    \leq \left(2 K_{f,m_X^{-1}} + \left(2m_X^{-1} + 1\right) L_{f,m_X^{-1}}\right) \|Q_1 - Q_2\|_1.
\end{align}
The proof is complete.
\end{proof}
\subsection{Theorem~\ref{thm:main_robust}}
\label{Appendix:ProoffGeneralization}

\begin{proof}
By choosing $Q_1=\hat{P}_n$ and $Q_2=P$ in Lemma~\ref{lem:rob_f_inf1}, we have
\begin{align}
    \label{equa:Lip_proof_lem_pointwise_L} |\calL_f(\hat{P}_n,W)-\calL_f(P,W)| &\leq C_{f,m_S} \|\hat{P}_n-P\|_1,
\end{align}
where
\begin{align}
    m_S &= \min\left\{\left\{\sum_{x\in\mathcal{X}} \hat{P}_n(s,x): s\in\mathcal{S}\right\}\cup\left\{\sum_{x\in\mathcal{X}} P(s,x): s\in\mathcal{S}\right\}\right\}.
\end{align}
Observe that, for all $s\in\mathcal{S}$,
\begin{align}
    \sum_{x\in\mathcal{X}} P(s,x) 
    &\geq \sum_{x\in\mathcal{X}} \hat{P}_n(s,x)-\sum_{x\in\mathcal{X}} |\hat{P}_n(s,x)-P(s,x)|\\
    &\geq \sum_{x\in\mathcal{X}} \hat{P}_n(s,x)-\|\hat{P}_n-P\|_1.
\end{align}
In particular, we have that
\begin{equation}
    m_S \geq \left(\min\left\{\sum_{x\in\mathcal{X}} \hat{P}_n(s,x) : s\in\mathcal{S}\right\} - \|\hat{P}_n - P\|_1\right)_+.
\end{equation}
Inequality~\eqref{equa:L1bound_Devroye} shows that, with probability at least $1-\beta$,
\begin{equation}
\label{equa:L1bound_proof_thm_f_pointwise}
    \|\hat{P}_n - P\|_1 \leq \sqrt{\frac{2}{n}\left(|\mathcal{S}|\cdot|\mathcal{X}|-\log\beta\right)},
\end{equation}
and, thus, $m_S \geq \overline{m}_S$ whenever \eqref{equa:L1bound_proof_thm_f_pointwise} holds true. Recall the definitions of $K_{g,u}$ and $L_{g,u}$ in \eqref{eq:DefSupNorm} and \eqref{eq:DefLipschitzConstant}, respectively. It is straightforward to verify that the mappings $u \mapsto K_{f,u}$ and $u \mapsto L_{f,u}$ are non-decreasing. Since the mapping $u \mapsto u^{-1}$ is non-increasing, we conclude that
\begin{equation}
u \mapsto C_{f,u}=2 K_{f,u^{-1}} + (2u^{-1} + 1) L_{f,u^{-1}}
\end{equation}
is non-increasing. Therefore, under \eqref{equa:L1bound_proof_thm_f_pointwise}, we have
\begin{align}
    \label{equa:C_leq_C_bar} C_{f,m_S} \leq C_{f,\overline{m}_S}.
\end{align}
Combining \eqref{equa:Lip_proof_lem_pointwise_L}, \eqref{equa:L1bound_proof_thm_f_pointwise} and \eqref{equa:C_leq_C_bar}, we have that, with probability at least $1-\beta$,
\begin{align}
    |\calL_f(\hat{P}_n,W)-\calL_f(P,W)| 
    &\leq C_{f,\overline{m}_S} \sqrt{\frac{2}{n}\left(|\mathcal{S}|\cdot|\mathcal{X}|-\log\beta\right)}.
\end{align}
Mutatis mutandis, it can be shown that
\begin{align}
    |\calU_f(\hat{P}_n,W)-\calU_f(P,W)| 
    &\leq  C_{f,\overline{m}_X} \sqrt{\frac{2}{n}\left(|\mathcal{S}|\cdot|\mathcal{X}|-\log\beta\right)},
\end{align}
whenever \eqref{equa:L1bound_proof_thm_f_pointwise} holds true.
\end{proof}

\subsection{Proposition~\ref{prop:PreProcessingUtilityReduction}}
\label{Appendix:ProofPreprocessing}

\begin{proof}
Assume that $\tilde{S}$, $\tilde{X}$ and $\tilde{X}_0$ satisfy that $P_{\tilde{S},\tilde{X}} = \hat{P}_n$ and $\tilde{X}_0 = \Pi_{\gamma}(\tilde{X})$. With this notation, it can be verified that $\tilde{S} \to \tilde{X} \to \tilde{X}_0$ and 
\begin{equation}
\label{eq:PreprocessingProofPigamma}
    P_{\tilde{S},\tilde{X}_0} = \hat{P}_n P_{\tilde{X}_0|\tilde{X}} = \hat{P}_{\gamma},
\end{equation}
where, by definition,
\begin{equation}
    \left(\hat{P}_n P_{\tilde{X}_0|\tilde{X}}\right)(s,x_0) = \sum_{x\in\mathcal{X}} \hat{P}_n(s,x) P_{\tilde{X}_0|\tilde{X}}(x_0|x).
\end{equation}
By assumption, $\mathcal{W}^*(\hat{P}_{\gamma};\epsilon) \neq \emptyset$. Let $\tilde{W}\in \mathcal{W}^*(\hat{P}_\gamma;\epsilon)$, i.e.,
\begin{equation}
\label{eq:PreprocessingProofTildeWProperties}
    \mathcal{L}_{f}(\hat{P}_{\gamma},\tilde{W})\leq \epsilon \quad \text{and} \quad \mathcal{U}_{f}(\hat{P}_{\gamma},\tilde{W})=\mathsf{H}_{f}(\hat{P}_{\gamma};\epsilon).
\end{equation}
Let $\tilde{Y}_0$ be such that $\tilde{S} \to \tilde{X} \to \tilde{X}_0 \to \tilde{Y}_0$ and $P_{\tilde{Y}_0 | \tilde{X}_0} = \tilde{W}$. Observe that $P_{\tilde{Y}_0 | \tilde{X}} = P_{\tilde{X}_0|\tilde{X}}\tilde{W}$. In particular,
\begin{equation}
    P_{\tilde{S},\tilde{Y}_0} = \hat{P}_n P_{\tilde{Y}_0 | \tilde{X}} = \hat{P}_n P_{\tilde{X}_0|\tilde{X}}\tilde{W} = \hat{P}_\gamma \tilde{W},
\end{equation}
where the last equality follows from \eqref{eq:PreprocessingProofPigamma}. Therefore,
\begin{equation}
\label{eq:PreprocessingProofL}
     \mathcal{L}_{f}(\hat{P}_{\gamma},\tilde{W}) = I_f(P_{\tilde{S},\tilde{Y}_0}) = \mathcal{L}_{f}(\hat{P}_n,P_{\tilde{Y}_0 | \tilde{X}}).
\end{equation}
Furthermore, Lemma~\ref{lem:cut_lesslikely_symbols} implies that
\begin{equation}
\label{eq:PreprocessingProofU}
    \mathcal{U}_{f}(\hat{P}_{\gamma},\tilde{W}) = I_f(P_{\tilde{X}_0,\tilde{Y}_0}) = I_f(P_{\tilde{X},\tilde{Y}_0}) = \mathcal{U}_{f}(\hat{P}_n,P_{\tilde{Y}_0 | \tilde{X}}).
\end{equation}
By \eqref{eq:PreprocessingProofTildeWProperties}, \eqref{eq:PreprocessingProofL} and \eqref{eq:PreprocessingProofU}, we have
\begin{equation}
\label{eq:PreprocessingProofFeasible}
    \mathcal{L}_{f}(\hat{P}_n,P_{\tilde{Y}_0 | \tilde{X}}) \leq \epsilon \quad \text{and} \quad \mathcal{U}_{f}(\hat{P}_n,P_{\tilde{Y}_0 | \tilde{X}}) = \mathsf{H}_{f}(\hat{P}_{\gamma};\epsilon).
\end{equation}
Recall that, by definition, $\mathsf{H}_{f}(\hat{P}_n;\epsilon) = \sup_W \mathcal{U}_{f}(\hat{P}_n,W)$ where the supremum is taken over all $W\in\mathcal{W}$ such that $\mathcal{L}_f(\hat{P}_n,W) \leq \epsilon$. Thus, \eqref{eq:PreprocessingProofFeasible} implies that
\begin{equation}
    \mathsf{H}_{f}(\hat{P}_n;\epsilon) \geq \mathcal{U}_{f}(\hat{P}_n,P_{\tilde{Y}_0 | \tilde{X}}) = \mathsf{H}_{f}(\hat{P}_{\gamma};\epsilon),
\end{equation}
as we wanted to prove.
\end{proof}
\subsection{Lemma~\ref{lem:LipschitzArimotoMI}}
\label{Appendix:ProofAMI}

\begin{proof}
For ease of notation, let $\Delta_L \defined |\calL^{\rm A}_{\alpha}(Q_1,W)-\calL^{\rm A}_{\alpha}(Q_2,W)|$. For each $i\in\{1,2\}$, let $S_i \to X_i \to Y_i$ be such that $P_{S_i,X_i} = Q_i$ and $P_{Y_i|X_i} = W$. With this notation,
\begin{align}
    \Delta_L  &= |I^{\rm A}_{\alpha}(P_{S_1,Y_1}) - I^{\rm A}_{\alpha}(P_{S_2,Y_2})|\\
    &= \frac{\alpha}{\alpha-1} \left|\log \frac{\sum_y \|P_{S_1,Y_1}(\cdot,y)\|_\alpha}{\sum_y \|P_{S_2,Y_2}(\cdot,y)\|_\alpha} - \log \frac{\|P_{S_1}(\cdot)\|_\alpha}{\|P_{S_2}(\cdot)\|_\alpha}\right|,
\end{align}
where the second equality follows directly from the definition of Arimoto's mutual information. By the triangle inequality, we obtain that $\displaystyle \frac{\alpha-1}{\alpha} \Delta_L \leq {\rm I} + {\rm II}$ where
\begin{align}
    {\rm I} &\defined \left|\log\frac{\sum_y \|P_{S_1,Y_1}(\cdot,y)\|_\alpha}{\sum_y \|P_{S_2,Y_2}(\cdot,y)\|_\alpha}\right|,\\
    {\rm II} &\defined \left|\log\frac{\|P_{S_1}(\cdot)\|_\alpha}{\|P_{S_2}(\cdot)\|_\alpha}\right|.
\end{align}
Notice that if $a,b>0$, then $\displaystyle \left|\log\frac{a}{b}\right| \leq \frac{|a-b|}{\min\{a,b\}}$. Hence,
\begin{equation}
\label{eq:IAlphaLeakage}
    {\rm I} \leq \frac{\left|\sum_y \|P_{S_1,Y_1}(\cdot,y)\|_\alpha-\sum_y \|P_{S_2,Y_2}(\cdot,y)\|_\alpha\right|}{\min\{\sum_y \|P_{S_1,Y_1}(\cdot,y)\|_\alpha,\sum_y \|P_{S_2,Y_2}(\cdot,y)\|_\alpha\}}.
\end{equation}
Observe that, by the triangle inequality,
\begin{equation}
    \bigg|\sum_{y\in\mathcal{Y}} \|P_{S_1,Y_1}(\cdot,y)\|_\alpha-\sum_{y\in\mathcal{Y}} \|P_{S_2,Y_2}(\cdot,y)\|_\alpha\bigg| \leq \sum_{y\in\mathcal{Y}} \Big|\|P_{S_1,Y_1}(\cdot,y)\|_\alpha - \|P_{S_2,Y_2}(\cdot,y)\|_\alpha\Big|.
\end{equation}
By Minkowski's inequality, $| \|a\|_\alpha - \|b\|_\alpha| \leq \|a-b\|_\alpha$ for every $a,b\in\mathbb{R}^n$ and $\alpha \geq 1$. Therefore,
\begin{align}
    \bigg|\sum_{y\in\mathcal{Y}} \|P_{S_1,Y_1}(\cdot,y)\|_\alpha-\sum_{y\in\mathcal{Y}} \|P_{S_2,Y_2}(\cdot,y)\|_\alpha\bigg| &\leq \sum_{y\in\mathcal{Y}} \|P_{S_1,Y_1}(\cdot,y) - P_{S_2,Y_2}(\cdot,y)\|_\alpha\\
    &\leq \sum_{y\in\mathcal{Y}} \|P_{S_1,Y_1}(\cdot,y) - P_{S_2,Y_2}(\cdot,y)\|_1,\\
    \label{eq:NumeratorAlphaLeakage1} &= \|P_{S_1,Y_1} - P_{S_2,Y_2}\|_1,
\end{align}
where the second inequality follows from the fact that $\|a\|_\alpha \leq \|a\|_1$ for all $a\in\mathbb{R}^n$ and $\alpha\geq1$. By assumption, we have that $P_{Y_1|X_1} = W = P_{Y_2|X_2}$. Hence, Lemma~\ref{lem:VariationSY2SX} implies that
\begin{equation}
\label{eq:AlphaLeakageLemmaAux1}
\|P_{S_1,Y_1} - P_{S_2,Y_2}\|_1 \leq \|Q_1-Q_2\|_1.
\end{equation}
This leads to
\begin{align}
    \left|\sum_{y\in\mathcal{Y}} \|P_{S_1,Y_1}(\cdot,y)\|_\alpha-\sum_{y\in\mathcal{Y}} \|P_{S_2,Y_2}(\cdot,y)\|_\alpha\right| \leq  \|Q_1-Q_2\|_1. \label{eq:NumeratorAlphaLeakage}
\end{align}
For $i\in\{1,2\}$, Minkowski's inequality implies that
\begin{equation}
    \sum_{y\in\mathcal{Y}} \|P_{S_i,Y_i}(\cdot,y)\|_\alpha \geq \bigg\|\sum_{y\in\mathcal{Y}} P_{S_i,Y_i}(\cdot,y)\bigg\|_\alpha = \|P_{S_i}(\cdot)\|_\alpha.
\end{equation}
The generalized mean inequality implies that
\begin{equation}
    \|P_{S_i}(\cdot)\|_\alpha \geq |\mathcal{S}|^{1/\alpha-1} \|P_{S_i}(\cdot)\|_1 = |\mathcal{S}|^{1/\alpha-1},
\end{equation}
as $\|P_{S_i}(\cdot)\|_1 = \sum_s P_{S_i}(s) = 1$. Hence,
\begin{equation}
\label{eq:DenominatorAlphaLeakage}
    \sum_{y\in\mathcal{Y}} \|P_{S_i,Y_i}(\cdot,y)\|_\alpha \geq |\mathcal{S}|^{1/\alpha-1}.
\end{equation}
Therefore, by plugging \eqref{eq:NumeratorAlphaLeakage} and \eqref{eq:DenominatorAlphaLeakage} in \eqref{eq:IAlphaLeakage},
\begin{equation}
    {\rm I} \leq |\mathcal{S}|^{1-1/\alpha} \|Q_1 - Q_2\|_1.
\end{equation}
Similarly, we have that
\begin{equation}
    {\rm II} \leq \frac{|\|P_{S_2}(\cdot)\|_\alpha - \|P_{S_1}(\cdot)\|_\alpha|}{\min\{\|P_{S_2}(\cdot)\|_\alpha,\|P_{S_1}(\cdot)\|_\alpha\}}.
\end{equation}
As in \eqref{eq:NumeratorAlphaLeakage1} and \eqref{eq:AlphaLeakageLemmaAux1},
\begin{equation}
    |\|P_{S_2}(\cdot)\|_\alpha - \|P_{S_1}(\cdot)\|_\alpha| \leq \|P_{S_1}-P_{S_2}\|_1 \leq \|Q_1-Q_2\|_1.
\end{equation}
As established in \eqref{eq:DenominatorAlphaLeakage}, for all $i\in\{1,2\}$,
\begin{equation}
    \|P_{S_i}(\cdot)\|_\alpha \geq |\mathcal{S}|^{1/\alpha-1}.
\end{equation}
Thus, $\displaystyle {\rm II} \leq |\mathcal{S}|^{1-1/\alpha} \|Q_1 - Q_2\|_1$. Since $\displaystyle \frac{\alpha-1}{\alpha} \Delta_L \leq {\rm I} + {\rm II}$, we conclude that
\begin{equation}
    |\calL^{\rm A}_{\alpha}(Q_1,W)-\calL^{\rm A}_{\alpha}(Q_2,W)| \leq  \frac{2\alpha}{\alpha-1} |\mathcal{S}|^{1-1/\alpha} \|Q_1 - Q_2\|_1.
\end{equation}
Mutatis mutandis, one can obtain that
\begin{equation}
    |\calU^{\rm A}_{\alpha}(Q_1,W)-\calU^{\rm A}_{\alpha}(Q_2,W)| \leq \frac{2\alpha}{\alpha-1} |\mathcal{X}|^{1-1/\alpha} \|Q_1-Q_2\|_1,
\end{equation}
as required
\end{proof}
\subsection{Lemma~\ref{lem:LipschitzSibsonMI}}
\label{Appendix:ProofSMI}

Consider the following lemma.

\begin{lem}
\label{lem:TotalVariationMismatchedChannel}
If $S_1,S_2$ and $X_1,X_2$ are random variables supported over $\mathcal{S}$ and $\mathcal{X}$ respectively, then,
\begin{equation}
    \|P_{S_1}\cdot P_{X_1|S_1} - P_{S_1}\cdot P_{X_2|S_2}\|_1 \leq 2 \|P_{S_1,X_1} - P_{S_2,X_2}\|_1.
\end{equation}
\end{lem}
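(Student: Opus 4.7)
The plan is to reduce the inequality to the triangle inequality after introducing $P_{S_2,X_2}$ as an intermediate distribution. Expanding the left-hand side using the product notation in \eqref{eq:ConventionProductDistributionTransitionMatrix}, the quantity to bound is
\begin{equation}
\sum_{s,x} \bigl| P_{S_1}(s) P_{X_1|S_1}(x|s) - P_{S_1}(s) P_{X_2|S_2}(x|s) \bigr|,
\end{equation}
where the first term equals $P_{S_1,X_1}(s,x)$. The natural intermediate is $P_{S_2,X_2}(s,x) = P_{S_2}(s) P_{X_2|S_2}(x|s)$, which shares the conditional with the second term and the joint structure with the first.

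First, I would apply the triangle inequality with $P_{S_2,X_2}$ as the pivot to split the sum into two pieces: one piece is exactly $\|P_{S_1,X_1} - P_{S_2,X_2}\|_1$, and the other is
\begin{equation}
\sum_{s,x} \bigl| P_{S_2}(s) P_{X_2|S_2}(x|s) - P_{S_1}(s) P_{X_2|S_2}(x|s) \bigr| = \sum_s |P_{S_1}(s) - P_{S_2}(s)| \sum_x P_{X_2|S_2}(x|s).
\end{equation}
Since $P_{X_2|S_2}(\cdot|s)$ is a probability distribution for each $s$ (with an arbitrary convention when $P_{S_2}(s)=0$, which does not affect the sum), the inner sum over $x$ equals $1$, so this piece reduces to $\|P_{S_1} - P_{S_2}\|_1$.

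Finally, I would invoke the elementary fact that the total variation between marginals is dominated by the total variation between joints, i.e., $\|P_{S_1} - P_{S_2}\|_1 \leq \|P_{S_1,X_1} - P_{S_2,X_2}\|_1$, obtained by the triangle inequality after marginalizing out $x$. Combining the two pieces then yields the factor of $2$ in the claimed bound. There is no real obstacle here; the only subtlety worth a line in the write-up is the handling of symbols $s$ with $P_{S_2}(s)=0$, which is dealt with by any arbitrary choice of $P_{X_2|S_2}(\cdot|s)$ summing to one.
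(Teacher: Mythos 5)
Your proposal is correct and follows essentially the same route as the paper: the same triangle-inequality decomposition with $P_{S_2,X_2}=P_{S_2}\cdot P_{X_2|S_2}$ as the pivot, followed by the observation that the second piece collapses to $\|P_{S_1}-P_{S_2}\|_1$, which is in turn bounded by the joint distance (the paper cites Lemma~\ref{lem:VariationSXY2SX} for this last step, whereas you argue it directly by marginalizing, but the content is identical).
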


\begin{proof}
For ease of notation, let $\Delta \defined \|P_{S_1}\cdot P_{X_1|S_1} - P_{S_1}\cdot P_{X_2|S_2}\|_1$. By the triangle inequality, we have that
\begin{equation}
    \Delta \leq \|P_{S_1}\cdot P_{X_1|S_1} -  P_{S_2}\cdot P_{X_2|S_2}\|_1 + \|P_{S_2}\cdot P_{X_2|S_2} - P_{S_1}\cdot P_{X_2|S_2}\|_1.
\end{equation}
By Lemma~\ref{lem:VariationSXY2SX}, and the fact that $P_{S_i} \cdot P_{X_i|S_i} = P_{S_i,X_i}$ for every $i\in\{1,2\}$,
\begin{equation}
    \Delta \leq \|P_{S_1,X_1}-P_{S_2,X_2}\|_1 + \|P_{S_1}-P_{S_2}\|_1 \leq 2\|P_{S_1,X_1}-P_{S_2,X_2}\|_1,
\end{equation}
as required.
\end{proof}

Now we are in position to prove Lemma~\ref{lem:LipschitzSibsonMI}.

\begin{proof}
For ease of notation, let $\Delta_L \defined |\calL^{\rm S}_{\alpha}(Q_1,W)-\calL^{\rm S}_{\alpha}(Q_2,W)|$. For each $i\in\{1,2\}$, let $S_i \to X_i \to Y_i$ be such that $P_{S_i,X_i} = Q_i$ and $P_{Y_i|X_i} = W$. Assume that $\alpha\in(1,\infty)$. Then
\begin{align}
    \Delta_L  &= |I^{\rm S}_{\alpha}(P_{S_1,Y_1}) - I^{\rm S}_{\alpha}(P_{S_2,Y_2})|\\
    &= \frac{\alpha}{\alpha-1} \left|\log \frac{\sum_y \|P_{S_1}(\cdot)^{1/\alpha} P_{Y_1|S_1}(y|\cdot)\|_\alpha}{\sum_y \|P_{S_2}(\cdot)^{1/\alpha} P_{Y_2|S_2}(y|\cdot)\|_\alpha}\right|,
\end{align}
where the second equality follows directly from the definition of Sibson's mutual information. Notice that if $a,b>0$, then $\displaystyle \left|\log\frac{a}{b}\right| \leq \frac{|a-b|}{\min\{a,b\}}$. Hence,
\begin{equation}
\label{eq:SibsonQuotient}
    \Delta_L \leq \frac{\alpha}{\alpha-1} \frac{|\sum_y \|P_{S_1}(\cdot)^{1/\alpha} P_{Y_1|S_1}(y|\cdot)\|_\alpha - \sum_y \|P_{S_2}(\cdot)^{1/\alpha} P_{Y_2|S_2}(y|\cdot)\|_\alpha|}{\min\{\sum_y \|P_{S_1}(\cdot)^{1/\alpha} P_{Y_1|S_1}(y|\cdot)\|_\alpha,\sum_y \|P_{S_2}(\cdot)^{1/\alpha} P_{Y_2|S_2}(y|\cdot)\|_\alpha\}}.
\end{equation}
By Minkowski's inequality, we have that
\begin{equation}
    \sum_{y\in\mathcal{Y}} \|P_{S_i}(\cdot)^{1/\alpha} P_{Y_i|S_i}(y|\cdot)\|_\alpha \geq \|P_{S_i}(\cdot)^{1/\alpha}\|_\alpha = 1,
\end{equation}
where we used the fact that $\sum_y P_{Y_i|S_i}(y|s) = 1$ for every $s\in\mathcal{S}$ and $i\in\{1,2\}$. Hence, \eqref{eq:SibsonQuotient} becomes
\begin{align}
    \Delta_L &\leq \frac{\alpha}{\alpha-1} \sum_{y\in\mathcal{Y}} \bigg|\|P_{S_1}(\cdot)^{1/\alpha} P_{Y_1|S_1}(y|\cdot)\|_\alpha - \|P_{S_2}(\cdot)^{1/\alpha} P_{Y_2|S_2}(y|\cdot)\|_\alpha\bigg|\\
    &\leq \frac{\alpha}{\alpha-1} \sum_{y\in\mathcal{Y}} \|P_{S_1}(\cdot)^{1/\alpha} P_{Y_1|S_1}(y|\cdot) - P_{S_2}(\cdot)^{1/\alpha} P_{Y_2|S_2}(y|\cdot)\|_\alpha,
\end{align}
where the last inequality follows from another application of Minkowski's inequality. Since $\|a\|_\alpha \leq \|a\|_1$ for all $a\in\mathbb{R}^n$ and $\alpha>1$, we obtain that
\begin{equation}
    \Delta_L \leq \frac{\alpha}{\alpha-1} \sum_{y\in\mathcal{Y}} \|P_{S_1}(\cdot)^{1/\alpha} P_{Y_1|S_1}(y|\cdot) - P_{S_2}(\cdot)^{1/\alpha} P_{Y_2|S_2}(y|\cdot)\|_1.
\end{equation}
Observe that, for each $i\in\{1,2\}$,
\begin{equation}
    P_{S_i}(s)^{1/\alpha}P_{Y_i|S_i}(y|s) = \sum_{x\in\mathcal{X}} P_{S_i}(s)^{1/\alpha} P_{X_i|S_i}(x|s) W(x,y).
\end{equation}
Thus, a straightforward manipulation leads to
\begin{align}
    \Delta_L &\leq \frac{\alpha}{\alpha-1} \sum_{y\in\mathcal{Y}} \sum_{x\in\mathcal{X}} \|P_{S_1}(\cdot)^{1/\alpha} P_{X_1|S_1}(x|\cdot) - P_{S_2}(\cdot)^{1/\alpha} P_{X_2|S_2}(x|\cdot)\|_1 W(x,y)\\
    \label{eq:ProofSibsonIxIIx} &= \frac{\alpha}{\alpha-1} \sum_{x\in\mathcal{X}} \|P_{S_1}(\cdot)^{1/\alpha} P_{X_1|S_1}(x|\cdot) - P_{S_2}(\cdot)^{1/\alpha} P_{X_2|S_2}(x|\cdot)\|_1.
\end{align}
By adding and subtracting the term $P_{S_1}(\cdot)^{1/\alpha}P_{X_2|S_2}(x|\cdot)$ inside the norm in \eqref{eq:ProofSibsonIxIIx}, Minkowski's inequality implies that $\displaystyle \frac{\alpha-1}{\alpha} \Delta_L \leq \sum\nolimits_{x} {\rm I}_x + \sum\nolimits_{x} {\rm II}_x$ where, for each $x\in\mathcal{X}$,
\begin{align}
    {\rm I}_x &\defined \|P_{S_1}(\cdot)^{1/\alpha}(P_{X_1|S_1}(x|\cdot) - P_{X_2|S_2}(x|\cdot))\|_1,\\
    {\rm II}_x &\defined \|(P_{S_1}(\cdot)^{1/\alpha} - P_{S_2}(\cdot)^{1/\alpha}) P_{X_2|S_2}(x|\cdot)\|_1.
\end{align}
Observe that, for each $x\in\mathcal{X}$,
\begin{align}
    {\rm I}_x &= \sum_{s\in\mathcal{S}} P_{S_1}(s)^{1/\alpha} |P_{X_1|S_1}(x|s) - P_{X_2|S_2}(x|s)|\\
    &= \sum_{s\in\mathcal{S}} \frac{P_{S_1}(s)}{P_{S_1}(s)^{1-1/\alpha}} |P_{X_1|S_1}(x|s) - P_{X_2|S_2}(x|s)|.
\end{align}
Since $P_{S_1}(s) \geq m_S$ for all $s\in\mathcal{S}$, we obtain that
\begin{align}
    \sum_{x\in\mathcal{X}} {\rm I}_x &\leq \frac{1}{m_S^{1-1/\alpha}} \sum_{x\in\mathcal{X}} \sum_{s\in\mathcal{S}} P_{S_1}(s) |P_{X_1|S_1}(x|s) - P_{X_2|S_2}(x|s)|\\
    &= \frac{\|P_{S_1}\cdot P_{X_1|S_1} - P_{S_1}\cdot P_{X_2|S_2}\|_1}{m_S^{1-1/\alpha}}.
\end{align}
Therefore, Lemma~\ref{lem:TotalVariationMismatchedChannel} leads to
\begin{equation}
\label{eq:SibsonPartI}
    \sum_{x\in\mathcal{X}} {\rm I}_x \leq \frac{2\|Q_1 - Q_2\|_1}{m_S^{1-1/\alpha}}.
\end{equation}
By definition, ${\rm II}_x = \sum_s |P_{S_1}(s)^{1/\alpha} - P_{S_2}(s)^{1/\alpha}| P_{X_2|S_2}(x|s)$ for every $x\in\mathcal{X}$. Hence,
\begin{equation}
    \sum_{x\in\mathcal{X}} {\rm II}_x = \sum_{s\in\mathcal{S}} |P_{S_1}(\cdot)^{1/\alpha} - P_{S_2}(\cdot)^{1/\alpha}|.
\end{equation}
Since the function $t \mapsto t^{1/\alpha}$ is Lipschitz continuous on $[m_S,1]$ with Lipschitz constant $(\alpha m_S^{1-1/\alpha})^{-1}$,
\begin{equation}
\label{eq:SibsonPartII}
    \sum_{x\in\mathcal{X}} {\rm II}_x \leq \frac{\|P_{S_1} - P_{S_2}\|_1}{\alpha m_S^{1-1/\alpha}} \leq \frac{\|Q_1 - Q_2\|_1}{\alpha m_S^{1-1/\alpha}},
\end{equation}
where the last inequality follows from Lemma~\ref{lem:VariationSXY2SX}. By \eqref{eq:SibsonPartI} and \eqref{eq:SibsonPartII}, we conclude that
\begin{equation}
    \Delta_L \leq \frac{2\alpha+1}{\alpha-1} \frac{\|Q_1 - Q_2\|_1}{m_S^{1-1/\alpha}}.
\end{equation}
Mutatis mutandis, it can be shown that
\begin{equation}
\label{eq:SibsonDeltaUUpperBound}
    |\calU^{\rm S}_{\alpha}(Q_1,W)-\calU^{\rm S}_{\alpha}(Q_2,W)| \leq  \frac{1}{\alpha-1} \frac{\|Q_1-Q_2\|_1}{m_X^{1-1/\alpha}}.
\end{equation}

Recall that $\lim_{\alpha\to\infty} I^{\rm S}_\alpha(P_{X,Y}) = I^{\rm S}_\infty(P_{X,Y})$ \cite{verdu2015alpha}. Therefore, \eqref{eq:UtilityBoundMaximalLeakage} follows after taking the limit $\alpha\to\infty$ in both sides of \eqref{eq:SibsonDeltaUUpperBound}. Using a similar argument as above, it can be shown that
\begin{align}
    \Delta_L &\leq \sum_{s\in\mathcal{S}} \sum_{x\in\mathcal{X}} |P_{X_1|S_1}(x|s) - P_{X_2|S_2}(x|s)|\\
    &\leq \frac{1}{\min_{s'} P_{S_1}(s')} \sum_{s\in\mathcal{S}} \sum_{x\in\mathcal{X}} P_{S_1}(s) |P_{X_1|S_1}(x|s) - P_{X_2|S_2}(x|s)|\\
    &= \frac{\|P_{S_1}\cdot P_{X_1|S_1} - P_{S_1}\cdot P_{X_2|S_2|}\|_1}{\min_{s'} P_{S_1}(s')}.
\end{align}
By applying Lemma~\ref{lem:TotalVariationMismatchedChannel} and noting that $\min_{s'} P_{S_1}(s') = \min_{s'} \sum_x Q_1(s',x)$, the result follows.
\end{proof}
\subsection{Lemma~\ref{Lemma:LipschitzMaximalAlphaLeakage}}
\label{Appendix:ProofMAL}

\begin{proof}
For ease of notation, let $\Delta_L \defined |\calL^{\rm max}_{\alpha}(Q_1,W)-\calL^{\rm max}_{\alpha}(Q_2,W)|$. For each $i\in\{1,2\}$, let $S_i \to X_i \to Y_i$ be such that $P_{S_i,X_i} = Q_i$ and $P_{Y_i|X_i} = W$. Observe that $P_{Y_i|S_i} = P_{X_i|S_i} W$ where, by definition,
\begin{equation}
    \left(P_{X_i|S_i} W\right)(y|s) = \sum_{x\in\mathcal{X}} P_{X_i|S_i}(x|s) W(x,y).
\end{equation}
In particular,
\begin{align}
    \mathcal{L}^{\rm max}_\alpha(Q_i,W) 
    &\defined \sup_{P_{\tilde{S}}} I^{\rm A}_\alpha(P_{\tilde{S}}\cdot P_{Y_i|S_i})\\
    &= \sup_{P_{\tilde{S}}} I^{\rm A}_\alpha(P_{\tilde{S}}\cdot P_{X_i|S_i} W).
\end{align}
By the definition of $\mathcal{L}^{\rm A}_\alpha(Q,W)$, we have that
\begin{equation}
\label{eq:MaxAlphaLeakagetoAlphaLeakage}
    \mathcal{L}^{\rm max}_\alpha(Q_i,W) = \sup_{P_{\tilde{S}}} \mathcal{L}^{\rm A}_\alpha(P_{\tilde{S}}\cdot P_{X_i|S_i},W).
\end{equation}
It is easy to verify that if $\mathcal{I}$ is an arbitrary index set and $a_\iota,b_\iota\in\mathbb{R}$ for all $\iota\in\mathcal{I}$, then
\begin{equation}
  \left|\sup_{\iota\in\mathcal{I}} a_\iota - \sup_{\iota\in\mathcal{I}} b_\iota\right| 
  \leq \sup_{\iota\in\mathcal{I}} |a_\iota - b_\iota|.
\end{equation}
Therefore, \eqref{eq:MaxAlphaLeakagetoAlphaLeakage} implies that
\begin{align}
    \Delta_L &= \left|\sup_{P_{\tilde{S}}} \mathcal{L}^{\rm A}_\alpha(P_{\tilde{S}}\cdot P_{X_1|S_1},W) - \sup_{P_{\tilde{S}}} \mathcal{L}^{\rm A}_\alpha(P_{\tilde{S}}\cdot P_{X_2|S_2},W)\right|\\
    &\leq \sup_{P_{\tilde{S}}} \left|\mathcal{L}^{\rm A}_\alpha(P_{\tilde{S}}\cdot P_{X_1|S_1},W) - \mathcal{L}^{\rm A}_\alpha(P_{\tilde{S}}\cdot P_{X_2|S_2},W)\right|.
\end{align}
By Lemma~\ref{lem:LipschitzArimotoMI}, we obtain that
\begin{equation}
\label{eq:MaximalAlphaLeakageDeltaLReduction}
    \Delta_L \leq \frac{2\alpha}{\alpha-1} |\mathcal{S}|^{1-1/\alpha} \sup_{P_{\tilde{S}}} \|P_{\tilde{S}}\cdot P_{X_1|S_1} - P_{\tilde{S}}\cdot P_{X_2|S_2}\|_1.
\end{equation}
A straightforward computation shows that, for any $P_{\tilde{S}}$,
\begin{align}
    \|P_{\tilde{S}}\cdot P_{X_1|S_1} - P_{\tilde{S}}\cdot P_{X_2|S_2}\|_1 &= \sum_{s\in\mathcal{S}} P_{\tilde{S}}(s) \|P_{X_1|S_1}(\cdot|s) - P_{X_2|S_2}(\cdot|s)\|_1\\
    &\leq \sum_{s\in\mathcal{S}} \|P_{X_1|S_1}(\cdot|s) - P_{X_2|S_2}(\cdot|s)\|_1\\
    &\leq \frac{1}{\min_{s'} P_{S_1}(s')} \sum_{s\in\mathcal{S}} P_{S_1}(s) \|P_{X_1|S_1}(\cdot|s) - P_{X_2|S_2}(\cdot|s)\|_1\\
    &= \frac{\|P_{S_1}\cdot P_{X_1|S_1} - P_{S_1}\cdot P_{X_2|S_2}\|_1}{\min_{s'} P_{S_1}(s')}.
\end{align}
Therefore, Lemma~\ref{lem:TotalVariationMismatchedChannel} implies that
\begin{equation}
\label{eq:MaximalAlphaLeakageSupInq}
    \sup_{P_{\tilde{S}}} \|P_{\tilde{S}}\cdot P_{X_1|S_1} - P_{\tilde{S}}\cdot P_{X_2|S_2}\|_1 \leq \frac{2\|Q_1-Q_2\|}{\min_{s'} P_{S_1}(s')}.
\end{equation}
By \eqref{eq:MaximalAlphaLeakageDeltaLReduction} and \eqref{eq:MaximalAlphaLeakageSupInq}, we conclude that
\begin{equation}
    \Delta_L \leq \frac{4\alpha}{\alpha-1} \frac{|\mathcal{S}|^{1-1/\alpha}}{\min_{s'} P_{S_1}(s')} \|Q_1-Q_2\|_1.
\end{equation}
By noting that $\min_{s'} P_{S_1}(s') = \min_{s'} \sum_x Q_1(s',x)$, \eqref{eq:MaxAlphaLeakageLBound} follows. Finally, note that $\mathcal{U}^{\text{max}}_\alpha(Q,W)$ only depends on $W$. Therefore, $\mathcal{U}^{\text{max}}_\alpha(Q_1,W) = \mathcal{U}^{\text{max}}_\alpha(Q_2,W)$ which trivially leads to \eqref{eq:MaxAlphaLeakageUBound}.
\end{proof}

\section{Proofs from Section~\ref{Section:convergence}}
\label{Appendix:ProofsIV}
\subsection{Lemma~\ref{lem:cont_UandL}}
\label{Appendix:Proofcont_UandL}

Lemma~\ref{lem:cont_UandL} is an immediate consequence of the following standard result whose proof is included for the sake of completeness.

\begin{lem}
Let $(\mathcal{A},d_{\mathcal{A}})$ and $(\mathcal{B},d_{\mathcal{B}})$ be two metric spaces. If $f:\mathcal{A}\times\mathcal{B}\to\Reals$ satisfies that
\begin{itemize}
    \item[\textnormal{(i)}] $|f(a_1,b) - f(a_2,b)| \leq d_{\mathcal{A}}(a_1,a_2)$ for all $a_1,a_2\in\mathcal{A}$ and $b\in\mathcal{B}$;
    \item[\textnormal{(ii)}] $f(a,\cdot)$ is continuous over $\mathcal{B}$ for all $a\in\mathcal{A}$;
\end{itemize}
then $f(\cdot,\cdot)$ is continuous over $\mathcal{A}\times\mathcal{B}$.
\end{lem}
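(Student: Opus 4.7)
The plan is to establish joint continuity by the standard sequential argument, splitting the difference using the triangle inequality and bounding the two resulting terms via hypotheses (i) and (ii) separately.

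Specifically, I would take an arbitrary point $(a_0,b_0)\in\mathcal{A}\times\mathcal{B}$ and an arbitrary sequence $(a_n,b_n)_{n=1}^\infty$ in $\mathcal{A}\times\mathcal{B}$ with $d_{\mathcal{A}}(a_n,a_0)\to 0$ and $d_{\mathcal{B}}(b_n,b_0)\to 0$. The goal is to show that $f(a_n,b_n)\to f(a_0,b_0)$. Inserting the intermediate term $f(a_0,b_n)$ and applying the triangle inequality yields
\begin{equation}
|f(a_n,b_n) - f(a_0,b_0)| \leq |f(a_n,b_n) - f(a_0,b_n)| + |f(a_0,b_n) - f(a_0,b_0)|.
\end{equation}
By hypothesis (i), applied with the fixed second argument $b=b_n$, the first term on the right-hand side is bounded above by $d_{\mathcal{A}}(a_n,a_0)$, which tends to zero by assumption. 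By hypothesis (ii), the mapping $f(a_0,\cdot)$ is continuous on $\mathcal{B}$, and since $d_{\mathcal{B}}(b_n,b_0)\to 0$, the second term tends to zero as well. Combining these observations shows that $f(a_n,b_n)\to f(a_0,b_0)$, which establishes continuity of $f$ at $(a_0,b_0)$; since this point was arbitrary, $f$ is continuous on all of $\mathcal{A}\times\mathcal{B}$.

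There is no real obstacle here: the argument is a textbook decomposition. The only mildly subtle point worth emphasizing is that hypothesis (i) gives a \emph{uniform} Lipschitz bound in the first argument (independent of $b$), which is exactly what allows the first term to be controlled by $d_{\mathcal{A}}(a_n,a_0)$ regardless of the varying value $b=b_n$; without this uniformity, continuity in each variable separately would not upgrade to joint continuity, as classical counterexamples show. Finally, to deduce Lemma~\ref{lem:cont_UandL} from this result, I would apply it with $\mathcal{A}=\mathcal{Q}$ and $\mathcal{B}=\mathcal{W}_N$ (both endowed with the $\ell_1$-metric) to $\mathcal{L}(\cdot,\cdot)$ and $\mathcal{U}(\cdot,\cdot)$, noting that condition (\textbf{C.}\ref{cond:lipcont}) supplies hypothesis (i) (up to the harmless constants $C_L,C_U$) and condition (\textbf{C.}\ref{cond:cont}) supplies hypothesis (ii).
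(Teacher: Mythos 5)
Your proof is correct and follows essentially the same route as the paper's: the same triangle-inequality decomposition through the intermediate point $f(a_0,b_n)$, with hypothesis (i) controlling the first term uniformly in $b$ and hypothesis (ii) controlling the second. Your closing remarks on why the uniformity in (i) is essential and on how the lemma yields Lemma~\ref{lem:cont_UandL} are accurate as well.
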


\begin{proof}
In order to prove the continuity of $f(\cdot,\cdot)$ over $\mathcal{A}\times\mathcal{B}$, we will show that for any sequence $\{(a_n,b_n)\}_{n=0}^\infty$ such that $\lim_n (a_n,b_n) = (a_0,b_0)$, it holds true that $\lim_n f(a_n,b_n) = f(a_0,b_0)$.

Let $\{(a_n,b_n)\}_{n=0}^\infty$ be such that $\lim_n (a_n,b_n) = (a_0,b_0)$. By the triangle inequality, for every $n\in\mathbb{N}$,
\begin{align}
    |f(a_n,b_n) - f(a_0,b_0)| &\leq |f(a_n,b_n) - f(a_0,b_n)| + |f(a_0,b_n) - f(a_0,b_0)|\\
    &\leq d_{\mathcal{A}}(a_n,a_0) + |f(a_0,b_n) - f(a_0,b_0)|,
\end{align}
where the last inequality follows from assumption \textnormal{(i)}. By assumption \textnormal{(ii)}, $f(a_0,\cdot)$ is continuous over $\mathcal{B}$. Since $\lim_n (a_n,b_n) = (a_0,b_0)$ is equivalent to $\lim_n a_n = a_0$ and $\lim_n b_n = b_0$, we conclude that
\begin{equation}
    \lim_{n\to\infty} |f(a_n,b_n) - f(a_0,b_0)| = 0,
\end{equation}
as required.
\end{proof}
\subsection{Proposition~\ref{Prop:ContinuityH}}
\label{Appendix:ProofContinuousH_general}

We start recalling an elementary fact about convergent sequences in metric spaces, see, e.g., \cite[Exercise~2.4.11]{royden1968real}.

\begin{thm}
\label{thm:conv_seq_sub}
Let $\mathcal{M}$ be a metric space. A sequence $(a_n)_{n=1}^\infty\subset\mathcal{M}$ converges to $a\in\mathcal{M}$ if and only if every subsequence of $(a_n)_{n=1}^\infty$ has a further subsequence which converges to $a$.
\end{thm}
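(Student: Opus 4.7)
The plan is to establish the equivalence by proving each implication separately, treating the forward direction as essentially immediate and handling the backward (non-trivial) direction by contrapositive together with an inductive subsequence extraction.

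For the forward direction, I would start by recalling that in a metric space, $a_n\to a$ means that for every $\varepsilon>0$ there exists $N\in\mathbb{N}$ with $d_{\mathcal{M}}(a_n,a)<\varepsilon$ for all $n\geq N$. Given any subsequence $(a_{n_k})_{k=1}^\infty$, since the indices $n_k$ are strictly increasing, eventually $n_k\geq N$, so $d_{\mathcal{M}}(a_{n_k},a)<\varepsilon$ for all sufficiently large $k$. Hence every subsequence itself converges to $a$, and taking the trivial further subsequence (the subsequence itself) gives the desired conclusion.

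For the backward direction, I would argue by contrapositive: assuming $(a_n)_{n=1}^\infty$ does \emph{not} converge to $a$, I would produce a subsequence that admits no further subsequence converging to $a$. Negating the definition of convergence yields $\varepsilon_0>0$ such that, for every $N\in\mathbb{N}$, there exists $n\geq N$ with $d_{\mathcal{M}}(a_n,a)\geq \varepsilon_0$. I would then inductively extract indices $n_1<n_2<\cdots$ by setting $n_1$ to be any index with $d_{\mathcal{M}}(a_{n_1},a)\geq \varepsilon_0$, and, having chosen $n_1,\dots,n_k$, applying the negated convergence condition with $N=n_k+1$ to obtain $n_{k+1}>n_k$ with $d_{\mathcal{M}}(a_{n_{k+1}},a)\geq \varepsilon_0$. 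The resulting subsequence $(a_{n_k})_{k=1}^\infty$ satisfies $d_{\mathcal{M}}(a_{n_k},a)\geq \varepsilon_0$ for every $k$, and since every further subsequence $(a_{n_{k_j}})_{j=1}^\infty$ inherits this lower bound, no such further subsequence can converge to $a$. This contradicts the hypothesis, completing the contrapositive.

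There is no substantial obstacle here; the only point requiring care is the inductive construction of the subsequence, which must ensure that the chosen indices are strictly increasing while simultaneously satisfying the distance lower bound. Both requirements are handled by applying the negated convergence condition at each step with $N$ taken larger than the previously selected index.
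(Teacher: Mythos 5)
Your proof is correct: the forward direction is immediate from the definition of convergence, and your contrapositive argument for the converse—extracting a subsequence bounded away from $a$ by some $\varepsilon_0>0$, so that no further subsequence of it can converge to $a$—is exactly the standard argument. Note that the paper does not prove this statement at all; it cites it as an elementary fact (Exercise~2.4.11 in Royden), so your write-up simply supplies the routine proof the paper omits.
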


We proceed with the proof of Proposition~\ref{Prop:ContinuityH}.

\begin{proof}
In order to prove the continuity of $\mathsf{H}(\cdot;\epsilon)$ over $\{Q\in\mathcal{Q}: \epsilon_{\min}(Q)< \epsilon\}$, we will show that for any sequence $(Q_n)_{n=0}^{\infty} \subset\{Q\in\mathcal{Q}: \epsilon_{\min}(Q)< \epsilon \}$ such that
\begin{equation}
\label{eq:ConvergenceProofContinuity}
    \lim_{n\to\infty} \| Q_n - Q_0\|_1 = 0,
\end{equation}
it holds true that
\begin{equation}
\label{eq:ConvergenceHProofContinuity}
    \lim_{n\to\infty} \mathsf{H}(Q_n;\epsilon) = \mathsf{H}(Q_0;\epsilon).
\end{equation}
In order to prove \eqref{eq:ConvergenceHProofContinuity}, Theorem~\ref{thm:conv_seq_sub} implies that it is enough to show that any subsequence $(\mathsf{H}(Q_{n_k};\epsilon))_{k=1}^{\infty}$ has a further subsequence converging to $\mathsf{H}(Q_0;\epsilon)$.

Let $(n_k)_{k=1}^{\infty}\subset\mathbb{N}$ be given. By Remark~\ref{rem:sup_max_Wn}, there exists $(W^*_{n_{k}})_{k=1}^{\infty}\subset\mathcal{W}_N$ such that, for each $k\geq1$, we have that $\mathcal{L}(Q_{n_k},W^*_{n_k})\leq\epsilon$ and
\begin{equation}
\label{eq:HUProofContinuity}
    \mathsf{H}(Q_{n_k};\epsilon) = \mathcal{U}(Q_{n_k},W^*_{n_k}).
\end{equation}
Since the space $\mathcal{W}_N$ is compact, there exists a further subsequence $(n_{k_j})_{j=1}^\infty$ such that
\begin{equation}
\label{eq:ConvergenceSubsequenceProofContinuity}
    \lim_{j\to\infty} W^*_{n_{k_j}} = W_0,
\end{equation}
for some $W_0\in\mathcal{W}_N$. Lemma~\ref{lem:cont_UandL} shows that $\mathcal{U}(\cdot,\cdot)$ is continuous over $\mathcal{Q}\times\mathcal{W}_N$. Then \eqref{eq:ConvergenceProofContinuity} and \eqref{eq:ConvergenceSubsequenceProofContinuity} imply that 
\begin{equation}
    \mathcal{U}(Q_0,W_0) = \lim_{j\to\infty} \mathcal{U}(Q_{n_{k_j}},W^*_{n_{k_j}}) = \lim_{j\to\infty} \mathsf{H}(Q_{n_{k_j}};\epsilon),
\end{equation}
where the second equality follows from \eqref{eq:HUProofContinuity}. A similar reasoning shows that
\begin{equation}
    \mathcal{L}(Q_0,W_0) = \lim_{j\to\infty} \mathcal{L}(Q_{n_{k_j}},W^*_{n_{k_j}}) \leq \epsilon.
\end{equation}
Therefore, by the maximality of $\mathsf{H}(Q_0;\epsilon)$,
\begin{equation}
\label{eq:ProofContinuityHgeq}
    \lim_{j\to\infty} \mathsf{H}(Q_{n_{k_j}};\epsilon) = \mathcal{U}(Q_0,W_0) \leq \mathsf{H}(Q_0;\epsilon).
\end{equation}

Next, we show that $\lim_j \mathsf{H}(Q_{n_{k_j}};\epsilon) \geq \mathsf{H}(Q_0;\epsilon)$. Recall that, by assumption, $\epsilon_{\min}(Q_0)<\epsilon$. Let $\delta>0$ be such that $\epsilon_{\min}(Q_0)\leq\epsilon-\delta$. By condition (\textbf{C.}\ref{cond:comp}), there exists $W_0'\in\mathcal{W}_N$ such that $\mathcal{L}(Q_0,W_0') \leq \epsilon - \delta$ and $\mathcal{U}(Q_0,W_0') = \mathsf{H}(Q_0;\epsilon-\delta)$. By the Lipschitz continuity given in condition (\textbf{C.}\ref{cond:lipcont}) and \eqref{eq:ConvergenceProofContinuity},
\begin{align}
    \lim_{j\to\infty} \mathcal{L}(Q_{n_{k_j}},W_0') &= \mathcal{L}(Q_0,W_0') \leq \epsilon-\delta,\\
    \label{eq:ProofContinuityHepsilondeltaA} \lim_{j\to\infty} \mathcal{U}(Q_{n_{k_j}},W_0') &= \mathcal{U}(Q_0,W_0') = \mathsf{H}(Q_0;\epsilon-\delta).
\end{align}
In particular, for $j$ large enough we have that $\mathcal{L}(Q_{n_{k_j}},W_0') \leq \epsilon$ and, by the maximality of $\mathsf{H}(Q_{n_{k_j}};\epsilon)$,
\begin{align}
\label{eq:ProofContinuityHepsilondelta}
    \lim_{j\to\infty} \mathsf{H}(Q_{n_{k_j}};\epsilon) \geq \lim_{j\to\infty} \mathcal{U}(Q_{n_{k_j}},W_0') = \mathsf{H}(Q_0;\epsilon-\delta),
\end{align}
where the last equality follows from \eqref{eq:ProofContinuityHepsilondeltaA}. Recall that, by condition (\textbf{C.}\ref{cond:cont}), the mapping $\mathsf{H}(Q_0;\cdot)$ is continuous over $[\epsilon_{\min}(Q_0),\infty)$. Hence, by taking limits in \eqref{eq:ProofContinuityHepsilondelta},
\begin{equation}
\label{eq:ProofContinuityHleq}
    \lim_{j\to\infty} \mathsf{H}(Q_{n_{k_j}};\epsilon) \geq \lim_{\delta\downarrow0} \mathsf{H}(Q_0;\epsilon-\delta) = \mathsf{H}(Q_0;\epsilon).
\end{equation}
By combining \eqref{eq:ProofContinuityHgeq} and \eqref{eq:ProofContinuityHleq}, we conclude that
\begin{equation}
    \lim_{j\to\infty} \mathsf{H}(Q_{n_{k_j}};\epsilon) = \mathsf{H}(Q_0;\epsilon),
\end{equation}
as required.
\end{proof}
\rev{
\subsection{Corollary~\ref{cor::unif_conv_PUF}}
\label{Appendix:Proof_cor::unif_conv_PUF}

Before we prove Corollary~\ref{cor::unif_conv_PUF}, let us recall two standard results from analysis on metric spaces, see, e.g., Chapter~0.1 in \cite{resnick2013extreme} and Lemma 14 in \cite{tan2011large}, respectively.

\begin{lem}
\label{lem::uni_conv_seq_func}
Let $f_n: [a,b]\to \Reals$ be a non-decreasing function for each $n\in\mathbb{N}$. If there exists a continuous function $f:[a,b]\to\Reals$ such that $\displaystyle \lim_{n\to\infty} f_n(x) = f(x)$ for all $x\in [a,b]$, then $f_n$ converges uniformly to $f$.
\end{lem}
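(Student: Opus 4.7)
The plan is to use a Dini-type argument exploiting the monotonicity of each $f_n$ together with the monotonicity and uniform continuity of the limit $f$ on the compact interval $[a,b]$. The key observation is that for monotone functions, controlling the error at finitely many sample points automatically controls the error everywhere.

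First, I would record two preliminary facts. Since each $f_n$ is non-decreasing and $f_n\to f$ pointwise, the limit $f$ is itself non-decreasing on $[a,b]$. Moreover, $f$ is continuous on the compact interval $[a,b]$ and hence uniformly continuous there. Given $\epsilon>0$, uniform continuity lets me pick a partition $a=x_0<x_1<\cdots<x_m=b$ such that
\begin{equation}
f(x_{i+1})-f(x_i)<\epsilon, \qquad i=0,1,\ldots,m-1.
\end{equation}

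Second, I would apply pointwise convergence at each of the finitely many partition points: there exists $N\in\mathbb{N}$ such that $|f_n(x_i)-f(x_i)|<\epsilon$ for all $n\geq N$ and every $i\in\{0,1,\ldots,m\}$. For any $x\in[a,b]$, choose $i$ with $x_i\leq x\leq x_{i+1}$. Monotonicity of $f_n$ then gives the chain
\begin{equation}
f(x_i)-\epsilon \;\leq\; f_n(x_i) \;\leq\; f_n(x) \;\leq\; f_n(x_{i+1}) \;\leq\; f(x_{i+1})+\epsilon,
\end{equation}
while monotonicity of $f$ gives $f(x_i)\leq f(x)\leq f(x_{i+1})$. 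Subtracting and using $f(x_{i+1})-f(x_i)<\epsilon$ yields
\begin{equation}
-2\epsilon \;<\; f(x_i)-f(x_{i+1})-\epsilon \;\leq\; f_n(x)-f(x) \;\leq\; f(x_{i+1})-f(x_i)+\epsilon \;<\; 2\epsilon,
\end{equation}
uniformly in $x\in[a,b]$ for all $n\geq N$. Since $\epsilon$ was arbitrary, $\sup_{x\in[a,b]}|f_n(x)-f(x)|\to 0$, which is uniform convergence.

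There is essentially no hard step: the result is a classical Pólya/Dini-type theorem, and the only care required is in setting up the partition of $[a,b]$ via uniform continuity of $f$ and then correctly chaining the monotonicity inequalities. The compactness of $[a,b]$ enters only to guarantee uniform continuity of $f$ and the finiteness of the partition.
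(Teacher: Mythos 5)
Your proof is correct: the chain of inequalities obtained from the monotonicity of $f_n$ and of the limit $f$, combined with pointwise convergence at the finitely many partition points supplied by uniform continuity of $f$ on $[a,b]$, is exactly the classical P\'{o}lya argument. The paper itself does not prove this lemma but only cites it as a standard result (Chapter~0.1 of the reference by Resnick), and your argument coincides with that standard proof, so there is nothing to fix.
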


\begin{lem}
\label{lem:inf_cont_overcomp_cont}
Let $\mathcal{A}$ and $\mathcal{B}$ be two metric spaces with $\mathcal{B}$ compact. If $f: \mathcal{A}\times\mathcal{B} \to \Reals$ is continuous, then the function $g:\calA\to\Reals$ defined by $\displaystyle g(a) \defined \inf_{b\in\mathcal{B}} f(a,b)$ is also continuous.
\end{lem}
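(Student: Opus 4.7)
The plan is to prove continuity of $g$ at an arbitrary point $a_0\in\mathcal{A}$ by establishing sequential continuity: for every sequence $(a_n)_{n=1}^{\infty}\subset\mathcal{A}$ with $a_n\to a_0$, show that $g(a_n)\to g(a_0)$. Since $\mathcal{B}$ is compact and $f(a,\cdot)$ is continuous for each fixed $a$, the infimum defining $g(a)$ is attained, so I may select, for each $n\geq 0$, a point $b_n\in\mathcal{B}$ with $g(a_n)=f(a_n,b_n)$; in particular $g(a_0)=f(a_0,b_0)$.

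First I would handle the easy direction, $\limsup_n g(a_n)\leq g(a_0)$. By definition of the infimum, $g(a_n)\leq f(a_n,b_0)$ for every $n$, and joint continuity of $f$ at $(a_0,b_0)$ gives $f(a_n,b_0)\to f(a_0,b_0)=g(a_0)$. Taking $\limsup$ on both sides yields the bound.

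The main step is the reverse inequality $\liminf_n g(a_n)\geq g(a_0)$, and this is the place where compactness of $\mathcal{B}$ is truly needed. I would pass to a subsequence $(a_{n_k})$ along which $g(a_{n_k})$ converges to $\liminf_n g(a_n)$. By compactness of $\mathcal{B}$, the associated sequence of minimizers $(b_{n_k})$ has a further subsequence $(b_{n_{k_j}})$ converging to some $b_0'\in\mathcal{B}$. Joint continuity of $f$ then implies $f(a_{n_{k_j}},b_{n_{k_j}})\to f(a_0,b_0')$, so $\liminf_n g(a_n)=f(a_0,b_0')\geq g(a_0)$, where the last inequality uses that $g(a_0)$ is the infimum of $f(a_0,\cdot)$ over $\mathcal{B}$.

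Combining the two bounds gives $\lim_n g(a_n)=g(a_0)$, establishing continuity at $a_0$; since $a_0\in\mathcal{A}$ was arbitrary, $g$ is continuous on $\mathcal{A}$. The main obstacle is the subsequence extraction in the lower-bound step: without compactness of $\mathcal{B}$ the minimizers might escape to ``infinity'' and one could only conclude upper semicontinuity of $g$. Note that compactness of $\mathcal{A}$ is never used, and no uniform continuity of $f$ is invoked, only joint sequential continuity together with sequential compactness of $\mathcal{B}$.
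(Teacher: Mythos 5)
Your argument is correct and complete: attainment of the infimum on the compact set $\mathcal{B}$, the upper bound via $g(a_n)\leq f(a_n,b_0)$, and the lower bound via extracting a convergent subsequence of minimizers are exactly the standard proof of this fact, and sequential arguments suffice since $\mathcal{A}$ and $\mathcal{B}$ are metric spaces. Note that the paper does not prove this lemma at all; it is quoted as a known result (citing Lemma 14 of \cite{tan2011large}), so there is no in-paper proof to diverge from, and your write-up would serve as a self-contained justification. The only cosmetic remark is that the finiteness of $\liminf_n g(a_n)$ need not be assumed: it follows automatically once you identify it with $f(a_0,b_0')\in\Reals$ along the further subsequence.
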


Observe that the infimum in the previous lemma can be replaced by a supremum. We proceed with the proof of Corollary~\ref{cor::unif_conv_PUF}.
\begin{proof}
We first introduce 
\begin{align}
    \epsilon_{\max}(Q)\defined \max\{\mathcal{L}(Q,W) :\ W\in \mathcal{W}_N\}.
\end{align}
Observe that, by the maximality of $\epsilon_{\max}(Q)$, for all $\epsilon\geq\epsilon_{\max}(Q)$,
\begin{equation}
\label{eq:MaximalityEpsilonMax}
    \mathsf{H}(Q;\epsilon)=\mathsf{H}(Q;\epsilon_{\max}(Q)).
\end{equation}
By Lemma~\ref{lem:cont_UandL} the function $\mathcal{L}(\cdot,\cdot)$ is continuous over $\mathcal{Q}\times\mathcal{W}_N$. Since $\mathcal{W}_N$ is compact, Lemma~\ref{lem:inf_cont_overcomp_cont} implies that $\epsilon_{\max}(\cdot)$ is continuous over $\mathcal{Q}$. By hypothesis $\lim_{n} P_n = P$, thus $\lim_{n} \epsilon_{\max}(P_n) = \epsilon_{\max}(P)$ and, in particular, there exists $\epsilon_1 > \epsilon_0$ such that $\epsilon_{\max}(P_n) \leq \epsilon_1$ for all $n\in\mathbb{N}$. By \eqref{eq:MaximalityEpsilonMax}, for all $\epsilon\geq\epsilon_1$ and all $n\in\mathbb{N}$,
\begin{equation}
\mathsf{H}(P_n;\epsilon) = \mathsf{H}(P_n;\epsilon_{\max}(P_n)) = \mathsf{H}(P_n;\epsilon_1). \label{eq::H_max_equal_Pn}
\end{equation}
Observe that an analogous equality holds for $P$. Hence, we obtain that, for all $n\in\mathbb{N}$,
\begin{equation}
\label{eq:UniformConvergenceAux}
    \sup_{\epsilon\geq\epsilon_1} |\mathsf{H}(P_n;\epsilon) - \mathsf{H}(P;\epsilon)| = |\mathsf{H}(P_n;\epsilon_1) - \mathsf{H}(P_n;\epsilon_1)|.
\end{equation}
As established by Proposition~\ref{Prop:ContinuityH}, the right hand side of \eqref{eq:UniformConvergenceAux} vanishes as $n$ goes to infinity. Hence, we conclude that $\mathsf{H}(P_n;\cdot)$ converge uniformly to $\mathsf{H}(P;\cdot)$ over $[\epsilon_1,\infty)$. Finally, recall that the function $\mathsf{H}(P_n;\cdot)$ is non-decreasing over $[\epsilon_0,\epsilon_1]$ for each $n\in\mathbb{N}$. Furthermore, by condition~(\textbf{C.}\ref{cond:cont}), the function $\mathsf{H}(P;\cdot)$ is continuous over $[\epsilon_0,\epsilon_1]$. Therefore, Proposition~\ref{Prop:ContinuityH} and Lemma~\ref{lem::uni_conv_seq_func} imply that $\mathsf{H}(P_n;\cdot)$ converges uniformly to $\mathsf{H}(P;\cdot)$ over $[\epsilon_0,\epsilon_1]$. This shows that uniform convergence holds over $[\epsilon_0,\infty)$.
\end{proof}
}

\subsection{Theorem~\ref{thm:optimaltooptimal_mainthm}}
\label{Appendix:Proofoptimaltooptimal_mainthm}

The following lemma follows directly from \rev{Lemma~\ref{lem:inf_cont_overcomp_cont}} by noticing that $\mathcal{L}(\cdot,\cdot)$ is continuous over $\mathcal{Q}\times \mathcal{W}_N$ and $\mathcal{W}_N$ is compact.

\begin{lem}
\label{lem:epsilonmin}
Assume that conditions (\textbf{C.}\ref{cond:cont}--\ref{cond:comp}) hold true for a given closed set $\mathcal{Q}\subseteq\mathcal{P}$ and a given $N\in\mathbb{N}$. Then $\epsilon_{\min}(\cdot)$, as given in \eqref{eq:epsilonminReduced}, is continuous over $\mathcal{Q}$.
\end{lem}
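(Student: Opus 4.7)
The plan is to reduce the statement to a direct application of Lemma~\ref{lem:inf_cont_overcomp_cont}, using the ingredients collected in the previous lemmas and remarks. The core observation is that condition~(\textbf{C.}\ref{cond:comp}), together with Remark~\ref{rem:sup_max_Wn}, allows us to replace the infimum over the unbounded set $\mathcal{W}$ in the definition of $\epsilon_{\min}$ with a minimum over the compact set $\mathcal{W}_N$: namely, as stated in \eqref{eq:epsilonminReduced},
\begin{equation*}
    \epsilon_{\min}(Q) = \min\{\mathcal{L}(Q,W) :\ W\in\mathcal{W}_N\}.
\end{equation*}

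Next I would verify the two hypotheses of Lemma~\ref{lem:inf_cont_overcomp_cont} applied with $\mathcal{A}=\mathcal{Q}$, $\mathcal{B}=\mathcal{W}_N$, and $f(Q,W)=\mathcal{L}(Q,W)$. First, $\mathcal{W}_N$ is a closed and bounded subset of the finite-dimensional space $\mathbb{R}^{|\mathcal{X}|\times N}$, hence compact. Second, by Lemma~\ref{lem:cont_UandL}, which leverages conditions (\textbf{C.}\ref{cond:cont}) and (\textbf{C.}\ref{cond:lipcont}), the mapping $\mathcal{L}(\cdot,\cdot)$ is (jointly) continuous over $\mathcal{Q}\times\mathcal{W}_N$. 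Lemma~\ref{lem:inf_cont_overcomp_cont} (applied with the infimum replaced by a minimum, which is attained by compactness) then yields the continuity of $Q\mapsto \min_{W\in\mathcal{W}_N}\mathcal{L}(Q,W) = \epsilon_{\min}(Q)$ over $\mathcal{Q}$.

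There is essentially no obstacle here beyond assembling the prerequisites correctly; the only subtle point is ensuring that the reduction to $\mathcal{W}_N$ is legitimate, which is exactly what condition~(\textbf{C.}\ref{cond:comp}) and equation~\eqref{eq:epsilonminReduced} guarantee. Once this reduction is made, continuity follows from the standard fact that the pointwise minimum of a jointly continuous function over a compact parameter set is continuous in the remaining variable.
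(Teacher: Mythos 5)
Your proposal is correct and follows exactly the paper's argument: reduce $\epsilon_{\min}$ to a minimum over the compact set $\mathcal{W}_N$ via condition~(\textbf{C.}\ref{cond:comp}) and \eqref{eq:epsilonminReduced}, invoke the joint continuity of $\mathcal{L}(\cdot,\cdot)$ from Lemma~\ref{lem:cont_UandL}, and apply Lemma~\ref{lem:inf_cont_overcomp_cont}. No gaps.
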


We prove the following lemma which will be used in the proof of Theorem~\ref{thm:optimaltooptimal_mainthm}.
\begin{lem}
\label{lem:W_star_compact}
Assume that conditions (\textbf{C.}\ref{cond:cont}) and (\textbf{C.}\ref{cond:comp}) hold true for a given closed set $\mathcal{Q}\subseteq\mathcal{P}$ and a given $N\in\mathbb{N}$. Then the set $\mathcal{W}_N^*(P;\epsilon)$, defined in \eqref{eq:opt_pri_mech_W_N_star_defn}, is compact.
\end{lem}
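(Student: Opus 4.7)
The plan is to show that $\mathcal{W}_N^*(P;\epsilon)$ is a closed subset of the compact set $\mathcal{W}_N$, from which compactness follows immediately. Note that $\mathcal{W}_N$, viewed as a subset of $\mathbb{R}^{|\mathcal{X}|\times N}$, is closed and bounded, hence compact by the Heine--Borel theorem.

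First I would express $\mathcal{W}_N^*(P;\epsilon)$ as the intersection of two level-type sets:
\begin{equation*}
    \mathcal{W}_N^*(P;\epsilon) = \bigl\{W\in\mathcal{W}_N : \mathcal{L}(P,W)\leq\epsilon\bigr\} \;\cap\; \bigl\{W\in\mathcal{W}_N : \mathcal{U}(P,W)=\mathsf{H}(P;\epsilon)\bigr\}.
\end{equation*}
By condition (\textbf{C.}\ref{cond:cont}), the mappings $W\mapsto\mathcal{L}(P,W)$ and $W\mapsto\mathcal{U}(P,W)$ are continuous on $\mathcal{W}_N$. The first set is therefore the preimage of the closed half-line $(-\infty,\epsilon]$ under a continuous function and hence closed in $\mathcal{W}_N$. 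The second set is the preimage of the singleton $\{\mathsf{H}(P;\epsilon)\}$ (closed in $\mathbb{R}$) under a continuous function and hence also closed in $\mathcal{W}_N$. The intersection of two closed sets is closed, so $\mathcal{W}_N^*(P;\epsilon)$ is closed in $\mathcal{W}_N$, and being a closed subset of a compact set, it is compact.

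No step here is really an obstacle: the argument is essentially the standard observation that the set of maximizers of a continuous function on a compact set, subject to a closed constraint, is itself compact. The only subtlety is that $\mathcal{W}_N^*(P;\epsilon)$ could be empty (for instance when $\epsilon<\epsilon_{\min}(P)$), but the empty set is trivially compact, so this causes no issue; condition (\textbf{C.}\ref{cond:comp}) guarantees non-emptiness in the regime of interest $\epsilon\geq\epsilon_{\min}(P)$ but is not needed for the compactness claim itself.
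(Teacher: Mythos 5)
Your proof is correct and follows essentially the same route as the paper's: decompose $\mathcal{W}_N^*(P;\epsilon)$ into the sublevel set of $\mathcal{L}(P,\cdot)$ and the level set of $\mathcal{U}(P,\cdot)$, use condition (\textbf{C.}\ref{cond:cont}) to see both are closed as preimages of closed sets, and conclude compactness from closedness inside the compact $\mathcal{W}_N$. Your added remark that the empty case is harmless (and that (\textbf{C.}\ref{cond:comp}) is only needed for non-emptiness) is accurate but not required.
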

\begin{proof}
Observe that
\begin{equation}
\label{eq:CorollaryOptimaltoOptimalSplit}
    \mathcal{W}^*_N(P;\epsilon) = \{W\in\mathcal{W}_N : \mathcal{L}(P,W) \leq \epsilon\} \cap \{W\in\mathcal{W}_N : \mathcal{U}(P,W) = \mathsf{H}(P;\epsilon)\}.
\end{equation}
Recall that, by condition (\textbf{C.}\ref{cond:cont}), the mappings $\mathcal{L}(P,\cdot)$ and $\mathcal{U}(P,\cdot)$ are continuous. Since both sets in the RHS of \eqref{eq:CorollaryOptimaltoOptimalSplit} are the preimage of a closed set under a continuous function, they are closed. Therefore, $\mathcal{W}^*_N(P;\epsilon)$ is closed as the finite intersection of closed sets is closed. Finally, we conclude that $\mathcal{W}^*_N(P;\epsilon)$ is compact since it is a closed subset of the compact set $\mathcal{W}_N$.
\end{proof}

Observe that, by Lemma~\ref{lem:epsilonmin}, $\lim_n \epsilon_{\min}(P_n) = \epsilon_{\min}(P)$ whenever $\lim_n P_n = P$. Hence, if $\epsilon_{\min}(P) < \epsilon$, then $\epsilon_{\min}(P_n) < \epsilon$ for $n$ large enough. Therefore, under the assumptions of Theorem~\ref{thm:optimaltooptimal_mainthm}, for $n$ large enough we have that the set $\mathcal{W}^*(P_n;\epsilon)$ is non-empty and, by condition (\textbf{C.}\ref{cond:comp}), $\mathcal{W}_N^*(P_n;\epsilon)$ is non-empty as well. Now we are in position to prove Theorem~\ref{thm:optimaltooptimal_mainthm}.

\begin{proof}
We assume, without loss of generality, that $\mathcal{W}_N^*(P_n;\epsilon)$ is non-empty for all $n\in\mathbb{N}$. In order to reach contradiction, assume that \eqref{eq:optimaltooptimal_mainthm} does not hold true, i.e., there exists $W_n^*\in\mathcal{W}_N^*(P_n;\epsilon)$ such that
\begin{equation}
    \limsup_{n\to\infty} \textsf{dist}(W_n,\mathcal{W}_N^*(P;\epsilon)) > 0.
\end{equation}
Since $\mathcal{W}_N$ and $\mathcal{W}_N^*(P;\epsilon)$ are compact from Lemma~\ref{lem:W_star_compact}, a routine argument shows that there exist a subsequence $(W^*_{n_k})_{k=1}^\infty$ and $W_0\in\mathcal{W}_N$ such that $\lim_k W^*_{n_k} = W_0$ and $W_0\not\in \mathcal{W}_N^*(P;\epsilon)$. Lemma~\ref{lem:cont_UandL} shows that $\mathcal{L}(\cdot,\cdot)$ is continuous over $\mathcal{Q}\times\mathcal{W}_N$. Therefore, we have
\begin{align}
    \mathcal{L}(P,W_0) 
    =\lim_{k\to\infty} \mathcal{L}(P_{n_k},W^*_{n_k})
    \leq \epsilon.
\end{align}
Hence, by the maximality of the privacy-utility function, $\mathcal{U}(P,W_0) \leq \mathsf{H}(P;\epsilon)$. Using a similar argument, one can show that
\begin{equation}
\label{eq:optimaloptimal_112387}
    \mathcal{U}(P,W_0) = \lim_{k\to\infty} \mathcal{U}(P_{n_k},W^*_{n_k}) = \lim_{k\to\infty} \mathsf{H}(P_{n_k},\epsilon)=\mathsf{H}(P;\epsilon),
\end{equation}
where the last equality follows from the fact that the mapping $\mathsf{H}(\cdot;\epsilon)$ is continuous, as established in Proposition~\ref{Prop:ContinuityH}. Altogether,
\begin{equation}
\mathcal{L}(P,W_0) \leq \epsilon \quad \text{and} \quad \mathcal{U}(P,W_0) = \mathsf{H}(P;\epsilon).
\end{equation}
Therefore, $W_0\in \mathcal{W}_N^*(P;\epsilon)$ which contradicts the fact that $W_0\notin \mathcal{W}_N^*(P;\epsilon)$.

Furthermore, by the strong law of large numbers, $(\hat{P}_n)_{n=1}^\infty$ converges almost surely to $P$ as $n$ goes to infinity. Therefore, if $P_n=\hat{P}_n$, then
\begin{align}
    \Pr\left(\lim_{n\to\infty}\mathsf{dist}(W^*_n,\mathcal{W}_N^*(P;\epsilon)) = 0\right)=1,
\end{align}
as claimed.
\end{proof}

Note that in the previous proof we implicitly assume that, for each $n\in\mathbb{N}$, the (possibly random) privacy mechanism $W^*_n \in \mathcal{W}_N^*(\hat{P}_n;\epsilon)$ is a random variable, i.e., a measurable function. This is a minor subtlety given the rareness of non-measurable sets (functions) \cite{oxtoby2013measure}.

\subsection{Theorem~\ref{thm:setvalued_optimalprivacymechanism}}
\label{Appendix:setvalued_optimalprivacymechanism}

Before we prove Theorem~\ref{thm:setvalued_optimalprivacymechanism}, let us recall some definitions and theorems from set-valued analysis \cite{aubin2009set}.

Given two metric spaces $\mathcal{A}$ and $\mathcal{B}$, a set-valued mapping $F$ from $\mathcal{A}$ to $\mathcal{B}$, denoted by $F:\mathcal{A}\leadsto\mathcal{B}$, is a function from $\mathcal{A}$ to the subsets of $\mathcal{B}$. The domain of a set-valued mapping $F:\mathcal{A}\leadsto\mathcal{B}$ is defined as
\begin{equation}
    \text{Dom}(F) \defined \{a\in\mathcal{A} : F(a) \neq \emptyset\}.
\end{equation}
For $r>0$, $\mathcal{A}_r(a) \defined \{a'\in\mathcal{A} : d_{\mathcal{A}}(a,a') < r\}$ where $d_{\mathcal{A}}$ denotes the metric associated to $\mathcal{A}$.

\begin{defn}[Definition 1.4.1, \cite{aubin2009set}]
A set-valued mapping $F:\mathcal{A}\leadsto\mathcal{B}$ is called upper semicontinuous at $a\in\text{Dom}(F)$ if and only if for every neighborhood $\mathcal{N}$ of $F(a)$ there exists $r>0 $ such that $F(a')\subset\mathcal{N}$ for all $a'\in \mathcal{A}_r(a)$. A set-valued mapping is said to be upper semicontinuous if and only if it is upper semicontinuous at every point of its domain.
\end{defn}

\begin{defn}[Definition 1.4.2, \cite{aubin2009set}]
\label{defn:aubin2009lower_semicontinuous}
A set-valued mapping $F: \mathcal{A} \leadsto \mathcal{B}$ is called lower semicontinuous at $a\in\text{Dom}(F)$ if and only if for every $b\in F(a)$ and every sequence $(a_n)_{n=1}^\infty\subset\text{Dom}(F)$ with $\lim_n a_n = a$, there exists a sequence $(b_n)_{n=1}^\infty$ such that $b_n \in F(a_n)$ for each $n\in\mathbb{N}$ and $\lim_n b_n = b$. A set-valued mapping is said to be lower semicontinuous if and only if it is lower semicontinuous at every point of its domain.
\end{defn}

By definition, a the set-valued mapping $F:\mathcal{A}\leadsto\mathcal{B}$ is continuous at $a\in\mathcal{A}$ if and only if it is both upper and lower semicontinuous at $a$. 

\begin{thm}[Theorem 1.4.13, \cite{aubin2009set}]
\label{thm:aubin2009set_generic_cont}
Let $F$ be a set-valued map from a complete metric space $\mathcal{A}$ to a complete separable metric space $\mathcal{B}$. If $F$ is upper semicontinuous, then it is continuous on a residual set of $\mathcal{A}$.
\end{thm}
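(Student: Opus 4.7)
The strategy is to cast the statement as a generic continuity result for a set-valued map and then apply Theorem~\ref{thm:aubin2009set_generic_cont} (Aubin--Frankowska). Concretely, set $\mathcal{Q}_\delta \defined \{Q\in\mathcal{Q}:\epsilon_{\min}(Q)+\delta\leq\epsilon\}$ and define the set-valued mapping $F:\mathcal{Q}_\delta\leadsto\mathcal{W}_N$ by
\[
F(Q) \defined \mathcal{W}_N^*(Q;\epsilon).
\]
The target conclusion is precisely that $F$ is \emph{lower semicontinuous} (in the sense of Definition~\ref{defn:aubin2009lower_semicontinuous}) at every point of a residual subset $\mathcal{Q}'\subseteq\mathcal{Q}_\delta$. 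I will obtain this by showing $F$ is everywhere upper semicontinuous with non-empty values, so Theorem~\ref{thm:aubin2009set_generic_cont} yields continuity (hence lower semicontinuity) on a residual set.

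\textbf{Step 1: $\mathcal{Q}_\delta$ is a complete metric space and $\text{Dom}(F)=\mathcal{Q}_\delta$.} By Lemma~\ref{lem:epsilonmin}, $\epsilon_{\min}(\cdot)$ is continuous on $\mathcal{Q}$; hence $\mathcal{Q}_\delta$ is a closed subset of $\mathcal{Q}$, which is itself a closed subset of the finite-dimensional probability simplex. Thus $\mathcal{Q}_\delta$ is complete in the $\ell_1$ metric. Moreover, for every $Q\in\mathcal{Q}_\delta$ we have $\epsilon\geq\epsilon_{\min}(Q)+\delta>\epsilon_{\min}(Q)$, so condition (\textbf{C.}\ref{cond:comp}) guarantees $\mathcal{W}_N^*(Q;\epsilon)\neq\emptyset$, i.e.~$F(Q)\neq\emptyset$. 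The codomain $\mathcal{W}_N$ is compact and, being a subset of a finite-dimensional Euclidean space, separable and complete.

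\textbf{Step 2: $F$ is upper semicontinuous on $\mathcal{Q}_\delta$.} Fix $Q\in\mathcal{Q}_\delta$ and let $\mathcal{N}$ be an open neighborhood of $F(Q)$. Suppose, for contradiction, that upper semicontinuity fails at $Q$: then there exists a sequence $Q_n\to Q$ in $\mathcal{Q}_\delta$ and $W_n^*\in F(Q_n)\setminus\mathcal{N}$. Since $\mathcal{W}_N\setminus\mathcal{N}$ is closed in the compact set $\mathcal{W}_N$, after passing to a subsequence we may assume $W_n^*\to W_0\in\mathcal{W}_N\setminus\mathcal{N}$. Because $\epsilon_{\min}(Q)<\epsilon$, the hypotheses of Theorem~\ref{thm:optimaltooptimal_mainthm} hold, and Corollary~\ref{Corollary:OptimaltoOptimal} forces $W_0\in\mathcal{W}_N^*(Q;\epsilon)=F(Q)\subseteq\mathcal{N}$, contradicting $W_0\notin\mathcal{N}$. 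Thus $F$ is upper semicontinuous.

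\textbf{Step 3: Apply Theorem~\ref{thm:aubin2009set_generic_cont} and translate.} The hypotheses of Theorem~\ref{thm:aubin2009set_generic_cont} are met by Steps 1--2, so there exists a residual set $\mathcal{Q}'\subseteq\mathcal{Q}_\delta$ on which $F$ is continuous, in particular \emph{lower} semicontinuous. Unwinding Definition~\ref{defn:aubin2009lower_semicontinuous} at any $Q_0\in\mathcal{Q}'$: for every $W_0^*\in F(Q_0)=\mathcal{W}_N^*(Q_0;\epsilon)$ and every sequence $(Q_n)_{n=1}^\infty\subset\mathcal{Q}_\delta$ with $Q_n\to Q_0$, there exists $(W_n^*)_{n=1}^\infty$ with $W_n^*\in F(Q_n)=\mathcal{W}_N^*(Q_n;\epsilon)$ and $W_n^*\to W_0^*$. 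This is exactly the statement of Theorem~\ref{thm:setvalued_optimalprivacymechanism}.

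\textbf{Main obstacle.} The only non-routine ingredient is verifying upper semicontinuity (Step~2); this is where the restriction $\epsilon_{\min}(Q)+\delta\leq\epsilon$ is essential, since the proof of Corollary~\ref{Corollary:OptimaltoOptimal} via Theorem~\ref{thm:optimaltooptimal_mainthm} requires the strict inequality $\epsilon_{\min}(Q)<\epsilon$ (cf.~Example~\ref{Example:LackContinuity}, which shows that continuity can genuinely fail at the boundary $\epsilon_{\min}(Q)=\epsilon$). Once Step~2 is secured, Steps~1 and 3 are bookkeeping: checking completeness of the domain, separability/completeness of the codomain, and rephrasing Definition~\ref{defn:aubin2009lower_semicontinuous}.
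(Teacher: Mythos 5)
There is a genuine gap: your proposal does not prove the statement it was asked to prove. The statement in question is the general Aubin--Frankowska result (Theorem~1.4.13 of \cite{aubin2009set}): for \emph{any} upper semicontinuous set-valued map $F$ from a complete metric space $\mathcal{A}$ into a complete separable metric space $\mathcal{B}$, the set of continuity points of $F$ is residual in $\mathcal{A}$. The paper does not prove this; it imports it verbatim from the literature and uses it as a black box. Your argument does exactly the same thing: in Step~3 you explicitly ``apply Theorem~\ref{thm:aubin2009set_generic_cont}'', i.e.\ you invoke the very statement under consideration, and what you actually establish is the paper's Theorem~\ref{thm:setvalued_optimalprivacymechanism} (together with Lemma~\ref{lem:set-valued_F_continuous}, which is your Step~2, essentially reproducing the argument in Appendix~\ref{Appendix:setvalued_optimalprivacymechanism}). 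Relative to the assigned statement this is circular, and none of the content of Steps~1--3 bears on it: compactness of $\mathcal{W}_N$, condition (\textbf{C.}\ref{cond:comp}), Corollary~\ref{Corollary:OptimaltoOptimal}, and the threshold $\epsilon_{\min}(Q)+\delta\leq\epsilon$ are all features of the privacy application, whereas the theorem to be proved concerns arbitrary $\mathcal{A}$, $\mathcal{B}$, and $F$ with no compactness available.

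A proof of the actual statement requires a Baire-category argument that is entirely absent from your proposal. The standard route is: fix a countable base $(B_k)_{k\geq1}$ of open sets of the separable space $\mathcal{B}$; observe that $F$ fails to be lower semicontinuous at $a$ only if for some $k$ one has $F(a)\cap B_k\neq\emptyset$ while $a$ is not an interior point of the inverse image $F^{-1}(B_k)=\{a'\in\mathcal{A}: F(a')\cap B_k\neq\emptyset\}$; use upper semicontinuity of $F$ to show that each exceptional set $F^{-1}(B_k)\setminus \operatorname{int}\bigl(F^{-1}(B_k)\bigr)$ is nowhere dense; conclude that the set of points where lower semicontinuity (hence continuity) fails is meager, so by Baire's theorem in the complete space $\mathcal{A}$ its complement is residual. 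If your intent was instead to prove Theorem~\ref{thm:setvalued_optimalprivacymechanism}, then your argument is correct and coincides with the paper's, but that is a different statement from the one posed here.
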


The proof of Theorem~\ref{thm:setvalued_optimalprivacymechanism} relies on the following lemma.

\begin{lem}
\label{lem:set-valued_F_continuous}
Assume that conditions (\textbf{C.}\ref{cond:cont}--\ref{cond:comp}) hold true for a given closed set $\mathcal{Q}\subseteq\mathcal{P}$ and a given $N\in\mathbb{N}$. For any $\epsilon\in\mathbb{R}$, the set-valued mapping $\mathcal{W}_N^*(\cdot;\epsilon)$ is upper semicontinuous over $\{Q\in\mathcal{Q}:\epsilon_{\min}(Q) < \epsilon\}$.
\end{lem}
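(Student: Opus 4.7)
The plan is to reduce upper semicontinuity of $\mathcal{W}_N^*(\cdot;\epsilon)$ at a point $P$ with $\epsilon_{\min}(P)<\epsilon$ to a sequential criterion, and then appeal to the argument already carried out inside the proof of Theorem~\ref{thm:optimaltooptimal_mainthm}. Since $\mathcal{W}_N$ is a compact subset of a finite-dimensional Euclidean space, upper semicontinuity in the sense of Definition 1.4.1 of \cite{aubin2009set} at $P$ is equivalent to the following sequential statement: for every sequence $(P_n)\subset\mathcal{Q}$ with $\lim_n P_n = P$ and every selection $W_n^*\in\mathcal{W}_N^*(P_n;\epsilon)$, every cluster point of $(W_n^*)$ lies in $\mathcal{W}_N^*(P;\epsilon)$. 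I would record the short contradiction underlying this equivalence: if the neighborhood characterization failed, some open $\mathcal{N}\supseteq\mathcal{W}_N^*(P;\epsilon)$ would admit $P_n\to P$ and $W_n^*\in\mathcal{W}_N^*(P_n;\epsilon)\setminus\mathcal{N}$; by compactness of the closed set $\mathcal{W}_N\setminus\mathcal{N}$, a subsequence would converge to some $W_0\notin\mathcal{N}$, yet the sequential criterion would force $W_0\in\mathcal{W}_N^*(P;\epsilon)\subseteq\mathcal{N}$, a contradiction.

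Next, I would verify the sequential criterion. Lemma~\ref{lem:epsilonmin} ensures that $\epsilon_{\min}(P_n)<\epsilon$ for all sufficiently large $n$, so by condition (\textbf{C.}\ref{cond:comp}) the sets $\mathcal{W}_N^*(P_n;\epsilon)$ are non-empty and selections $W_n^*\in\mathcal{W}_N^*(P_n;\epsilon)$ exist; without loss of generality assume they exist for all $n$. Pick any cluster point $W_0=\lim_k W_{n_k}^*$. The joint continuity of $\mathcal{L}$ from Lemma~\ref{lem:cont_UandL} yields
\begin{equation}
\mathcal{L}(P,W_0)=\lim_{k\to\infty}\mathcal{L}(P_{n_k},W_{n_k}^*)\leq\epsilon,
\end{equation}
while the joint continuity of $\mathcal{U}$ combined with the continuity of $\mathsf{H}(\cdot;\epsilon)$ on $\{Q\in\mathcal{Q}:\epsilon_{\min}(Q)<\epsilon\}$ established in Proposition~\ref{Prop:ContinuityH} gives
\begin{equation}
\mathcal{U}(P,W_0)=\lim_{k\to\infty}\mathcal{U}(P_{n_k},W_{n_k}^*)=\lim_{k\to\infty}\mathsf{H}(P_{n_k};\epsilon)=\mathsf{H}(P;\epsilon).
\end{equation}
Therefore $W_0\in\mathcal{W}_N^*(P;\epsilon)$, which is precisely the sequential property needed.

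The only delicate point is that the continuity of $\mathsf{H}(\cdot;\epsilon)$ at $P$ is used in an essential way, which explains why the statement is restricted to the region $\{Q\in\mathcal{Q}:\epsilon_{\min}(Q)<\epsilon\}$: Example~\ref{Example:LackContinuity} shows that on the boundary $\{Q\in\mathcal{Q}:\epsilon_{\min}(Q)=\epsilon\}$ the privacy-utility function may jump, and upper semicontinuity of $\mathcal{W}_N^*(\cdot;\epsilon)$ can genuinely fail there. Everything else in the argument is standard compactness combined with the joint continuity results already proved for $\mathcal{L}$ and $\mathcal{U}$, so no additional machinery is required.
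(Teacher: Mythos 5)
Your proposal is correct and follows essentially the same route as the paper: the paper also argues by contradiction from the neighborhood definition, extracts a convergent subsequence of optimal mechanisms via compactness of $\mathcal{W}_N$, and concludes via Corollary~\ref{Corollary:OptimaltoOptimal} (whose proof is exactly the sequential verification you carry out inline, using Lemma~\ref{lem:cont_UandL} and Proposition~\ref{Prop:ContinuityH}). The only cosmetic difference is that you re-derive the cluster-point property rather than citing the corollary that packages it.
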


\begin{proof}
For ease of notation, let $F:\mathcal{Q}\leadsto\mathcal{W}_N$ be the set-valued mapping defined by $F(Q) \defined \mathcal{W}_N^*(Q;\epsilon)$. Observe that $\text{Dom}(F) = \{Q\in\mathcal{Q} : \epsilon_{\min}(Q) \leq \epsilon\}$. In order to reach contradiction, assume that the set-valued mapping $F$ is not upper semicontinuous over $\{Q\in\mathcal{Q}:\epsilon_{\min}(Q) < \epsilon\}$. In this case, there exist a joint distribution $Q \in \mathcal{Q}$ with $\epsilon_{\min}(Q) < \epsilon$ and a neighborhood $\mathcal{N}$ of $F(Q)$ such that for every $r>0$ there exists $Q_r \in \mathcal{Q}_r(Q)$ with $F(Q_r)\not\subset\mathcal{N}$. For each $n\in\mathbb{N}$, let $W^*_n$ be such that $W^*_n\in F(Q_{1/n})\cap\mathcal{N}^c$, where $\mathcal{N}^c$ denotes the complement of $\mathcal{N}$ w.r.t.~$\mathcal{W}_N$. Since $\mathcal{N}^c$ is closed and $\mathcal{W}_N$ is compact, we have that $\mathcal{N}^c$ is also compact. Hence, there exists a subsequence $(W^*_{n_k})_{k=1}^\infty$ converging to some $W\in\mathcal{N}^c$. Since $F(Q)\subset\mathcal{N}$, we have that $W\not\in F(Q)$. Recall that $\lim_k Q_{1/n_k} = Q$ by construction, and thus Corollary~\ref{Corollary:OptimaltoOptimal} implies that $W\in \mathcal{W}_N^*(Q;\epsilon)$. Since this contradicts the fact that $W\not\in \mathcal{W}_N^*(Q;\epsilon)$, we conclude that $F$ is necessarily upper semicontinuous over $\{Q\in\mathcal{Q}:\epsilon_{\min}(Q) < \epsilon\}$.
\end{proof}

Now we are in position to prove Theorem~\ref{thm:setvalued_optimalprivacymechanism}.

\begin{proof}
For ease of notation, let
\begin{equation}
\mathcal{Q}^{(\delta)} \defined \{Q\in\mathcal{Q}:\epsilon_{\min}(Q) + \delta \leq \epsilon\}.
\end{equation}
Observe that $\mathcal{Q}^{(\delta)}\subset\{Q\in\mathcal{Q}:\epsilon_{\min}(Q) < \epsilon\}$. Thus, Lemma~\ref{lem:set-valued_F_continuous} implies that the mapping $\mathcal{W}_N^*(\cdot;\epsilon)$ is upper semicontinuous over $\mathcal{Q}^{(\delta)}$. Observe that both $\mathcal{Q}^{(\delta)}$ and $\mathcal{W}_N$ are compact metric spaces. Therefore, Theorem~\ref{thm:aubin2009set_generic_cont} establishes the lower semicontinuity of the mapping $\mathcal{W}_N^*(\cdot;\epsilon)$ on a residual set of $\mathcal{Q}^{(\delta)}$. Theorem~\ref{thm:setvalued_optimalprivacymechanism} follows immediately from this fact and Definition~\ref{defn:aubin2009lower_semicontinuous}.
\end{proof}

\section{Proofs from Section~\ref{Section:UniformRobustness}}
\label{Appendix:ProofsV}
\subsection{Lemma~\ref{lem:uni_pri_mech_well_defined}}
\label{Appendix:Proofuni_pri_mech_well_define}

\begin{proof}
\ 

\textnormal{(i)} Let $W\in\mathcal{W}_N$ be fixed. By condition (\textbf{C.}\ref{cond:lipcont}), the mapping $\mathcal{U}(\cdot,W)$ is (Lipschitz) continuous over $\mathcal{Q}_r(\hat{P})$. Since $\mathcal{Q}_r(\hat{P})$ is compact, the mapping $\mathcal{U}(\cdot,W)$ attains its infimum over $\mathcal{Q}_r(\hat{P})$.

\textnormal{(ii)} Assume that $\mathcal{D}_{\mathcal{Q},N}(\hat{P};\epsilon,r)$ is non-empty. By \textnormal{(i)}, we have that
\begin{equation}
    \mathcal{U}_r(\hat{P},W) = \min_{Q\in\mathcal{Q}_r(\hat{P})} \mathcal{U}(Q,W)
\end{equation}
is well defined. As established in Lemma~\ref{lem:cont_UandL}, the mapping $\mathcal{U}(\cdot,\cdot)$ is continuous over $\mathcal{Q}\times\mathcal{W}_N$ and in particular over $\mathcal{Q}_r(\hat{P})\times\mathcal{W}_N$. Hence, Lemma~\ref{lem:inf_cont_overcomp_cont} implies that $\mathcal{U}_r(\hat{P},\cdot)$ is continuous over $\mathcal{W}_N$.

Note that, by definition,
\begin{equation}
\label{eq:DUniformIntersection}
    \mathcal{D}_{\mathcal{Q},N}(\hat{P};\epsilon,r) = \bigcap_{Q\in\mathcal{Q}_r(\hat{P})} \mathcal{D}_N(Q;\epsilon),
\end{equation}
where $\mathcal{D}_N(Q;\epsilon) = \{W\in\mathcal{W}_N : \mathcal{L}(Q,W) \leq \epsilon\}$. Condition (\textbf{C.}\ref{cond:cont}) implies that, for every $Q\in\mathcal{Q}$, the mapping $\mathcal{L}(Q,\cdot)$ is continuous over $\mathcal{W}_N$. Thus, we have that $\mathcal{D}_N(Q;\epsilon)$ is compact as $\mathcal{W}_N$ is compact. Since the arbitrary intersection of compact sets is also compact, \eqref{eq:DUniformIntersection} readily shows that $\mathcal{D}_{\mathcal{Q},N}(\hat{P};\epsilon,r)$ is compact. Therefore, the (continuous) mapping $\mathcal{U}_r(\hat{P},\cdot)$ attains its supremum over $\mathcal{D}_{\mathcal{Q},N}(\hat{P};\epsilon,r)$, as required.
\end{proof}
\subsection{Theorem~\ref{thm:desn_unfpri_mech}}
\label{Appendix:Proof_lem:desn_unfpri_mech}

Consider the following lemma.
\begin{lem}
\label{lem:append_bound_for_HP}
Assume that conditions (\textbf{C.}\ref{cond:cont}--\ref{cond:comp}) hold true for a given closed set $\mathcal{Q}\subseteq\mathcal{P}$ and a given $N\in\mathbb{N}$. Let $\epsilon\in\Reals$ and $r\geq 0$ be given. If $P \in \mathcal{Q}_r(\hat{P})$ and $\epsilon-C_Lr\geq \epsilon_{\min}(\hat{P})$, then
\begin{equation}
\label{equa:bound_M_1}
\mathsf{H}(P;\epsilon) \leq \mathsf{H}(\hat{P};\epsilon+C_L r) + C_U r.
\end{equation}
\end{lem}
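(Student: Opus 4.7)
The plan is to produce a single privacy mechanism that is simultaneously near-optimal for $P$ and feasible (at a slightly relaxed privacy level) for $\hat P$, then read off the inequality from the Lipschitz continuity in condition (\textbf{C.}\ref{cond:lipcont}). The whole argument is a one-line transfer: optimize on $P$, pay $C_L r$ in privacy and $C_U r$ in utility to move to $\hat P$.

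First I would check that $\mathcal{W}_N^*(P;\epsilon)$ is non-empty, so that the supremum defining $\mathsf{H}(P;\epsilon)$ is attained (this is what lets me pick a clean reference mechanism). Condition (\textbf{C.}\ref{cond:comp}) reduces this to verifying $\epsilon_{\min}(P)\leq\epsilon$. Since $P\in\mathcal{Q}_r(\hat{P})$, condition (\textbf{C.}\ref{cond:lipcont}) applied to $\mathcal{L}(\cdot,W)$ yields $|\epsilon_{\min}(P)-\epsilon_{\min}(\hat{P})|\leq C_L r$ (a standard consequence, also implicit in Lemma~\ref{lem:epsilonmin}), and combining with the hypothesis $\epsilon-C_L r\geq\epsilon_{\min}(\hat{P})$ gives $\epsilon_{\min}(P)\leq\epsilon_{\min}(\hat{P})+C_L r\leq\epsilon$, as needed.

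Next, I would pick $W^*\in\mathcal{W}_N^*(P;\epsilon)$, so that $\mathcal{L}(P,W^*)\leq\epsilon$ and $\mathcal{U}(P,W^*)=\mathsf{H}(P;\epsilon)$. Applying the Lipschitz bounds in \eqref{eq:PrivacyLemmaContinuity} and \eqref{eq:UtilityLemmaContinuity} with $Q_1=\hat{P}$, $Q_2=P$ gives
\begin{align*}
\mathcal{L}(\hat{P},W^*) &\leq \mathcal{L}(P,W^*)+C_L\|P-\hat{P}\|_1 \leq \epsilon+C_L r,\\
\mathcal{U}(\hat{P},W^*) &\geq \mathcal{U}(P,W^*)-C_U\|P-\hat{P}\|_1 \geq \mathsf{H}(P;\epsilon)-C_U r.
\end{align*}
The first inequality says $W^*\in\mathcal{D}_N(\hat{P};\epsilon+C_L r)$, so by the definition of $\mathsf{H}(\hat{P};\epsilon+C_L r)$ as a supremum (cf.\ \eqref{eq:HC1}) we get $\mathsf{H}(\hat{P};\epsilon+C_L r)\geq\mathcal{U}(\hat{P},W^*)\geq \mathsf{H}(P;\epsilon)-C_U r$, which is \eqref{equa:bound_M_1}.

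There is no real obstacle here: every step is forced. The only thing to be a bit careful about is the non-emptiness of $\mathcal{W}_N^*(P;\epsilon)$, which is where the assumption $\epsilon-C_L r\geq\epsilon_{\min}(\hat{P})$ is consumed; without it the reference mechanism $W^*$ might not exist and the supremum defining $\mathsf{H}(P;\epsilon)$ would have to be approached by a sequence (requiring an extra $\eta\downarrow 0$ argument), but the bound would still hold.
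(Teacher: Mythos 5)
Your proposal is correct and follows essentially the same route as the paper's proof: both establish $\epsilon_{\min}(P)\leq\epsilon$ from the hypothesis via the Lipschitz bound on $\mathcal{L}(\cdot,W)$, pick an optimal mechanism $W^*$ for $P$ at level $\epsilon$ (whose existence is guaranteed by condition (\textbf{C.}\ref{cond:comp})), and transfer it to $\hat{P}$ at the relaxed level $\epsilon+C_Lr$, paying $C_Ur$ in utility. The only cosmetic difference is that the paper bounds $\epsilon_{\min}(P)$ one-sidedly via an explicit witness $W'$ achieving $\epsilon_{\min}(\hat{P})$ rather than invoking the two-sided Lipschitz estimate $|\epsilon_{\min}(P)-\epsilon_{\min}(\hat{P})|\leq C_Lr$.
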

\begin{proof}
We start proving that $\epsilon \geq \epsilon_{\min}(P)$. Remark~\ref{rem:sup_max_Wn} implies that there exists $W'\in\mathcal{W}_N$ such that $\mathcal{L}(\hat{P},W') = \epsilon_{\min}(\hat{P})$. By the Lipschitz continuity given in condition (\textbf{C.}\ref{cond:lipcont}), we have that
\begin{align}
\mathcal{L}(P,W') 
&\leq \mathcal{L}(\hat{P},W')+|\mathcal{L}(\hat{P},W')-\mathcal{L}(P,W')| \\
&\leq \epsilon_{\min}(\hat{P})+C_L \|\hat{P}-P\|_1 \\
&\leq \epsilon_{\min}(\hat{P})+C_L r \leq \epsilon,
\end{align}
where the last inequality follows from the assumption $\epsilon-C_Lr\geq \epsilon_{\min}(\hat{P})$. By the minimality of $\epsilon_{\min}(P)$, we conclude that $\epsilon_{\min}(P)\leq \epsilon$. Then, Remark~\ref{rem:sup_max_Wn} implies that there exists $W\in\mathcal{W}_N$ such that $\mathcal{L}(P,W) \leq \epsilon$ and 
\begin{equation}
\mathsf{H}(P;\epsilon) = \mathcal{U}(P,W).
\end{equation}
By the Lipschitz continuity given in condition (\textbf{C.}\ref{cond:lipcont}), we have that
\begin{equation}
|\mathcal{U}(\hat{P},W)-\mathcal{U}(P,W)| \leq C_U \|\hat{P}-P\|_1 \leq C_U r.
\end{equation}
In particular,
\begin{equation}
\label{equa:bound_M}
\mathsf{H}(P;\epsilon) \leq \mathcal{U}(\hat{P},W) + C_U r.
\end{equation}
Similarly, since
\begin{equation}
|\mathcal{L}(\hat{P},W)-\mathcal{L}(P,W)| \leq C_L\|\hat{P}-P\|_1 \leq C_L r,
\end{equation}
we have $\mathcal{L}(\hat{P},W) \leq \mathcal{L}(P,W)+ C_L r \leq \epsilon + C_L r$. Therefore, from inequality~\eqref{equa:bound_M} and the maximality of the privacy-utility function, we conclude that 
\begin{equation}
\mathsf{H}(P;\epsilon) \leq \mathsf{H}(\hat{P};\epsilon + C_L r) + C_U r,
\end{equation}
as we wanted to prove.
\end{proof}

Now we are in position to prove Theorem~\ref{thm:desn_unfpri_mech}.

\begin{proof}
For any $W\in\mathcal{D}_N(\hat{P};\epsilon-C_Lr)$, we have $\mathcal{L}(\hat{P},W) + C_Lr \leq \epsilon$. By the Lipschitz continuity given in condition (\textbf{C.}\ref{cond:lipcont}), for every $Q\in\mathcal{Q}_r(\hat{P})$,
\begin{equation}
|\mathcal{L}(\hat{P},W)-\mathcal{L}(Q,W)|\leq C_L\|\hat{P}-Q\|_1 \leq C_Lr.
\end{equation}
Hence, $\mathcal{L}(Q,W) \leq \epsilon$ and $W\in\mathcal{D}_N(Q;\epsilon)$ for every $Q\in\mathcal{Q}_r(\hat{P})$. By \eqref{eq:UniformPrivacyMechanismsIntersection}, it follows that $W\in \mathcal{D}_{\mathcal{Q},N}(\hat{P};\epsilon,r)$. 

Let $W^*\in\mathcal{W}_N^*(\hat{P};\epsilon-C_Lr)$ and $W^{\dagger}\in\mathcal{W}^{\dagger}_{\mathcal{Q},N}(\hat{P};\epsilon,r)$. By definition,
\begin{align}
\label{eq:proof_desn_U_eq_H}
\mathcal{U}(\hat{P},W^*)=\mathsf{H}(\hat{P};\epsilon-C_Lr).
\end{align}
By the Lipschitz continuity given in condition (\textbf{C.}\ref{cond:lipcont}), we have that
\begin{equation}
\label{eq:proof_desn_lip_U}
|\mathcal{U}(\hat{P},W^*)-\mathcal{U}(P,W^*)| 
\leq C_U \|\hat{P}-P\|_1 \leq C_U r.
\end{equation}
Combining \eqref{eq:proof_desn_U_eq_H} and \eqref{eq:proof_desn_lip_U} together, we have
\begin{align}
\label{eq::proof_UP_Uhat_bound}
\mathcal{U}(P,W^*)
\geq \mathsf{H}(\hat{P};\epsilon-C_Lr)-C_Ur.
\end{align}
Since $W^{\dagger}\in\mathcal{W}^{\dagger}_{\mathcal{Q},N}(\hat{P};\epsilon,r)\subset\mathcal{D}_{\mathcal{Q},N}(\hat{P};\epsilon,r)$ and $P\in\mathcal{Q}_r(\hat{P})$, \eqref{eq:UniformPrivacyMechanismsIntersection} implies that $W^{\dagger}\in\mathcal{D}_N(P;\epsilon)$. Thus, by the maximality of $\mathsf{H}(P;\epsilon)$, we have
\begin{equation}
\label{eq:proof_Maximality_H_Uniform}
\mathcal{U}(P,W^{\dagger}) \leq \mathsf{H}(P;\epsilon).
\end{equation}
An immediate application of Lemma~\ref{lem:append_bound_for_HP} shows that
\begin{align}
\label{eq::proof_UP_dag_H_bound}
\mathcal{U}(P,W^{\dagger}) \leq \mathsf{H}(P;\epsilon) \leq \mathsf{H}(\hat{P};\epsilon+C_L r) + C_U r.
\end{align}
Therefore, combining \eqref{eq::proof_UP_Uhat_bound} and \eqref{eq::proof_UP_dag_H_bound} together,
\begin{align}
\mathcal{U}(P,W^*) - \mathcal{U}(P,W^{\dagger})
&\geq -\left(\mathsf{H}(\hat{P};\epsilon+C_L r)-\mathsf{H}(\hat{P};\epsilon-C_Lr)+2C_U r\right),
\end{align}
which implied the desired conclusion.
\end{proof}

\subsection{Theorem~\ref{thm:rob_mech_converge_opt_mech}}
\label{Appendix:thm:rob_mech_to_mech_prob}

\begin{proof}
Recall that, by Lemma~\ref{lem:epsilonmin}, the mapping $\epsilon_{\min}(\cdot)$ is continuous over $\mathcal{Q}$. In particular, we have that $\lim_n \epsilon_{\min}(P_n) = \epsilon_{\min}(P)$ whenever $\lim_n P_n = P$. Since $\epsilon_{\min}(P) < \epsilon$, $\lim_n P_n=P$, and $\lim_n r_n=0$, we have that $\epsilon_{\min}(P_n)+C_Lr_n < \epsilon$ for $n$ large enough. Therefore, for $n$ large enough, the first part of Theorem~\ref{thm:desn_unfpri_mech} establishes that the set $\mathcal{D}_{\mathcal{Q},N}(P_n;\epsilon,r_n)$ is non-empty. Furthermore, part~\textnormal{(ii)} of Lemma~\ref{lem:uni_pri_mech_well_defined} implies that $\mathcal{W}^{\dagger}_{\mathcal{Q},N}(P_n;\epsilon,r_n)$ is non-empty as well. Without loss of generality, we assume that $\mathcal{W}^{\dagger}_{\mathcal{Q},N}(P_n;\epsilon,r_n) \neq \emptyset$ for all $n\geq1$.

In order to reach contradiction, we assume that the conclusion does not hold true, i.e., that there exists a sequence $(W_n^\dagger)_{n=1}^\infty$ such that $W^{\dagger}_n\in \mathcal{W}^{\dagger}_{\mathcal{Q},N}(P_n;\epsilon,r_n)$ for all $n\geq1$ and
\begin{equation}
    \limsup_{n\to\infty} \mathsf{dist}(W^{\dagger}_n,\mathcal{W}_N^*(P;\epsilon)) > 0.
\end{equation}
Since both $\mathcal{W}_N$ and $\mathcal{W}_N^*(P;\epsilon)$ are compact (Lemma~\ref{lem:W_star_compact}), there exists a subsequence of $(W^{\dagger}_n)_{n=1}^\infty$ converging to some $W_0\in\mathcal{W}_N$ with $W_0\not\in \mathcal{W}_N^*(P;\epsilon)$. Without loss of generality, we assume that $\lim_n W^{\dagger}_n = W_0$. By the continuity of $\mathcal{L}(\cdot,\cdot)$ established in Lemma~\ref{lem:cont_UandL},
\begin{align}
\label{eq:PrivacyLimitDaggers}
    \mathcal{L}(P,W_0) 
    = \lim_{n\to\infty} \mathcal{L}(P_n,W^{\dagger}_n)
    \leq \epsilon.
\end{align}
The last inequality implies that $W_0\in\mathcal{D}_N(P;\epsilon)$ and hence $\mathcal{U}(P,W_0)\leq \mathsf{H}(P;\epsilon)$. Assume that we have already proved that $\mathcal{U}(P,W_0) \geq \mathsf{H}(P;\epsilon)$. In particular, we would have that $\mathcal{U}(P,W_0) = \mathsf{H}(P;\epsilon)$ and thus $W_0\in \mathcal{W}_N^*(P;\epsilon)$. Since this conclusion contradicts the fact that $W_0\not\in \mathcal{W}_N^*(P;\epsilon)$, the result would follow. Now we focus on proving that $\mathcal{U}(P,W_0)\geq \mathsf{H}(P;\epsilon)$.

Since $\epsilon > \epsilon_{\min}(P)$ and $\lim_n r_n = 0$, we have that $\epsilon-2C_Lr_n > \epsilon_{\min}(P)$ for $n$ large enough. Without loss of generality, we assume that $\epsilon-2C_Lr_n > \epsilon_{\min}(P)$ for all $n\geq1$. For a given $n\geq1$, let $W^*_0\in\mathcal{W}_N$ be such that $\mathcal{L}(P,W^*_0)\leq \epsilon-2C_Lr_n$ and $\mathcal{U}(P,W^*_0)=\mathsf{H}(P;\epsilon-2C_Lr_n)$. By condition (\textbf{C.}\ref{cond:lipcont}), for any $Q\in\mathcal{Q}_{r_n}(P_n)$,
\begin{equation}
\label{eq:ConvergenceUPMLipschitzP}
    |\mathcal{L}(P,W^*_0)-\mathcal{L}(Q,W^*_0)| \leq C_L\|P-Q\|_1 \leq 2C_Lr_n,
\end{equation}
where the last inequality follows from the triangle inequality and the fact that $P,Q\in\mathcal{Q}_{r_n}(P_n)$. Thus,
\begin{equation}
    \mathcal{L}(Q,W^*_0)\leq \mathcal{L}(P,W^*_0)+2C_Lr_n\leq \epsilon,
\end{equation}
for any $Q\in\mathcal{Q}_{r_n}(P_n)$. Hence, $W^*_0\in\mathcal{D}_{\mathcal{Q},N}(P_n;\epsilon,r_n)$ and, by the definition of $\mathcal{W}_{\mathcal{Q},N}^\dagger(P_n;\epsilon,r_n)$,
\begin{align}
\label{eq:proof_thm6_min_Uleq_U1}
    \min_{Q\in\mathcal{Q}_{r_n}(P_n)} \mathcal{U}(Q,W^*_0)
    \leq \min_{Q\in\mathcal{Q}_{r_n}(P_n)} \mathcal{U}(Q,W_n^{\dagger})
    \leq \mathcal{U}(P,W_n^{\dagger}).
\end{align}
As in \eqref{eq:ConvergenceUPMLipschitzP}, we have that
\begin{equation}
    |\mathcal{U}(P,W^*_0)-\mathcal{U}(Q,W^*_0)| \leq 2C_Ur_n.
\end{equation}
Therefore, for any $Q\in\mathcal{Q}_{r_n}(P_n)$,
\begin{align}
    \mathcal{U}(Q,W^*_0)
    &\geq \mathcal{U}(P,W^*_0)-2C_Ur_n\\
    &=\mathsf{H}(P;\epsilon-2C_Lr_n)-2C_Ur_n.
\end{align}
In particular, this implies that
\begin{align}
\label{eq:proof_thm6_min_Ugeq_H2}
    \min_{Q\in\mathcal{Q}_{r_n}(P_n)} \mathcal{U}(Q,W^*_0)
    \geq \mathsf{H}(P;\epsilon-2C_Lr_n)-2C_Ur_n.
\end{align}
Combining \eqref{eq:proof_thm6_min_Uleq_U1} and \eqref{eq:proof_thm6_min_Ugeq_H2} together, we have
\begin{align}
    \mathcal{U}(P,W_n^{\dagger}) &\geq 
    \min_{Q\in\mathcal{Q}_{r_n}(P_n)} \mathcal{U}(Q,W^*_0)\\
    &\geq \mathsf{H}(P;\epsilon-2C_Lr_n)-2C_Ur_n.
\end{align}
Therefore, by the continuity of $\mathcal{U}(P,\cdot)$ and $\mathsf{H}(P;\cdot)$ assumed in condition (\textbf{C.}\ref{cond:cont}),
\begin{align}
    \mathcal{U}(P,W_0) &=\lim_{n\to\infty}\mathcal{U}(P,W^{\dagger}_n)\\
    &\geq \lim_{n\to\infty} \mathsf{H}(P;\epsilon-2C_Lr_n)-2C_Ur_n\\
    &= \mathsf{H}(P;\epsilon).
\end{align}
This concludes the proof of the first part of the theorem.

Now we prove the second part of the theorem. Recall that in this case, for all $n\geq1$, $P_n=\hat{P}_n$ and $r_n \geq (2p\log(n)/n)^{1/2}$ for some $p>1$. Thus,
\begin{align}
    \sum_{n=1}^{\infty} \Pr(\|\hat{P}_n-P\|_1\geq r_n) 
    &\leq \sum_{n=1}^{\infty} \Pr\left(\|\hat{P}_n-P\|_1\geq \sqrt{\frac{2p\log(n)}{n}}\right) \\
    \label{eq:SecondInqBorelCantelli} &\leq \exp(|\mathcal{S}|\cdot|\mathcal{X}|) \sum_{n=1}^{\infty}  \frac{1}{n^p},
\end{align}
where \eqref{eq:SecondInqBorelCantelli} follows directly from \eqref{equa:L1bound_weissman}. Since $p>1$ by assumption, we have that $\sum_n 1/n^p$ is finite. Thus, by a routine application of the Borel-Cantelli Lemma, see, e.g., \cite[Sec.~2.18]{gut2013probability},
\begin{align}
    \Pr\left(\left\{\omega:\|\hat{P}_n(\omega)-P\|_1 < r_n\text{ for all }n \geq N = N(\omega)\right\}\right)=1.
\end{align}
Observe that, by the last equality, the hypotheses of the first part of this theorem are satisfied almost surely. A direct application of the former part of this theorem leads to
\begin{align}
    \Pr\left(\lim_{n\to\infty}\mathsf{dist}(W^{\dagger}_n,\mathcal{W}_N^*(P;\epsilon)) = 0\right)=1,
\end{align}
as we wanted to prove.
\end{proof}

\bibliographystyle{IEEEtran}
\bibliography{references}

\begin{thebibliography}{10}
\providecommand{\url}[1]{#1}
\csname url@samestyle\endcsname
\providecommand{\newblock}{\relax}
\providecommand{\bibinfo}[2]{#2}
\providecommand{\BIBentrySTDinterwordspacing}{\spaceskip=0pt\relax}
\providecommand{\BIBentryALTinterwordstretchfactor}{4}
\providecommand{\BIBentryALTinterwordspacing}{\spaceskip=\fontdimen2\font plus
\BIBentryALTinterwordstretchfactor\fontdimen3\font minus
  \fontdimen4\font\relax}
\providecommand{\BIBforeignlanguage}[2]{{%
\expandafter\ifx\csname l@#1\endcsname\relax
\typeout{** WARNING: IEEEtran.bst: No hyphenation pattern has been}%
\typeout{** loaded for the language `#1'. Using the pattern for}%
\typeout{** the default language instead.}%
\else
\language=\csname l@#1\endcsname
\fi
#2}}
\providecommand{\BIBdecl}{\relax}
\BIBdecl

\bibitem{wang2018utility}
H.~Wang, M.~Diaz, F.~P. Calmon, and L.~Sankar, ``The utility cost of robust
  privacy guarantees,'' in \emph{Proc. 2018 IEEE Int. Symp. on Inf. Theory},
  2018, pp. 706--710.

\bibitem{baker2016there}
M.~Baker, ``Is there a reproducibility crisis? {A} {N}ature survey lifts the
  lid on how researchers view the 'crisis rocking science and what they think
  will help,'' \emph{Nature}, vol. 533, no. 7604, pp. 452--455, 2016.

\bibitem{wicherts2012publish}
J.~M. Wicherts and M.~Bakker, ``Publish (your data) or (let the data) perish!
  {W}hy not publish your data too?.'' \emph{Intelligence}, vol.~40, no.~2, pp.
  73--76, 2012.

\bibitem{donoho201750}
D.~Donoho, ``50 years of data science,'' \emph{J. Comput. Graph. Statist.},
  vol.~26, no.~4, pp. 745--766, 2017.

\bibitem{sweeney2015only}
L.~Sweeney, ``Only you, your doctor, and many others may know,''
  \emph{Technology Science}, 2015, {Art. ID 2015092903}.

\bibitem{granville2018facebook}
K.~Granville, ``Facebook and {C}ambridge {A}nalytica: What you need to know as
  fallout widens,'' \emph{The New York Times}, vol.~19, 2018.

\bibitem{narayanan2008robust}
A.~Narayanan and V.~Shmatikov, ``Robust de-anonymization of large sparse
  datasets,'' in \emph{Proc. 2008 IEEE Symp. on Security and Privacy}, 2008,
  pp. 111--125.

\bibitem{handler1966privacy}
J.~F. Handler and M.~K. Rosenheim, ``Privacy in welfare: Public assistance and
  juvenile justice,'' \emph{Law \& Contemp. Probs.}, vol.~31, pp. 377 -- 412,
  1966.

\bibitem{rebollo2010t}
D.~Rebollo-Monedero, J.~Forne, and J.~Domingo-Ferrer, ``From t-closeness-like
  privacy to postrandomization via information theory,'' \emph{IEEE Trans.
  Knowl. Data Eng.}, vol.~22, no.~11, pp. 1623--1636, 2010.

\bibitem{sankar2013utility}
L.~Sankar, S.~R. Rajagopalan, and H.~V. Poor, ``Utility-privacy tradeoffs in
  databases: An information-theoretic approach,'' \emph{IEEE Trans. Inf.
  Forensics Security}, vol.~8, no.~6, pp. 838--852, 2013.

\bibitem{sweeney2002k}
L.~Sweeney, ``k-anonymity: A model for protecting privacy,'' \emph{Int. J.
  Uncertainty Fuzziness Knowl.-Based Syst.}, vol.~10, no.~05, pp. 557--570,
  2002.

\bibitem{dwork2011differential}
C.~Dwork, ``Differential privacy,'' in \emph{Encyclopedia of Cryptography and
  Security}.\hskip 1em plus 0.5em minus 0.4em\relax Springer, 2011, pp.
  338--340.

\bibitem{issa2016maximal}
I.~Issa, S.~Kamath, and A.~B. Wagner, ``Maximal leakage minimization for the
  {S}hannon cipher system,'' in \emph{Proc. 2016 IEEE Int. Symp. on Inf.
  Theory}, 2016, pp. 520--524.

\bibitem{rassouli2019optimal}
B.~Rassouli and D.~Gunduz, ``Optimal utility-privacy trade-off with total
  variation distance as a privacy measure,'' \emph{IEEE Trans. Inf. Forensics
  Security}, 2019.

\bibitem{yamamoto1983source}
H.~Yamamoto, ``A source coding problem for sources with additional outputs to
  keep secret from the receiver or wiretappers,'' \emph{IEEE Trans. Inf.
  Theory}, vol.~29, no.~6, pp. 918--923, 1983.

\bibitem{kairouz2014extremal}
P.~Kairouz, S.~Oh, and P.~Viswanath, ``Extremal mechanisms for local
  differential privacy,'' in \emph{Adv. Neural Inf. Process. Syst.}, 2014, pp.
  2879--2887.

\bibitem{asoodeh2017estimation}
S.~Asoodeh, M.~Diaz, F.~Alajaji, and T.~Linder, ``Estimation efficiency under
  privacy constraints,'' \emph{IEEE Transactions on Information Theory},
  vol.~65, no.~3, pp. 1512--1534, 2018.

\bibitem{du2012privacy}
F.~P. Calmon and N.~Fawaz, ``Privacy against statistical inference,'' in
  \emph{Proc. 50th Annu. Allerton Conf. Commun. Control Comput.}, 2012, pp.
  1401--1408.

\bibitem{calmon2015fundamental}
F.~P. Calmon, A.~Makhdoumi, and M.~M{\'e}dard, ``Fundamental limits of perfect
  privacy,'' in \emph{Proc. 2015 IEEE Int. Symp. on Inf. Theory}, 2015, pp.
  1796--1800.

\bibitem{asoodeh2016information}
S.~Asoodeh, M.~Diaz, F.~Alajaji, and T.~Linder, ``Information extraction under
  privacy constraints,'' \emph{Information}, vol.~7, no.~1, pp. 1--15, 2016.

\bibitem{basciftci2016privacy}
Y.~O. Basciftci, Y.~Wang, and P.~Ishwar, ``On privacy-utility tradeoffs for
  constrained data release mechanisms,'' in \emph{2016 IEEE Information Theory
  and Applications Workshop (ITA)}, 2016, pp. 1--6.

\bibitem{wang2017estimation}
H.~Wang and F.~P. Calmon, ``An estimation-theoretic view of privacy,'' in
  \emph{Proc. 55th Annu. Allerton Conf. Commun. Control Comput.}, 2017, pp.
  886--893.

\bibitem{osia2019privacy}
S.~A. Osia, B.~Rassouli, H.~Haddadi, H.~R. Rabiee, and D.~G{\"u}nd{\"u}z,
  ``Privacy against brute-force inference attacks,'' \emph{arXiv preprint
  arXiv:1902.00329}, 2019.

\bibitem{rassouli2019data}
B.~Rassouli, F.~E. Rosas, and D.~Gunduz, ``Data disclosure under perfect sample
  privacy,'' \emph{arXiv preprint arXiv:1904.01711}, 2019.

\bibitem{nageswaran2019data}
A.~Nageswaran and P.~Narayan, ``Data privacy for a $\rho$-recoverable
  function,'' \emph{IEEE Trans. Inf. Theory}, 2019.

\bibitem{huang2017context}
C.~Huang, P.~Kairouz, X.~Chen, L.~Sankar, and R.~Rajagopal, ``Context-aware
  generative adversarial privacy,'' \emph{Entropy}, vol.~19, no.~12, pp. 1--35,
  2017, {Art. ID 656}.

\bibitem{weissman2003inequalities}
T.~Weissman, E.~Ordentlich, G.~Seroussi, S.~Verdu, and M.~J. Weinberger,
  ``Inequalities for the $l_1$ deviation of the empirical distribution,''
  \emph{Hewlett-Packard Labs, Tech. Rep}, 2003.

\bibitem{kamath2015learning}
S.~Kamath, A.~Orlitsky, D.~Pichapati, and A.~T. Suresh, ``On learning
  distributions from their samples,'' in \emph{Proc. 28th Conf. on Learn.
  Theory}, 2015, pp. 1066--1100.

\bibitem{dwork2014algorithmic}
C.~Dwork, A.~Roth \emph{et~al.}, ``The algorithmic foundations of differential
  privacy,'' \emph{Found. Trends Theor. Comput. Sci.}, vol.~9, no. 3--4, pp.
  211--407, 2014.

\bibitem{shannon1949communication}
C.~E. Shannon, ``Communication theory of secrecy systems,'' \emph{Bell Syst.
  Tech. J.}, vol.~28, no.~4, pp. 656--715, 1949.

\bibitem{verdu2015alpha}
S.~Verd{\'u}, ``$\alpha$-mutual information,'' in \emph{2015 IEEE Information
  Theory and Applications Workshop (ITA)}, 2015, pp. 1--6.

\bibitem{ciszar1967information}
I.~Csisz{\'a}r, ``Information-type measures of difference of probability
  distributions and indirect observations,'' \emph{Studia Sci. Math. Hungar.},
  vol.~2, pp. 299--318, 1967.

\bibitem{liao2018privacy}
J.~Liao, O.~Kosut, L.~Sankar, and F.~P. Calmon, ``Privacy under hard distortion
  constraints,'' \emph{arXiv preprint arXiv:1806.00063}, 2018.

\bibitem{issa2016operational}
I.~Issa, S.~Kamath, and A.~B. Wagner, ``An operational measure of information
  leakage,'' in \emph{Proc. IEEE Annu. Conf. on Inf. Sci. and Syst. (CISS)},
  2016, pp. 234--239.

\bibitem{liao2018tunable}
J.~Liao, O.~Kosut, L.~Sankar, and F.~P. Calmon, ``A tunable measure for
  information leakage,'' in \emph{Proc. 2018 IEEE Int. Symp. on Inf. Theory},
  2018, pp. 701--705.

\bibitem{duchi2014privacy}
\BIBentryALTinterwordspacing
J.~C. Duchi, M.~I. Jordan, and M.~J. Wainwright, ``Privacy aware learning,''
  \emph{J. ACM}, vol.~61, no.~6, pp. 38:1--38:57, Dec. 2014. [Online].
  Available: \url{http://doi.acm.org/10.1145/2666468}
\BIBentrySTDinterwordspacing

\bibitem{wagner2018technical}
\BIBentryALTinterwordspacing
I.~Wagner and D.~Eckhoff, ``Technical privacy metrics: A systematic survey,''
  \emph{ACM Comput. Surv.}, vol.~51, no.~3, pp. 57:1--57:38, Jun. 2018.
  [Online]. Available: \url{http://doi.acm.org/10.1145/3168389}
\BIBentrySTDinterwordspacing

\bibitem{li2018maximal}
C.~T. Li and A.~El~Gamal, ``Maximal correlation secrecy,'' \emph{IEEE Trans.
  Inf. Theory}, vol.~64, no.~5, pp. 3916--3926, 2018.

\bibitem{issa2018operational}
I.~Issa, A.~B. Wagner, and S.~Kamath, ``An operational approach to information
  leakage,'' \emph{arXiv preprint arXiv:1807.07878}, 2018.

\bibitem{russell2002fool}
A.~Russell and H.~Wang, ``How to fool an unbounded adversary with a short
  key,'' in \emph{International Conference on the Theory and Applications of
  Cryptographic Techniques}.\hskip 1em plus 0.5em minus 0.4em\relax Springer,
  2002, pp. 133--148.

\bibitem{dodis2005entropic}
Y.~Dodis and A.~Smith, ``Entropic security and the encryption of high entropy
  messages,'' in \emph{Theory of Cryptography Conference}.\hskip 1em plus 0.5em
  minus 0.4em\relax Springer, 2005, pp. 556--577.

\bibitem{liao2017hypothesis}
J.~Liao, L.~Sankar, F.~P. Calmon, and V.~Y. Tan, ``Hypothesis testing under
  maximal leakage privacy constraints,'' in \emph{Proc. 2017 IEEE Int. Symp. on
  Inf. Theory}, 2017, pp. 779--783.

\bibitem{shamir2010learning}
O.~Shamir, S.~Sabato, and N.~Tishby, ``Learning and generalization with the
  information bottleneck,'' \emph{Theor. Comput. Sci.}, vol. 411, no. 29-30,
  pp. 2696--2711, 2010.

\bibitem{jiao2015minimax}
J.~Jiao, K.~Venkat, Y.~Han, and T.~Weissman, ``Minimax estimation of
  functionals of discrete distributions,'' \emph{IEEE Trans. Inf. Theory},
  vol.~61, no.~5, pp. 2835--2885, 2015.

\bibitem{wu2016minimax}
Y.~Wu and P.~Yang, ``Minimax rates of entropy estimation on large alphabets via
  best polynomial approximation,'' \emph{IEEE Trans. Inf. Theory}, vol.~62,
  no.~6, pp. 3702--3720, 2016.

\bibitem{varodayan2011smart}
D.~Varodayan and A.~Khisti, ``Smart meter privacy using a rechargeable battery:
  Minimizing the rate of information leakage,'' in \emph{2011 IEEE
  International Conference on Acoustics, Speech and Signal Processing
  (ICASSP)}, 2011, pp. 1932--1935.

\bibitem{tan2013increasing}
O.~Tan, D.~Gunduz, and H.~V. Poor, ``Increasing smart meter privacy through
  energy harvesting and storage devices,'' \emph{IEEE J. Sel. Areas Commun.},
  vol.~31, no.~7, pp. 1331--1341, 2013.

\bibitem{makhdoumi2014information}
A.~Makhdoumi, S.~Salamatian, N.~Fawaz, and M.~M{\'e}dard, ``From the
  information bottleneck to the privacy funnel,'' in \emph{IEEE Inf. Theory
  Workshop (ITW)}, 2014, pp. 501--505.

\bibitem{tishby2000information}
N.~Tishby, F.~C. Pereira, and W.~Bialek, ``The information bottleneck method,''
  in \emph{Proc. 37th Annu. Allerton Conf. Commun. Control Comput.}, 1999, pp.
  368--377.

\bibitem{calmon2013bounds}
F.~P. Calmon, M.~Varia, M.~M{\'e}dard, M.~M. Christiansen, K.~R. Duffy, and
  S.~Tessaro, ``Bounds on inference,'' in \emph{Proc. 51st Annu. Allerton Conf.
  Commun. Control Comput.}, 2013, pp. 567--574.

\bibitem{ben2009robust}
A.~Ben-Tal, L.~El~Ghaoui, and A.~Nemirovski, \emph{Robust Optimization}.\hskip
  1em plus 0.5em minus 0.4em\relax Princeton University Press, 2009.

\bibitem{huber2011robust}
P.~J. Huber, ``Robust statistics,'' in \emph{International Encyclopedia of
  Statistical Science}.\hskip 1em plus 0.5em minus 0.4em\relax Springer, 2011,
  pp. 1248--1251.

\bibitem{valiant2000robust}
L.~G. Valiant, ``Robust logics,'' \emph{Artif. Intell. J.}, vol. 117, no.~2,
  pp. 231--253, 2000.

\bibitem{charikar2017learning}
M.~Charikar, J.~Steinhardt, and G.~Valiant, ``Learning from untrusted data,''
  in \emph{Proc. 49th ACM SIGACT Symp. on Theory of Computing}, 2017, pp.
  47--60.

\bibitem{shalev2014understanding}
S.~Shalev-Shwartz and S.~Ben-David, \emph{Understanding Machine Learning: From
  Theory to Algorithms}.\hskip 1em plus 0.5em minus 0.4em\relax Cambridge
  University press, 2014.

\bibitem{kearns1994introduction}
M.~J. Kearns, U.~V. Vazirani, and U.~Vazirani, \emph{An Introduction to
  Computational Learning Theory}.\hskip 1em plus 0.5em minus 0.4em\relax MIT
  press, 1994.

\bibitem{xu2017information}
A.~Xu and M.~Raginsky, ``Information-theoretic analysis of generalization
  capability of learning algorithms,'' in \emph{Adv. Neural Inf. Process.
  Syst.}, 2017, pp. 2524--2533.

\bibitem{russo2015much}
D.~Russo and J.~Zou, ``How much does your data exploration overfit?
  {C}ontrolling bias via information usage,'' \emph{arXiv preprint
  arXiv:1511.05219}, 2015.

\bibitem{dwork2015preserving}
C.~Dwork, V.~Feldman, M.~Hardt, T.~Pitassi, O.~Reingold, and A.~Roth,
  ``Preserving statistical validity in adaptive data analysis,'' in \emph{Proc.
  47th ACM Symp. on Theory of Computing}, 2015, pp. 117--126.

\bibitem{dwork2015generalization}
------, ``Generalization in adaptive data analysis and holdout reuse,'' in
  \emph{Adv. Neural Inf. Process. Syst.}, 2015, pp. 2350--2358.

\bibitem{dwork2015reusable}
------, ``The reusable holdout: Preserving validity in adaptive data
  analysis,'' \emph{Science}, vol. 349, no. 6248, pp. 636--638, 2015.

\bibitem{arimoto1977information}
S.~Arimoto, ``Information measures and capacity of order $\alpha$ for discrete
  memoryless channels,'' \emph{Topics in Information Theory, Coll.\ Math.\
  Soc.\ J.\ Bolyai (I.~Csisz\'{a}r and P.~Elias Eds.)}, vol.~16, pp. 41--52,
  1977.

\bibitem{fehr2014conditional}
S.~Fehr and S.~Berens, ``On the conditional {R}{\'e}nyi entropy,'' \emph{IEEE
  Trans. Inf. Theory}, vol.~60, no.~11, pp. 6801--6810, 2014.

\bibitem{smith2009foundations}
G.~Smith, ``On the foundations of quantitative information flow,'' in
  \emph{International Conference on Foundations of Software Science and
  Computational Structures}.\hskip 1em plus 0.5em minus 0.4em\relax Springer,
  2009, pp. 288--302.

\bibitem{braun2009quantitative}
C.~Braun, K.~Chatzikokolakis, and C.~Palamidessi, ``Quantitative notions of
  leakage for one-try attacks,'' \emph{Electron. Notes Theor. Comput. Sci.},
  vol. 249, pp. 75--91, 2009.

\bibitem{barthe2011information}
G.~Barthe and B.~Kopf, ``Information-theoretic bounds for differentially
  private mechanisms,'' in \emph{2011 IEEE 24th Computer Security Foundations
  Symposium}.\hskip 1em plus 0.5em minus 0.4em\relax IEEE, 2011, pp. 191--204.

\bibitem{sibson1969information}
R.~Sibson, ``Information radius,'' \emph{Zeitschrift f{\"u}r
  Wahrscheinlichkeitstheorie und verwandte Gebiete}, vol.~14, no.~2, pp.
  149--160, 1969.

\bibitem{raginsky2016strong}
M.~Raginsky, ``Strong data processing inequalities and {$\Phi$}-{S}obolev
  inequalities for discrete channels,'' \emph{IEEE Trans. Inf. Theory},
  vol.~62, no.~6, pp. 3355--3389, 2016.

\bibitem{calmon2017strong}
F.~P. Calmon, Y.~Polyanskiy, and Y.~Wu, ``Strong data processing inequalities
  for input constrained additive noise channels,'' \emph{IEEE Trans. Inf.
  Theory}, vol.~64, no.~3, pp. 1879--1892, 2018.

\bibitem{greenacre1984theory}
M.~J. Greenacre, \emph{Theory and Applications of Correspondence
  Analysis}.\hskip 1em plus 0.5em minus 0.4em\relax San Diego, CA, USA:
  Academic, Mar. 1984.

\bibitem{du2017principal}
F.~P. Calmon, A.~Makhdoumi, M.~M{\'e}dard, M.~Varia, M.~Christiansen, and K.~R.
  Duffy, ``Principal inertia components and applications,'' \emph{IEEE Trans.
  Inf. Theory}, vol.~63, no.~8, pp. 5011--5038, 2017.

\bibitem{angwin2016machine}
J.~Angwin, J.~Larson, S.~Mattu, and L.~Kirchner, ``Machine bias,''
  \emph{ProPublica}, 2016.

\bibitem{polyanskiy2014lecture}
\BIBentryALTinterwordspacing
Y.~Polyanskiy and Y.~Wu, ``Lecture notes on information theory,'' \emph{Lecture
  Notes for ECE563 (UIUC) and 6.441 (MIT)}, 2012--2017. [Online]. Available:
  \url{http://people.lids.mit.edu/yp/homepage/data/itlectures_v5.pdf}
\BIBentrySTDinterwordspacing

\bibitem{rudin1976principles}
W.~Rudin, \emph{Principles of Mathematical Analysis}, 3rd~ed.\hskip 1em plus
  0.5em minus 0.4em\relax New York: McGraw-hill, 1976.

\bibitem{aubin2009set}
J.-P. Aubin and H.~Frankowska, \emph{Set-valued Analysis}.\hskip 1em plus 0.5em
  minus 0.4em\relax Springer Science \& Business Media, 2009.

\bibitem{rudin1987real}
W.~Rudin, \emph{Real and Complex Analysis}, 2nd~ed.\hskip 1em plus 0.5em minus
  0.4em\relax McGraw-Hill, 1987.

\bibitem{folland1984real}
G.~B. Folland, \emph{Real Analysis: Modern Techniques and Their
  Applications}.\hskip 1em plus 0.5em minus 0.4em\relax Wiley-Interscience, New
  York, 1984.

\bibitem{hsu2018generalizing}
H.~Hsu, S.~Asoodeh, S.~Salamatian, and F.~P. Calmon, ``Generalizing bottleneck
  problems,'' in \emph{Proc. 2018 IEEE Int. Symp. on Inf. Theory}, 2018, pp.
  531--535.

\bibitem{witsenhausen1975conditional}
H.~Witsenhausen and A.~Wyner, ``A conditional entropy bound for a pair of
  discrete random variables,'' \emph{IEEE Trans. Inf. Theory}, vol.~21, no.~5,
  pp. 493--501, 1975.

\bibitem{eggleston1958cambridge}
H.~G. Eggleston, \emph{Convexity}.\hskip 1em plus 0.5em minus 0.4em\relax
  Cambridge University Press, 1958.

\bibitem{warner1965randomized}
S.~L. Warner, ``Randomized response: A survey technique for eliminating evasive
  answer bias,'' \emph{J. Amer. Statist. Assoc.}, vol.~60, no. 309, pp. 63--69,
  1965.

\bibitem{royden1968real}
H.~L. Royden, \emph{Real Analysis}, 2nd~ed.\hskip 1em plus 0.5em minus
  0.4em\relax New York: Macmillan, 1968.

\bibitem{resnick2013extreme}
S.~I. Resnick, \emph{Extreme Values, Regular Variation and Point
  Processes}.\hskip 1em plus 0.5em minus 0.4em\relax Springer, 2013.

\bibitem{tan2011large}
V.~Y. Tan, A.~Anandkumar, L.~Tong, and A.~S. Willsky, ``A large-deviation
  analysis of the maximum-likelihood learning of {M}arkov tree structures,''
  \emph{IEEE Trans. Inf. Theory}, vol.~57, no.~3, pp. 1714--1735, 2011.

\bibitem{oxtoby2013measure}
J.~C. Oxtoby, \emph{Measure and Category: A Survey of the Analogies between
  Topological and Measure Spaces}.\hskip 1em plus 0.5em minus 0.4em\relax
  Springer Science \& Business Media, 2013, vol.~2.

\bibitem{gut2013probability}
A.~Gut, \emph{Probability: A Graduate Course}.\hskip 1em plus 0.5em minus
  0.4em\relax Springer Science \& Business Media, 2013, vol.~75.

\end{thebibliography}
\end{document}